\documentclass[
10pt,
aps,
prb,
twocolumn,
amsmath,amssymb,
superscriptaddress,
longbibliography,
nofootinbib,
notitlepage]{revtex4-2}
\usepackage[colorlinks=true,linkcolor=blue,anchorcolor=red,citecolor=blue, urlcolor=blue]{hyperref}
\usepackage{bm} 
\usepackage{graphicx}
\usepackage{adjustbox}

\usepackage{calc}
\usepackage{accents}
\newcommand{\dbtilde}[1]{\tilde{\raisebox{0pt}[0.85\height]{$\tilde{#1}$}}}

\usepackage{physics}
\usepackage{varwidth}
\usepackage{multirow}
\usepackage{stmaryrd}

\usepackage{diagbox}

\usepackage{color}
\usepackage{xcolor}
\definecolor{Lblue}{RGB}{56,108,176}      % deep blue
\definecolor{Lorange}{RGB}{214,126,44}    % burnt orange
\definecolor{Lteal}{RGB}{0,150,30}       % muted teal
\definecolor{Lgray}{RGB}{80,80,80}        % dark gray

\definecolor{myRed}{HTML}{F3BE90}   % red
\definecolor{myBlue}{HTML}{D6D6F5}

\definecolor{myGreen}{HTML}{009E73} % green
\definecolor{myGray}{HTML}{B3B3B3}

\usepackage{soul}
\usepackage{enumerate}
\usepackage{mathrsfs}

\usepackage{enumitem}

\hypersetup{
    colorlinks=true,
    linkcolor=blue,
    citecolor=magenta,
    filecolor=magenta,      
    urlcolor=blue,
    pdftitle={},
    pdfpagemode=FullScreen,
    }

\usepackage{tikz, pgfplots,tikz-cd}
\usetikzlibrary{calc, decorations.markings, decorations.pathmorphing,shapes}
\usetikzlibrary{arrows.meta,decorations.pathmorphing,decorations.markings,backgrounds,positioning,fit,petri}
\usepackage{amsthm}
\theoremstyle{definition}
\newtheorem{theorem}{Theorem}
\newtheorem{definition}{Definition}
\newtheorem{lemma}{Lemma}

\newcommand{\beq}{\begin{equation}\begin{aligned}}
\newcommand{\eeq}{\end{aligned}\end{equation}}

\begin{document}
\makeatletter
\def\l@subsubsection#1#2{}
\makeatother

\title{State preparation via measurement and feedback: \\
pushing relations, state structures, and non-invertible symmetries}
\author{Yabo Li}
\affiliation{
Center for Quantum Phenomena, Department of Physics,
New York University, 726 Broadway, New York, New York, 10003, USA
}
\author{Aditi Mitra}
\affiliation{
Center for Quantum Phenomena, Department of Physics,
New York University, 726 Broadway, New York, New York, 10003, USA
}
\author{Tsung-Cheng Lu}
\affiliation{
Joint Center for Quantum Information and Computer Science, University of Maryland, College Park, Maryland 20742, USA
}
\affiliation{Department of Physics and Astronomy, Center for Quantum Research and Technology,
University of Oklahoma, Norman, OK 73069, USA}

\begin{abstract}

Quantum circuits with measurements and unitary feedback (MF) can prepare long-range entangled states in constant depth, but a systematic construction of the MF preparation circuit for a given target state remains underexplored. We develop such a scheme for one-dimensional states, based on the notion of pushable defects: virtual-bond operators of a matrix product state that can be pushed through the tensor at the price of a physical feedback unitary. We show that the set of pushable defects, together with their pushing relations classifies finite-depth MF-preparable states and dictates their preparation circuits. To each class of the target state $\ket{A}$, we associate a state $\ket{B}$ from which $\ket{A}$ can be prepared using a 1-round MF circuit; in particular, $\ket{A}$ is preparable from a product state using a circuit with 1 round of MF whenever $\ket{B}$ is preparable by a finite-depth local unitary (FDLU) circuit. For a general target state, the scheme is obtained by iterating this procedure until the associated state is FDLU-preparable. For open-boundary matrix product states, the scheme is complete: it constructs a preparation circuit whenever finite-depth MF preparation with left-conditioned feedback corrections is possible. Pushable defects and pushing relations thus emerge as a unifying principle for quantum state preparation via measurements and feedback. This characterization further reveals an intrinsic connection between MF circuits and non-invertible symmetries: states with certain classes of pushing relations are related to a product state by Tambara-Yamagami duality operators, or by continuous cosine symmetry operators with fusion rules $L_\alpha L_{\alpha'} = L_{\alpha+\alpha'} + L_{\alpha-\alpha'}$, together with their generalizations up to (not necessarily transversal) gates.

\end{abstract}

\maketitle
{
  \hypersetup{linkcolor=black}
  \tableofcontents
}

\section{Introduction}
Preparing quantum states with desired properties is a central task across multiple subfields of quantum science. In quantum computation, state preparation is necessarily the first step in any information processing task, whereas in quantum many-body physics, the ability to prepare ground states, thermal states, or other states of interest is essential for probing phases of matter, characterizing quantum correlations, and exploring non-equilibrium dynamics.

A central challenge in state preparation is how to reduce the temporal overhead so that a target state can be prepared efficiently, ideally in a finite time that does not scale with the system size. Time efficiency is particularly important for ensuring the reliability and fidelity of state preparation, given the limited coherence time of near-term quantum devices. Unfortunately, the states prepared by finite-depth local unitary (FDLU) circuits from product states are quite limited due to the famous Lieb-Robinson theorem~\cite{lieb1972finite,bravyi2006lieb}. Specifically, they cannot exhibit long-range entanglement, and the spatial correlation must vanish at long distances.

Remarkably, the fundamental constraint of the Lieb-Robinson theorem can be bypassed with the inclusion of local measurements and unitary feedback (MF) conditioned on global measurement outcomes~\cite{Piroli2021quantum,bravyi2022adaptive,lu2022measurement,neumann2025adaptive,cao2025measurement,Buhrman2024statepreparation}\footnote{The MF circuits are also known as circuits assisted by LOCC (Local Operations and Classical Communication)~\cite{Piroli2021quantum}, adaptive circuits~\cite{bravyi2022adaptive}, or LAQCC (local alternating quantum classical computation) circuits~\cite{Buhrman2024statepreparation,neumann2025adaptive}.}. With MF circuits, Greenberger-Horne-Zeilinger (GHZ) state, toric-code topological order, and more generally, all Pauli stabilizer states can be prepared by finite-depth~\cite{Raussendorf_2001,cluster_mbqc_2005,Foliated_2016_css,jozsa2006introduction,Buhrman2024statepreparation}. More remarkably, certain non-abelian topological orders~\cite{bravyi2022adaptive,verresen2021efficiently,tantivasadakarn2024long,lu2022measurement,Shortest_non_abelian_ruben_2023,li2023symmetry,Hierarchy}, non-fixed-point wavefunctions~\cite{malz2024preparation,smith2024constant,sahay2025classifying,lu2025spacetime}, and intriguing mixed-state quantum matter~\cite{Nishimori_2023,Lu_mixedstate_preparation_2023} can also be prepared using MF circuits. Along with recent experimental capabilities of performing mid-circuit measurements, exploring what many-body quantum states can be prepared by finite-depth MF circuits has attracted significant interest in the field of quantum science~\cite{foss2023experimental,exp_abelian_2024_ruben,non_abelian_ruben_exp_2024,baumer2024efficient,baumer2024efficient}. In particular, such efforts significantly enhance the experimental realization of complex quantum systems on current quantum devices, and also provide a novel resource perspective characterization of quantum many-body states based on their preparability.

\begin{table*}[t]
\begin{tabular}{cccccc}
\hline
\multicolumn{1}{|c|}{Pushable defects for $A$}                 & \multicolumn{1}{c|}{Pushable defects for $B$}   & \multicolumn{1}{c|}{Pushing relations} & \multicolumn{1}{c|}{Rounds} & \multicolumn{1}{c|}{Unitary circuits}                & \multicolumn{1}{c|}{Sections} \\ \hline
\multicolumn{1}{|c|}{\multirow{4}{*}{All Pauli}}               & \multicolumn{1}{c|}{\multirow{4}{*}{\diagbox[dir=NE,width=3.4cm,height=1.4cm]{}{}}}    & \multicolumn{1}{c|}{$RIRI$, $RILI$}           & \multicolumn{1}{c|}{0}         & \multicolumn{1}{c|}{FDLU}                               & \multicolumn{1}{c|}{\ref{subsec:RIRI}, \ref{subsec:RILI}}        \\ \cline{3-6} 
\multicolumn{1}{|c|}{}                                         & \multicolumn{1}{c|}{}                           & \multicolumn{1}{c|}{$RIRX$}           & \multicolumn{1}{c|}{1}         & \multicolumn{1}{c|}{KW}                                 & \multicolumn{1}{c|}{\ref{subsec:RIRX}}        \\ \cline{3-6} 
\multicolumn{1}{|c|}{}                                         & \multicolumn{1}{c|}{}                           & \multicolumn{1}{c|}{$RZRX$}           & \multicolumn{1}{c|}{1}         & \multicolumn{1}{c|}{KT}                                 & \multicolumn{1}{c|}{\ref{subsec:RZRX}, \ref{subsec:example_fusion_measurement}}        \\ \cline{3-6}  
\multicolumn{1}{|c|}{}                                         & \multicolumn{1}{c|}{}                           & \multicolumn{1}{c|}{General}          & \multicolumn{1}{c|}{1}         & \multicolumn{1}{c|}{Tambara-Yamagami, ...}              & \multicolumn{1}{c|}{\ref{subsec:general}}        \\ \hline
\multicolumn{1}{|c|}{\multirow{4}{*}{Only $Z$}}                & \multicolumn{1}{c|}{\multirow{4}{*}{All Pauli}} & \multicolumn{1}{c|}{$RI$--FDLU}     & \multicolumn{1}{c|}{0}         & \multicolumn{1}{c|}{FDLU}                               & \multicolumn{1}{c|}{\ref{subsec:RI-FDLU}}        \\ \cline{3-6} 
\multicolumn{1}{|c|}{}                                         & \multicolumn{1}{c|}{}                           & \multicolumn{1}{c|}{$RI$--$RIRX$}     & \multicolumn{1}{c|}{1}         & \multicolumn{1}{c|}{Brick-wall$\cdot \text{KW}$ }               & \multicolumn{1}{c|}{\ref{subsec:RI-RIRX}, \ref{subsec:example_RI-RIRX}}        \\ \cline{3-6} 
\multicolumn{1}{|c|}{}                                         & \multicolumn{1}{c|}{}                           & \multicolumn{1}{c|}{$RZ$--FDLU}     & \multicolumn{1}{c|}{1}         & \multicolumn{1}{c|}{Brick-wall$\cdot \text{KW}$}               & \multicolumn{1}{c|}{\ref{subsec:RZ-FDLU}, \ref{subsec:example_RI-RIRX}}        \\ \cline{3-6} 
\multicolumn{1}{|c|}{}                                         & \multicolumn{1}{c|}{}                           & \multicolumn{1}{c|}{$RZ$--$RIRX$}     & \multicolumn{1}{c|}{2}         & \multicolumn{1}{c|}{Cosine symmetries, ...}             & \multicolumn{1}{c|}{\ref{subsec:RZ-RIRX}, \ref{subsec:example_RZ-RIRX}}        \\ \hline
\multicolumn{1}{|c|}{\multirow{3}{*}{All Pauli in subspace}} & \multicolumn{1}{c|}{\multirow{3}{*}{All Pauli}} & \multicolumn{1}{c|}{$RIRX$--FDLU}   & \multicolumn{1}{c|}{1}         & \multicolumn{1}{c|}{$\text{KW}\cdot$FDLU, ...}                 & \multicolumn{1}{c|}{\ref{subsec:RIRIX-FDLU}}        \\ \cline{3-6} 
\multicolumn{1}{|c|}{}                                         & \multicolumn{1}{c|}{}                           & \multicolumn{1}{c|}{$RIRI$--$RIRX$}   & \multicolumn{1}{c|}{1}         & \multicolumn{1}{c|}{FDLU$\cdot \text{KW}$, ...}                 & \multicolumn{1}{c|}{\ref{subsec:RIRI-RIRX}}        \\ \cline{3-6} 
\multicolumn{1}{|c|}{}                                         & \multicolumn{1}{c|}{}                           & \multicolumn{1}{c|}{$RIRX$--$RZRX$}   & \multicolumn{1}{c|}{2}         & \multicolumn{1}{c|}{$\text{KW}\cdot$transversal$\cdot \text{KW}$, ...} & \multicolumn{1}{c|}{\ref{subsec:RIRX-RZRX}} \\   \hline

\end{tabular}
\caption{Summary of results. We classify the preparability of a target MPS $\ket{A}$ based on the pushing relations of its defects and the associated MPS $\ket{B}$, from which $\ket{A}$ can be prepared via measurement and feedback. The notation $RaRb$ means the Pauli $Z$ defect is right-pushed to $a$, whereas the Pauli $X$ defect is right-pushed to $b$. Similarly, $RaLb$ means $Z$ is right-pushed to $a$ and $X$ is left-pushed to $b$. The MPS $\ket{A}$ can also be prepared using unitary circuits, a subset of which are sequential circuits that implement non-invertible symmetries. In particular, when all Pauli defects of $\ket{A}$ are pushable, $\ket{A}$ is fusion-measurement preparable~\cite{smith2024constant,sahay2025classifying,stephen2024preparing,zhang2024characterizing}, and we show that such states can also be obtained from product states using various classes of non-invertible symmetry transformations based on different pushing relations. The ``general'' pushing relation in the fourth row denotes the class of states for which the virtual space is decomposed into multiple subspaces, and all the Pauli defects in all subspaces are pushable to Pauli defects. The ``all Pauli in subspace'' class in the last row denotes the class of states for which all Pauli defects in one subspace are pushable, whereas defects in other subspaces are not necessarily pushable.} 
\label{tab:1}
\end{table*}

The tensor-network formalism provides a natural perspective for characterizing many-body states that can be prepared by finite-depth MF circuits~\cite{lu2022measurement,sahay2025classifying,stephen2024preparing,zhang2024characterizing,gunn2025phases,gunn2026phases}. In one spatial dimension, starting from a product state, the output of any finite-depth MF circuit is necessarily a matrix product state (MPS) with finite bond dimension. As such, the essential question is: how to characterize those MPSs preparable by finite-depth MF circuits? What are their defining properties? Refs.~\cite{smith2024constant,sahay2025classifying,stephen2024preparing,zhang2024characterizing} analyze a specific class of MF circuits known as the fusion-measurement protocol. In this approach, one starts with extensive decoupled clusters (which are FDLU preparable), performs local Bell-basis measurements to contract the virtual bonds of neighboring MPS tensors, and applies conditional feedback to compensate for unwanted measurement outcomes. The MPSs preparable by fusion measurement have been characterized: they either have flat entanglement spectra or zero correlation lengths for some operators~\cite{sahay2025classifying}. Moreover, the local tensor of the MPS must exhibit a Clifford structure~\cite{zhang2024characterizing}. Equivalently, these MPSs are related to product states via a sequential Clifford circuit followed by a transversal unitary circuit, up to disentangled ancillary degrees of freedom. The Clifford structure is particularly important; under Bell-basis measurements, various Pauli defects ($X, Y, Z$) will appear in the virtual bonds of the target MPS, and the Clifford structure of the local tensor allows those defects to be pushed to physical legs where they can be annihilated by operations acting on the physical qubits. This ensures the deterministic preparation of the MPS under the fusion measurement protocol.\footnote{It is deterministic when the target MPS is defined with open boundary conditions. For MPSs with closed boundary conditions, the preparation is achieved with a non-zero probability that is constant in the system size. More details will be discussed in Sec. \ref{sec:MF}.}

While the fusion-measurement protocol provides a promising framework for constructing MF circuits, it is still too restrictive because it demands all possible Pauli defects to be pushable in the MPS tensor. In this work, we propose a general scheme for preparing translation-invariant MPSs using MF circuits that go beyond fusion measurement. As in Ref.~\cite{smith2024constant,sahay2025classifying,stephen2024preparing,zhang2024characterizing}, the unwanted measurement outcomes in our scheme correspond to pushable defects in the target MPS, but importantly, we do not demand all Pauli defects to be pushable. For instance, even when Pauli-$Z$ is the only pushable defect, one may still construct an MF circuit that prepares a corresponding MPS. By systematically classifying the possible pushable defects and the associated pushing relations, we will explore the general landscape of MPSs preparable by MF circuits. In particular, our construction provides a natural framework to classify  MPSs based on the number of measurement-feedback rounds required; see Table \ref{tab:1}.

When all Pauli defects are pushable in the target MPS, our scheme leads to a single-round MF preparation circuit, which is equivalent to fusion measurements~\cite{smith2024constant,sahay2025classifying,stephen2024preparing,zhang2024characterizing}. In this case, by systematically classifying the pushing relations, we provide a fine-grained classification of fusion measurement preparable MPSs (see Table \ref{tab:1}). Notably, our classification reveals an intriguing connection between fusion measurement and non-invertible symmetries—a class of generalized symmetries that has recently attracted attention in quantum many-body physics for understanding exotic quantum phases and for the practical design of logical gates in fault-tolerant quantum computation~\cite{thorngren2024fusion,thorngren2024fusion2,aasen2020topological,shao2023s,lootens2023dualities,lootens2025low,tantivasadakarn2024long,Hierarchy}. In particular, states classified by different pushing relations can be prepared from product states via distinct classes of non-invertible duality transformations, including Kramers–Wannier (KW)~\cite{kramers1941statistics,onsager1944crystal,Kaufman1949,kogut1979introduction,seiberg2023majorana,chen2023sequential,tantivasadakarn2024long}, Kennedy–Tasaki (KT)~\cite{kennedy1992hidden,kennedy1992hidden2,doherty2009identifying,li2023non,seifnashri2024cluster}, and the more general Tambara–Yamagami duality symmetries~\cite{tambara1998tensor,aasen2020topological,seifnashri2024cluster}. In addition to the fine-grained classification of the fusion-measurement preparable states and the connection to non-invertible symmetries, we consider a more general setting in which defects can be pushed either to the left or to the right, extending the framework of Ref.~\cite{smith2024constant,sahay2025classifying,stephen2024preparing,zhang2024characterizing,sahay2024finite}. It turns out that this generalization does not lead to additional capability in state preparation, addressing one of the open questions raised in Ref.~\cite{sahay2025classifying}.

By classifying the pushable defect of a target MPS, our framework naturally gives rise to a preparation scheme that involves multiple rounds of MF circuits and FDLU circuits (Fig.~\ref{fig:circuit}). As an immediate application, we discuss the preparation of an MPS that requires two rounds of MF circuits, and notably, reveal a connection between the preparation protocol and the implementation of the cosine symmetry, a more general class of non-invertible symmetries that include KW and KT as special cases~\cite{chang2021lorentzian,thorngren2024fusion2,aasen2020topological,hsin2024fractionalization,seifnashri2025gauging,cao2025global}.

This paper is organized as follows: in Sec.~\ref{sec:FDLU} we discuss the MPSs preparable by FDLU and its transfer matrix structure. In Sec.~\ref{sec:MF}, we discuss the general measurement-feedback circuit architecture. In Sec.~\ref{sec:all_pauli}, we consider the case where the MPS tensor of the target state possesses a complete set of pushable Pauli defects; we classify their structures and establish a connection to non-invertible symmetries.  In Sec.~\ref{sec:pauli_z} and Sec.~\ref{sec:Pauli_subspace}, we relax the conditions imposed on pushable defects so that they do not need to form a complete set, allowing us to prepare more general classes of states, many of which are beyond the fusion-measurement framework. In particular, in Sec.~\ref{sec:pauli_z} we consider the case where only the Pauli $Z$ defect is pushable, and in Sec.~\ref{sec:Pauli_subspace} we consider the case where only the Pauli defects in a subspace of the entire virtual space of the MPS tensor are pushable. In Sec.~\ref{sec:examples} we present several examples; these include certain imaginary-time-deformed states, states that have spontaneously broken non-invertible symmetries, and states with non-onsite symmetries. In Sec.~\ref{sec:outlook} we summarize our work and highlight future questions.

\section{State preparation with a finite-depth local unitary (FDLU)}
\label{sec:FDLU}

Before introducing our measurement and feedback preparation scheme, let us first discuss the preparation of quantum states via FDLU circuits to familiarize the readers with notations and techniques used throughout this work. In this section, we will show that a necessary and sufficient condition for an MPS to be FDLU-preparable can be inferred from the spectrum of its transfer matrix (Lemma \ref{lemma:FDLU}).

Suppose that state $\ket{\Psi}$ is prepared from the product state $\ket{0}^{\otimes N}$ using an FDLU, due to the Lieb-Robinson bound, $\ket{\Psi}$ must be short-range entangled and therefore can be represented as a matrix product state (MPS) $\ket{A}$ with a finite bond dimension. In the following discussion, when referring to a translation-invariant MPS $\ket{A}$, we implicitly refer to a family of states $\{\ket{A}_N\}$ with system size $N\in \mathbb{N}$. We require that the properties of $\ket{A}_N$ are insensitive to $N$ in order to take the thermodynamic limit. The Hilbert space of a translation-invariant one-dimensional (1D) system is given by $\mathcal{H}=\bigotimes_{i=1}^N\mathbb{C}^{d}$, where each site hosts a $d$-dimensional degree of freedom. A translation-invariant MPS is given by a local tensor $A$ as follows,
\begin{equation}
\begin{aligned}
\ket{A}=\sum_{\{i_n\}}&\Tr\bigg(
\begin{tikzpicture}[baseline=(current bounding box),scale=1]        
    \draw (.3,0) --  (.7,0);
    \foreach \i in {1,2} {
        \node at (\i,.75) {$i_{\i}$};
        \draw (\i,.3) -- (\i,.5);
        \filldraw[fill = white] (\i-.3,-.3) rectangle node {$A$} (\i+.3,.3);
        \draw (\i+.3,0) --  (\i+.7,0);
    }
    \draw[opacity=0] (0,-.5) -- (1,-.5);
    \node at (-0.05,0) {$...$};
    \node at (3.05,0) {$...$};
\end{tikzpicture}\bigg) \ket{\{i_n\}}. 
\end{aligned} 
\end{equation}
The transfer matrix $T_A$ for this local tensor is a matrix in the doubled virtual space,
\beq
    T_A &= \ \begin{tikzpicture}[baseline=(current bounding box),scale=1]
        \draw (0,.3) -- (0,.5);
        \filldraw[fill = white] (-.3,-.3) rectangle node {$A$} (.3,.3);
        \draw (0.7,0) --  (0.3,0);
        \draw (-0.3,0) --  (-0.7,0);
        \filldraw[fill = white] (-.3,.5) rectangle node {$\bar{A}$} (.3,1.1);
        \draw (0.7,.8) --  (0.3,.8);
        \draw (-0.3,.8) --  (-0.7,.8);
    \end{tikzpicture}\ .
    \label{eq:transfer_matrix}
\eeq
The tensor $\bar{A}$ denotes the complex conjugate of $A$, with the physical bond directed downward. For any local tensor $A$, its transfer matrix $T_A$ always defines a completely positive map~\cite{cirac2021matrix}:\footnote{The matrices in the diagrams act from top to bottom and from left to right throughout this work.}
\beq
    & \begin{tikzpicture}[baseline=(current bounding box),scale=1]
        \draw (0.7,0) --  (0.3,0);
        \draw (0.3,.8) --  (0.7,.8);
        \filldraw[fill = white] (.7,-.3) rectangle node {$\rho^R$} (1.4,1.1);
    \end{tikzpicture}
    \ \rightarrow\ \begin{tikzpicture}[baseline=(current bounding box),scale=1]
        \draw (0,.3) -- (0,.5);
        \filldraw[fill = white] (-.3,-.3) rectangle node {$A$} (.3,.3);
        \draw (0.7,0) --  (0.3,0);
        \draw (-0.3,0) --  (-0.7,0);
        \filldraw[fill = white] (-.3,.5) rectangle node {$\bar{A}$} (.3,1.1);
        \draw (0.7,.8) --  (0.3,.8);
        \draw (-0.3,.8) --  (-0.7,.8);
        \filldraw[fill = white] (.7,-.3) rectangle node {$\rho^R$} (1.4,1.1);
    \end{tikzpicture}\ ,\\[2ex]
    & \begin{tikzpicture}[baseline=(current bounding box),scale=1]
        \draw (-0.3,0) --  (-0.7,0);
        \draw (-0.3,.8) --  (-0.7,.8);
        \filldraw[fill = white] (-1.4,-.3) rectangle node {$\rho^L$} (-.7,1.1);
    \end{tikzpicture}\ \rightarrow\ \begin{tikzpicture}[baseline=(current bounding box),scale=1]
        \draw (0,.3) -- (0,.5);
        \filldraw[fill = white] (-.3,-.3) rectangle node {$A$} (.3,.3);
        \draw (0.7,0) --  (0.3,0);
        \draw (-0.3,0) --  (-0.7,0);
        \filldraw[fill = white] (-.3,.5) rectangle node {$\bar{A}$} (.3,1.1);
        \draw (0.7,.8) --  (0.3,.8);
        \draw (-0.3,.8) --  (-0.7,.8);
        \filldraw[fill = white] (-1.4,-.3) rectangle node {$\rho^L$} (-.7,1.1);
    \end{tikzpicture}\ .
    \label{eq:CP_map}
\eeq

Positivity of this map ensures that for any positive $\rho$, i.e. $\bra{\psi}\rho\ket{\psi}\geq 0$ for any $\ket{\psi}$, defined in the virtual space, $T_A(\rho)$ is also positive.

\begin{definition}
    An MPS $\ket{A}$ is {\it FDLU-preparable} if it can be prepared from a product state using an FDLU circuit.
\end{definition}

In the remainder of this section, we discuss the necessary and sufficient condition for a state $\ket{A}$ to be FDLU-preparable. 

For the ``necessary'' direction, it is shown in Ref.~\cite{malz2024preparation} that $\ket{A}$ must have a vanishing correlation length. Here, we instead give a heuristic argument for states whose preparations are insensitive to the system size $N$. In other words, we expect that there exists a matrix product operator (MPO) generated by a tensor $M$ that prepares $\ket{A}$ for any $N$, and we expect this MPO to be an FDLU for any $N$. To be explicit, there is a local tensor $M$ satisfying
\begin{equation}
\begin{aligned}
\begin{tikzpicture}[baseline=(current bounding box),scale=1.2]
        \draw (0,.3) -- (0,.5);
        \filldraw[fill = white] (-.3,-.3) rectangle node {$A$} (.3,.3);
        \draw (0.7,0) --  (0.3,0);
        \draw (-0.3,0) --  (-0.7,0);
        \node at (1,0) {$\equiv$};
        \draw (2,.3) -- (2,.5);
        \filldraw[fill = white] (1.7,-.3) rectangle node {$M$} (2.3,.3);
        \draw (2.7,0) --  (2.3,0);
        \draw (1.7,0) --  (1.3,0);
        \draw (2,-.3) -- (2,-.5);
        \node at (2,-.7) {$0$};
\end{tikzpicture}\ ,
\label{eq:M_tensor}
\end{aligned}  
\end{equation}
that generates an FDLU $D_M$ for any system size $N$ as follows:
\begin{equation}
\begin{aligned}
D_{M}=\sum_{\{i_n, j_n\}}\operatorname{Tr}\bigg(
\begin{tikzpicture}[baseline=(current bounding box),scale=1]        
    \draw (.3,0) --  (.7,0);
    \foreach \i in {1,2} {
        \node at (\i,.75) {$i_{\i}$};
        \draw (\i,.3) -- (\i,.5);
        \node at (\i,-.75) {$j_{\i}$};
        \draw (\i,-.3) -- (\i,-.5);
        \filldraw[fill = white] (\i-.3,-.3) rectangle node {$M$} (\i+.3,.3);
        \draw (\i+.3,0) --  (\i+.7,0);
    }
    \node at (0,0) {$...$};
    \node at (3,0) {$...$};
\end{tikzpicture}\bigg) \ket{\{i_n\}}\bra{\{j_n\}}
\label{eq:D_M}
\end{aligned} 
\end{equation} 
It turns out that this requirement is fulfilled as long as $D_M$ is unitary for any $N$~\cite{shukla2025simple}. From the unitarity of $D_M$,\footnote{For convenience, all the equalities in this work regarding MPS tensors and transfer matrices are up to a normalization factor.}
\beq
    D_M^{\dagger} D_M =\operatorname{Tr}\bigg(
\begin{tikzpicture}[baseline=(current bounding box),scale=1]        
    \draw (.3,0) --  (.7,0);
    \foreach \i in {1,2} {
        \draw (\i,1.3) -- (\i,-.5);
        \filldraw[fill = white] (\i-.3,-.3) rectangle node {$M$} (\i+.3,.3);
        \draw (\i+.3,0) --  (\i+.7,0);
        \filldraw[fill = white] (\i-.3,.5) rectangle node {$\bar{M}$} (\i+.3,1.1);
        \draw (\i+0.7,.8) --  (\i+0.3,.8);
        \draw (\i-0.3,.8) --  (\i-0.7,.8);
    }
    \node at (0,0.4) {$...$};
    \node at (3,0.4) {$...$};
\end{tikzpicture}\bigg)= I .
\eeq 
Taking the dangling physical bonds as zero, the above equation yields $\Tr(T_A^N) = 1$. When this equation holds for any $N$, the spectrum of $T_A$ must be $(1,0,\cdots,0)$. 

In general, a matrix $T_A$ with only one non-zero eigenvalue does not necessarily mean that it is of rank 1 as $T_A$ could contain a non-trivial Jordan block. Since $T_A$ is a $D^2\times D^2$ matrix in the doubled virtual space, the non-trivial Jordan block in the zero eigenvalue subspace is at most $D^2-1$ dimensional when $T_A$ has only one non-zero eigenvalue. As a result, the $n$-th power of $T_A$ must be of rank 1 as long as $n\geq D^2$. 

Given an MPS $\ket{A}$ with a $d$-dimensional local physical Hilbert space on each site, there is a standard procedure, dubbed {\it site blocking}, where one blocks $n$ consecutive sites into a large site with a $d^n$-dimensional physical Hilbert space. After site blocking, the corresponding local MPS tensor $A'$ is given by
\begin{equation}
\begin{aligned}
\begin{tikzpicture}[baseline=(current bounding box),scale=1]
        \draw (-.8,.3) -- (-.8,.7);
        \draw (-.5,.3) -- (-.5,.7);
        \node at (-.2,.5) {...};
        \draw (.1,.3) -- (.1,.7);
        \filldraw[fill = white] (-1,-.3) rectangle node {$A'$} (.3,.3);
        \draw (.7,0) --  (.3,0);
        \draw (-1,0) --  (-1.4,0);
        \node at (1,0) {$\equiv$};
        \filldraw[fill = white] (1.7,-.3) rectangle node {$A$} (2.3,.3);
        \draw (1.7,0) --  (1.3,0);
        \draw (2,.3) -- (2,.7);
        \draw (2.7,0) --  (2.3,0);
        \filldraw[fill = white] (2.7,-.3) rectangle node {$A$} (3.3,.3);
        \draw (3,.3) -- (3,.7);
        \draw (3.5,0) --  (3.3,0);
        \node at (3.8,0) {...};
        \draw (4.1,0) --  (4.3,0);
        \filldraw[fill = white] (4.3,-.3) rectangle node {$A$} (4.9,.3);
        \draw (4.6,.3) -- (4.6,.7);
        \draw (4.9,0) -- (5.3,0);
\end{tikzpicture}
\end{aligned}  
\end{equation} 
Correspondingly, the transfer matrices before and after the blocking satisfy a simple relation $T_{A'} = T_A^n$. Therefore, after blocking $n\geq D^2$ sites, the transfer matrix must be of rank one. 

For the ``sufficient'' direction, let us start with a target state $\ket{A}$ with a rank-1 transfer matrix. The general form of a rank-1 transfer matrix $T_A$ is as follows: 
\beq
    T_A =\ \begin{tikzpicture}[baseline=(current bounding box),scale=1.5]
        \draw[rounded corners=8pt]
        (-.8,.8) -- (-.3,.8) -- (-.3,0) -- (-.8,0);
        \draw[rounded corners=8pt]
        (.8,.8) -- (.3,.8) -- (.3,0) -- (.8,0);
        \filldraw[draw, circle, fill=white, minimum size=6pt](-.3,0.4) circle (6pt);
        \node at (-0.3,0.4) {$R$};
        \filldraw[draw, circle, fill=white, minimum size=6pt](.3,0.4) circle (6pt);
        \node at (0.3,0.4) {$L$};
        \end{tikzpicture}\ .
        \label{eq:T_rank_1}
\eeq

\begin{figure*}[!t]
    \centering
    \includegraphics[width=0.85\linewidth]{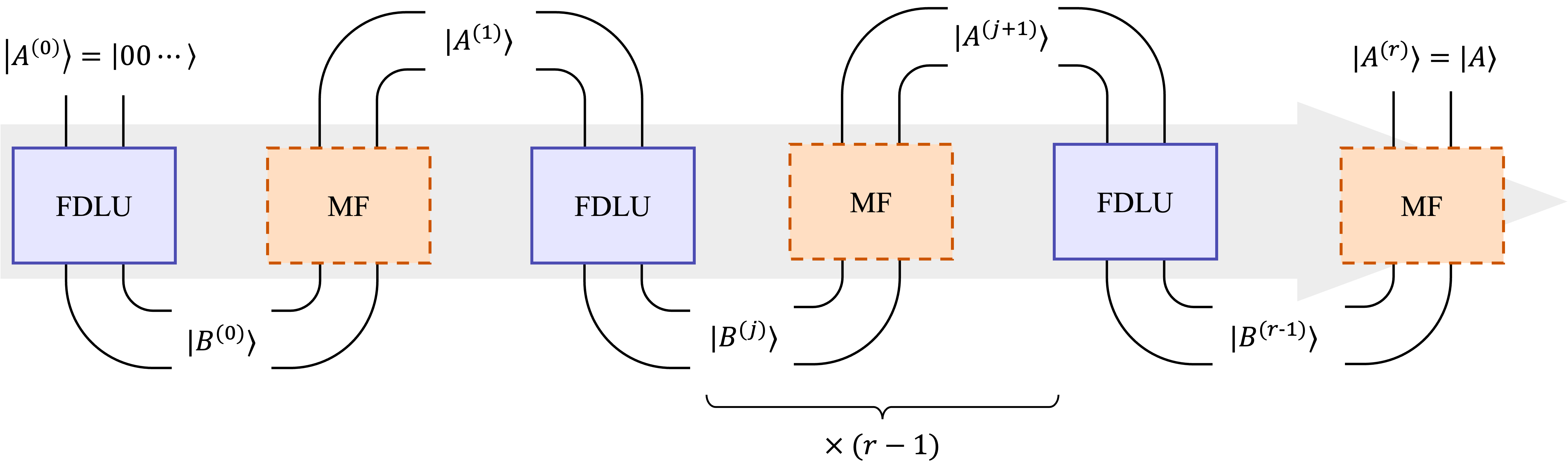}
    \caption{Sequential preparation of the target state $\ket{A}$ from $\ket{00\cdots}$ using circuits with local measurements and feedback corrections. In the sequence, $|B^{(j)}\rangle $ is prepared from $|A^{(j)}\rangle $ using FDLU, and $|A^{(j+1)}\rangle$ is prepared from $|B^{(j)}\rangle$ using a single round of local measurements and feedback corrections.}
    \label{fig:circuit}
\end{figure*}

By the definition of the transfer matrix in Eq.~\eqref{eq:transfer_matrix}, $R$ and $L$ are non-negative Hermitian matrices. This is because if, say, $L$ has a negative eigenvalue associated with eigenvector $\ket{\psi}$, then the above $T_A$ does not preserve the positivity of $\rho^R = \ket{\psi}\bra{\psi}$. This contradicts the fact that a transfer matrix must correspond to a completely positive map. We diagonalize both matrices and write this transfer matrix as
\beq
    T_A =\ \begin{tikzpicture}[baseline=(current bounding box),scale=1.5]
        \draw[rounded corners=8pt]
        (-1.1,.8) -- (-.3,.8) -- (-.3,0) -- (-1.1,0);
        \draw[rounded corners=8pt]
        (1.1,.8) -- (.3,.8) -- (.3,0) -- (1.1,0);
        \filldraw[draw, circle, fill=white, minimum size=6pt](-.7,0.8) circle (6pt);
        \node at (-0.7,0.8) {$\bar{u}^R$};
        \filldraw[draw, circle, fill=white, minimum size=6pt](-.7,0) circle (6pt);
        \node at (-0.7,0) {$u^R$};
        \filldraw[draw, circle, fill=white, minimum size=6pt](-.3,0.4) circle (6pt);
        \node at (-0.3,0.4) {$\Lambda^R$};
        \filldraw[draw, circle, fill=white, minimum size=6pt](.3,0.4) circle (6pt);
        \node at (0.3,0.4) {$\Lambda^L$};
        \filldraw[draw, circle, fill=white, minimum size=6pt](.7,0.8) circle (6pt);
        \node at (0.7,0.8) {$\bar{u}^L$};
        \filldraw[draw, circle, fill=white, minimum size=6pt](.7,0) circle (6pt);
        \node at (0.7,0) {$u^L$};
        \end{tikzpicture}\ ,
\eeq
where both $\Lambda^R$ and $\Lambda^L$ are diagonal matrices with non-negative entries. A representative MPS tensor for this transfer matrix can be given by
\beq
    A' =\ \begin{tikzpicture}[baseline=(current bounding box),scale=1.5]
        \draw[rounded corners=8pt]
        (-.3,.8) -- (-.3,0) -- (-1.1,0);
        \draw[rounded corners=8pt]
        (.3,.8) -- (.3,0) -- (1.1,0);
        \filldraw[draw, circle, fill=white, minimum size=6pt](-.3,0.4) circle (6pt);
        \node at (-0.3,0.4) {$\Gamma^{R}$};
        \filldraw[draw, circle, fill=white, minimum size=6pt](.3,0.4) circle (6pt);
        \node at (0.3,0.4) {$\Gamma^{L}$};
        \filldraw[draw, circle, fill=white, minimum size=6pt](-.7,0) circle (6pt);
        \node at (-0.7,0) {$u^R$};
        \filldraw[draw, circle, fill=white, minimum size=6pt](.7,0) circle (6pt);
        \node at (0.7,0) {$u^L$};
        \end{tikzpicture}\ ,
\eeq
where $\Gamma$'s are the square root of $\Lambda$'s. This tensor generates the following product state 
\beq
    \ket{A'} =\ \begin{tikzpicture}[baseline=(current bounding box),scale=1.5]
        \draw[rounded corners=8pt]
        (-.3,.3) -- (-.3,0) -- (-1.1,0) -- (-1.1,.3);
        \draw[rounded corners=8pt]
        (0,.3) -- (0,0) -- (.8,0) -- (.8,0.3);
        \filldraw[draw, circle, fill=white, minimum size=6pt](-.7,0) circle (6pt);
        \node at (-.7,0) {$S$};
        \filldraw[draw, circle, fill=white, minimum size=6pt](.4,0) circle (6pt);
        \node at (.4,0) {$S$};
        \node at (1.3,0) {...};
        \node at (-1.6,0) {...};
        \end{tikzpicture}
        \label{eq:FDLU_rep_state}
\eeq
where $S = \Gamma^L u^L u^R (\Gamma^R)^T$. Since the MPSs $\ket{A}$ and $\ket{A'}$ have the same transfer matrix, they differ only by a transversal isometry, i.e., a transversal unitary circuit up to a disentangled ancilla~\cite{QChannelLecture,cirac2021matrix}. Therefore, we can define an FDLU as a composition of a circuit that transforms $\ket{0}^{\otimes N}$ to the product state $\ket{A'}$, and another circuit that transforms $\ket{A'}$ to $\ket{A}$. This FDLU is translation invariant and prepares the target state $\ket{A}$ from $\ket{0}^{\otimes N}$ for any $N$.

In the remainder of this work, we will assume that the site blocking has been done, such that we can equate the FDLU-preparable condition for state $\ket{A}$ with the rank-1 condition for its transfer matrix $T_A$. We conclude the discussion with Lemma 1, the main result of this section: 

\begin{lemma}
    The necessary and sufficient condition for an MPS $\ket{A}$ to be FDLU-preparable is that $T_A$ is of rank 1, after a finite site blocking.
    \label{lemma:FDLU}
\end{lemma}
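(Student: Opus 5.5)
The proof splits into the two directions already sketched in the text above, and I would assemble them in that order.

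\emph{Necessity.} I would assume $\ket{A}$ is FDLU-preparable in a size-independent way. That is, there is a local tensor $M$ as in Eq.~\eqref{eq:M_tensor} whose MPO $D_M$ in Eq.~\eqref{eq:D_M} is unitary for every $N$. If a given FDLU is not already of this form, I would block its light cone into a single MPO tensor of finite bond dimension, invoking Ref.~\cite{shukla2025simple}. Contracting $D_M^\dagger D_M=I$ with $\ket{0}^{\otimes N}$ on both sides gives $\Tr(T_A^N)=1$ up to normalization, for all $N$. Since $\sum_k\lambda_k^N=1$ for all $N\ge1$ forces the multiset of nonzero eigenvalues to be $\{1\}$ (a Newton-identity / Vandermonde argument on power sums), $T_A$ has spectrum $(1,0,\dots,0)$. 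Hence $T_A=P+\mathcal{N}$, where $P$ is the rank-one spectral projector and $\mathcal{N}$ is nilpotent with $P\mathcal{N}=\mathcal{N}P=0$. Because $\mathcal{N}$ acts on a space of dimension at most $D^2-1$, we have $\mathcal{N}^{n}=0$ for $n\ge D^2$. Blocking $n\ge D^2$ sites then gives $T_{A'}=T_A^n=P$, which has rank one.

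\emph{Sufficiency.} After blocking, I would write $T_A$ in the rank-one form of Eq.~\eqref{eq:T_rank_1}, with $R$ and $L$ positive semidefinite by complete positivity. Diagonalizing them and defining $A'$ as above yields a product of dimers $S$ on neighboring bonds, as in Eq.~\eqref{eq:FDLU_rep_state}. Each dimer is a two-site state, so it is preparable from $\ket{0}^{\otimes N}$ by one layer of two-site unitaries. Since $A$ and $A'$ share a transfer matrix, the isometric-freedom theorem for purifications (equivalently, Kraus representations of the same channel) gives an isometry $V$ with $A=VA'$ acting on the physical leg. Dilating $V$ to a unitary with an ancilla in $\ket{0}$ makes this transversal. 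Composing the two layers gives a depth-two circuit.

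The main obstacle is the necessity step. Its genuine content is the passage from an arbitrary FDLU to a translation-invariant MPO unitary for every $N$; I would either restrict the statement to such uniform families, as the text does, or cite Ref.~\cite{malz2024preparation} for zero correlation length, which after blocking is equivalent to rank one. The remaining steps are routine linear algebra.
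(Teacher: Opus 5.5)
Your proposal is correct and follows the paper's own argument essentially step for step. For necessity, unitarity of the size-uniform MPO $D_M$ gives $\Tr(T_A^N)=1$ for all $N$, hence spectrum $(1,0,\dots,0)$ and a nilpotent part that is removed by blocking $n\ge D^2$ sites. For sufficiency, you use positive $R,L$, the dimer product state $\ket{A'}$, and the Kraus/purification freedom to obtain a transversal unitary with ancilla; as in the paper, the necessity direction covers only uniform translation-invariant preparations, with Ref.~\cite{malz2024preparation} as the route to the general case.

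One small precision: since $A'$ has physical dimension $D^2$ while $A$ has dimension $d$, the map with $A=VA'$ is in general only a partial isometry, isometric on the range of $A'$ viewed as a map from virtual to physical legs. That is exactly why the ancilla dilation you invoke is needed.
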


\section{State preparation with measurement and feedback}
\label{sec:MF}

In this section, we discuss the general circuit architecture for state preparation, including the use of single-site measurement and transversal feedback correction. Specifically, to prepare a target state $\ket{A}$, we start from the initial state $\ket{0}^{\otimes N}\otimes \ket{0}^{\otimes \text{ancilla}}$, then apply FDLU and MF circuits in an alternating manner. Throughout the process, a sequence of states $\{|A^{(j)}\rangle \}$ and $\{|B^{(j)}\rangle \}$ for $j=0,\cdots,r$ are prepared: $|B^{(j)}\rangle $ is prepared from $|A^{(j)}\rangle $ using FDLU, and $|A^{(j+1)}\rangle$ is prepared from $|B^{(j)}\rangle$ using a single round of MF circuit; see Fig.~\ref{fig:circuit}.

In Sec.~\ref{sec:general_strategy}, we introduce a general strategy for preparing a target state by analyzing its pushable defects, which applies to  MPSs with both periodic boundary condition (PBC) and open boundary condition (OBC). In Sec.~\ref{sec:open_bdry}, we specialize to the preparation of MPSs with OBC. Under certain mild assumptions, we show that all MF preparable states can be prepared via our defect approach, therefore providing a powerful framework towards classifying the preparability of target states.

\subsection{General approach from defects}\label{sec:general_strategy}
In this subsection, we will introduce the notion of pushable defects in the virtual space of a given state $\ket{A}$, and use it to construct an associated state $\ket{B}$ that can be related to $\ket{A}$ by a single round of MF. This construction enables a general state preparation scheme beyond one round of MF.

Let us consider the last step of the entire preparation circuit in Fig.~\ref{fig:circuit}, which is to prepare the target state $\ket{A}$ from another state $\ket{B}$ using a single round of MF. To begin, we assume that there is an ancillary bond at site $n$ with a $d_a$-dimensional Hilbert space spanned by the computational basis $\ket{a_n}$ with $a_n=0,1,2,..., d_a-1$.  We write $\ket{B}$ as the following MPS:\footnote{The MPS shown here is represented with PBC, but our general preparation strategy applies to MPSs represented with both PBC and OBC.} 
\begin{equation}
\begin{aligned}
\ket{B}=\sum_{\{i_n,a_n\}}\Tr\bigg(
\begin{tikzpicture}[baseline=(current bounding box),scale=1.2]        
    \draw (.3,0) --  (.7,0);
    \foreach \i in {1,2} {
        \node at (\i-.25,.75) {$i_{\i}$};
        \node at (\i+.25,.75) {$a_{\i}$};
        \draw (\i-.2,.3) -- (\i-.2,.5);
        \draw (\i+.2,.3) -- (\i+.2,.5);
        \filldraw[fill = white] (\i-.3,-.3) rectangle node {$B$} (\i+.3,.3);
        \draw (\i+.3,0) --  (\i+.7,0);
    }
    \draw[opacity=0] (0,-.5) -- (1,-.5);
    \node at (-0,0) {...};
    \node at (3,0) {...};
\end{tikzpicture}\bigg) \ket{\{i_n,a_n\}}.
\end{aligned} 
\end{equation}
We construct the local tensor $B$ such that it satisfies 
\begin{equation}
\begin{aligned}
\begin{tikzpicture}[baseline=(current bounding box),scale=1]
        \draw (2,.3) -- (2,.7);
        \filldraw[fill = white] (-.3,-.3) rectangle node {$B$} (.3,.3);
        \draw (0.7,0) --  (0.3,0);
        \draw (-0.3,0) --  (-0.7,0);
        \node at (1,0) {$=$};
        \draw (-.2,.3) -- (-.2,.7);
        \draw (+.2,.3) -- (+.2,.7);
        \node at (+.2,.9) {$0$};
        \filldraw[fill = white] (1.7,-.3) rectangle node {$A$} (2.3,.3);
        \draw (2.7,0) --  (2.3,0);
        \draw (1.7,0) --  (1.3,0);
        \node at (2.8,0) {,};
\end{tikzpicture}
\end{aligned}  
\end{equation} 
up to a gauge transformation. The above condition ensures that projecting the rightmost physical index $a_n$ to zero on each site exactly prepares the target state $\ket{A}$. To achieve this projection, a natural approach is to perform the single-site measurement in the computational basis and post-select on the measurement outcome $a_n=0$. However, there are also possibilities for projecting into $a_n=1,\cdots, d_a-1$; the probability of having the outcome $a_n=0$ for all $n$ will generically decay exponentially with the system size $N$. In the following, we will discuss the conditions on $\ket{A}$ and $\ket{B}$ that ensure the unwanted measurement outcomes can be compensated by implementing certain feedback operations without post-selection. First, we require that the local tensor $B$ satisfies
\begin{equation}
\begin{aligned}
\begin{tikzpicture}[baseline=(current bounding box),scale=1]
        \draw (2,.3) -- (2,.7);
        \filldraw[fill = white] (-.3,-.3) rectangle node {$B$} (.3,.3);
        \draw (0.7,0) --  (0.3,0);
        \draw (-0.3,0) --  (-0.7,0);
        \node at (1,0) {$=$};
        \draw (-.2,.3) -- (-.2,.7);
        \draw (+.2,.3) -- (+.2,.7);
        \node at (+.2,.9) {$t$};
        \filldraw[fill = white] (1.7,-.3) rectangle node {$A$} (2.3,.3);
        \draw (3.5,0) --  (2.3,0);
        \draw (1.7,0) --  (1.3,0);
        \filldraw[draw, circle, fill=white, minimum size=6pt](2.9,0) circle (8pt);
        \node at (2.9,0) {$V_t$};
        \node at (3.6,0) {,};
\end{tikzpicture}
\end{aligned}  
\label{eq:B_definition}
\end{equation}  
where $\{V_t\}$ are the defects of tensor $A$ that correspond to various measurement outcomes $t=0,1,\cdots, d_a-1$. As will be highlighted through examples, the state $\ket{B}$ depends on $\ket{A}$ and our choice of defects $\{V_t\}$. In addition, we will demand that those defects be pushable.

\begin{definition}\label{def:pushable}
An error defect $V_t$ is {\it right-pushable}, if there is a set of operators $\{V_t^{[n]}\}$ with $V_t^{[0]}=V_t$, and a set of unitary gates $\{U_t^{[n]}\}$ for any positive integer $n$, such that the defect $V_{t}^{[n-1]}$ on the left virtual bond of tensor $A$ can be pushed to the right virtual bond by applying $U_t^{[n]}$ on the physical bond:
\begin{equation}
\begin{aligned}
\begin{tikzpicture}[baseline=(current bounding box),scale=1]
        \draw (2,.3) -- (2,.7);
        \filldraw[fill = white] (-.3,-.3) rectangle node {$A$} (.3,.3);
        \draw (0.7,0) --  (0.3,0);
        \draw (-0.3,0) --  (-2,0);
        \filldraw[fill = white] (-1.6,-.3) rectangle node {$V_t^{[n-1]}$} (-.5,.3);
        \draw (0,.3) -- (0,1.3);
        \filldraw[fill = white] (-.3,.5) rectangle node {$U_t^{[n]}$} (.3,1.1);
        \node at (1,0) {$=$};
        \draw (3.5,0) --  (2.3,0);
        \draw (4,0) --  (1.3,0);
        \filldraw[fill = white] (1.7,-.3) rectangle node {$A$} (2.3,.3);
        \filldraw[draw, circle, fill=white, minimum size=6pt](2.9,0) circle (8pt);
        \filldraw[fill = white] (2.5,-.3) rectangle node {$V_t^{[n]}$} (3.6,.3);
        \node at (4.1,0) {.};
\end{tikzpicture}
\end{aligned}  
\label{eq:pushing_U_right}
\end{equation}  
Similarly, an error defect is {\it left-pushable}, if there is a set $\{V_t^{[n]}\}$ and a set of unitary gates $\{U_t^{[n]}\}$ for any positive integer $n$ such that
\begin{equation}
\begin{aligned}
\begin{tikzpicture}[baseline=(current bounding box),scale=1]
        \draw (0,.3) -- (0,1.3);
        \filldraw[fill = white] (-.3,-.3) rectangle node {$A$} (.3,.3);
        \draw (-.7,0) --  (-0.3,0);
        \draw (0.3,0) --  (2,0);
        \filldraw[fill = white] (-.3,.5) rectangle node {$U_t^{[n]}$} (.3,1.1);
        \filldraw[fill = white] (.5,-.3) rectangle node {$V_t^{[n-1]}$} (1.6,.3);
        \node at (2.3,0) {$=$};
        \draw (5.3,0) --  (2.6,0);
        \draw (4.6,.3) -- (4.6,.7);
        \filldraw[fill = white] (4.3,-.3) rectangle node {$A$} (4.9,.3);
        \filldraw[fill = white] (3,-.3) rectangle node {$V_t^{[n]}$} (4.1,.3);
        \node at (5.4,0) {.};     
\end{tikzpicture}
\end{aligned}  
\label{eq:pushing_U_left}
\end{equation}  

The above two conditions guarantee that the defect on a given virtual bond can be pushed to another virtual bond by unitary operations acting on physical legs. In particular, with periodic boundary conditions, all defects can be pushed to a single virtual bond. Then by post-selecting the defect-less outcome on that single bond, we can prepare $\ket{A}$ from $\ket{B}$ with a certain probability of success. We will come back to this perspective at the end of this subsection in the last remark. With open boundary conditions, if all defects can be pushed to the boundary and then corrected with a unitary operation, then $\ket{A}$ can be deterministically prepared from $\ket{B}$ with a single round MF circuit. State preparation with open boundary conditions will be discussed in detail in the next subsection (Sec.~\ref{sec:open_bdry}).

The above pushability conditions for a local tensor $A$ directly impose certain constraints on the transfer matrix. In particular, it is straightforward to see that the right-pushable condition in Eq.~\eqref{eq:pushing_U_right} implies the following equality of the transfer matrix:
\beq
\Big(\overline{V}_t^{[n-1]}\otimes V_t^{[n-1]}\Big) T_A = T_A \Big(\overline{V}_t^{[n]}\otimes V_t^{[n]}\Big),\\[2ex]\\
    \begin{tikzpicture}[baseline=(current bounding box),scale=1]
        \draw (0,.3) -- (0,.5);
        \filldraw[fill = white] (-.3,-.3) rectangle node {$A$} (.3,.3);
        \draw (-2,0) --  (-0.3,0);
        \draw (0.3,0) --  (0.7,0);
        \filldraw[fill = white] (-.3,.5) rectangle node {$\bar{A}$} (.3,1.1);
        \draw (-2,.8) --  (-0.3,.8);
        \draw (0.3,.8) --  (0.7,.8);
        \filldraw[fill = white] (-1.6,-.3) rectangle node {$V_t^{[n-1]}$} (-.5,.3);
        \filldraw[fill = white] (-1.6,.8-.3) rectangle node {$\bar{V}_t^{[n-1]}$} (-.5,.8+.3);
    \end{tikzpicture} 
    \ = \ \begin{tikzpicture}[baseline=(current bounding box),scale=1]
        \draw (0,.3) -- (0,.5);
        \filldraw[fill = white] (-.3,-.3) rectangle node {$A$} (.3,.3);
        \draw (2,0) --  (0.3,0);
        \draw (-0.3,0) --  (-0.7,0);
        \filldraw[fill = white] (-.3,.5) rectangle node {$\bar{A}$} (.3,1.1);
        \draw (2,.8) --  (0.3,.8);
        \draw (-0.3,.8) --  (-0.7,.8);
        \filldraw[fill = white] (.5,-.3) rectangle node {$V_t^{[n]}$} (1.6,.3);
        \filldraw[fill = white] (.5,.8-.3) rectangle node {$\bar{V}_t^{[n]}$} (1.6,.8+.3);
    \end{tikzpicture}\ ,
\label{eq:pushable_defect_right}
\eeq

Similarly, the left-pushable condition (Eq.~\eqref{eq:pushing_U_left}) gives the following identity:
\beq
    \Big(\overline{V}^{[n]}_t\otimes V^{[n]}_t\Big) T_A = T_A \Big(\overline{V}^{[n-1]}_t\otimes V^{[n-1]}_t\Big),\\[2ex]
    \begin{tikzpicture}[baseline=(current bounding box),scale=1]
        \draw (0,.3) -- (0,.5);
        \filldraw[fill = white] (-.3,-.3) rectangle node {$A$} (.3,.3);
        \draw (-2,0) --  (-0.3,0);
        \draw (0.3,0) --  (0.7,0);
        \filldraw[fill = white] (-.3,.5) rectangle node {$\bar{A}$} (.3,1.1);
        \draw (-2,.8) --  (-0.3,.8);
        \draw (0.3,.8) --  (0.7,.8);
        \filldraw[fill = white] (-1.6,-.3) rectangle node {$V_t^{[n]}$} (-.5,.3);
        \filldraw[fill = white] (-1.6,.8-.3) rectangle node {$\bar{V}_t^{[n]}$} (-.5,.8+.3);
    \end{tikzpicture}
    \ = \ \begin{tikzpicture}[baseline=(current bounding box),scale=1]
        \draw (0,.3) -- (0,.5);
        \filldraw[fill = white] (-.3,-.3) rectangle node {$A$} (.3,.3);
        \draw (2,0) --  (0.3,0);
        \draw (-0.3,0) --  (-0.7,0);
        \filldraw[fill = white] (-.3,.5) rectangle node {$\bar{A}$} (.3,1.1);
        \draw (2,.8) --  (0.3,.8);
        \draw (-0.3,.8) --  (-0.7,.8);
        \filldraw[fill = white] (.5,-.3) rectangle node {$V_t^{[n-1]}$} (1.6,.3);
        \filldraw[fill = white] (.5,.8-.3) rectangle node {$\bar{V}_t^{[n-1]}$} (1.6,.8+.3);
    \end{tikzpicture}\ .
    \label{eq:pushable_defect_left}
\eeq

\end{definition}

In the reverse direction, the equations in terms of transfer matrices also imply pushability of the MPS tensor. To see this, we notice that  Eq.~\eqref{eq:pushable_defect_right} suggests that the following tensors
\beq
    \begin{tikzpicture}[baseline=(current bounding box),scale=1]
        \draw (0,.3) -- (0,.7);
        \filldraw[fill = white] (-.3,-.3) rectangle node {$A$} (.3,.3);
        \draw (-2,0) --  (-0.3,0);
        \draw (0.3,0) --  (0.7,0);
        \filldraw[fill = white] (-1.6,-.3) rectangle node {$V_t^{[n-1]}$} (-.5,.3);
        \node at (.8,0) {,};
    \end{tikzpicture}\quad 
    \begin{tikzpicture}[baseline=(current bounding box),scale=1]
        \draw (0,.3) -- (0,.7);
        \filldraw[fill = white] (-.3,-.3) rectangle node {$A$} (.3,.3);
        \draw (2,0) --  (0.3,0);
        \draw (-0.3,0) --  (-0.7,0);
        \filldraw[fill = white] (.5,-.3) rectangle node {$V_t^{[n]}$} (1.6,.3);
        \node at (2.1,0) {,};
    \end{tikzpicture}\ 
\eeq
have the same transfer matrix. The two tensors also have the same physical bond dimension by definition. Therefore, there exists a unitary operator $U^{[n]}$ that relates the two tensors as shown in Eq.~\eqref{eq:pushing_U_right}~\cite{QChannelLecture,cirac2021matrix}.\footnote{See also Appendix B of Ref.~\cite{sahay2025classifying} and Theorem 2 of Ref.~\cite{smith2024constant} for similar discussions.} As such, the pushability defined in Eqs.~\eqref{eq:pushing_U_right}, \eqref{eq:pushing_U_left} in terms of the MPS tensor is exactly equivalent to Eqs.~\eqref{eq:pushable_defect_right}, \eqref{eq:pushable_defect_left} in terms of transfer matrices. The latter provides a powerful tool to classify MPSs preparable with MF circuits, which will be employed throughout the rest of the work.

Another important ingredient is the relation between the transfer matrices of $A$ and $B$. Specifically, Eq.~\eqref{eq:B_definition} yields

\beq
    T_B &= \ \begin{tikzpicture}[baseline=(current bounding box),scale=1]
        \draw (0-.2,.3) -- (0-.2,.5);
        \draw (0+.2,.3) -- (0+.2,.5);
        \filldraw[fill = white] (-.3,-.3) rectangle node {$B$} (.3,.3);
        \draw (0.7,0) --  (0.3,0);
        \draw (-0.3,0) --  (-0.7,0);
        \filldraw[fill = white] (-.3,.5) rectangle node {$\bar{B}$} (.3,1.1);
        \draw (0.7,.8) --  (0.3,.8);
        \draw (-0.3,.8) --  (-0.7,.8);
    \end{tikzpicture}
    = \sum_{t=0}^{d_a-1} \
    \begin{tikzpicture}[baseline=(current bounding box),scale=1]
        \draw (0,.3) -- (0,.5);
        \filldraw[fill = white] (-.3,-.3) rectangle node {$A$} (.3,.3);
        \draw (1.3,0) --  (0.3,0);
        \draw (-0.3,0) --  (-0.7,0);
        \filldraw[fill = white] (-.3,.5) rectangle node {$\bar{A}$} (.3,1.1);
        \draw (1.3,.8) --  (0.3,.8);
        \draw (-0.3,.8) --  (-0.7,.8);
        \filldraw[draw, circle, fill=white, minimum size=6pt](.8,0) circle (8pt);
        \node at (.8,0) {$V_t$};
        \filldraw[draw, circle, fill=white, minimum size=6pt](.8,0.8) circle (8pt);
        \node at (.8,0.8) {$\bar{V}_t$};
    \end{tikzpicture}\\[2ex]
    &= T_A \Big(\sum_t \bar{V}_t\otimes V_t\Big).
    \label{eq:T_B}
\eeq 

According to Lemma \ref{lemma:FDLU}, when $\ket{B}$ is FDLU-preparable, its transfer matrix $T_B$ must be of rank 1 after site blocking. Eq.~\eqref{eq:T_B} then implies that $T_A \Big(\sum_t \bar{V}_t\otimes V_t\Big)$ must be of rank-1, if $\ket{A}$ can be prepared from an FDLU-preparable state $\ket{B}$ using a 1-round MF circuit. Furthermore, we can define $r$-round MF-preparability as follows:

\begin{definition}
    An MPS $\ket{A}$ is {\it $r$-round MF-preparable} if it can be prepared from a product state using alternations of FDLU and MF circuits, with $r$ rounds of MF in total.
\end{definition}

Then, based on the previous discussions, we can introduce a hierarchy of MF circuits for state preparation: 

\begin{lemma}
    An MPS $\ket{A}$ with a set of pushable defects  $\{V_t\}$ is 1-round MF-preparable, if the associated MPS $\ket{B}$ with transfer matrix $T_{B}\propto T_A\cdot\big(\sum_t \overline{V}_t\otimes V_t\big)$ is FDLU-preparable. Similarly, an MPS $\ket{A}$ is $r$-round MF-preparable, if the associated MPS $\ket{B}$ is $(r-1)$-round MF-preparable.
    \label{lemma:A_B}
\end{lemma}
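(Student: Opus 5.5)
The plan is to prove the 1-round statement by an explicit circuit, and then get the $r$-round statement by induction on $r$ using the definition of $r$-round MF-preparability.

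\textbf{Step 1 (from $T_B$ to the state $\ket{B}$).} Given $A$ and pushable defects $\{V_t\}$, I will define the tensor $B$ directly through Eq.~\eqref{eq:B_definition}, with the ancilla index $t$ carrying $A$ followed by $V_t$ on the right bond. Eq.~\eqref{eq:T_B} then gives $T_B = T_A(\sum_t \bar V_t\otimes V_t)$. The hypothesis only fixes the transfer matrix of some FDLU-preparable state, so I need to connect it to my concrete $B$. Any two tensors with the same transfer matrix differ by a transversal isometry on the physical legs, as used in Sec.~\ref{sec:FDLU}, padding physical dimensions with ancillas if needed. Hence if some state with transfer matrix $\propto T_B$ is FDLU-preparable (equivalently, by Lemma~\ref{lemma:FDLU}, $T_B$ is rank one after blocking), then my concrete $\ket{B}$ is FDLU-preparable, since a transversal unitary is itself an FDLU.

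\textbf{Step 2 (one round of MF).} Starting from $\ket{B}$, I measure every ancilla $a_n$ in the computational basis and obtain outcomes $\{t_n\}$. By Eq.~\eqref{eq:B_definition}, the post-measurement state is the MPS of $A$ tensors with a defect $V_{t_n}$ inserted on the bond to the right of each site $n$.

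I then remove the defects by sweeping them to one side using Definition~\ref{def:pushable}. For right-pushable defects, I process the bonds from left to right: the defect on bond $n$ is pushed through site $n+1$ by applying $U_{t_n}^{[1]}$ there, becomes $V_{t_n}^{[1]}$, is then pushed through site $n+2$ by $U_{t_n}^{[2]}$, and so on. Defects accumulating on the same bond compose into a product, and each factor is pushed by its own sequence of unitaries. Because the unitary for a given factor only depends on that factor's current stage, the total correction on site $m$ is an ordered product $\prod_{n<m} U^{[m-n]}_{t_n}$. This is a single-site unitary that depends on the outcomes to its left.

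All corrections are single-site unitaries applied after classical communication, so together they form one feedback layer, i.e. a single MF round. Left-pushable defects are handled symmetrically. With OBC, the defects accumulate on the boundary bond and are absorbed there by a boundary unitary, giving a deterministic protocol. With PBC, they all accumulate on one bond, and post-selecting the trivial net defect there succeeds with probability independent of $N$, matching the remark after Definition~\ref{def:pushable}.

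Composing the FDLU for $\ket{B}$ with this MF round gives 1-round preparability.

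\textbf{Step 3 (induction).} If $\ket{B}$ is $(r-1)$-round MF-preparable, I append the single MF round of Step 2, which yields $r$ rounds in total. No new argument is needed, because Step 2 never used that $\ket{B}$ came from an FDLU.

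\textbf{Main obstacle.} The delicate point is the ordering in Step 2. Pushing a defect through site $m$ must not be spoiled by defects that sit on bond $m$ from other origins, and the products of defects must be pushed consistently. I would handle this by pushing the defects one at a time, rightmost origin first: each push is then an exact local tensor identity on a single site. Since the pushing relations hold at the level of tensors, not just transfer matrices (the equivalence established after Definition~\ref{def:pushable}), these identities compose regardless of what other operators sit on neighboring bonds.
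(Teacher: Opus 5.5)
Your route is the paper's. You define $B$ through Eq.~\eqref{eq:B_definition}, read off $T_B=T_A(\sum_t\bar V_t\otimes V_t)$, and transfer FDLU-preparability to your concrete $B$ using Lemma~\ref{lemma:FDLU} plus the fact that equal transfer matrices imply a transversal isometry. You then measure, push the defects, and iterate for $r$ rounds. Your bulk bookkeeping is correct and more explicit than the paper: you push the factor adjacent to the tensor first, so site $m$ receives the left-conditioned unitary $\prod_{n<m}U^{[m-n]}_{t_n}$.

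The gap is in disposing of the net defect left at the end of the sweep.

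\emph{PBC.} You claim that post-selecting a trivial net defect succeeds with $N$-independent probability, ``matching the remark after Definition~\ref{def:pushable}''. That remark says the opposite. For the GHZ tensor with right-pushable defects $\{I,\sigma_+,\sigma_-,\sigma_z\}$, every outcome string containing a $\sigma_\pm$ produces a non-invertible net defect, so the success probability decays exponentially in $N$. $N$-independence is established only for Pauli defects pushed to Pauli defects, in Appendix~\ref{app:success}, using the fact that outcomes at different sites are independent. You must add that hypothesis and that argument.

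\emph{OBC.} Removing the boundary defect $V^{[N-1]}_{t_1}\cdots V^{[1]}_{t_{N-1}}V_{t_N}$ with a unitary requires the $V^{[n]}$, conjugated by the boundary matrix, to be unitary. This is exactly the extra assumption imposed in Sec.~\ref{sec:open_bdry}, and it fails for $\sigma_\pm$. With these hypotheses stated, your argument proves the lemma in the generality the paper actually supports.

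A smaller point concerns your claim that the pushing identities compose ``regardless of what other operators sit on neighboring bonds''. This holds only when all defects move in the same direction. If the set mixes right- and left-pushable defects, opposite-moving ones must cross. Pushing $V_R$ through a site whose right bond carries $V_L$ gives $A^i V_R' V_L$. Now $V_L$ is no longer adjacent to the tensor, so the left-pushing identity applies only if $V_R'$ and $V_L$ commute up to a phase. That holds for Pauli defects (e.g.\ class $RILI$) but not in general, so ``handled symmetrically'' needs that extra condition.
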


\begin{figure}[t]
    \centering
    \includegraphics[width=0.9\linewidth]{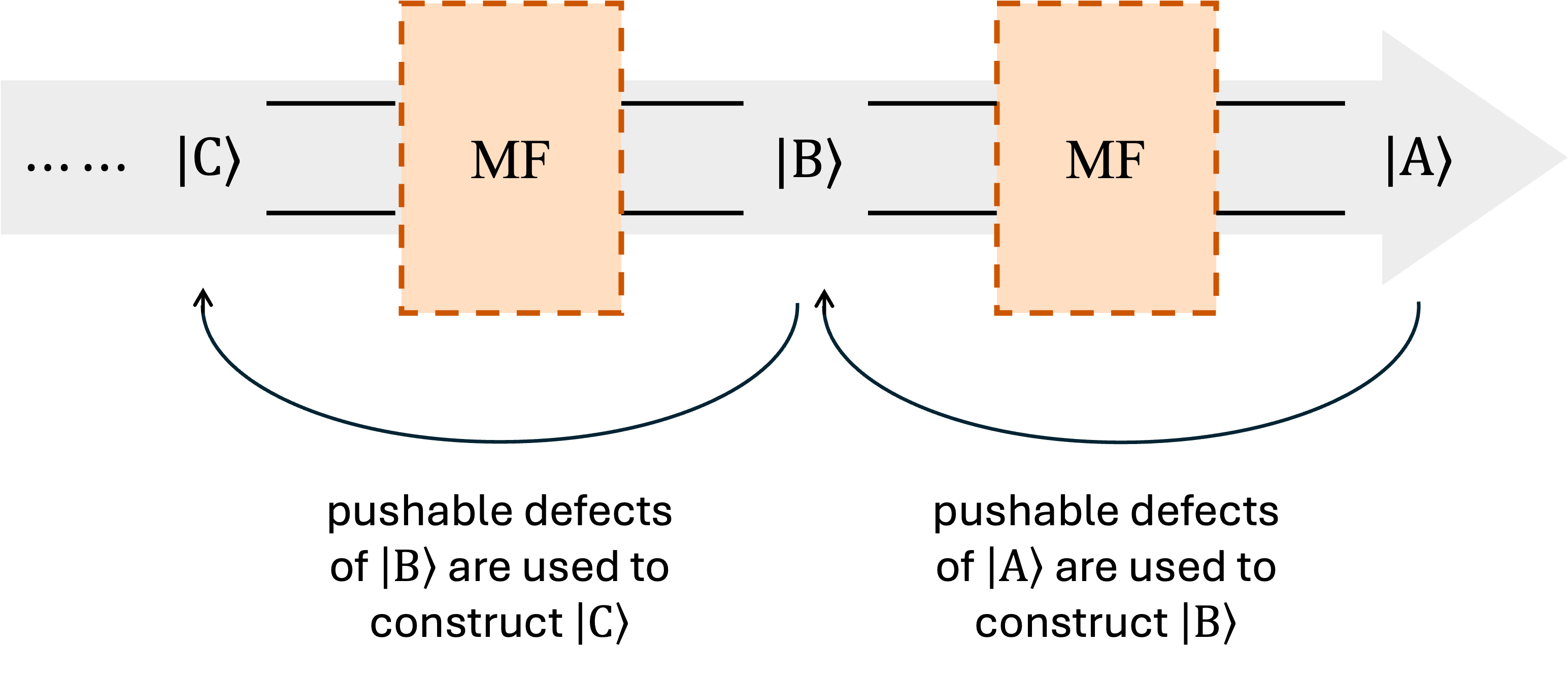}
    \caption{The preparation scheme we propose for the target state $\ket{A}$, in which we perform a sequence of MF circuits on a certain initial FDLU-preparable state, generating a sequence of states $\{\cdots,\ket{C},\ket{B},\ket{A}\}$. The construction of the state sequence from the target state $\ket{A}$ is in reverse order: the pushable defects of $\ket{A}$ are used to construct $\ket{B}$; the pushable defects of $\ket{B}$ are used to construct $\ket{C}$, etc. As discussed in the main text, certain rounds of MF circuits can be replaced by an FDLU circuit, which suggests that this scheme includes the general circuit architecture shown in Fig.~\ref{fig:circuit}.}
    \label{fig:circuit_2}
\end{figure}

As we have seen, from the MPS $\ket{A}$ and its pushable defects, we can construct an associated MPS $\ket{B}$ that is a single round MF away. Meanwhile, $\ket{B}$ allows for another set of pushable defects, enabling us to construct another MPS $\ket{C}$ that is a single round MF away. Repeating this process leads to a state preparation scheme for a target state $\ket{A}$ that uses multiple rounds of MF as sketched in Fig.~\ref{fig:circuit_2}. 

In the rest of the work, we will elaborate on this preparation scheme for various pushable defects and pushing relations. Before proceeding to a more detailed discussion specialized to OBC, we make a few remarks here:

\begin{enumerate}

\item In the scheme presented above, we explicitly use the set of pushable defects associated with a target MPS to construct a preparation circuit. In general, a state may admit many different MPS representations, which may not have the same set of pushable defects and the same class of pushing relations. For example, as we will see later in Sec.~\ref{subsec:RI-RIRX} and Sec.~\ref{subsec:RZ-FDLU}, the MPS tensors in class $RI$--$RIRX$ and class $RZ$--$RILI$ can always be rewritten to be in class $RIRI$--$RIRX$. However, according to the fundamental theorem, different MPS tensors are always related by gauge transformation up to a site blocking ~\cite{cirac2021matrix}. Correspondingly, the set of pushable defects along with their pushing relations differ by a similarity transformation. This implies that our preparation scheme for a target state does not depend on the specific MPS representation.

\item While the preparation scheme in Fig.~\ref{fig:circuit_2} only involves the use of MF circuits, which is seemingly distinct from the general circuit architecture shown in Fig.~\ref{fig:circuit} that involves alternating applications of MF and FDLU circuits, these two preparation schemes are actually equivalent. 

As an illustrative example, suppose $\ket{A}$ is prepared from $\ket{A'}$ using a brick wall unitary circuit of depth-2, we can write the MPS tensor $A$ as follows:

\beq
    \begin{tikzpicture}[baseline=(current bounding box),scale=1.5]
        \draw[rounded corners=8pt]
        (-.7,2.2) -- (.6,2.2) -- (.6,3);
        \draw[rounded corners=8pt]
        (2.4,1.4) -- (2.4,2.2) -- (1,2.2) -- (1,3);
        \draw[rounded corners=8pt]
        (2.8,1.4) -- (2.8,2.2) -- (3.2,2.2);
        \filldraw[fill = white] (2.2,1.6) rectangle node {$U$} (3,2);
        \filldraw[fill = white] (.4,2.4) rectangle node {$V$} (1.2,2.8);
        \draw (-.7,1.2) -- (3.2,1.2);
        \filldraw[fill = white] (2.2,1) rectangle node {$A'$} (3,1.4);
    \end{tikzpicture}
    \label{eq:fDLU_A_prime}
\eeq
Since $V$ is a unitary gate, any defects on the top left virtual bond are pushable. To see this, note that $W$ obeys $VW=(V W V^{\dagger})V$, thus the physical unitary $U_\text{phys} = V W^{\dagger} V^{\dagger}$ pushes the defect $W$ on the top left virtual bond to identity on the right virtual bond. Based on the results discussed in this section, we can construct an associated state $\ket{B}$ from which $\ket{A}$ is preparable using an MF circuit. According to Eq.~\eqref{eq:T_B} and Appendix \ref{app:RIRI-yyyy}, the associated state $\ket{B}$ differs from $\ket{A'}$ by a transversal circuit and the stacking of a disentangled product state. Therefore, the FDLU that prepares $\ket{A}$ from $\ket{A'}$ can be implemented by an MF circuit as well. 

Conversely, since all the defects are locally annihilated in the MF circuit, we can construct another brick-wall unitary circuit similar to Eq.~\eqref{eq:fDLU_A_prime} that relates the associated state $\ket{B}$ to the target state $\ket{A}$ (see Sec.~\ref{subsec:RIRI} for more details). 

Moreover, the reasoning above applies to general FDLU circuits, because any FDLU circuit can be reduced to a depth-2 brick wall architecture via site blocking~\cite{gross2012index}. Therefore, given a preparation circuit shown in Fig.~\ref{fig:circuit}, all FDLU circuits within can be replaced by certain MF circuits. On the other hand, given the preparation scheme demonstrated in Fig.~\ref{fig:circuit_2}, when all defects are pushed to the identity defect on the other side, the corresponding MF circuit can be replaced by an FDLU circuit, recovering the general circuit architecture shown in Fig.~\ref{fig:circuit}.

\item We would like to draw an analogy between the coarse-graining of an MPS and our $A, B$ tensors here. At the fixed point of coarse-graining, the transfer matrix formally becomes $T^{\infty}$, which is related to the original $T$ by a projection onto the dominant eigenvector subspace~\cite{zeng2015quantum}. Meanwhile, the transfer matrices $T_A$ and $T_B$ are related by an operator $\sum \bar{V}\otimes V$ (Eq.~\eqref{eq:T_B}), which is a projection onto a subspace whenever $\{V_t\}$ contains orthonormal defects.

For example, let us consider the deformed Greenberger–Horne–Zeilinger (GHZ) state
\beq
\ket{A_{\beta}}\propto e^{\sum_j\beta X_j}(\ket{0}^{\otimes N}+\ket{1}^{\otimes N}).
\eeq
Its MPS tensor can be given by
\begin{equation}
\begin{aligned}
\begin{tikzpicture}[baseline=(current bounding box),scale=1.2]
        \draw (-0.2,0) --  (1.4,0);
        \filldraw[fill = white] (0.3,-.3) rectangle node {$A_\beta$} (0.9,.3);
        \draw (.6,.3) -- (.6,.9);
        \node at (1.7,0) {$\equiv$};
        \draw (3.2,0) --  (2,0);
        \fill (2.6,0) circle (1pt);
        \draw (2.6,0.9) --  (2.6,0);
        \filldraw[draw, circle, fill=white, minimum size=6pt](2.6,.45) circle (8pt);
        \node at (2.6,.45) {$e^{\beta X}$};
        \node at (3.4,0) {.};
\end{tikzpicture}
\end{aligned}  
\end{equation} 
We use a black dot on the three-leg junction to denote the $\delta$-tensor, which gives one when all three indices are identical and gives zero otherwise. We will highlight two different associated states corresponding to two choices of different sets of pushable defects. When we take the set of pushable defects to be $\{I, Z\}$, the associated state $\ket{B}$ is the GHZ state $\ket{A_{0}}$, which is the fixed-point state for any finite value of $\beta$. Therefore, using one round of MF, we can ``reverse'' the coarse-graining to map the fixed-point GHZ state $\ket{B}$ to $\ket{A_\beta}$. 

On the other hand, we can take a larger set of pushable defects for this MPS, i.e., $\{I,Z,X,ZX\}$. The associated state $\ket{B'}$ for this larger set has a rank-1 transfer matrix, which means $\ket{B'}$ is FDLU-preparable. This is an example where, unlike the previous one, the states $\ket{A_\beta}$ and $\ket{B'}$ are not in the same phase. Indeed, when $\{V_t\}$ contains all Pauli defects, the operator $\sum \bar{V}\otimes V$ is always a rank-1 projector, which always makes the associated state FDLU-preparable. As we are about to see in Sec.~\ref{sec:all_pauli}, this enables the 1-round MF-preparation of a large class of MPSs.

\item For an MPS defined with periodic boundary conditions, the feedback correction pushes all the defects coming from unwanted measurement outcomes to a single virtual bond, and we post-select the defect-less outcome on that single bond in order to prepare the target state. The success probability must be independent of the system size $N$ to ensure that only a finite number of post-selections are required as $N\rightarrow\infty$. However, the $N$-independent probability is not guaranteed for any set of pushable defects and pushing relations. For example, for the MPS tensor of the GHZ state
\begin{equation}
\begin{aligned}
\begin{tikzpicture}[baseline=(current bounding box),scale=1.2]
        \draw (-0.2,0) --  (1.4,0);
        \filldraw[fill = white] (0.3,-.3) rectangle node {$A_0$} (0.9,.3);
        \draw (.6,.3) -- (.6,.9);
        \node at (1.7,0) {$\equiv$};
        \draw (3.2,0) --  (2,0);
        \fill (2.6,0) circle (1pt);
        \draw (2.6,0.9) --  (2.6,0);
        \node at (3.4,0) {,};
\end{tikzpicture}
\end{aligned}  
\end{equation}
we can choose the set of right-pushable defects to be $\{I,\sigma_+,\sigma_-,\sigma_z\}$, where $\sigma_a\mapsto \sigma_a$ for $a=+,-,z$. Since the $\sigma_\pm$ defects are not invertible, when pushing defects in this set on all virtual bonds to a single one, we cannot obtain the identity as long as $\sigma_\pm$ appears on any single bond. As a result, when preparing the GHZ state using the MF circuit associated to this set of defects, the success probability decays exponentially with system size $N$, hence the number of post-selections needed grows exponentially with $N$. Nevertheless, when the pushable defects are Pauli and are always pushed to Pauli defects, the success probability is always a constant as $N\rightarrow\infty$; see Appendix \ref{app:success}. For the states with open boundary condition that will be discussed in detail below, as long as the pushable defects are unitary, we can use a feedback correction to push them to the boundaries, leading to a deterministic preparation. 

\end{enumerate}

\subsection{Necessary and sufficient conditions for MF preparability with open boundary condition}\label{sec:open_bdry}

In the previous subsection, we introduced a general state preparation scheme based on the pushable defects of a target MPS. Here, by restricting the MPS to have open boundary conditions (OBC), and under certain assumptions, we will establish the presence of pushable defects as a necessary and sufficient condition for the MF preparation with OBC. In other words, for any MF-preparable state we can always construct the preparation circuit using the approach discussed in the previous subsection and summarized in Fig.~\ref{fig:circuit_2}.

We begin the discussion by defining an MPS with OBC: 
\begin{equation}
\begin{aligned}
\ket{A_{\text{open}}}=\sum_{\{i_n\},a,b}
\begin{tikzpicture}[baseline=(current bounding box),scale=1]      
    \draw (.3,0) --  (.7,0);
    \draw (.3,0) --  (.3,0.5);
    \node at (.3,.75) {$a$};
        \node at (1,.75) {$i_1$};
        \draw (1,.3) -- (1,.5);
        \filldraw[fill = white] (1-.3,-.3) rectangle node {$A$} (1+.3,.3);
        \draw (1+.3,0) --  (1+.7,0);
        \node at (2.,0) {$...$};
        \node at (2.,0.75) {$...$};
        \draw (2+.3,0) --  (2+.7,0);
        \node at (3,.75) {$i_N$};
        \draw (3,.3) -- (3,.5);
        \filldraw[fill = white] (3-.3,-.3) rectangle node {$A$} (3+.3,.3);
        \draw (3+.3,0) --  (3+.7,0);
        \draw (3.7,0) --  (3.7,0.5);
    \node at (3.7,.75) {$b$};
    \filldraw[draw, circle, fill=white, minimum size=6pt](.3,0) circle (6pt);
        \node at (.3,0) {$L$};
        \filldraw[draw, circle, fill=white, minimum size=6pt](3.7,0) circle (6pt);
        \node at (3.7,0) {$R$};
    \draw[opacity=0] (0,-.5) -- (1,-.5);
\end{tikzpicture} \ket{a,\{i_n\},b},
\end{aligned} 
\end{equation}
where we assume that matrices $L$ and $R$ are invertible in order to establish the main result of this subsection. 

Recall that our definition of a pushable defect $V$ relies on the existence of a set of operators $\{V^{[n]}\}$ and a set of unitary gates $\{U^{[n]}\}$. Now with OBC, we further assume that for a left/right-pushable defect $V$, the operators in the set $\{V^{[n]}\}$ conjugated by $L/R$ are unitary. Then we can push a defect to the boundary and apply a unitary $V^{[n]\dagger}$ to the boundary degree of freedom to completely remove it. As a result, on a system of size $N$ with OBC, we can {\it deterministically} prepare a target state $\ket{A_{\text{open}}}$ from the associated state $\ket{B_{\text{open}}}$ using one round of MF. 

Furthermore, we can reconstruct the set of pushable defects whenever the target state $\ket{A_{\text{open}}}$ is preparable from another state $\ket{B_{\text{open}}}$ for any system size $N$ using one round of MF, with an assumption on the MPS.\footnote{We thank Rahul Sahay and Ruben Verresen for the discussion on this point.} To start with, we note that given any MPS representation of a state, we can always bring the MPS tensor to a minimal form, where the following maps from vertical physical bonds to the virtual bond are surjective:\footnote{See also the discussions in Appendix B of Ref.~\cite{sahay2025classifying}.}
\begin{equation}
\begin{aligned}
\Gamma_L=\
\begin{tikzpicture}[baseline=(current bounding box),scale=1]      
    \draw (.3,0) --  (.3,0.7);
        \foreach \i in {1} {
        \draw (\i,.3) -- (\i,.7);
        \filldraw[fill = white] (\i-.3,-.3) rectangle node {$A$} (\i+.3,.3);
        \draw (\i+.3,0) --  (\i+.7,0);
        \draw (\i-.3,0) --  (\i-.7,0);
    }   
    \filldraw[draw, circle, fill=white, minimum size=6pt](.3,0) circle (6pt);
    \node at (.3,0) {$L$};
    \node at (1.9,0) {,};
\end{tikzpicture}\quad 
\Gamma_R=\
\begin{tikzpicture}[baseline=(current bounding box),scale=1]      
    \draw (1.7,0) --  (1.7,0.7);
        \foreach \i in {1} {
        \draw (\i,.3) -- (\i,.7);
        \filldraw[fill = white] (\i-.3,-.3) rectangle node {$A$} (\i+.3,.3);
        \draw (\i+.3,0) --  (\i+.7,0);
        \draw (\i-.3,0) --  (\i-.7,0);
    }   
    \filldraw[draw, circle, fill=white, minimum size=6pt](1.7,0) circle (6pt);
    \node at (1.7,0) {$R$};
    \node at (2.1,0) {.};
\end{tikzpicture}
\end{aligned} 
\end{equation}

We further assume that the transversal feedback correction is left-conditioned, i.e., the unitary correction gate on site $n$ depends only on the measurement outcomes at sites $i<n$. We take the perfect measurement outcome as $t=0$. When $N=1$, it corresponds to the following identity,
\begin{equation}
\begin{aligned}
\begin{tikzpicture}[baseline=(current bounding box),scale=1]
        \draw (-0.7,0) -- (-0.7,1.15);
        \filldraw[fill = white] (-.3,-.3) rectangle node {$B$} (.3,.3);
        \draw (0.7,0) --  (0.3,0);
        \draw (-0.3,0) --  (-0.7,0);
        \draw (0.7,0) -- (0.7,1.15);
        \draw (-.2,.3) -- (-.2,1.15);
        \draw (+.2,.3) -- (+.2,.7);
        \node at (+.2,.9) {$0$};
        \node at (1.2,0) {$=$};
        \filldraw[fill = white] (2.1,-.3) rectangle node {$A$} (2.7,.3);
        \draw (1.7,0) -- (1.7,1.15);
        \draw (2.4,.3) -- (2.4,1.15);
        \draw (3.1,0) --  (2.7,0);
        \draw (2.1,0) --  (1.7,0);
        \draw (3.1,0) -- (3.1,1.15);
        \filldraw[draw, circle, fill=white, minimum size=6pt](-.7,0) circle (6pt);
        \node at (-.7,0) {$L$};
        \filldraw[draw, circle, fill=white, minimum size=6pt](.7,0) circle (6pt);
        \node at (.7,0) {$R$};
        \filldraw[draw, circle, fill=white, minimum size=6pt](1.7,0) circle (6pt);
        \node at (1.7,0) {$L$};
        \filldraw[draw, circle, fill=white, minimum size=6pt](3.1,0) circle (6pt);
        \node at (3.1,0) {$R$};
        \node at (3.5,0) {.};
\end{tikzpicture}
\end{aligned}  
\end{equation} 
For other measurement outcomes, using the left-conditioned transversal feedback correction, we have
\begin{equation}
\begin{aligned}
\begin{tikzpicture}[baseline=(current bounding box),scale=1]
        \draw (-0.7,0) -- (-0.7,1.15);
        \filldraw[fill = white] (-.3,-.3) rectangle node {$B$} (.3,.3);
        \draw (0.7,0) --  (0.3,0);
        \draw (-0.3,0) --  (-0.7,0);
        \draw (0.7,0) -- (0.7,1.15);
        \node at (1.2,0) {$=$};
        \draw (-.2,.3) -- (-.2,1.15);
        \draw (+.2,.3) -- (+.2,.7);
        \node at (+.2,.9) {$t$};
        
        \draw (2.4,.3) -- (2.4,1.15);
        \filldraw[fill = white] (2.1,-.3) rectangle node {$A$} (2.7,.3);
        \draw (1.7,0) -- (1.7,1.15);
        \draw (3.1,0) --  (2.7,0);
        \draw (2.1,0) --  (1.7,0);
        \draw (3.1,0) -- (3.1,1.15);
        \filldraw[draw, circle, fill=white, minimum size=6pt](3.1,.6) circle (8pt);
        \node at (3.1,.6) {$W^\dagger_t$};

        \filldraw[draw, circle, fill=white, minimum size=6pt](-.7,0) circle (6pt);
        \node at (-.7,0) {$L$};
        \filldraw[draw, circle, fill=white, minimum size=6pt](.7,0) circle (6pt);
        \node at (.7,0) {$R$};
        \filldraw[draw, circle, fill=white, minimum size=6pt](1.7,0) circle (6pt);
        \node at (1.7,0) {$L$};
        \filldraw[draw, circle, fill=white, minimum size=6pt](3.1,0) circle (6pt);
        \node at (3.1,0) {$R$};
        \node at (3.5,0) {.};
\end{tikzpicture}
\end{aligned}  
\label{eq:pushable_defect_OBC}
\end{equation} 
Applying the inverse of $L$ and $R$ leads to Eq.~\eqref{eq:B_definition}:
\begin{equation}
\begin{aligned}
\begin{tikzpicture}[baseline=(current bounding box),scale=1]
        \draw (2,.3) -- (2,.7);
        \filldraw[fill = white] (-.3,-.3) rectangle node {$B$} (.3,.3);
        \draw (0.7,0) --  (0.3,0);
        \draw (-0.3,0) --  (-0.7,0);
        \node at (1,0) {$=$};
        \draw (-.2,.3) -- (-.2,.7);
        \draw (+.2,.3) -- (+.2,.7);
        \node at (+.2,.9) {$t$};
        \filldraw[fill = white] (1.7,-.3) rectangle node {$A$} (2.3,.3);
        \draw (3.5,0) --  (2.3,0);
        \draw (1.7,0) --  (1.3,0);
        \filldraw[draw, circle, fill=white, minimum size=6pt](2.9,0) circle (8pt);
        \node at (2.9,0) {$V_t$};
        \node at (3.6,0) {,};
\end{tikzpicture}
\end{aligned}  
\end{equation} 
with $V_t=R \bar{W}_t R^{-1}$.

When $N=2$, we assume that the measurement outcome at the first site is $t$ and that at the second site is perfect. The left-conditioned feedback correction gives rise to the following identity: 
\begin{equation}
\begin{aligned}
\begin{tikzpicture}[baseline=(current bounding box),scale=1]
        \draw (-0.7,0) -- (-0.7,1.15);
        \filldraw[fill = white] (-.3,-.3) rectangle node {$B$} (.3,.3);
        \draw (0.7,0) --  (0.3,0);
        \draw (-0.3,0) --  (-0.7,0);
        \draw (-.2,.3) -- (-.2,1.15);
        \draw (+.2,.3) -- (+.2,.7);
        \node at (+.2,.9) {$t$};
        \filldraw[fill = white] (.7,-.3) rectangle node {$B$} (1.3,.3);
        \draw (1-.2,.3) -- (1-.2,1.15);
        \draw (1+.2,.3) -- (1+.2,.7);
        \node at (1+.2,.9) {$0$};
        \draw (1.7,0) --  (1.3,0);
        \draw (1.7,0) -- (1.7,1.15);
        \node at (2.2,0) {$=$};
        
        \draw (3.4,.3) -- (3.4,1.15);
        \draw (4.4,.3) -- (4.4,1.15);
        \filldraw[fill = white] (3.1,-.3) rectangle node {$A$} (3.7,.3);
        \filldraw[fill = white] (4.1,-.3) rectangle node {$A$} (4.7,.3);
        \draw (2.7,0) -- (2.7,1.15);
        \draw (4.1,0) --  (3.7,0);
        \draw (3.1,0) --  (2.7,0);
        \draw (5.1,0) --  (4.7,0);
        \draw (5.1,0) -- (5.1,1.15);
        \filldraw[draw, circle, fill=white, minimum size=6pt](5.1,.6) circle (8pt);
        \node at (5.1,.6) {$W'^\dagger_t$};
        \filldraw[draw, circle, fill=white, minimum size=6pt](4.4,.7) circle (8pt);
        \node at (4.4,.7) {$U^\dagger_t$};

        \filldraw[draw, circle, fill=white, minimum size=6pt](-.7,0) circle (6pt);
        \node at (-.7,0) {$L$};
        \filldraw[draw, circle, fill=white, minimum size=6pt](1.7,0) circle (6pt);
        \node at (1.7,0) {$R$};
        \filldraw[draw, circle, fill=white, minimum size=6pt](2.7,0) circle (6pt);
        \node at (2.7,0) {$L$};
        \filldraw[draw, circle, fill=white, minimum size=6pt](5.1,0) circle (6pt);
        \node at (5.1,0) {$R$};
        \node at (5.5,0) {.};
\end{tikzpicture}
\end{aligned}  
\end{equation} 
Combining the above two equations along with the minimality condition, we have 
\begin{equation}
\begin{aligned}
\begin{tikzpicture}[baseline=(current bounding box),scale=1]
        \draw (2,.3) -- (2,.7);
        \filldraw[fill = white] (-.3,-.3) rectangle node {$A$} (.3,.3);
        \draw (0.7,0) --  (0.3,0);
        \draw (-0.3,0) --  (-1.5,0);
        \filldraw[draw, circle, fill=white, minimum size=6pt](-.8,0) circle (8pt);
        \node at (-.8,0) {$V_t$};
        \draw (0,.3) -- (0,1.3);
        \filldraw[draw, circle, fill=white, minimum size=6pt](0,.8) circle (8pt);
        \node at (0,.8) {$U_t$};
        \node at (1,0) {$=$};
        \draw (3.5,0) --  (2.3,0);
        \draw (3.5,0) --  (1.3,0);
        \filldraw[fill = white] (1.7,-.3) rectangle node {$A$} (2.3,.3);
        \filldraw[draw, circle, fill=white, minimum size=6pt](2.8,0) circle (8pt);
        \node at (2.8,0) {$V'_t$};
        \node at (3.6,0) {,};
\end{tikzpicture}
\end{aligned}  
\end{equation} 
where $V_t = R \bar{W}_t R^{-1}$ and $V'_t = R\bar{W}'_t R^{-1}$. Repeating the above procedure with increasing system size, based on the transversal feedback corrections, we can construct a right-pushable defect $V_t$ along with the set $\{V^{[n]}\}$ and $\{U^{[n]}\}$ with relations as in Eq.~\eqref{eq:pushing_U_right} for every measurement outcome $t$. Therefore, the preparability of a target state $\ket{A_{\text{open}}}$ is completely determined by its pushable defects, which can be summarized as the following lemma.

\begin{lemma}
    An MPS $\ket{A_{\text{open}}}$ can be prepared from another MPS $\ket{B_{\text{open}}}$ using a single round of single-site measurement and transversal left-conditioned feedback correction for any system size with open boundary condition, {\it if and only if} the MPS tensor $B$ is associated with $A$ as in Eq.~\eqref{eq:B_definition} through a set of right-pushable defects of $A$.
    \label{lemma:AB_OBC}
\end{lemma}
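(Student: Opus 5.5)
I would prove the two directions separately, following the structure already laid out in this subsection.

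\emph{Sufficiency ("if").} Suppose $B$ is related to $A$ through Eq.~\eqref{eq:B_definition} by a set of right-pushable defects $\{V_t\}$, together with the sets $\{V_t^{[n]}\}$ and $\{U_t^{[n]}\}$. By the standing OBC assumption, each $V_t^{[n]}$ conjugated by $R$ is unitary. Measure every ancilla $a_n$ in the computational basis. By Eq.~\eqref{eq:B_definition}, outcome $t_n$ on site $n$ leaves the tensor $A$ with a defect $V_{t_n}$ on its right virtual bond. I would then sweep from left to right. At site $m$, the defects created at all sites $n<m$ have accumulated on the bond entering site $m$. By induction, this accumulated operator is a product of the $V^{[m-n-1]}_{t_n}$. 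Apply on site $m$ the product of the corresponding $U_{t_n}^{[m-n]}$; each factor pushes one defect across site $m$ via Eq.~\eqref{eq:pushing_U_right}. Then multiply the new defect $V_{t_m}$ created at site $m$ onto the bond. I would order the factors so that each $U$ acts on the defect adjacent to $A$. The composite of several pushes is again a push, since each $U_{t_n}^{[k]}$ is unitary on the physical leg alone; if needed, I would refine the induction so that the defect set on each bond is closed under the relevant products. The correction at site $m$ depends only on the outcomes $t_n$ with $n<m$, so the feedback is transversal and left-conditioned. After the sweep, all defects sit on the right boundary bond next to $R$. Because $R V R^{-1}$ is unitary, it can be undone by a unitary acting on the boundary index $b$. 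The result is $\ket{A_{\text{open}}}$ deterministically, for every $N$.

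\emph{Necessity ("only if").} I would formalize the inductive argument sketched before the lemma. For $N=1$, Eq.~\eqref{eq:pushable_defect_OBC} together with the invertibility of $L$ and $R$ yields Eq.~\eqref{eq:B_definition} with $V_t=R\bar W_tR^{-1}$. For general $N$, take outcome $t$ on site $1$ and outcome $0$ on sites $2,\dots,N$. Left-conditioning means the correction on sites $2,\dots,N$ is a product of single-site unitaries $U_t^{[1]}\otimes\cdots\otimes U_t^{[N-1]}$, followed by a boundary unitary. These unitaries are fixed once the correction for size $N$ is fixed.

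Comparing the sizes $N$ and $N-1$, I would contract the first $N-1$ tensors. The minimal-form surjectivity of the maps $\Gamma_L$ and $\Gamma_R$ then lets me strip the physical legs and obtain an identity on the virtual level:
\begin{equation}
\text{(defect $V^{[n-1]}_t$ on the left bond of $A$)} + U_t^{[n]} = A \text{ with defect } V^{[n]}_t \text{ on the right bond.}
\end{equation}
Here $V^{[n]}_t=R\bar W^{(n)}_tR^{-1}$ comes from the boundary unitary at size $n+1$. This is exactly Eq.~\eqref{eq:pushing_U_right}, so every $V_t$ is right-pushable.

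\emph{Main obstacle.} The delicate point is consistency of the corrections across system sizes. The unitary on site $n$ for size $N$ should not depend on $N$, or else one must argue that it can be chosen this way. I would handle this by noting that left-conditioning fixes the site-$n$ gate independently of the sites to the right, so the gates for different $N$ can be taken identical. The other delicate point is using surjectivity to cancel the part of the network built from $L$ and the earlier sites, so that an equality of states becomes an equality of tensors.
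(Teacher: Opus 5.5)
Your proposal is correct and follows the paper's own argument. For sufficiency, you sweep the right-pushable defects to the right boundary using left-conditioned single-site gates and then remove them there, since $R^{-1}V^{[n]}R$ is unitary; for necessity, you run the same size-by-size induction from $N=1$ (giving $V_t=R\bar W_tR^{-1}$) and use the surjectivity of $\Gamma_L,\Gamma_R$ to turn equalities of states into the single-site pushing relation. Your explicit handling of several accumulated defects at one site, and your requirement that the site-$n$ gate be independent of $N$, only spell out what the paper's ``repeating the above procedure with increasing system size'' uses implicitly; that $N$-independence is an assumption about how the protocol is read, not something left-conditioning by itself implies.
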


When we allow the use of FDLU in the preparation of a target state, we can allow any local measurement, since FDLU and site blocking can map all local measurements to single-site measurements. With that, we define a notion of MF-preparability with open boundary condition:
\begin{definition}
    An MPS $\ket{A_{\text{open}}}$ with OBC is {\it OMF-preparable}, if it can be prepared from a product state for any system size, using a finite sequence of FDLU and MF circuits. An MPS $\ket{A_{\text{open}}}$ is {\it OLMF-preparable}, if it is OMF-preparable and all the feedback corrections are left-conditioned.
\end{definition}

As shown in Fig.~\ref{fig:circuit}, for an OLMF-preparable state $\ket{A_{\text{open}}}$, there exists a series of states \{$\ket{A^{(j)}}$, $\ket{B^{(j)}}$\}, with $j=0,\cdots,r$ for a finite number $r$. To study the OLMF-preparation of this target state, one has to find all the intermediate states along with the circuits relating them,
\beq
    \ket{B^{(j)}} &= \text{FDLU}_j \ket{A^{(j)}},\\ \ket{A^{(j)}} &= \text{MF}_j \ket{B^{(j-1)}}.
\eeq
From our results in the previous section and Lemma \ref{lemma:AB_OBC} above, we are guaranteed that using the construction of associated states from sets of right-pushable defects, all the intermediate states along with the circuits can be constructed, which can be summarized as the following theorem: 

\begin{theorem}
    The preparation circuit for any OLMF-preparable state $\ket{A_{\text{open}}}$ can be obtained from the sequential constructions of associated states using right-pushable defects.
    \label{theorem:preparation}
\end{theorem}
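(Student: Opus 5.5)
The plan is to induct on the number $r$ of MF rounds in an OLMF preparation circuit, working backwards from the target as in Fig.~\ref{fig:circuit_2}. Fix an OLMF-preparable $\ket{A_{\text{open}}}$ and a circuit realizing it, which gives intermediate states $\ket{A^{(j)}}$, $\ket{B^{(j)}}$, $j=0,\dots,r$, with $\ket{B^{(j)}}=\text{FDLU}_j\ket{A^{(j)}}$ and $\ket{A^{(j)}}=\text{MF}_j\ket{B^{(j-1)}}$. First we normalize every step. Using site blocking and the fact that FDLU can map any local measurement to a single-site one, each $\text{MF}_j$ becomes a single round of single-site measurement followed by transversal left-conditioned feedback. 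We bring each intermediate MPS to minimal form with invertible boundary matrices $L,R$, and we absorb the FDLU pieces.

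For the absorption we use remark 2 of Sec.~\ref{sec:general_strategy}. Any FDLU reduces by blocking to a depth-2 brick wall, as in Eq.~\eqref{eq:fDLU_A_prime}. Every defect on the virtual bond carried by the top unitary $V$ is pushable to the identity. The associated state constructed from these defects (Appendix~\ref{app:RIRI-yyyy}) therefore equals the pre-FDLU state up to a transversal circuit and stacking of a disentangled product state. So each $\text{FDLU}_j$ is itself one round of single-site measurement with left-conditioned feedback, and it is generated by right-pushable defects. After this step the whole circuit is a finite chain $\ket{\text{prod}}\to\cdots\to\ket{C}\to\ket{B}\to\ket{A_{\text{open}}}$ of single MF rounds of the type in Lemma~\ref{lemma:AB_OBC}. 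The chain possibly starts from an FDLU-preparable state, whose preparation is supplied by Lemma~\ref{lemma:FDLU}.

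Next we apply Lemma~\ref{lemma:AB_OBC} to each link, starting from the last. Because $\ket{A_{\text{open}}}$ is obtained from $\ket{B}$ by one left-conditioned round, the tensor $B$ is related to $A$ through Eq.~\eqref{eq:B_definition} by a set of right-pushable defects $\{V_t\}$. The explicit construction preceding the lemma gives these defects as $V_t=R\bar W_t R^{-1}$, obtained from the feedback unitaries. It also supplies the pushing sequences $\{V_t^{[n]}\},\{U_t^{[n]}\}$ via Eq.~\eqref{eq:pushing_U_right}. Hence $\ket{B}$ is, up to gauge, exactly the associated state built from these right-pushable defects. The MF circuit is recovered as measuring the ancilla legs, pushing each defect rightward with the $U_t^{[n]}$, and removing it at the boundary with a unitary. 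We then iterate on $\ket{B}$, which is OLMF-preparable with one fewer round, to obtain $\ket{C}$, and so on. The iteration terminates after finitely many steps at a state with rank-1 transfer matrix, where Lemma~\ref{lemma:FDLU} gives the FDLU.

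The main obstacle is guaranteeing that the hypotheses of Lemma~\ref{lemma:AB_OBC} persist along the chain, in two respects. First, each intermediate MPS must admit a minimal form with invertible boundaries. Second, the reconstructed $V_t^{[n]}$ must be unitary after conjugation by $R$, so that the defect can be removed at the boundary. To secure this we take the defects directly from the given circuit's feedback unitaries, which are unitary by construction, and we argue that reducing to minimal form changes the defect sets only by a similarity transformation (remark 1). Gauge freedom and non-uniqueness of MPS representations are handled the same way, using the fundamental theorem.
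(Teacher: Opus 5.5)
Your proposal is essentially the paper's argument. Theorem~\ref{theorem:preparation} is obtained by applying Lemma~\ref{lemma:AB_OBC} link by link backwards along the circuit of Fig.~\ref{fig:circuit}, with the FDLU steps recast as defect-generated MF rounds via remark 2 of Sec.~\ref{sec:general_strategy} and Appendix~\ref{app:RIRI-yyyy}; unitarity of $R^{-1}V_tR=\bar W_t$, and likewise for each $V_t^{[n]}$, comes directly from the feedback unitaries, as you say.

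The one overreach is in your last paragraph. Invertibility of the boundary matrices $L,R$ is a standing hypothesis of Sec.~\ref{sec:open_bdry}, not something you can secure. You also cannot invoke the fundamental theorem to handle non-uniqueness of OBC representations: the paper notes in Sec.~\ref{sec:outlook} that no OBC analogue is known. The theorem does not need it, since you only ever work with the representations supplied by the given circuit.
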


In other words, our preparation scheme is powerful enough to give rise to the preparation circuit of any target state as long as it is OLMF-preparable. 

To demonstrate an OLMF-preparation circuit, we take the biased cat state as an example,
\beq
    \ket{\text{GHZ}_\gamma}\propto \ket{00\cdots}+\gamma\ket{11\cdots}.
\eeq
It can be represented as an MPS with OBC, by taking the same MPS tensor as that of the GHZ state, along with the two appropriate boundary tensors: $L=\gamma^{(1-Z)/2}$ and $R=I$:
\begin{equation}
\begin{aligned}
\ket{\text{GHZ}_\gamma}=
\ \begin{tikzpicture}[baseline=(current bounding box),scale=1.2]      
    \draw (.3,0) --  (1.7,0);
    \draw (.3,0) --  (.3,0.5);
        \draw (1,0) -- (1,.5);
        \node at (2.,0) {$...$};
        \draw (2+.3,0) --  (3+.7,0);
        \draw (3,0) -- (3,.5);
        \draw (3.7,0) --  (3.7,0.5);
        \filldraw[draw, circle, fill=white, minimum size=6pt](.3,0) circle (6pt);
        \node at (.3,0) {$L$};
        \fill (1,0) circle (1pt);
        \fill (3,0) circle (1pt);
\end{tikzpicture} \quad .
\end{aligned} 
\end{equation}
The MPS tensor of the GHZ state is simply the three-leg delta tensor, which allows both Pauli $Z$ and $X$ to be right-pushable. An associated tensor $B$ satisfying Eq.~\eqref{eq:pushable_defect_OBC} will be shown later in Eq.~\eqref{eq:B_fusion_measurement} in the next section, which generates an FDLU-preparable state $\ket{B}$. 

We demonstrate the preparation of the biased cat state in the following. We consider $N=3$ for simplicity, and the generalization to any $N>3$ is straightforward. Starting from the following product state of seven qubits,
\begin{equation}
\begin{aligned}
\begin{tikzpicture}[baseline=(current bounding box),scale=1.2]      
    \draw (.3,0) --  (1,0);
    \draw (0.3,0) --  (0.3,0.5);
    \draw (1,0) -- (1,.5);
    \draw (1.7,0) -- (1.7,.5);
    \node at (1.7,-0.3) {$\ket{+}$};
    \filldraw[draw, circle, fill=white, minimum size=6pt](.3,0) circle (6pt);
    \node at (.3,0) {$L$};

    \draw (2.5,0) --  (3.2,0);
    \draw (2.5,0) --  (2.5,0.5);
    \draw (3.2,0) -- (3.2,.5);
    \draw (3.9,0) -- (3.9,.5);
    \node at (3.9,-0.3) {$\ket{+}$};

    \draw (4.7,0) -- (4.7,.5);
    \node at (4.8,-0.3) {$\ket{+}$};
\end{tikzpicture}
\end{aligned} 
\end{equation}
we apply the following FDLU composed of $CZ$ and $H$ gates (one to seven label qubits from left to right)
\beq
    H_2 H_3 H_5 H_6 CZ_{23}CZ_{34}CZ_{56}CZ_{67},
\eeq
to obtain the associated state $\ket{B}$: 
\begin{equation}
\begin{aligned}
\begin{tikzpicture}[baseline=(current bounding box),scale=1.2]      
    \draw (.3,0) --  (6,0);
    \draw (0.3,0) --  (0.3,0.8);
        \draw (1,0) -- (1,.8);
        \draw (2,0) -- (2,.8);
        
        \draw (3,0) -- (3,.8);
        \draw (4,0) --  (4,0.8);
        \draw (5,0) --  (5,0.8);
        \draw (6,0) --  (6,0.8);
        
        \filldraw[draw, circle, fill=white, minimum size=6pt](.3,0) circle (6pt);
        \node at (.3,0) {$L$};
        \fill (1,0) circle (1pt);
        \fill (2,0) circle (1pt);
        \fill (3,0) circle (1pt);
        \fill (4,0) circle (1pt);
        \fill (5,0) circle (1pt);
        
        \filldraw[draw, circle, fill=myGray, minimum size=6pt](1,.4) circle (6pt);
        \node at (1,.4) {$H$};
        \filldraw[draw, circle, fill=myGray, minimum size=6pt](1.5,0) circle (6pt);
        \node at (1.5,0) {$H$};
        \filldraw[draw, circle, fill=myGray, minimum size=6pt](2,0.4) circle (6pt);
        \node at (2,0.4) {$H$};
        \filldraw[draw, circle, fill=myGray, minimum size=6pt](2.5,0) circle (6pt);
        \node at (2.5,0) {$H$};

        \filldraw[draw, circle, fill=myGray, minimum size=6pt](4,.4) circle (6pt);
        \node at (4,.4) {$H$};
        \filldraw[draw, circle, fill=myGray, minimum size=6pt](4.5,0) circle (6pt);
        \node at (4.5,0) {$H$};
        \filldraw[draw, circle, fill=myGray, minimum size=6pt](5,0.4) circle (6pt);
        \node at (5,0.4) {$H$};
        \filldraw[draw, circle, fill=myGray, minimum size=6pt](5.5,0) circle (6pt);
        \node at (5.5,0) {$H$};
\end{tikzpicture} \ .
\end{aligned} 
\end{equation}
Then we apply single-site measurements in the computational basis on the second, third, fifth, and sixth qubits. After a left-conditioned feedback correction, we obtain the target biased cat state.

We conclude this subsection by noting that OLMF-preparability is a stricter criterion than MF-preparability since it requires an open boundary and also requires the feedback corrections to be left-conditioned. It remains unclear whether the latter requirement can be removed, such that the preparation of any OMF-preparable state can be constructed using our scheme. We leave this question for future exploration. In the remainder of this work, we will systematically analyze the various sets of pushable defects and explore the preparability of the associated MPSs.

\section{All Pauli defects are pushable}
\label{sec:all_pauli}

In this section, we consider the classes of states for which the set of pushable defects contains all the Pauli operators in the virtual space, and construct their preparation circuits. In this setting, while our protocols are equivalent to the fusion measurement protocols~\cite{sahay2025classifying,smith2024constant,stephen2024preparing,zhang2024characterizing},  we will provide a more detailed classification of fusion measurement preparable states by considering various pushable defects and classifying their associated pushing relations. This extends the structure theorem in Ref.~\cite{zhang2024characterizing}, which states that all these states must admit a sequential Clifford structure. In particular, our formalism naturally reveals a relation between these states and non-invertible duality transformations.

To begin, we first briefly review the fusion measurement protocol used in Ref.~\cite{sahay2025classifying,smith2024constant,stephen2024preparing,zhang2024characterizing}. To prepare an MPS generated by the tensor $A$, we first prepare the following product state:
\beq
    \begin{tikzpicture}[baseline=(current bounding box),scale=1.5]
        \draw[rounded corners=8pt]
        (-.3,.5) -- (-.3,0) -- (-1.1,0) -- (-1.1,.5);
        \draw[rounded corners=8pt]
        (0,.5) -- (0,0) -- (.8,0) -- (.8,0.5);
        \draw[rounded corners=8pt]
        (1.1,.5) -- (1.1,0) -- (1.9,0) -- (1.9,0.5);
        \draw (-.7,0) -- (-.7,.5);
        \draw (.4,0) -- (.4,.5);
        \draw (1.5,0) -- (1.5,.5);
        \filldraw[fill = white] (-.7-.2,-.2) rectangle node {$A$} (-.7+.2,.2);
        \filldraw[fill = white] (.4-.2,-.2) rectangle node {$A$} (.4+.2,.2);
        \filldraw[fill = white] (1.5-.2,-.2) rectangle node {$A$} (1.5+.2,.2);
        \node at (2.4,0.2) {...};
        \node at (-1.6,0.2) {...};
        \end{tikzpicture}
\eeq
where each 3-leg tensor $A$ has one physical bond of dimension $d$ in the center and two ancillary bonds of dimension $D$ on the left and right. A measurement on two adjacent ancillary bonds in an orthonormal basis fuses (glues) two bonds together with potential defects. In a $D$-dimensional space, we define the generalized Pauli-$Z$ operator as $Z = \sum_{j=0}^{D-1}e^{2\pi i j/D}\ket{j}\bra{j}$, and the generalized Pauli-$X$ as $X = \sum_{j=0}^{D-1}\ket{(j+1)\text{ mod }D}\bra{j}$ such that they satisfy $ZX = \omega X Z$, $\omega = \exp(2\pi i/D)$. If we choose the Bell basis ($Z\otimes Z^\dagger$ and $X\otimes X^\dagger$), the resulting state after the fusion is given by:
\beq
    \begin{tikzpicture}[baseline=(current bounding box),scale=1.5]
        \draw (-1.3,0) -- (2.1,0);
        \draw (-.7,0) -- (-.7,.5);
        \draw (.4,0) -- (.4,.5);
        \draw (1.5,0) -- (1.5,.5);
        \filldraw[fill = white] (-.7-.2,-.2) rectangle node {$A$} (-.7+.2,.2);
        \filldraw[fill = white] (.4-.2,-.2) rectangle node {$A$} (.4+.2,.2);
        \filldraw[fill = white] (1.5-.2,-.2) rectangle node {$A$} (1.5+.2,.2);
        \filldraw[draw, circle, fill=myRed, minimum size=6pt](-.15,0) circle (5pt);
        \node at (-.15,0) {$P$};
        \filldraw[draw, circle, fill=myRed, minimum size=6pt](.95,0) circle (5pt);
        \node at (.95,0) {$P'$};
        \node at (2.4,0) {...};
        \node at (-1.6,0) {...};
        \end{tikzpicture}
\eeq
where $P$ and $P'$ are Pauli defects associated with different measurement outcomes. When all such defects are pushable for the tensor $A$, we can perform an on-site feedback correction to push them away, thereby preparing $\ket{A}$ using a single round of measurements.

While our formalism adopts the pushability conditions of defects similar to the references that discuss fusion measurement~\cite{sahay2025classifying,smith2024constant,stephen2024preparing,zhang2024characterizing}, there is a notable difference. Specifically, the classification of preparable states in Ref.~\cite{zhang2024characterizing,sahay2025classifying} restricts to states with only right-pushable defects, whereas a general fusion measurement protocol allows left-pushable defects as well. We will consider the pushing relations in both directions and discuss the corresponding classification. It turns out that this generalization does not lead to additional capability in state preparation.

\subsection{Single-round MF preparation}

According to the scheme presented in the previous section, we can construct an associated MPS tensor $B$ from $A$ using its pushable defects (Eq.~\eqref{eq:B_definition}), so that $\ket{B}$ is a single round of MF away from $\ket{A}$. When all Pauli defects are pushable for $A$, we can construct $B$ as follows: 
\begin{equation}
\begin{aligned}
\begin{tikzpicture}[baseline=(current bounding box),scale=1.2]
        \draw (-.5,.3) -- (-.5,.7);
        \draw (.1,.3) -- (.1,.7);
        \draw (-.2,.3) -- (-.2,.7);
        \filldraw[fill = white] (-.7,-.3) rectangle node {$B$} (.3,.3);
        \draw (.7,0) --  (.3,0);
        \draw (-.7,0) --  (-1.1,0);
        \node at (1,0) {$\equiv$};
        \filldraw[fill = white] (1.7,-.3) rectangle node {$A$} (2.3,.3);
        \draw (1.7,0) --  (1.3,0);
        \draw (2,.3) -- (2,.9);
        \draw (4.2,0) --  (2.3,0);
        \fill (2.6,0) circle (1pt);
        \draw (2.6,0.9) --  (2.6,0);
        \filldraw[draw, circle, fill=myGray, minimum size=6pt](2.6,.45) circle (6pt);
        \node at (2.6,.45) {$H$};
        \draw (2.6,0) --  (3.2,0);
        \fill (3.4,0) circle (1pt);
        \draw (3.4,0.9) --  (3.4,0);
        \filldraw[draw, circle, fill=myGray, minimum size=6pt](3.4,.45) circle (6pt);
        \node at (3.4,.45) {$H$};
        \filldraw[draw, circle, fill=myGray, minimum size=6pt](3,0) circle (6pt);
        \node at (3,0) {$H$};
        \filldraw[draw, circle, fill=myGray, minimum size=6pt](3.8,0) circle (6pt);
        \node at (3.8,0) {$H$};
\end{tikzpicture}
\label{eq:B_fusion_measurement}
\end{aligned}  
\end{equation} 
where we use a black dot on the three-leg junction to denote the $\delta$-tensor, and $H$ denotes the generalized Hadamard gate
\beq
    H = \frac{1}{\sqrt{d}}\sum_{j,k=0}^{d-1}e^{\frac{-2\pi i  jk}{d}}\ket{j}\bra{k},
    \label{eq:Hadamard}
\eeq
which satisfies $H Z H^{\dagger}=X$ and $H X H^{\dagger}=Z^{\dagger}$. The matrices in the diagram act from top to bottom and from left to right. The above construction of the tensor $B$ yields
\begin{equation}
\begin{aligned}
\begin{tikzpicture}[baseline=(current bounding box),scale=1.2]
        \draw (-.5,.3) -- (-.5,.7);
        \draw (-.2,.3) -- (-.2,.7);
        \draw (.1,.3) -- (.1,.7);
        \node at (-.2,.9) {$\bar{m}$};
        \node at (.1,.9) {$\bar{n}$};
        \filldraw[fill = white] (-.7,-.3) rectangle node {$B$} (.3,.3);
        \draw (.7,0) --  (.3,0);
        \draw (-.7,0) --  (-1.1,0);
        \node at (1,0) {$=$};
        \draw (2.4-.7,0) --  (2.4-1.1,0);
        \draw (2.4-.4,.3) -- (2.4-.4,.7);
        \filldraw[fill = white] (2.4-.7,-.3) rectangle node {$A$} (2.4-.1,.3);
        \draw (2.4-.1,0) -- (2.4+1.6,0); 
        \filldraw[draw, circle, fill=myBlue, minimum size=6pt](2.4+.3,0) circle (8pt);
        \node at (2.4+.3,0) {$Z^m$};
        \filldraw[draw, circle, fill=myRed, minimum size=6pt](2.4+1,0) circle (8pt);
        \node at (2.4+1,0) {$X^n$};
        \node at (2.4+1.7,0) {,};
\end{tikzpicture}
\label{eq:B_fusion_measurement_component}
\end{aligned}  
\end{equation} 
where $\overline{m}\equiv -m$. In other words, measuring the middle and the right physical legs on the tensor $B$ leads to the tensor $A$ up to certain Pauli defects on the virtual legs. This is exactly the condition in Eq.~\eqref{eq:B_definition}, with the pushable defects $\{V_t\}$ being generalized Pauli defects, and with the two right physical bonds of $B$ in Eq.~\eqref{eq:B_fusion_measurement_component} combined into one physical bond in Eq.~\eqref{eq:B_definition}. Following Eq.~\eqref{eq:T_B}, the transfer matrix $T_B$ is thus given by
\beq
    T_B 
    &=\sum_{m,n=0}^{D-1} \ \begin{tikzpicture}[baseline=(current bounding box),scale=1.2]
        \draw (0,.3) -- (0,.7);
        \filldraw[fill = white] (-.3,-.3) rectangle node {$A$} (.3,.3);
        \draw (1.9,0) --  (0.3,0);
        \draw (-.7,0) --  (-0.3,0);
        \filldraw[fill = white] (-.3,.7) rectangle node {$\bar{A}$} (.3,1.3);
        \draw (1.9,1) --  (0.3,1);
        \draw (-.7,1) --  (-0.3,1);
        \filldraw[draw, circle, fill=myBlue, minimum size=6pt](.7,.0) circle (8pt);
        \node at (.7,.0) {$Z^{m}$};
        \filldraw[draw, circle, fill=myBlue, minimum size=6pt](.7,1) circle (8pt);
        \node at (.7,1) {$Z^{\bar{m}}$};
        \filldraw[draw, circle, fill=myRed, minimum size=6pt](1.4,.0) circle (8pt);
        \node at (1.4,.0) {$X^{n}$};
        \filldraw[draw, circle, fill=myRed, minimum size=6pt](1.4,1) circle (8pt);
        \node at (1.4,1) {$X^{n}$};
    \end{tikzpicture} \\[2ex]
    &=\ 
    \begin{tikzpicture}[baseline=(current bounding box),scale=1.5]
        \draw (-0.3,0) --  (-0.6,0);
        \draw (-0.3,.5) --  (-0.6,.5);
        \draw (0,.3) -- (0,.5);
        \filldraw[fill = white] (-.3,-.2) rectangle node {$T_A$} (.3,.7);
        \draw[rounded corners=8pt]
        (.3,.5) -- (.6,.5) -- (.6,0) -- (.3,0);
        \draw[rounded corners=8pt]
        (1,.5) -- (.7,.5) -- (.7,0) -- (1,0);
    \end{tikzpicture}\quad,
    \label{eq:B_A}
\eeq
In particular, this transfer matrix is always of rank 1, and according to Lemma \ref{lemma:FDLU}, the MPS $\ket{B}$ is FDLU-preparable. Therefore, a single-round MF-preparation protocol for $\ket{A}$ involves preparing $\ket{B}$ from a product state using an FDLU, followed by a round of local measurement and feedback corrections. This protocol is equivalent to the fusion measurement protocol, as the extra Hadamard gates and delta tensors in Eq.~\eqref{eq:B_fusion_measurement} just transform the Bell basis measurement on the virtual bonds of $A$, to the computation basis measurement performed on the right two physical bonds of B.

In the following section, we study the structure of these states by taking a closer look at the pushing relation of each Pauli defect.

\subsection{Right-pushable Pauli defects}
\label{subsec:right_pushable}

We first focus on the states for which all the Pauli defects are right-pushable: 
\beq
    \big(\bar{Z}\otimes Z\big) T_A &= T_A \big(\bar{P}_Z\otimes P_Z\big),\\
    \big(\bar{X}\otimes X\big) T_A &= T_A \big(\bar{P}_X\otimes P_X\big),
    \label{eq:pushing_relation}
\eeq
where $P_Z$ and $P_X$ are products of generalized $Z$ and $X$ to some power, and $\bar{O}$ denotes the complex conjugation of an operator $O$. Ref.~\cite{zhang2024characterizing} proved that such MPSs are preparable from some product state $\ket{\psi}^{\otimes N}$ using a sequential Clifford circuit followed by a transversal unitary circuit, up to disentangled ancilla qudits. Here, we will provide a finer classification based on various classes of $P_X, P_Z$.

We begin the discussion by considering the $D=2$ bond dimension.  Since the error defects are defined up to a phase, the Pauli defects form a $\mathbb{Z}_2\times \mathbb{Z}_2$ group generated by the $Z$ and $X$ defects. The pushing relations are classified into three types by the subgroup of defects that are annihilated upon pushing. First, all defects are annihilated, i.e., all Pauli defects are pushed to identity $Z, X\mapsto I$ after site blocking. Second, all the Pauli defects but one are eventually pushed to identity. Then we can make a gauge transformation to map the remaining non-vanishing Pauli defect to $X$, i.e., $Z\mapsto I$ and $X\mapsto X$. Third, no defect is annihilated after pushing. The pushing relation in this case essentially defines an automorphism of the defect group $\mathbb{Z}_2\times \mathbb{Z}_2$. Since $\operatorname{Aut}(\mathbb{Z}_2\times \mathbb{Z}_2)=S_3$, the orders of all automorphisms are a factor of six. Therefore, after a blocking of six sites, any pushing relation of the third type becomes the identity automorphism, i.e., $Z\mapsto Z$ and $X\mapsto X$. 

To summarize, when $D=2$, the pushing relation can be brought to one of the above three types via a site blocking and gauge transformation. This result can be extended to general bond dimensions. Specifically, we prove in Appendix \ref{app:coarse_grain} that, we can decompose the $D$-dimensional virtual space into three subspaces such that after site blocking and appropriate gauge transformation: in the first subspace we have $Z, X\mapsto I$; in the second subspace we have $Z\mapsto I$ and $X\mapsto X$; and in the third subspace we have $Z\mapsto Z$ and $X\mapsto X$.

We will now discuss states that have one of the above three classes of pushing relations for bond dimension $D$. Since we focus on the pushability of Pauli defects, it is especially useful to expand a general transfer matrix $T_A$ in the Bell basis as follows:
\beq
    T_A=\sum_{a,b,c,d=0}^{D-1}\lambda_{a,b,c,d}\quad \begin{tikzpicture}[baseline=(current bounding box),scale=1.5]
        \draw[rounded corners=8pt]
        (-.8,.8) -- (0,.8) -- (0,0) -- (-.8,0);
        \draw[rounded corners=8pt]
        (1,.8) -- (.2,.8) -- (.2,0) -- (1,0);
        \filldraw[draw, circle, fill=myRed, minimum size=6pt](0.6,0) circle (6pt);
        \node at (0.6,0) {$X^{d}$};
        \filldraw[draw, circle, fill=myBlue, minimum size=6pt](0.6,.8) circle (6pt);
        \node at (0.6,.8) {$Z^{c}$};
        \filldraw[draw, circle, fill=myRed, minimum size=6pt](-0.4,0) circle (6pt);
        \node at (-0.4,0) {$X^{b}$};
        \filldraw[draw, circle, fill=myBlue, minimum size=6pt](-0.4,0.8) circle (6pt);
        \node at (-0.4,0.8) {$Z^a$};
        \end{tikzpicture}\ .
        \label{eq:transfer_matrix_bell}
\eeq
The coefficients $\lambda_{a,b,c,d}$ are constrained such that $T_A$ defines a completely positive map as shown in Eq.~\eqref{eq:CP_map}.

\subsubsection{class $RIRI$}
\label{subsec:RIRI}
We use $RIRI$ to denote that both $Z$ and $X$ are right-pushed to identity: $Z,X\mapsto I$. According to Eq.~\eqref{eq:pushing_relation}, the transfer matrix that supports this pushing relation satisfies:
\beq
    \big(\bar{Z}\otimes Z\big) T_A = 
    \big(\bar{X}\otimes X\big) T_A = T_A.
\eeq
Expanding $\big(\bar{Z}\otimes Z\big) T_A$ in the Bell basis we have
\beq
    &\sum_{a,b,c,d=0}^{D-1}\lambda_{a,b,c,d}\quad \begin{tikzpicture}[baseline=(current bounding box),scale=1.5]
        \draw[rounded corners=8pt]
        (-1.4,.8) -- (0,.8) -- (0,0) -- (-1.4,0);
        \draw[rounded corners=8pt]
        (1,.8) -- (.2,.8) -- (.2,0) -- (1,0);
        \filldraw[draw, circle, fill=myRed, minimum size=6pt](0.6,0) circle (6pt);
        \node at (0.6,0) {$X^{d}$};
        \filldraw[draw, circle, fill=myBlue, minimum size=6pt](0.6,.8) circle (6pt);
        \node at (0.6,.8) {$Z^{c}$};
        \filldraw[draw, circle, fill=myRed, minimum size=6pt](-0.4,0) circle (6pt);
        \node at (-0.4,0) {$X^{b}$};
        \filldraw[draw, circle, fill=myBlue, minimum size=6pt](-0.4,0.8) circle (6pt);
        \node at (-0.4,0.8) {$Z^a$};
        \filldraw[draw, circle, fill=myBlue, minimum size=6pt](-1,0) circle (6pt);
        \node at (-1,0) {$Z$};
        \filldraw[draw, circle, fill=myBlue, minimum size=6pt](-1,0.8) circle (6pt);
        \node at (-1,0.8) {$\bar{Z}$};
        \end{tikzpicture}\\[2ex]
        =&\sum_{a,b,c,d=0}^{D-1}e^{\frac{2\pi i b}{D}}\lambda_{a,b,c,d}\quad \begin{tikzpicture}[baseline=(current bounding box),scale=1.5]
        \draw[rounded corners=8pt]
        (-.8,.8) -- (0,.8) -- (0,0) -- (-.8,0);
        \draw[rounded corners=8pt]
        (1,.8) -- (.2,.8) -- (.2,0) -- (1,0);
        \filldraw[draw, circle, fill=myRed, minimum size=6pt](0.6,0) circle (6pt);
        \node at (0.6,0) {$X^{d}$};
        \filldraw[draw, circle, fill=myBlue, minimum size=6pt](0.6,.8) circle (6pt);
        \node at (0.6,.8) {$Z^{c}$};
        \filldraw[draw, circle, fill=myRed, minimum size=6pt](-0.4,0) circle (6pt);
        \node at (-0.4,0) {$X^{b}$};
        \filldraw[draw, circle, fill=myBlue, minimum size=6pt](-0.4,0.8) circle (6pt);
        \node at (-0.4,0.8) {$Z^a$};
        \end{tikzpicture}\ .
        \label{eq:ZZ_TA}
\eeq
Therefore, $\big(\bar{Z}\otimes Z\big) T_A =T_A$ constrains $\lambda_{a,b,c,d}$ to be non-zero only when $b=0$. Similarly, $\big(\bar{X}\otimes X\big) T_A =T_A$ constrains $\lambda_{a,b,c,d}$ to be non-zero only when $a=0$. The general form of $T_A$ with this pushing relation is thus given by:
\beq
        T_A&=\sum_{c,d=0}^{D-1}\lambda_{c,d}\quad \begin{tikzpicture}[baseline=(current bounding box),scale=1.5]
        \draw[rounded corners=8pt]
        (-.6,.8) -- (0,.8) -- (0,0) -- (-.6,0);
        \draw[rounded corners=8pt]
        (1,.8) -- (.2,.8) -- (.2,0) -- (1,0);
        \filldraw[draw, circle, fill=myRed, minimum size=6pt](0.6,.8) circle (6pt);
        \node at (0.6,.8) {$Z^{c}$};
        \filldraw[draw, circle, fill=myBlue, minimum size=6pt](0.6,0) circle (6pt);
        \node at (0.6,0) {$X^{d}$};
        \end{tikzpicture} = \quad \begin{tikzpicture}[baseline=(current bounding box),scale=1.5]
        \draw[rounded corners=8pt]
        (-.6,.8) -- (0,.8) -- (0,0) -- (-.6,0);
        \draw[rounded corners=8pt]
        (1,.8) -- (.4,.8) -- (.4,0) -- (1,0);
        \filldraw[draw, circle, fill=white, minimum size=6pt](0.4,.4) circle (6pt);
        \node at (0.4,.4) {$M$};
        \end{tikzpicture}
\eeq
where the coefficients $\{\lambda_{c,d}\}$ are constrained such that the matrix $M:= \sum_{c,d} \lambda_{c,d}Z^{c}X^{d}$ is a non-negative Hermitian matrix. This transfer matrix is of rank 1 and thus MPS $\ket{A}$ is FDLU-preparable according to Sec.~\ref{sec:FDLU}. Here we give a further description on $\ket{A}$, by diagonalizing $M=V^{\dagger}\Sigma V$ into $\Sigma = \operatorname{diag}(\sigma_0,\cdots, \sigma_{D-1})$. After a gauge transformation by $V$, the representative MPS tensor of $T_A$ is simply given by
\beq
        A=\quad \begin{tikzpicture}[baseline=(current bounding box),scale=1.5]
        \draw[rounded corners=8pt]
        (0,.6) -- (0,0) -- (-.7,0);
        \draw[rounded corners=8pt]
        (.2,.6) -- (.2,0) -- (1.2,0);
        \filldraw[draw, circle, fill=white, minimum size=6pt](0.7,0) circle (6pt);
        \node at (0.7,0) {$\Sigma^{\frac{1}{2}}$};
        \end{tikzpicture}\ ,
\eeq
up to a unitary gate acting on the physical bonds~\cite{cirac2021matrix,QChannelLecture}. This tensor generates a product state. Therefore, we conclude that the states of this class after site blocking is related to some product state via a transversal unitary circuit. Namely, $\ket{A}$ is FDLU-preparable. 

There is another way to see why $\ket{A}$ is FDLU-preparable. We will present it here since it will be helpful for later discussions. The pushing relation $Z,X\mapsto I$ requires unitary operators $U_Z$ and $U_X$ whose actions on the physical bond annihilate the corresponding defects as in Eq.~\eqref{eq:pushing_U_right}. Therefore, we will show below that the target state $\ket{A}$ can be obtained from the associated FDLU-preparable state $\ket{B}$ in Eq.~\eqref{eq:B_fusion_measurement_component} via a controlled-unitary circuit between the ancillary bonds and physical bonds: 
\begin{equation}
\begin{aligned}
&\begin{tikzpicture}[baseline=(current bounding box),scale=1.2]
        \node at (-1.4,0) {$\cdots$};
        \draw (-.4,.3) -- (-.4,1.4);
        \draw (1,.3) -- (1,1.4);
        \draw (2.4,.3) -- (2.4,1.4);
        \filldraw[fill = white] (-.7,-.3) rectangle node {$A$} (-.1,.3);
        \draw (.7,0) --  (-.1,0);
        \draw (-.7,0) --  (-1.1,0);
        \filldraw[fill = white] (.7,-.3) rectangle node {$A$} (1.3,.3);
        \draw (1.3,0) --  (2.1,0);
        \filldraw[fill = white] (2.1,-.3) rectangle node {$A$} (2.7,.3);
        \draw (2.7,0) --  (3.5,0);
        \node at (3.8,0) {$\cdots$};
        \draw (-.1,.8) -- (-.1,1.4);
        \draw (.5,.8) -- (.5,1.4);
        \node at (-.1,.6) {$\ket{+}$};
        \node at (.5,.6) {$\ket{+}$};
        \draw (-.1+1.4,.8) -- (-.1+1.4,1.4);
        \draw (.5+1.4,.8) -- (.5+1.4,1.4);
        \node at (-.1+1.4,.6) {$\ket{+}$};
        \node at (.5+1.4,.6) {$\ket{+}$};
        \draw (-.1+2.8,.8) -- (-.1+2.8,1.4);
        \draw (.5+2.8,.8) -- (.5+2.8,1.4);
        \node at (-.1+2.8,.6) {$\ket{+}$};
        \node at (.5+2.8,.6) {$\ket{+}$};
\end{tikzpicture}\\[2ex]
=\ &\begin{tikzpicture}[baseline=(current bounding box),scale=1.2]
        \node at (-1.4,0) {$\cdots$};
        \draw (-.5,.3) -- (-.5,.5);
        \draw (-.5,1) -- (-.5,1.2);
        \draw (-.5,1.7) -- (-.5,1.9);
        \draw (-.2,.3) -- (-.2,1.2);
        \draw (-.2,1.7) -- (-.2,1.9);
        \draw (.1,.3) -- (.1,1.9);
        \node at (-.6,1.45) {$U_Z$};
        \node at (-.6,.75) {$U_X$};
        \draw (-1,1.2) -- (-.4,1.2);
        \draw (-.4,1.2) -- (-.4,1.7);
        \draw (-1,1.7) -- (-.4,1.7);
        \draw (-1,.5) -- (-.4,.5);
        \draw (-.4,.5) -- (-.4,1);
        \draw (-1,1) -- (-.4,1);
        \filldraw[fill = white] (-.7,-.3) rectangle node {$B$} (.3,.3);
        \draw (.7,0) --  (.3,0);
        \draw (-.7,0) --  (-1.1,0);
        \draw (.9,.3) -- (.9,1.2);
        \draw (.9,1.7) -- (.9,1.9);
        \draw (1.5,.3) -- (1.5,1.9);
        \draw (1.2,.3) -- (1.2,1.2);
        \draw (1.2,1.7) -- (1.2,1.9);
        \filldraw[fill = white] (.7,-.3) rectangle node {$B$} (1.7,.3);
        \draw (1.7,0) --  (2.1,0);
        \filldraw[fill = white] (2.1,-.3) rectangle node {$B$} (3.1,.3);
        \draw (2.3,.3) -- (2.3,1.2);
        \draw (2.3,1.7) -- (2.3,1.9);
        \draw (2.9,.3) -- (2.9,.5);
        \draw (2.9,1) -- (2.9,1.9);
        \draw (2.6,.3) -- (2.6,1.2);
        \draw (2.6,1.7) -- (2.6,1.9);
        \draw (3.1,0) --  (3.5,0);
        \filldraw[fill = white] (0,.5) rectangle node {} (1,1);
        \node at (.2,.75) {$C$};
        \node at (.8,.75) {$U_X$};
        \draw (-.3,1.2) -- (0,1.2);
        \draw (0.2,1.2) -- (1,1.2);
        \draw (-.3,1.2) -- (-.3,1.7);
        \draw (1,1.2) -- (1,1.7);
        \draw (-.3,1.7) -- (0,1.7);
        \draw (0.2,1.7) -- (1,1.7);
        \node at (-.1,1.45) {$C$};
        \node at (.8,1.45) {$U_Z$};
        \filldraw[fill = white] (0+1.4,.5) rectangle node {} (1+1.4,1);
        \node at (.2+1.4,.75) {$C$};
        \node at (.8+1.4,.75) {$U_X$};
        \draw (-.3+1.4,1.2) -- (0+1.4,1.2);
        \draw (0.2+1.4,1.2) -- (1+1.4,1.2);
        \draw (-.3+1.4,1.2) -- (-.3+1.4,1.7);
        \draw (1+1.4,1.2) -- (1+1.4,1.7);
        \draw (-.3+1.4,1.7) -- (0+1.4,1.7);
        \draw (0.2+1.4,1.7) -- (1+1.4,1.7);
        \node at (-.1+1.4,1.45) {$C$};
        \node at (.8+1.4,1.45) {$U_Z$};
        \draw (-.3+2.8,1.2) -- (0+2.8,1.2);
        \draw (0.2+2.8,1.2) -- (.5+2.8,1.2);
        \draw (-.3+2.8,1.2) -- (-.3+2.8,1.7);
        \draw (-.3+2.8,1.7) -- (0+2.8,1.7);
        \draw (0.2+2.8,1.7) -- (.5+2.8,1.7);
        \node at (-.1+2.8,1.45) {$C$};
        \draw (.5+2.8,.5) -- (2.8,.5);
        \draw (2.8,1) -- (.5+2.8,1);
        \draw (2.8,.5) -- (2.8,1);
        \node at (.2+2.8,.75) {$C$};
        \node at (3.8,0) {$\cdots$};
\end{tikzpicture}
\label{eq:A_B_RIRI}
\end{aligned}  
\end{equation} 

The above equation can be seen from the local stabilizers. We denote a single-qudit operator $O$ supported on the three physical bonds at site $s$ from left to right as $O_s$, $\tilde{O}_s$, and $\dbtilde{O}_s$ respectively (see LHS of Eq.~\eqref{eq:B_fusion_measurement}). Following the definition of right-pushable defect in Eq.~\eqref{eq:pushing_U_right} and the properties of $B$ in Eq.~\eqref{eq:B_fusion_measurement}, operator $\tilde{X}_s \dbtilde{Z}_s^{\dagger} U_{Z,s+1}^{\dagger}$ and $\dbtilde{X}_s U_{X,s+1}^{\dagger}$ stabilize $\ket{B}$ for any $s$. Through the action of a controlled-unitary circuit as shown in the RHS of the above equation, these stabilizers are mapped to $\tilde{X}_s$ and $\dbtilde{X}_s$. As a result, the second and third physical bonds at each site are in $\ket{+}$ state and disentangled from the rest of the degrees of freedom, as shown in the LHS of the above equation.

\subsubsection{class $RIRX$}
\label{subsec:RIRX}
We use $RIRX$ to denote that $Z$ is right-pushed to $I$ and $X$ is right-pushed to $X$: $Z\mapsto I$, $X\mapsto X$. According to Eq.~\eqref{eq:pushing_relation}, the transfer matrix that supports this pushing relation satisfies:
\beq
    \big(\bar{Z}\otimes Z\big) T_A = T_A,\quad  
    \big(\bar{X}\otimes X\big) T_A = T_A \big(\bar{X}\otimes X\big).
\eeq
As in Eq.~\eqref{eq:ZZ_TA}, the first equation constrains $\lambda_{a,b,c,d}$ to be non-zero only when $b=0$. Expanding $\big(\bar{X}\otimes X\big) T_A \big(\bar{X}\otimes X\big)^{\dagger}$ by the Bell basis, we have 
\beq
    &\sum_{a,b,c,d=0}^{D-1}\lambda_{a,b,c,d}\quad \begin{tikzpicture}[baseline=(current bounding box),scale=1.5]
        \draw[rounded corners=8pt]
        (-1.4,.8) -- (0,.8) -- (0,0) -- (-1.4,0);
        \draw[rounded corners=8pt]
        (1.6,.8) -- (.2,.8) -- (.2,0) -- (1.6,0);
        \filldraw[draw, circle, fill=myRed, minimum size=6pt](0.6,0) circle (6pt);
        \node at (0.6,0) {$X^{d}$};
        \filldraw[draw, circle, fill=myBlue, minimum size=6pt](0.6,.8) circle (6pt);
        \node at (0.6,.8) {$Z^{c}$};
        \filldraw[draw, circle, fill=myRed, minimum size=6pt](-0.4,0) circle (6pt);
        \node at (-0.4,0) {$X^{b}$};
        \filldraw[draw, circle, fill=myBlue, minimum size=6pt](-0.4,0.8) circle (6pt);
        \node at (-0.4,0.8) {$Z^a$};
        \filldraw[draw, circle, fill=myRed, minimum size=6pt](-1,0) circle (6pt);
        \node at (-1,0) {$X$};
        \filldraw[draw, circle, fill=myRed, minimum size=6pt](-1,0.8) circle (6pt);
        \node at (-1,0.8) {$\bar{X}$};
        \filldraw[draw, circle, fill=myRed, minimum size=6pt](1.2,0) circle (6pt);
        \node at (1.2,0) {$X^{\dagger}$};
        \filldraw[draw, circle, fill=myRed, minimum size=6pt](1.2,0.8) circle (6pt);
        \node at (1.2,0.8) {$\bar{X}^{\dagger}$};
        \end{tikzpicture}\\[2ex]
        =&\sum_{a,b,c,d=0}^{D-1}e^{\frac{-2\pi i (c+a)}{D}}\lambda_{a,b,c,d}\quad \begin{tikzpicture}[baseline=(current bounding box),scale=1.5]
        \draw[rounded corners=8pt]
        (-.8,.8) -- (0,.8) -- (0,0) -- (-.8,0);
        \draw[rounded corners=8pt]
        (1,.8) -- (.2,.8) -- (.2,0) -- (1,0);
        \filldraw[draw, circle, fill=myRed, minimum size=6pt](0.6,0) circle (6pt);
        \node at (0.6,0) {$X^{d}$};
        \filldraw[draw, circle, fill=myBlue, minimum size=6pt](0.6,.8) circle (6pt);
        \node at (0.6,.8) {$Z^{c}$};
        \filldraw[draw, circle, fill=myRed, minimum size=6pt](-0.4,0) circle (6pt);
        \node at (-0.4,0) {$X^{b}$};
        \filldraw[draw, circle, fill=myBlue, minimum size=6pt](-0.4,0.8) circle (6pt);
        \node at (-0.4,0.8) {$Z^a$};
        \end{tikzpicture}\ .
        \label{eq:XX_TA_XX}
\eeq
This constrains $\lambda_{a,b,c,d}$ to be non-zero only when $c=-a$. As a result, the general form of $T_A$ with $RIRX$ pushing relation is as follows:
\beq
        T_A&=\sum_{a,d=0}^{D-1}\lambda_{a,d}\quad \begin{tikzpicture}[baseline=(current bounding box),scale=1.5]
        \draw[rounded corners=8pt]
        (-.8,.8) -- (-.2,.8) -- (-.2,0) -- (-.8,0);
        \draw[rounded corners=8pt]
        (1,.8) -- (.2,.8) -- (.2,0) -- (1,0);
        \filldraw[draw, circle, fill=myBlue, minimum size=6pt](-.2,0.4) circle (6pt);
        \node at (-0.2,0.4) {$Z^{a}$};
        \filldraw[draw, circle, fill=myBlue, minimum size=6pt](0.6,.8) circle (6pt);
        \node at (0.6,.8) {$Z^{\bar{a}}$};
        \filldraw[draw, circle, fill=myRed, minimum size=6pt](0.6,0) circle (6pt);
        \node at (0.6,0) {$X^{d}$};
        \end{tikzpicture}\\[2ex]
        &= \quad 
        \begin{tikzpicture}[baseline=(current bounding box),scale=1.5]
        \draw[rounded corners=8pt]
        (-.8,1.8) -- (-.2,1.8) -- (-.2,0) -- (-.8,0);
        \draw[rounded corners=8pt]
        (1,1.8) -- (.2,1.8) -- (.2,0) -- (1,0);
        \filldraw[fill = myRed] (-.4,.2) rectangle node {$CX^{\dagger}$} (.4,.6);
        \filldraw[fill = myRed] (-.4,1.2) rectangle node {$CX$} (.4,1.6);
        \filldraw[draw, circle, fill=white, minimum size=6pt](0.2,0.9) circle (6pt);
        \node at (0.2,0.9) {$M$};
        \end{tikzpicture}\ ,
\eeq
where we use the relation $CX_{1,2} Z^{\bar{a}}_2 X^d_2 CX^{\dagger}_{1,2} = Z_1^a Z^{\bar{a}}_2 X^d_2$ (1 and 2 label the qudits in the left and right respectively), and the Hermitian matrix $M:= \sum_{a,d}\lambda_{a,d}Z^{\bar{a}}X^{d}$. Similar to the previous class, by diagonalizing the non-negative Hermitian matrix $M$, we can write a representative MPS tensor as:
\beq
        A=&\quad \begin{tikzpicture}[baseline=(current bounding box),scale=1.5]
        \draw[rounded corners=8pt]
        (0,2) -- (0,0) -- (-.7,0);
        \draw[rounded corners=8pt]
        (.4,2) -- (.4,0) -- (1.1,0);
        \filldraw[fill = myRed] (-.2,.2) rectangle node {$CX^{\dagger}$} (.6,.6);
        \filldraw[draw, circle, fill=white, minimum size=6pt](0.4,1.1) circle (6pt);
        \node at (0.4,1.1) {$V$};
        \filldraw[draw, circle, fill=white, minimum size=6pt](0.4,1.6) circle (6pt);
        \node at (0.4,1.6) {$\Sigma^{\frac{1}{2}}$};
        \end{tikzpicture}=\quad
        \begin{tikzpicture}[baseline=(current bounding box),scale=1.5]
        \draw[rounded corners=8pt]
        (0,2) -- (0,0) -- (-.7,0);
        \draw[rounded corners=8pt]
        (.8,.7) -- (.8,0) -- (.4,0) -- (.4,0.8) -- (1.1,0.8);
        \draw (.8,.9) -- (.8,2);
        \filldraw[fill = myRed] (-.2,.2) rectangle node {$CX$} (.6,.6);
        \filldraw[draw, circle, fill=white, minimum size=6pt](0.8,1.2) circle (6pt);
        \node at (0.8,1.2) {$V$};
        \filldraw[draw, circle, fill=white, minimum size=6pt](0.8,1.7) circle (6pt);
        \node at (0.8,1.7) {$\Sigma^{\frac{1}{2}}$};
        \end{tikzpicture}\\[2ex]
        =&\quad
        \begin{tikzpicture}[baseline=(current bounding box),scale=1.5]
        \draw[rounded corners=8pt]
        (-.2,.8) -- (1,.8) -- (1,1.3);
        \draw[rounded corners=8pt]
        (0,-.3) -- (0,.4) -- (1.6,.4);
        \filldraw[fill = myRed] (.3,.2) rectangle node {} (.8,1);
        \draw[rounded corners=8pt] (.2,-.3) -- (.2,-.1) -- (1.3, -.1) -- (1.3, .3);
        \draw (1.3, .5) -- (1.3, 1.3);
        \node at (.1,-.5) {$\ket{\psi_{M}}$};
        \node at (.55,.4) {$X$};
        \node at (.55,.8) {$C$};
        \end{tikzpicture}\ \text{, where} \ket{\psi_M}=\
        \begin{tikzpicture}[baseline=(current bounding box),scale=1.5]
        \draw[rounded corners=8pt]
        (.4,1.8) -- (.4,0.5) -- (.8,0.5) -- (.8,1.8);
        \filldraw[draw, circle, fill=white, minimum size=6pt](0.8,.9) circle (6pt);
        \node at (0.8,.9) {$V$};
        \filldraw[draw, circle, fill=white, minimum size=6pt](0.8,1.4) circle (6pt);
        \node at (0.8,1.4) {$\Sigma^{\frac{1}{2}}$};
        \end{tikzpicture}\ .
        \label{eq:RIRX}
\eeq
In other words, $\ket{\psi_M}=\sum_{i=0}^{D-1}\sqrt{\sigma_i}\ket{v_i}\otimes\ket{i}$. In the above, $\{\sigma_i\}$ and $\{\ket{v_i}\}$ are respectively the eigenvalues and eigenvectors of the Hermitian matrix $M$. Choosing the time direction to be upward, we can see that the target state $\ket{A}$ is prepared from a product state using a sequential $CX$ circuit on a sublattice, i.e., the qudits labeled by $\{\ket{v_i}\}$. The above result also suggests that the target state can be prepared via the implementation of the Kramers-Wannier duality. We will discuss this perspective in more detail below (e.g. see Lemma \ref{lemma:KW_KT}).

\subsubsection{class $RZRX$}
\label{subsec:RZRX}
We use $RZRX$ to denote that $Z$ is right-pushed to $Z$ and $X$ is right-pushed to $X$: $Z\mapsto Z$, $X\mapsto X$. According to Eq.~\eqref{eq:pushing_relation}, the transfer matrix that supports this pushing relation satisfies:
\beq
    \big(\bar{Z}\otimes Z\big) T_A = T_A\big(\bar{Z}\otimes Z\big),\  
    \big(\bar{X}\otimes X\big) T_A = T_A \big(\bar{X}\otimes X\big).
\eeq
This constrains $\lambda_{a,b,c,d}$ in Eq.~\eqref{eq:transfer_matrix_bell} to be non-zero only when $c=\bar{a}$ and $d=\bar{b}$. The general form of the transfer matrix is as follows:
\beq
    T_A&=\sum_{a,b=0}^{D-1}\lambda_{a,b}\quad 
    \begin{tikzpicture}[baseline=(current bounding box),scale=1.5]
        \draw[rounded corners=8pt]
        (-.8,.8) -- (0,.8) -- (0,0) -- (-.8,0);
        \draw[rounded corners=8pt]
        (1,.8) -- (.2,.8) -- (.2,0) -- (1,0);
        \filldraw[draw, circle, fill=myRed, minimum size=6pt](0.6,0) circle (6pt);
        \node at (0.6,0) {$X^{\bar{b}}$};
        \filldraw[draw, circle, fill=myBlue, minimum size=6pt](0.6,.8) circle (6pt);
        \node at (0.6,.8) {$Z^{\bar{a}}$};
        \filldraw[draw, circle, fill=myRed, minimum size=6pt](-0.4,0) circle (6pt);
        \node at (-0.4,0) {$X^{b}$};
        \filldraw[draw, circle, fill=myBlue, minimum size=6pt](-0.4,0.8) circle (6pt);
        \node at (-0.4,0.8) {$Z^a$};
    \end{tikzpicture}\\[2ex]
    &= \sum_{a,b=0}^{D-1}\sum_{p,q=0}^{D-1}\frac{\lambda_{a,b}}{D}
    \begin{tikzpicture}[baseline=(current bounding box),scale=1.5]
        \draw[rounded corners=8pt]
        (-1.3,0) -- (1.3,0);
        \draw[rounded corners=8pt]
        (-1.3,.8) -- (1.3,.8);
        \filldraw[draw, circle, fill=myRed, minimum size=6pt](0.9,0) circle (6pt);
        \node at (0.9,0) {$X^{\bar{b}}$};
        \filldraw[draw, circle, fill=myBlue, minimum size=6pt](0.9,.8) circle (6pt);
        \node at (0.9,.8) {$Z^{\bar{a}}$};
        \filldraw[draw, circle, fill=myBlue, minimum size=6pt](-.3,.8) circle (6pt);
        \node at (-0.3,.8) {$\bar{Z}^{p}$};
        \filldraw[draw, circle, fill=myRed, minimum size=6pt](.3,.8) circle (6pt);
        \node at (.3,.8) {$\bar{X}^{q}$};
        \filldraw[draw, circle, fill=myRed, minimum size=6pt](-0.9,0) circle (6pt);
        \node at (-0.9,0) {$X^{b}$};
        \filldraw[draw, circle, fill=myBlue, minimum size=6pt](-0.9,0.8) circle (6pt);
        \node at (-0.9,0.8) {$Z^a$};
        \filldraw[draw, circle, fill=myBlue, minimum size=6pt](-.3,0) circle (6pt);
        \node at (-0.3,0) {$Z^{p}$};
        \filldraw[draw, circle, fill=myRed, minimum size=6pt](.3,0) circle (6pt);
        \node at (.3,0) {$X^{q}$};
    \end{tikzpicture}\\[2ex]
    &= \sum_{p,q=0}^{D-1} \mu_{p,q}
    \begin{tikzpicture}[baseline=(current bounding box),scale=1.5]
        \draw[rounded corners=8pt]
        (-.7,0) -- (.7,0);
        \draw[rounded corners=8pt]
        (-.7,.8) -- (.7,.8);
        \filldraw[draw, circle, fill=myBlue, minimum size=6pt](-.3,.8) circle (6pt);
        \node at (-0.3,.8) {$\bar{Z}^{p}$};
        \filldraw[draw, circle, fill=myRed, minimum size=6pt](.3,.8) circle (6pt);
        \node at (.3,.8) {$\bar{X}^{q}$};
        \filldraw[draw, circle, fill=myBlue, minimum size=6pt](-.3,0) circle (6pt);
        \node at (-0.3,0) {$Z^{p}$};
        \filldraw[draw, circle, fill=myRed, minimum size=6pt](.3,0) circle (6pt);
        \node at (.3,0) {$X^{q}$};
    \end{tikzpicture}\ ,
    \label{eq:RZRX}
\eeq
where 
\beq
    \mu_{p,q}\equiv \frac{1}{D}\sum_{a,b=0}^{D-1}e^{\frac{2\pi i}{D}(-pb+qa)}\lambda_{a,b}.
\eeq
In the second line of Eq.~\eqref{eq:RZRX}, we used the following identity
\beq
    \begin{tikzpicture}[baseline=(current bounding box),scale=1.5]
        \draw[rounded corners=8pt]
        (-.4,.8) -- (0,.8) -- (0,0) -- (-.4,0);
        \draw[rounded corners=8pt]
        (.6,.8) -- (.2,.8) -- (.2,0) -- (.6,0);
    \end{tikzpicture}\quad =\sum_{p,q=0}^{D-1} \frac{1}{D} \quad
    \begin{tikzpicture}[baseline=(current bounding box),scale=1.5]
        \draw[rounded corners=8pt]
        (-.7,0) -- (.7,0);
        \draw[rounded corners=8pt]
        (-.7,.8) -- (.7,.8);
        \filldraw[draw, circle, fill=myBlue, minimum size=6pt](-.3,.8) circle (6pt);
        \node at (-0.3,.8) {$\bar{Z}^{p}$};
        \filldraw[draw, circle, fill=myRed, minimum size=6pt](.3,.8) circle (6pt);
        \node at (.3,.8) {$\bar{X}^{q}$};
        \filldraw[draw, circle, fill=myBlue, minimum size=6pt](-.3,0) circle (6pt);
        \node at (-0.3,0) {$Z^{p}$};
        \filldraw[draw, circle, fill=myRed, minimum size=6pt](.3,0) circle (6pt);
        \node at (.3,0) {$X^{q}$};
    \end{tikzpicture}\ .
\eeq
To understand this identity, we note that the LHS is a Bell pair projector $\ket{\text{Bell}}\bra{\text{Bell}}$. This can be expressed as a projector $\frac{1}{D} \sum_{p=0}^{D-1}  (Z \otimes \overline{Z})^p  \frac{1}{D} \cdot \sum_{q=0}^{D-1}  (X \otimes \overline{X})^q$, which is the RHS up to a normalization factor.  

A representative MPS tensor can thus be given from the above equation, which has $D^2$-dimensional physical degrees of freedom labeled by the pair $(p,q)$: 
\beq
    A^{(p,q)} = \sqrt{\mu_{p,q}}Z^p X^q.
\eeq
Equivalently, the MPS tensor can be written as
\beq
        A=\quad \begin{tikzpicture}[baseline=(current bounding box),scale=1.5]
        \draw (-.5,0.4) -- (-.1,.4);
        \draw (.1,0.4) -- (1.1,.4);
        \draw (1.3,0.4) -- (3,.4);
        \draw[rounded corners=8pt]
        (0,-.2) -- (0,.8) -- (1,.8) -- (1, 1.3);
        \draw[rounded corners=8pt]
        (.3,-.2) -- (0.3, 0) -- (1.2,0) -- (1.2,.8) -- (2.2,.8) -- (2.2, 1.3);
        \filldraw[fill = myBlue] (.3,.2) rectangle node {} (.8,1);
        \filldraw[fill = myRed] (1.5,.2) rectangle node {} (2,1);
        \node at (0.15,-.4) {$\ket{\psi_{\mu}}$};
        \node at (.55,.4) {$Z$};
        \node at (.55,.8) {$C$};
        \node at (1.75,.4) {$X$};
        \node at (1.75,.8) {$C$};
        \end{tikzpicture}
        \label{eq:A_RZRX}
\eeq    
where $\ket{\psi_{\mu}}= \sum_{p,q=0}^{D-1}\sqrt{\mu_{p,q}}\ket{p}\otimes \ket{q}$. This is because the $CZ$ and $CX$ gates acting on state $\ket{p}\otimes\ket{q}$ creates $Z^p X^q$ matrix on the virtual bond, while the coefficient $\sqrt{\mu_{p,q}}$ is provided by state $\ket{\psi_{\mu}}$. Therefore, the target state $\ket{A}$ is prepared from a product state using a sequential circuit composed of $CZ$ and $CX$ gates. Furthermore, as we will discuss in Lemma \ref{lemma:KW_KT}, $\ket{A}$ is prepared via implementing the Kennedy-Tasaki duality.

\begin{lemma}
    An MPS $\ket{A}$ with pushing relation in class $RIRX$ ($RZRX$) can be prepared from a product state using the generalized KW (KT) duality followed by a transversal unitary circuit.
    \label{lemma:KW_KT}
\end{lemma}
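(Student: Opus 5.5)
The plan is to start from the representative tensors already derived, Eq.~\eqref{eq:RIRX} for class $RIRX$ and Eq.~\eqref{eq:A_RZRX} for class $RZRX$. I will show that the sequential $CX$ (resp.\ $CZ$--$CX$) circuit appearing there, applied to a suitable product state, acts as the generalized Kramers--Wannier (resp.\ Kennedy--Tasaki) duality operator written as an MPO. The remaining freedom is then absorbed into a transversal unitary.

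By the fundamental theorem of MPS and the discussion around Eq.~\eqref{eq:pushing_U_right}, any tensor with the same transfer matrix $T_A$ differs from the representative tensor only by a physical isometry on each site. So it suffices to treat the representative tensors.

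\emph{Class $RIRX$.}
\begin{enumerate}[label=(\roman*)]
\item Read Eq.~\eqref{eq:RIRX} with time running upward. The initial state is $\bigotimes_n\ket{\psi_M}_n$, a product state that is FDLU- and transversally preparable. On it one acts with the staircase $\prod_n CX_{n,n+1}$ between the $\ket{i}$ legs of consecutive sites.
\item Contract the $\delta$-tensor and $CX$ into an MPO tensor of bond dimension $D$. Its components are $\ket{i}\bra{j}\mapsto \ket{j-i}$ on the qudit (up to Hadamards), i.e.\ $\sum_{j}\ket{j_n-j_{n-1}}\bra{j_n}$.
\item Identify this MPO as the generalized $\mathbb{Z}_D$ KW duality $\mathsf{D}_{KW}$. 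Concretely, check that it implements $Z_n^\dagger Z_{n+1}\mapsto X_{n+1}$ and $X_n\mapsto Z_n Z_{n+1}^\dagger$ up to conventions and Hadamards. It is a known non-invertible operator, realized as a sequential circuit when acting on a state.
\item Conclude that $\ket{A}$ equals $\mathsf{D}_{KW}$ applied to a product state, followed by the transversal unitaries $V$ on each site.
\end{enumerate}

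\emph{Class $RZRX$.} The same steps apply to Eq.~\eqref{eq:A_RZRX}. The initial state is $\bigotimes_n\ket{\psi_\mu}$ on pairs $(p,q)$, and the circuit is a staircase of $CZ$ followed by $CX$, which builds $Z^pX^q$ on the bond. Contracting gives an MPO with virtual bond $D$ and local tensor $\sum_{p,q}\ket{p,q}\bra{\cdot}\otimes Z^pX^q$. I will match this to the known MPO of the generalized KT transformation on a two-sublattice $\mathbb{Z}_D\times\mathbb{Z}_D$ chain, i.e.\ the cluster-entangler-conjugated KW$\times$KW, $U_{\text{CZ}}\,\mathsf{D}_{KW}\,U_{\text{CZ}}$, as in Refs.~\cite{li2023non,seifnashri2024cluster}. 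The check is done on generators: the pair $Z^pX^q$ on consecutive bonds corresponds to the cluster-type string order mapping to local order parameters.

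\emph{Main obstacle.} The hardest step is the identification itself. The sequential circuits in Eqs.~\eqref{eq:RIRX} and \eqref{eq:A_RZRX} act on a product state of a particular form, whereas the dualities are defined as operators. I would first show that the MPO obtained by stripping the input states equals the standard KW/KT MPO up to on-site Hadamards and charge-conjugations, which are absorbed into the transversal layer. I would then note that on the relevant input, which is symmetric or not depending on boundary conditions, the operator action coincides with the circuit output up to normalization, using the PBC projector remark for the twisted sector. Fixing the generalized-$\mathbb{Z}_D$ sign and conjugation conventions, such as $Z$ versus $Z^\dagger$, is routine bookkeeping via Eq.~\eqref{eq:Hadamard}.
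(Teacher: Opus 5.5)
Your proposal is correct and follows essentially the same route as the paper: you read the representative tensors of Eqs.~\eqref{eq:RIRX} and \eqref{eq:A_RZRX} as MPOs acting on the product states $\ket{\psi_M}^{\otimes N}$ and $\ket{\psi_\mu}^{\otimes N}$, and identify these MPOs with the generalized KW and KT MPOs through their operator maps, with on-site Hadamards absorbed into the product state and the transversal layer. The differences are minor: for KT the paper rewrites the $CZ$--$CX$ tensor directly, using that Hadamards on both qudits exchange the control and target of $CX$, rather than matching to $U_{CZ}\,(\text{KW}\times\text{KW})\,U_{CZ}$; and your concern about twisted sectors is unnecessary, since the state is by construction exactly the MPO contracted with the product state.
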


As a brief remark, the conventional KW and KT dualities are defined for qubits. Here we generalize them to qudits, where they implement operator mappings with a structure similar to the qubit case. Lemma \ref{lemma:KW_KT} can be proved from their MPO structures.
\begin{proof}
     According to Eq.~\eqref{eq:RIRX} in Sec.~\ref{subsec:RIRX}, states in class $RIRX$ are prepared from some product state using the following MPO:
    \beq
        &\begin{tikzpicture}[baseline=(current bounding box),scale=1.5]
        \node at (-.3,.8) {...};
        \draw[rounded corners=8pt]
        (-.1,.8) -- (1,.8) -- (1,1.3);
        \draw[rounded corners=8pt]
        (0,-.2) -- (0,.4) -- (1.1,.4) -- (1.1,.8) -- (1.3,.8);
        \draw (1.4, -.2) -- (1.4, 1.3);
        \draw[rounded corners=8pt]
        (1.5,.8) -- (2.6,.8) -- (2.6,1.3);
        \draw[rounded corners=8pt]
        (0+1.6,-.2) -- (0+1.6,.4) -- (1.1+1.6,.4) -- (1.1+1.6,.8) -- (1.3+1.6,.8);
        \draw (1.4+1.6, -.2) -- (1.4+1.6, 1.3);
        \draw[rounded corners=8pt]
        (1.5+1.6,.8) -- (2.6+1.6,.8) -- (2.6+1.6,1.3);
        \draw[rounded corners=8pt]
        (0+3.2,-.2) -- (0+3.2,.4) -- (1.1+3.2,.4) -- (1.1+3.2,.8) -- (1.3+3.2,.8);
        \draw (1.3+3.3, -.2) -- (1.3+3.3, 1.3);
        \draw (1.3+3.4, .8) -- (1.3+3.6, .8);
        \node at (1.3+3.8,.8) {$...$};
        \filldraw[fill = myRed] (.3,.2) rectangle node {} (.8,1);
        \node at (.55,.8) {$C$};
        \node at (.55,.4) {$X$};
        \filldraw[fill = myRed] (1.9,.2) rectangle node {} (2.4,1);
        \node at (2.15,.8) {$C$};
        \node at (2.15,.4) {$X$};
        \filldraw[fill = myRed] (1.9+1.6,.2) rectangle node {} (2.4+1.6,1);
        \node at (3.75,.8) {$C$};
        \node at (3.75,.4) {$X$};
        \end{tikzpicture}\\[2ex]
        &=
        \begin{tikzpicture}[baseline=(current bounding box),scale=1.5]
        \node at (-.3,.8) {$...$};
        \draw[rounded corners=8pt]
        (-.1,.8) -- (1.2,.8) -- (1.2,1.3);
        \draw[rounded corners=8pt]
        (.2,-.5) -- (.2,.4) -- (1.7,.4) -- (1.7,.8) -- (1.9,.8);
        \draw (2, -.5) -- (2, 1.3);
        \draw[rounded corners=8pt]
        (2.1,.8) -- (3.4,.8) -- (3.4,1.3);
        \draw[rounded corners=8pt]
        (2.4,-.5) -- (2.4,.4) -- (4,.4) -- (4,.8) -- (4.2,.8);
        \draw (4.4,.8) -- (4.6,.8);
        \draw (4.3, -.5) -- (4.3, 1.3);
        \node at (4.8,.8) {$...$};
        \filldraw[fill = myBlue] (.4,.2) rectangle node {} (.9,1);
        \node at (.65,.8) {$C$};
        \node at (.65,.4) {$Z$};
        \filldraw[fill = myBlue] (2.6,.2) rectangle node {} (3.1,1);
        \node at (2.85,.8) {$C$};
        \node at (2.85,.4) {$Z$};
        \filldraw[draw, circle, fill=myGray, minimum size=6pt](1.3,.4) circle (6pt);
        \node at (1.3,.4) {$H^{\dagger}$};
        \filldraw[draw, circle, fill=myGray, minimum size=6pt](3.5,.4) circle (6pt);
        \node at (3.5,.4) {$H^{\dagger}$};
        \filldraw[draw, circle, fill=myGray, minimum size=6pt](.2,-.1) circle (6pt);
        \node at (.2,-.1) {$H$};
        \filldraw[draw, circle, fill=myGray, minimum size=6pt](2.4,-.1) circle (6pt);
        \node at (2.4,-.1) {$H$};
        \end{tikzpicture}
    \eeq
     We thus conclude that this MPO is a transversal Hadamard circuit followed by a generalized Kramers-Wannier duality (KW) generated by the tensor
     \beq
        \begin{tikzpicture}[baseline=(current bounding box),scale=1.5]
        \draw[rounded corners=8pt]
        (-.4,.8) -- (1.1,.8) -- (1.1,1.3);
        \draw[rounded corners=8pt]
        (0,-.1) -- (0,.4) -- (1.7,.4);
        \filldraw[fill = myBlue] (.3,.2) rectangle node {} (.8,1);
        \node at (.55,.8) {$C$};
        \node at (.55,.4) {$Z$};
        \filldraw[draw, circle, fill=myGray, minimum size=6pt](1.2,.4) circle (6pt);
        \node at (1.2,.4) {$H^{\dagger}$};
        \end{tikzpicture}\quad .
        \label{eq:KW}
     \eeq
     The operator maps implemented by KW are
     \beq
        \text{KW} X_s = Z_s Z_{s+1}^{\dagger} \text{KW},\quad  \text{KW} Z_s Z_{s+1}^{\dagger} = X_{s+1} \text{KW}.
     \eeq
     
     According to Eq.~\eqref{eq:A_RZRX}, a state in class $RZRX$ is prepared using an MPO generated by the following tensor:
     \beq
        \begin{tikzpicture}[baseline=(current bounding box),scale=1.5]
        \draw (-.4,0.8) -- (1,.8);
        \draw (1.2,0.8) -- (2.7,.8);
        \draw (2.9,0.8) -- (3.8,.8);
        \draw[rounded corners=8pt]
        (0,-.5) -- (0,.4) -- (1.1,.4) -- (1.1, 1.7);
        \draw[rounded corners=8pt]
        (1.7,-.5) -- (1.7,.4) -- (2.8,.4) -- (2.8, 1.7);
        \filldraw[fill = myRed] (.3,.2) rectangle node {} (.8,1);
        \node at (.55,.8) {$C$};
        \node at (.55,.4) {$X$};
        \filldraw[fill = myRed] (2,.2) rectangle node {} (2.5,1);
        \node at (2.25,.8) {$C$};
        \node at (2.25,.4) {$X$};
        \filldraw[draw, circle, fill=myGray, minimum size=6pt](1.6,.8) circle (6pt);
        \node at (1.6,.8) {$H$};
        \filldraw[draw, circle, fill=myGray, minimum size=6pt](3.3,.8) circle (6pt);
        \node at (3.3,.8) {$H^{\dagger}$};
        \filldraw[draw, circle, fill=myGray, minimum size=6pt](1.7,-.1) circle (6pt);
        \node at (1.7,-.1) {$H$};
        \filldraw[draw, circle, fill=myGray, minimum size=6pt](0,-.1) circle (6pt);
        \node at (0,-.1) {$H$};
        \filldraw[draw, circle, fill=myGray, minimum size=6pt](1.1,1.3) circle (6pt);
        \node at (1.1,1.3) {$H^{\dagger}$};
        \filldraw[draw, circle, fill=myGray, minimum size=6pt](2.8,1.3) circle (6pt);
        \node at (2.8,1.3) {$H^{\dagger}$};
        \label{eq:KT}
        \end{tikzpicture}
     \eeq
Above we have used that the control and target qudits are exchanged when a $CX$ gate is conjugated by Hadamard gates acting on both the target and the control qudits. The MPO between the transversal $H$ and the transversal $H^{\dagger}$ circuits is the generalized KT duality, which implements the following operator maps
     \beq
        \text{KT} X_s &= X_s  \text{KT},\quad  \text{KT}  Z_s Z_{s+1}^{\dagger} = Z_s \tilde{X}_{s}Z_{s+1}^{\dagger} \text{KT},\\
        \text{KT} \tilde{X}_{s} &= \tilde{X}_{s} \text{KT},\quad  \text{KT} \tilde{Z}_{s} \tilde{Z}_{s+1}^{\dagger} = \tilde{Z}_{s}X_{s+1}^{\dagger} \tilde{Z}_{s+1}^{\dagger} \text{KT},
     \eeq
     where we use $O_s$ and $\tilde{O}_s$ to denote operator $O$ acting on the first and second qudit at site $s$ respectively.
\end{proof}

To conclude, we have shown that a general Pauli right-pushing relation in a $D$-dimensional virtual space can be decomposed into three classes ($RIRI$, $RIRX,$ and $RZRX$) in three subspaces, after site blocking and appropriate gauge transformation. We further showed that MPSs with pushing relations in these classes are prepared from product states using $I$, KW, and KT respectively, up to a transversal circuit.

As both KW and KT dualities are sequential Clifford circuits, they can be implemented using Clifford teleportation. In the conventional Clifford teleportation, the circuits are implemented by adding extensive Bell pairs followed by Bell-basis measurements. On the other hand, Ref.~\cite{tantivasadakarn2024long} provides a protocol in which one adds ancilla qubits in the $\ket{+}$ state, and performs single-site $X$ measurements. In Appendix \ref{app:implement}, we present a diagrammatic implementation of KW and KT using the properties of the MPO tensors, where the KW implementation is similar to that of Ref.~\cite{tantivasadakarn2024long}.

\subsection{Left/right-pushable Pauli defects}
\label{subsec:left_right_pushable}

Now we consider the cases where the generalized Pauli defects in the $D$-dimensional virtual space are either left- or right-pushable to other Pauli defects. For a general bond dimension $D$, it is proven in Appendix \ref{app:coarse_grain} that we can decompose the virtual space into six subspaces such that after site blocking and appropriate gauge transformation, the pushing relations in the individual subspaces are: $Z,X\mapsto I$ and its spatial inversion; $Z\mapsto I$, $X\mapsto X$ and its spatial inversion; $Z\mapsto Z$, $X\mapsto X$; and $Z\mapsto I$, $I\mapsfrom X$.

To demonstrate this, we consider bond dimension $D=2$. In this case, there are only three Pauli operators, and we need at least two generators to be either left- or right-pushable. We can always make a gauge transformation such that the pushable defects are $Z$ and $X$. We have discussed the case where both $Z$ and $X$ are right-pushable. When both $Z$ and $X$ are left-pushable, the states simply differ by a spatial inversion. Therefore, the only intrinsic class of states corresponds to $Z\mapsto P_Z$ and $P_X\mapsfrom X$. We note that $X \mapsfrom X$ also suggests $X\mapsto X$ as defined in Eq.~\eqref{eq:pushing_relation}. For example, the pushing relation $Z \mapsto Z, X \mapsfrom X$ is equivalent to $RZRX$ discussed in Sec.~\ref{subsec:RZRX}. The only class of pushing relation that intrinsically contains both left and right pushing after blocking is therefore $Z\mapsto I$ and $I\mapsfrom X$.

A special pushing relation that we rule out is $Z\mapsto X$, $Z\mapsfrom X$. In this case, every defect is essentially localized and cannot be pushed two sites away. According to Definition \ref{def:pushable}, neither $Z$ nor $X$ defect is pushable. The corresponding state $\ket{A}$ is therefore not preparable using fusion measurement. We will discuss the structure of these states later in Sec.~\ref{subsec:non-pushable}.

\subsubsection*{class $RILI$}
\label{subsec:RILI}

We now discuss the only class of pushing relations that intrinsically contain both left and right pushing after blocking. We use $RILI$ to denote that $Z$ is right-pushed to $I$ and $X$ is left-pushed to $I$: $Z\mapsto I, I\mapsfrom X$. The transfer matrix that supports this pushing relation satisfies:
\beq
    \big(\bar{Z}\otimes Z\big) T_A = T_A = T_A \big(\bar{X}\otimes X\big).
\eeq
This constrains $\lambda_{a,b,c,d}$ in Eq.~\eqref{eq:transfer_matrix_bell} to be non-zero only when $b=c=0$. The general form of the transfer matrix is thus:
\beq
    T_A&=\sum_{a,d=0}^{d-1}\lambda_{a,d}\quad \begin{tikzpicture}[baseline=(current bounding box),scale=1.5]
        \draw[rounded corners=8pt]
        (-.8,.8) -- (0,.8) -- (0,0) -- (-.8,0);
        \draw[rounded corners=8pt]
        (1,.8) -- (.2,.8) -- (.2,0) -- (1,0);
        \filldraw[draw, circle, fill=myRed, minimum size=6pt](0.6,0) circle (6pt);
        \node at (0.6,0) {$X^{d}$};
        \filldraw[draw, circle, fill=myBlue, minimum size=6pt](-0.4,0.8) circle (6pt);
        \node at (-0.4,0.8) {$Z^a$};
        \end{tikzpicture}\ .
\eeq
In Appendix \ref{app:RILI}, we will show that a representative state for the above transfer matrix is obtained from a product state using transversal controlled-$Z$ gates. Here, we will instead perform a blocking of two sites. The transfer matrix is squared after site blocking:
\beq
    T_{A'}=T_A^2=\begin{tikzpicture}[baseline=(current bounding box),scale=1.5]
        \draw[rounded corners=8pt]
        (-.8,.8) -- (-.3,.8) -- (-.3,0) -- (-.8,0);
        \draw[rounded corners=8pt]
        (.8,.8) -- (.3,.8) -- (.3,0) -- (.8,0);
        \filldraw[draw, circle, fill=white, minimum size=6pt](-.3,0.4) circle (6pt);
        \node at (-0.3,0.4) {$R$};
        \filldraw[draw, circle, fill=white, minimum size=6pt](.3,0.4) circle (6pt);
        \node at (0.3,0.4) {$L$};
        \end{tikzpicture}\ ,
\eeq
where $R = \sum_a \lambda_{a,0}Z^a$ and $L = \sum_d \lambda_{0,d}X^d$. The blocked transfer matrix is thus of rank 1 and according to Lemma \ref{lemma:FDLU}, the state $\ket{A}$ is FDLU-preparable. A representative state for the above transfer matrix after site blocking is derived in Sec.~\ref{sec:FDLU}.

\begin{figure*}[t]
    \centering
    \begin{tikzpicture}[baseline=(current bounding box),scale=1.5]
    \tikzset{
    lineA/.style={draw=Lteal, thick},
    lineB/.style={draw=Lorange, thick},
    lineC/.style={draw=Lblue, thick},
    lineD/.style={draw=Lgray, thick},
    }

        \draw[lineA, rounded corners=8pt]
        (-.2,1.1) -- (-.2,.2) -- (-.8,.2);
        \draw[lineA] (-.2,1.3) -- (-.2,2.1);
        \draw[lineA] (-.2,2.3) -- (-.2,3.3);
        \draw[lineA] (-.2,3.5) -- (-.2,4);
        \draw[lineA] (.2,1.1) -- (.2,-1);
        \draw[lineA] (.2,1.3) -- (.2,2.1);
        \draw[lineA] (.2,2.3) -- (.2,3.3);
        \draw[lineA] (.2,3.5) -- (.2,4);
        \draw[lineA, rounded corners=8pt]
        (.5,-1) -- (.5,.2) -- (1,.2);
        \draw[lineA] (1.2,0.2) -- (1.3,0.2);
        \draw[lineA] (1.5,0.2) -- (2.5,0.2);
        \draw[lineA] (2.7,0.2) -- (2.8,0.2);
        \draw[lineA] (3,0.2) -- (4,0.2);
        \draw[lineA] (4.2,0.2) -- (4.3,0.2);
        \draw[lineA] (4.5,0.2) -- (6.4,0.2);
        
        \draw[lineB, rounded corners=8pt]
        (-.8,1.2) -- (2.3,1.2) -- (2.3,2.1);
        \draw[lineB, rounded corners=8pt]
        (1.4,-1) -- (1.4,.8) -- (2.5,.8);
        \draw[lineB] (2.3,2.3) -- (2.3,3.3);
        \draw[lineB] (2.3,3.5) -- (2.3,4);
        \draw[lineB] (2.7,0.8) -- (2.8,0.8);
        \draw[lineB] (3,0.8) -- (4,0.8);
        \draw[lineB] (4.2,0.8) -- (4.3,0.8);
        \draw[lineB] (4.5,0.8) -- (6.4,0.8);
        \filldraw[fill = myRed] (.3+1.3,.6) rectangle node {} (.8+1.3,1.4);
        \node at (1.85,.8) {$X$};
        \node at (1.85,1.2) {$C$};
        \draw[lineB] (1.1, -1) -- (1.1, 1.1);
        \draw[lineB] (1.1, 1.3) -- (1.1, 2.1);
        \draw[lineB] (1.1,2.3) -- (1.1,3.3);
        \draw[lineB] (1.1,3.5) -- (1.1,4);

        \draw[lineC] (-.8,2.2) -- (2.5,2.2);
        \draw[lineC] (2.7,2.4-.2) -- (3.7,2.4-.2);
        \draw[lineC] (3.9,2.4-.2) -- (4,2.4-.2);
        \draw[lineC] (4.6,2.4-.2) -- (5.1,2.4-.2);
        \draw[lineC] (5.3,2.4-.2) -- (5.5,2.4-.2);
        \draw[lineC] (5.7,2.4-.2) -- (6.4,2.4-.2);
        \draw[lineC, rounded corners=8pt]
        (2,-1) -- (2,-.1) -- (2.6,-.1) -- (2.6,2.8-.2) -- (3.6,2.8-.2) -- (3.6, 3.3);
        \draw[lineC, rounded corners=8pt]
        (2.3,-1) -- (2.3,-.3) -- (2.9,-.3) -- (2.9, 2-.3) -- (3.8,2-.3) -- (3.8,2.8-.2) -- (4.8,2.8-.2) -- (4.8, 3.3);
        \draw[lineC] (3.6,3.5) -- (3.6,4);
        \draw[lineC] (4.8,3.5) -- (4.8,4);
        \filldraw[fill = myBlue] (2.9,2) rectangle node {} (3.4,2.8);
        \node at (3.15,2.2) {$Z$};
        \node at (3.15,2.6) {$C$};
        \filldraw[fill = myRed] (4.1,2) rectangle node {} (4.6,2.8);
        \node at (4.35,2.2) {$X$};
        \node at (4.35,2.6) {$C$};

        \draw[lineD] (-.8,3.4) -- (5.2,3.4);
        \draw[lineD] (5.6,3.4) -- (6.4,3.4);
        \draw[lineD, rounded corners=8pt] 
        (2.9, -1) -- (2.9,-.5) -- (4.1,-.5) -- (4.1,1.5) -- (5.2,1.5) -- (5.2,4);
        \draw[lineD, rounded corners=8pt] 
        (3.2, -1) -- (3.2,-.7) -- (4.4,-.7) -- (4.4,1.2) -- (5.6,1.2) -- (5.6,4);
        \fill (5.6,3.4) circle (1pt);
        \fill (5.2,3.4) circle (1pt);
        \filldraw[draw, circle, fill=myGray, minimum size=6pt](0.6+5.4,0+3.4) circle (6pt);
        \node at (0.6+5.4,0+3.4) {$H$};
        \end{tikzpicture}
    \caption{When the virtual space is decomposed into four subspaces, respectively exhibiting pushing relations in class $RIRI$ (green), $RIRX$ (orange), $RZRX$ (blue), and $RILI$ (black), the MPS $\ket{A}$ is prepared from a product of multi-qudit states $\ket{\psi_{\Lambda}}$ using a composition of parallel KW and KT dualities shown in this figure, up to transversal unitary circuit. The state structure for the $RILI$ subspace is derived in Appendix \ref{app:RILI}.}
    \label{fig:parallel_MPO}
\end{figure*}

The fact that the target MPS $\ket{A}$ is FDLU-preparable is expected. Similar to the RIRI class in Sec.~\ref{subsec:RIRI}, since all the defects can be locally annihilated, $\ket{A}$ can be obtained from the associated FDLU-preparable state $\ket{B}$ in Eq.~\eqref{eq:B_fusion_measurement} via a controlled-unitary circuit between the ancillary bonds and physical bonds.

\subsection{Summary}
Having discussed the MPS structure for classes $RIRI$, $RIRX$, $RZRX$, and $RILI$, we summarize the results in the following lemma. 
\begin{lemma}
    If the generalized Pauli defects in the virtual space of an MPS $\ket{A}$ are left- or right-pushable to generalized Pauli defects, then up to a transversal unitary circuit
    \beq
        \ket{A} = D \ket{\psi}^{\otimes N}
    \eeq
    where $D$ is a composition of KT and KW on different sublattices. 
    \label{lemma:all_pauli}
\end{lemma}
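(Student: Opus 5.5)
The plan is to reduce the general case to the four single-class analyses above, using the subspace decomposition proved in Appendix~\ref{app:coarse_grain}.

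First, block a finite number of sites and apply a gauge transformation so that the virtual space becomes $\mathbb{C}^D=\bigoplus_{\alpha}\mathcal{V}_\alpha$. In each block $\alpha$, the generalized Pauli defects (of the blocked, gauge-transformed tensor) obey one of the reduced pushing relations: $RIRI$, $RIRX$, $RZRX$, $RILI$, or a spatial inversion of one of these.

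The key structural claim is that, in this basis, the transfer matrix factorizes as $T_A=\bigotimes_\alpha T_\alpha$. Each factor $T_\alpha$ acts on the doubled space of $\mathcal{V}_\alpha$ and has the Bell-basis form derived in the corresponding subsection. To establish this, I would expand $T_A$ in a Bell basis adapted to the decomposition, with generalized Paulis $Z_\alpha, X_\alpha$ on each factor. I would then impose the pushing relations of Eq.~\eqref{eq:pushing_relation} (or Eq.~\eqref{eq:pushable_defect_left}) factor by factor, exactly as in Eqs.~\eqref{eq:ZZ_TA} and \eqref{eq:XX_TA_XX}. Each relation is a phase condition on the index pair belonging to one factor, so together they force $\lambda$ to be supported on the product of the allowed index sets. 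The admissible $T_A$ is then a sum of tensor products of the single-class forms, with a coefficient tensor $\lambda$ that may correlate the factors. Positivity is then used as for the single-class matrices $M$: the combined coefficients define one non-negative operator on the joint physical-like space. For example, $M_{RIRI}$, $M_{RIRX}$, $\mu_{p,q}$ and the $RILI$ data combine into a single non-negative matrix $M_{\rm tot}$.

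Diagonalizing $M_{\rm tot}$ gives a joint state $\ket{\psi_\Lambda}$, in place of the separate $\ket{\psi_M}$ and $\ket{\psi_\mu}$. The representative tensor is then assembled by stacking the class-wise gadgets from Eqs.~\eqref{eq:RIRX} and \eqref{eq:A_RZRX} and from Appendix~\ref{app:RILI}, each controlled by its own legs of $\ket{\psi_\Lambda}$ and each acting on its own virtual factor $\mathcal{V}_\alpha$. The gadgets are:
\begin{itemize}
\item a bare leg for $RIRI$;
\item a $CX$ chain for $RIRX$;
\item $CZ$ and $CX$ chains for $RZRX$;
\item a transversal $CZ$/Hadamard gadget for $RILI$.
\end{itemize}
Because these act on disjoint virtual factors, they commute, which is the picture of Fig.~\ref{fig:parallel_MPO}. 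By the proof of Lemma~\ref{lemma:KW_KT}, each chain equals a KW or KT MPO sandwiched between transversal Hadamards. The $RILI$ and $RIRI$ parts are transversal, and spatially inverted classes give left-moving KW.

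The resulting MPO is therefore $D$, a composition of KW and KT acting on different sublattices, applied to $\ket{\psi_\Lambda}^{\otimes N}$. The remaining transversal pieces are the Hadamards and the isometry relating this representative to the actual $A$. That isometry exists because the two tensors share a transfer matrix, the same standard fact used in Sec.~\ref{sec:FDLU}. It becomes a unitary after adding disentangled ancillas.

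\textbf{Main obstacle.} I expect the factorization and positivity step to be the hardest. The coefficients $\lambda$ need not factorize across the subspaces, so I must check two things. First, the support constraints are exactly the product of the single-class constraints; the pushing relations act on one factor at a time, so this should hold. Second, correlated coefficients can still be realized by a single entangled resource state $\ket{\psi_\Lambda}$ feeding all the gadgets. I would verify this by writing the stacked tensor $A^{(\mathbf{p})}=\sqrt{\cdot}\,\bigotimes_\alpha O_\alpha^{(\mathbf{p}_\alpha)}$ and checking that its transfer matrix reproduces $T_A$. Here the $O_\alpha$ are the Pauli-type operators produced by each gadget. The check uses the Bell-projector identity below Eq.~\eqref{eq:RZRX} on each factor.
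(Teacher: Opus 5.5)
Your proposal is correct and follows the paper's own route. You invoke the decomposition of Appendix~\ref{app:coarse_grain} into $RIRI$, $RIRX$, $RZRX$, $RILI$ (and spatially inverted) blocks, reuse the class-by-class representative tensors, and stack the resulting KW/KT gadgets in parallel on a joint resource state $\ket{\psi_\Lambda}$ as in Fig.~\ref{fig:parallel_MPO}; your joint-positivity step, which reduces positivity to $M_{\rm tot}\succeq 0$ after conjugating by the commuting class-wise Clifford rotations, just makes explicit the correlated-coefficient point the paper leaves implicit. One fix is needed: the prime-power splitting of $\mathbb{Z}_D\oplus\mathbb{Z}_D$ corresponds to a tensor factorization $\mathbb{C}^D\cong\bigotimes_\alpha\mathcal{V}_\alpha$, not a direct sum, and $T_A$ is a correlated sum $\sum\lambda\bigotimes_\alpha(\cdots)$ rather than a literal product $\bigotimes_\alpha T_\alpha$, as you yourself note a few lines later.
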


In Appendix \ref{app:coarse_grain} we show that a $D$-dimensional virtual space with general pushing relations can be decomposed into subspaces with these four classes, or their spatial inversion, after site blocking and appropriate gauge transformation. Therefore, we conclude that when all Pauli defects in the virtual space are left or right-pushable, the state is always preparable from a product state using the MPO shown in Fig.~\ref{fig:parallel_MPO}. We demonstrate the MPS tensor for general pushing relations. The subspaces that exhibit $RIRI$, $RIRX$, $RZRX$, and $RILI$ are denoted in green, yellow, blue, and black, respectively.

\subsection{Generalization to decomposed subspaces}

\label{subsec:general}

Now we consider the cases where we decompose the virtual space into subspaces of dimensions $D_1,D_2,\cdots$, and Pauli defects in the subspaces are pushable. A general MPS transfer matrix can be expanded into the Bell basis of each subspace:
\beq
    T_A=&\sum_{\{a_i,b_i,c_i,d_i\}=0}^{D_i-1}\Lambda(\{a_i,b_i,c_i,d_i\})\quad
    \begin{tikzpicture}[baseline=(current bounding box),scale=1.5]
        \draw[rounded corners=8pt]
        (-.8,.8) -- (0,.8) -- (0,0) -- (-.8,0);
        \draw[rounded corners=8pt]
        (1,.8) -- (.2,.8) -- (.2,0) -- (1,0);
        \filldraw[draw, circle, fill=myRed, minimum size=6pt](0.6,0) circle (6pt);
        \node at (0.6,0) {$X^{d_1}$};
        \filldraw[draw, circle, fill=myBlue, minimum size=6pt](0.6,.8) circle (6pt);
        \node at (0.6,.8) {$Z^{c_1}$};
        \filldraw[draw, circle, fill=myRed, minimum size=6pt](-0.4,0) circle (6pt);
        \node at (-0.4,0) {$X^{b_1}$};
        \filldraw[draw, circle, fill=myBlue, minimum size=6pt](-0.4,0.8) circle (6pt);
        \node at (-0.4,0.8) {$Z^{a_1}$};
        \end{tikzpicture}\\[2ex]
        &\quad \bigotimes\
        \begin{tikzpicture}[baseline=(current bounding box),scale=1.5]
        \draw[rounded corners=8pt]
        (-.8,.8) -- (0,.8) -- (0,0) -- (-.8,0);
        \draw[rounded corners=8pt]
        (1,.8) -- (.2,.8) -- (.2,0) -- (1,0);
        \filldraw[draw, circle, fill=myRed, minimum size=6pt](0.6,0) circle (6pt);
        \node at (0.6,0) {$X^{d_2}$};
        \filldraw[draw, circle, fill=myBlue, minimum size=6pt](0.6,.8) circle (6pt);
        \node at (0.6,.8) {$Z^{c_2}$};
        \filldraw[draw, circle, fill=myRed, minimum size=6pt](-0.4,0) circle (6pt);
        \node at (-0.4,0) {$X^{b_2}$};
        \filldraw[draw, circle, fill=myBlue, minimum size=6pt](-0.4,0.8) circle (6pt);
        \node at (-0.4,0.8) {$Z^{a_2}$};
        \end{tikzpicture}\ \bigotimes\ \cdots.
        \label{eq:general_T}
\eeq
The states are still fusion measurement preparable since all Pauli operators in subspaces form a complete orthonormal basis of operators in the whole virtual space.

When the dimensions of the subspaces are coprime, the Pauli operators in the subspaces are also Pauli operators in the whole virtual space. As a result, Lemma \ref{lemma:all_pauli} applies, and the state is preparable from some product state using a composition of parallel KW and KT in each sublattice as shown in Fig.~\ref{fig:parallel_MPO}. For a general decomposition where the dimensions of the subspaces are not coprime, as long as the pushing relations are contained in each subspace after site blocking and gauge transformation, following a similar argument, we can show that the state is still preparable from a product state by a composition of parallel actions of KW and KT on different sublattices. 

A composition of these parallel KW and KT is a duality for some Abelian symmetry $A = \mathbb{Z}_{n_1}\times \mathbb{Z}_{n_2}\times\cdots$. These dualities flow to non-invertible symmetries described by the Tambara-Yamagami (TY) fusion categories in the continuum~\cite{tambara1998tensor,thorngren2024fusion}. A TY category contains an abelian symmetry $A$, along with a non-invertible duality $D$ of the abelian group. The multiplication rules of these dualities is given by
\beq
    D^2 = \sum_{g\in A} g,
\eeq
up to a potential mixing with translation on the lattice. Given an abelian group $A$, the TY categories are characterized by a symmetric, non-degenerate bicharacter $\chi$, and the Frobenius-Schur (FS) indicator $\epsilon = \pm 1$. 

For example, for $G = \mathbb{Z}_2$ symmetry generated by $\eta = \prod_{i}X_i$, the KW duality realizes such a TY category with a mixing with lattice translation $\text{KW}^2 = T(1+\eta)$, with $T$ being a lattice translation by one site. This $\text{TY}(\mathbb{Z}_2)$ category corresponds to a diagonal bicharacter and FS indicator $\epsilon = 1$. For $G = \mathbb{Z}_2^2$ symmetry generated by $\eta_e=\prod_i X_{2i}$ and $\eta_o=\prod_i X_{2i+1}$, the composition of KW dualities for both $\mathbb{Z}_2$ symmetries realizes a $\text{TY}(\mathbb{Z}_2^2)$ with a diagonal bicharacter, whereas the KT duality realizes another $\text{TY}(\mathbb{Z}_2^2)$, with an off-diagonal bicharacter. Now that we have arbitrary compositions of generalized KW and KT for each subsystem, the dualities give rise to arbitrary bicharacter.

\begin{lemma}
    In the decomposition of the virtual space into subspaces of dimensions $D_1,D_2,\cdots$, if the generalized Pauli defects in the subspaces of an MPS $\ket{A}$ are left or right-pushable to Pauli defects in the same subspace, $\ket{A}$ is a product state or it can be prepared from a product state using dualities that are described by Tambara-Yamagami categorical symmetries, up to a transversal unitary circuit.
    \label{lemma:Pauli_subspace}
\end{lemma}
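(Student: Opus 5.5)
The plan is to reduce Lemma \ref{lemma:Pauli_subspace} to the block-by-block analysis already carried out for Lemma \ref{lemma:all_pauli}. Write the virtual space as $\bigoplus_i \mathbb{C}^{D_i}$, or equivalently as a tensor factorization $\bigotimes_i \mathbb{C}^{D_i}$ in the sense of Eq.~\eqref{eq:general_T}, and expand $T_A$ in the product Bell basis with coefficients $\Lambda(\{a_i,b_i,c_i,d_i\})$. Each generalized Pauli defect of the factor $i$ is assumed to be pushed to a Pauli defect of the same factor. First I would apply the coarse-graining argument of Appendix \ref{app:coarse_grain} to each factor separately: site blocking and gauge transformations acting on that factor only. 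This brings the pushing relation within factor $i$ to one of $RIRI$, $RIRX$ (or its inversion), $RZRX$, or $RILI$, possibly after further splitting the factor. Because the pushing relations never mix factors, these operations commute across factors. A common blocking length exists, namely the least common multiple of the per-factor orders, which is bounded by $|\mathrm{Aut}|$ of each finite defect group.

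Second, I would impose the pushing constraints on $\Lambda$ exactly as in Eqs.~\eqref{eq:ZZ_TA}--\eqref{eq:XX_TA_XX}. A relation $\bar Z_i\otimes Z_i\,T_A = T_A\,(\bar P_{Z_i}\otimes P_{Z_i})$ multiplies each term by a phase depending only on $(a_i,b_i,c_i,d_i)$. The constraints therefore act index-wise: $b_i=0$ or $d_i=\bar b_i$, and similarly for the others. The surviving $T_A$ is then a sum over the remaining indices $\{a_i,d_i\}$, $\{a_i,b_i\}$, and so on, of products of the single-factor structures found in Secs.~\ref{subsec:RIRI}--\ref{subsec:RILI}. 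Next I would rewrite each factor's Bell projector using the controlled-$X$ and controlled-$Z$ identities of Eqs.~\eqref{eq:RIRX} and \eqref{eq:RZRX}. That expresses $T_A$ as a single completely positive coefficient tensor $M$ or $\mu$, now jointly over all factors, conjugated by a product of factorwise $CX$ and $CZ$ gates. Diagonalizing that joint nonnegative operator gives one entangled resource state $\ket{\psi}$ per site, playing the role of $\ket{\psi_M}$ or $\ket{\psi_\mu}$. The upshot is that $A$ is the tensor shown in Fig.~\ref{fig:parallel_MPO}, with one sublattice of qudits of dimension $D_i$ for each factor. By the fundamental theorem, this representative and the original $A$ differ only by a transversal isometry.

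Third, I would identify the MPO. If every factor is in class $RIRI$ or $RILI$, the transfer matrix is rank one after blocking, and $\ket{A}$ is a product state up to a transversal circuit (Lemma \ref{lemma:FDLU}). Otherwise, Lemma \ref{lemma:KW_KT} shows that the MPO is a parallel composition of generalized $\mathbb{Z}_{D_i}$ KW dualities and KT dualities, the latter acting on $\mathbb{Z}_{D_i}\times\mathbb{Z}_{D_i}$, together with transversal Hadamards. I would then check the fusion rule $\mathcal{D}^\dagger\mathcal{D}\propto \sum_{g\in A} g$, up to translation, for $A=\prod_i \mathbb{Z}_{D_i}^{(k_i)}$. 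This follows factorwise from $\mathrm{KW}^2=T(1+\eta)$ and the analogous relation for KT. The bicharacter is block-diagonal in the KW factors and off-diagonal in the KT pairs, so it is symmetric and nondegenerate, which makes $\mathcal{D}$ a TY duality.

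The main obstacle is the non-coprime case in the first step. If a subspace is embedded in a direct sum rather than a tensor factorization, its Pauli operators are not global Paulis, and the sequential Clifford structure of Ref.~\cite{zhang2024characterizing} does not directly apply. I would handle this by padding: embed each direct-sum block into a tensor factor through an isometry and extend the defects by the identity. I would then argue that the CP-map constraints, projected block by block, still force the per-block form derived above. The delicate point is whether cross-block coherences in $T_A$ survive the projections. If they do, they would obstruct the factorization, and I would need to show that they are killed by the phase constraints from the pushing relations of both blocks.
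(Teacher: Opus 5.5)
Your steps 1--3 follow the paper's argument. You reduce each factor to $RIRI$, $RIRX$ (or its inversion), $RZRX$, or $RILI$ via Appendix~\ref{app:coarse_grain}. You then assemble the parallel KW/KT MPO of Fig.~\ref{fig:parallel_MPO}, fed by a joint resource state, and read off a TY duality for the product of the $\mathbb{Z}_{D_i}$ (KW) and $\mathbb{Z}_{D_i}^2$ (KT) groups with the product bicharacter. The paper handles the coprime case differently. It notes that $\mathbb{Z}_{D_1}\times\mathbb{Z}_{D_2}\cong\mathbb{Z}_{D_1D_2}$, so factor Paulis are global generalized Paulis and Lemma~\ref{lemma:all_pauli} applies verbatim. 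For the non-coprime case it only asserts ``a similar argument.'' Your index-wise imposition of the pushing constraints on $\Lambda$ supplies that argument. It is uniform in the $D_i$, because it never uses that factor Paulis are global Paulis.

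What you should fix is the setting, and with it your ``main obstacle.'' $\bigoplus_i\mathbb{C}^{D_i}$ and $\bigotimes_i\mathbb{C}^{D_i}$ are not equivalent, and the lemma concerns the tensor factorization. Three things in the paper show this. Eq.~\eqref{eq:general_T} expands $T_A$ in a product Bell basis. Appendix~\ref{app:coarse_grain} produces factors with $D=\prod_i p_i$. And the paper says the factor Paulis form a complete operator basis of the whole virtual space. That last claim fails for a direct sum: block-diagonal Paulis miss the off-diagonal blocks, so no product-Bell expansion of $T_A$ exists.

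The only non-coprime subtlety in the paper is that $\prod_i\mathbb{Z}_{D_i}$ is then not cyclic. Factor Paulis are therefore not generalized Paulis of a single $\prod_i D_i$-dimensional qudit, and Lemma~\ref{lemma:all_pauli} cannot be quoted directly. Your step 2 already disposes of this. So drop the padding construction: it addresses a direct-sum situation the lemma does not cover. In that situation nothing in the hypotheses forces cross-block coherences of $T_A$ to vanish, so the ``delicate point'' you leave open would be a genuine gap, not a technicality.

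A minor point: that two tensors with the same transfer matrix differ by a physical isometry is Kraus/Stinespring freedom, not the fundamental theorem of MPS.
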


We note that it is shown in Ref.~\cite{zhang2024characterizing} that all MPSs allowing any Pauli defects to be right-pushable to Pauli defects can be prepared using a sequential Clifford circuit from a product state up to another transversal unitary circuit, and vice versa. Lemma \ref{lemma:all_pauli} and \ref{lemma:Pauli_subspace} consider both left and right pushing relations, but are restricted to cases when the pushing relations are contained in each subspace. In particular, when all non-trivial Pauli defects are right pushed to non-trivial Pauli defects, the pushing relation corresponds to a permutation of the Abelianized group of defects. Hence, after blocking a finite number of sites, the pushing relation becomes $P\rightarrow P$ for any Pauli defect $P$ and is contained in each subspace. 

More generally, there could be pushing relations that are not contained in each subspace after gauge transformation and site blocking. In what follows, we discuss two examples where the virtual space is decomposed into two $D=2$ subspaces and demonstrate their state structures. 

The first example is given by the pushing relations
\beq
    X_1\mapsto X_1,\quad X_2\mapsto X_1, \quad Z_1\mapsto Z_1,\quad Z_2\mapsto Z_1.
\eeq
They cannot be brought to those within each $D=2$ subspace by gauge transformation and site blocking. The corresponding transfer matrix is of the following form:
\beq
    T_A=&\sum_{a,b,c,d=0}^{D_i-1}\lambda_{a,b,c,d}\quad
    \begin{tikzpicture}[baseline=(current bounding box),scale=1.5]
        \draw[rounded corners=8pt]
        (-.8,.8) -- (0,.8) -- (0,0) -- (-.8,0);
        \draw[rounded corners=8pt]
        (1,.8) -- (.2,.8) -- (.2,0) -- (1,0);
        \filldraw[draw, circle, fill=myRed, minimum size=6pt](0.6,0) circle (6pt);
        \node at (0.6,0) {$X^{b}$};
        \filldraw[draw, circle, fill=myBlue, minimum size=6pt](0.6,.8) circle (6pt);
        \node at (0.6,.8) {$Z^{a}$};
        \filldraw[draw, circle, fill=myRed, minimum size=6pt](-0.4,0) circle (6pt);
        \node at (-0.4,0) {$X^{b}$};
        \filldraw[draw, circle, fill=myBlue, minimum size=6pt](-0.4,0.8) circle (6pt);
        \node at (-0.4,0.8) {$Z^{a}$};
        \end{tikzpicture}\\[2ex]
        &\quad \bigotimes\
        \begin{tikzpicture}[baseline=(current bounding box),scale=1.5]
        \draw[rounded corners=8pt]
        (-.8,.8) -- (0,.8) -- (0,0) -- (-.8,0);
        \draw[rounded corners=8pt]
        (1,.8) -- (.2,.8) -- (.2,0) -- (1,0);
        \filldraw[draw, circle, fill=myRed, minimum size=6pt](0.6,0) circle (6pt);
        \node at (0.6,0) {$X^{d}$};
        \filldraw[draw, circle, fill=myBlue, minimum size=6pt](0.6,.8) circle (6pt);
        \node at (0.6,.8) {$Z^{c}$};
        \filldraw[draw, circle, fill=myRed, minimum size=6pt](-0.4,0) circle (6pt);
        \node at (-0.4,0) {$X^{b}$};
        \filldraw[draw, circle, fill=myBlue, minimum size=6pt](-0.4,0.8) circle (6pt);
        \node at (-0.4,0.8) {$Z^{a}$};
        \end{tikzpicture}\ .
\eeq
Nevertheless, one can show that the states in this class are still preparable from a product state using the KT duality. 

As for the second example, we consider the following pushing relations
\beq
    X_1\mapsto X_1,\quad X_2\mapsto X_2 Z_1, \quad Z_1\mapsto I,\quad Z_2\mapsto I.
\eeq

The corresponding transfer matrix is given by:
\beq
    T_A=&\sum_{a,b,b',c,d=0}^{D_i-1}\delta_{b'=b+d}\lambda_{a,b,c,d}\quad
    \begin{tikzpicture}[baseline=(current bounding box),scale=1.5]
        \draw[rounded corners=8pt]
        (-.8,.8) -- (0,.8) -- (0,0) -- (-.8,0);
        \draw[rounded corners=8pt]
        (1,.8) -- (.2,.8) -- (.2,0) -- (1,0);
        \filldraw[draw, circle, fill=myRed, minimum size=6pt](0.6,0) circle (6pt);
        \node at (0.6,0) {$X^{c}$};
        \filldraw[draw, circle, fill=myBlue, minimum size=6pt](0.6,.8) circle (6pt);
        \node at (0.6,.8) {$Z^{a}$};
        \filldraw[draw, circle, fill=myBlue, minimum size=6pt](-0.4,0.8) circle (6pt);
        \node at (-0.4,0.8) {$Z^{a}$};
        \end{tikzpicture}\\[2ex]
        &\quad \bigotimes\
        \begin{tikzpicture}[baseline=(current bounding box),scale=1.5]
        \draw[rounded corners=8pt]
        (-.8,.8) -- (0,.8) -- (0,0) -- (-.8,0);
        \draw[rounded corners=8pt]
        (1,.8) -- (.2,.8) -- (.2,0) -- (1,0);
        \filldraw[draw, circle, fill=myRed, minimum size=6pt](0.6,0) circle (6pt);
        \node at (0.6,0) {$X^{d}$};
        \filldraw[draw, circle, fill=myBlue, minimum size=6pt](0.6,.8) circle (6pt);
        \node at (0.6,.8) {$Z^{b'}$};
        \filldraw[draw, circle, fill=myBlue, minimum size=6pt](-0.4,0.8) circle (6pt);
        \node at (-0.4,0.8) {$Z^{b}$};
        \end{tikzpicture}\ .
\eeq
It can be shown that the states in this class can be prepared from a product state using the following MPO:
\beq
        \begin{tikzpicture}[baseline=(current bounding box),scale=1.5]
        \draw[rounded corners=8pt]
        (-.5,.1) -- (.2,.1) -- (.2,2.4);
        \draw[rounded corners=8pt]
        (.6,-.6) -- (.6,2.1) -- (2.3,2.1);
        \draw[rounded corners=8pt]
        (.7,-.3) -- (1.2,-.3) -- (1.2,2);
        \draw (-.5,-.3) -- (.5,-.3);
        \draw (1.2,2.2) -- (1.2,2.4);
        \draw[rounded corners=8pt]
        (1.6,-.6) -- (1.6,1.5) -- (2.3,1.5);
        \filldraw[fill = myRed] (0,.5) rectangle node {} (.8,1);
        \node at (.2,.75) {$C$};
        \node at (.6,.75) {$X$};
        \filldraw[fill = myRed] (1,.5) rectangle node {} (1.8,1);
        \node at (1.2,.75) {$C$};
        \node at (1.6,.75) {$X$};
        \filldraw[fill = myBlue] (.4,1.2) rectangle node {} (1.4,1.7);
        \node at (.6,1.45) {$C$};
        \node at (1.2,1.45) {$Z$};
        \end{tikzpicture}
     \eeq
As a result, the states in this class are preparable from a product state using two KW dualities followed by {\it non-transversal} entangling $CZ$ gates:
\beq
    \ket{\Psi} = \Big(\prod_i CZ_{2i,2i+1}\Big) \cdot \text{KW}_{\text{even}} \cdot \text{KW}_{\text{odd}}\ket{\psi}^{\otimes N}.
\eeq

To summarize, we have presented two classes of states that lie beyond the criterion of Lemma \ref{lemma:Pauli_subspace}, but are nonetheless related to some Tambara-Yamagami dualities followed by a possible non-transversal circuit. It is an open question whether states with a general class of pushing relations are always related to Tambara-Yamagami dualities up to certain non-transversal gates.

\section{Only Pauli $Z$ defects are pushable}\label{sec:pauli_z}
In the previous section, we have assumed that for the target state $\ket{A}$, all defects of the form $X^{n} Z^{m}$ for integers $n,m$ are pushable. Namely, those pushable defects form a complete basis of operators acting on the virtual space. This allows us to construct an FDLU-preparable associated state $\ket{B}$ from which $\ket{A}$ can be prepared with one round MF circuit, and the whole preparation circuit is equivalent to the protocol based on fusion measurement.

In this section, we consider the case where the pushable defects of the target state $\ket{A}$ do not form a complete orthonormal basis, and therefore $\ket{A}$ cannot be prepared by one-round fusion measurement in general. Specifically, we will only demand the pushability of Pauli-$Z$ defects. One immediate consequence is that the associated $\ket{B}$ is not necessarily FDLU preparable. To see this, we recall Lemma \ref{lemma:A_B}, which dictates the relation between the transfer matrices between $\ket{A}$ and $\ket{B}$:
\beq
    T_B = T_A\cdot \bigg(\sum_{t} \bar{V}_t\otimes V_t\bigg),
\eeq
where $\{V_t\}$ is the set of pushable defects of $\ket{A}$. When $\{V_t\}$ does not form a complete orthogonal basis, $\sum_{t} \bar{V}_t\otimes V_t$ is not a rank-1 projector into the maximally entangled Bell state. This implies the transfer matrix $T_B$ is not necessarily of rank 1, and therefore $\ket{B}$ is not necessarily FDLU preparable. It follows that there are two simple scenarios. (i) Even though $\sum_{t} \bar{V}_t\otimes V_t$ is not a rank-1 projector, we can still impose certain constraints on $T_A$ so that $T_B$ is of rank 1 and therefore $\ket{B}$ is still FDLU-preparable. (ii) When $T_B$ is not of rank 1, the associated MPS tensor $B$ may still allow all Pauli defects to be pushable, so that $\ket{B}$ can be prepared with the protocol discussed in the previous section. Correspondingly, $\ket{B}$ is  1-round MF-preparable, and $\ket{A}$ is 2-round MF-preparable. This strategy can be extended to the protocols of multiple rounds of MF circuits.

In the following, we will explore classes of states for which all Pauli $Z$ defects are pushable without requiring the pushability of Pauli $X$ defects, and discuss their preparation circuits. It is worth mentioning that a certain class of states can be prepared using generalized cosine symmetries~\cite{aasen2020topological,seifnashri2025gauging,cao2025global}, a class of continuous non-invertible symmetries. This result is the 2-round MF analogue of the last section, i.e., a natural generalization of fusion measurement preparable states prepared using TY dualities.

\subsection{Single and two-round MF preparation}
According to our preparation scheme, we construct an associated MPS tensor $B$ from $A$ using its pushable defects such that $B$ satisfies Eq.~\eqref{eq:B_definition}. Consequently, the associated state $\ket{B}$ is single-round MF circuits away from $\ket{A}$. When all Pauli $Z$ defects are either left- or right-pushable, the associated tensor $B$ is given by:

\begin{equation}
\begin{aligned}
\begin{tikzpicture}[baseline=(current bounding box),scale=1.2]
        \draw (0,.3) -- (.0,.7);
        \draw (-.4,.3) -- (-.4,.7);
        \filldraw[fill = white] (-.7,-.3) rectangle node {$B$} (.3,.3);
        \draw (.7,0) --  (.3,0);
        \draw (-.7,0) --  (-1.1,0);
        \node at (1,0) {$\equiv$};
        \filldraw[fill = white] (1.7,-.3) rectangle node {$A$} (2.3,.3);
        \draw (1.7,0) --  (1.3,0);
        \draw (2,.3) -- (2,.9);
        \draw (2.9,0) --  (2.3,0);
        \fill (2.6,0) circle (1pt);
        \draw (2.6,0.9) --  (2.6,0);
        \filldraw[draw, circle, fill=myGray, minimum size=6pt](2.6,.45) circle (6pt);
        \node at (2.6,.45) {$H$};
        \node at (3.1,0) {,};
\end{tikzpicture}
\label{eq:B_RZ}
\end{aligned}  
\end{equation} 
which satisfies
\begin{equation}
\begin{aligned}
\begin{tikzpicture}[baseline=(current bounding box),scale=1.2]
        \draw (-.4,.3) -- (-.4,.7);
        \draw (0,.3) -- (0,.7);
        \node at (0,.9) {$m$};
        \filldraw[fill = white] (-.7,-.3) rectangle node {$B$} (.3,.3);
        \draw (.7,0) --  (.3,0);
        \draw (-.7,0) --  (-1.1,0);
        \node at (1,0) {$=$};
        \draw (2.4-.7,0) --  (2.4-1.1,0);
        \draw (2.4-.4,.3) -- (2.4-.4,.7);
        \filldraw[fill = white] (2.4-.7,-.3) rectangle node {$A$} (2.4-.1,.3);
        \draw (2.4-.1,0) -- (2.4+.9,0); 
        \filldraw[draw, circle, fill=myBlue, minimum size=6pt](2.4+.4,0) circle (8pt);
        \node at (2.4+.4,0) {$Z^m$};
        \node at (2.4+1.1,0) {.};
\end{tikzpicture}
\label{eq:all_Z_component}
\end{aligned}  
\end{equation} 

The MPS $\ket{B}$ has all $Z$ defects pushable by definition. In addition, when $\ket{B}$ also has all $X$ defects pushable, according to our previous results, after blocking and gauge transformation, $\ket{B}$ is either FDLU- or 1-round MF-preparable. Consequently, the state $\ket{A}$ is either 1- or 2-round MF-preparable. In the following, we expand the discussion for each of these classes. We denote the classes of states as $xx$-$yyyy$. Here, $xx$ is either $RI$ or $RZ$, the only two classes of pushing relations for $A$ after site blocking and spatial inversion: $Z\mapsto I$ and $Z\mapsto Z$. $yyyy$ denotes the pushing relation of the associated MPS tensor $B$, which is one of the classes studied in Sec.~\ref{sec:all_pauli}.

\subsection{Class $RI$}

We use $RI$ to denote that $Z$ is right-pushed to $I$: $Z\mapsto I$. The general transfer matrix $T_A$ is given by
\beq
    T_A=\sum_{a,b,c=0}^{D-1}\lambda_{a,b,c}\quad \begin{tikzpicture}[baseline=(current bounding box),scale=1.5]
        \draw[rounded corners=8pt]
        (-.8,.8) -- (0,.8) -- (0,0) -- (-.8,0);
        \draw[rounded corners=8pt]
        (1,.8) -- (.2,.8) -- (.2,0) -- (1,0);
        \filldraw[draw, circle, fill=myRed, minimum size=6pt](0.6,0) circle (6pt);
        \node at (0.6,0) {$X^{\bar{b}}$};
        \filldraw[draw, circle, fill=myBlue, minimum size=6pt](0.6,.8) circle (6pt);
        \node at (0.6,.8) {$Z^{c}$};
        \filldraw[draw, circle, fill=myBlue, minimum size=6pt](-0.4,0.8) circle (6pt);
        \node at (-0.4,0.8) {$Z^a$};
        \end{tikzpicture}\ .
\eeq
Let us take the virtual bond dimension $D=2$ for simplicity. The transfer matrix with the $RI$ pushing relation is given by
\beq
        T_A &= \quad 
        \begin{tikzpicture}[baseline=(current bounding box),scale=1.5]
        \draw[rounded corners=8pt]
        (-.8,.8) -- (0,.8) -- (0,0) -- (-.8,0);
        \draw[rounded corners=8pt]
        (1,.8) -- (.2,.8) -- (.2,0) -- (1,0);
        \filldraw[fill = white] (-.2,.2) rectangle node {$M_A$} (.4,.6);
        \end{tikzpicture}\ ,
    \label{eq:RI_transfer_matrix}
\eeq
where $M_A$ is a two-qubit Hermitian operator
\beq
    M_A &= \lambda_0 I +\lambda_1 Z_1 +\lambda_2 Z_2 +\lambda_3 Z_1 Z_2  +\lambda_4 X_2 \\
        &\quad\quad + \lambda_5 Z_1 X_2 + i\lambda_6 Z_2 X_2 + i\lambda_7 Z_1 Z_2 X_2,
        \label{eq:T_A_RI}
\eeq
with eight real parameters $\lambda$'s; see also Eq.~\eqref{eq:T_A_RI_app}. We use indices $1$ and $2$ to denote Pauli operators on the left and right bonds, respectively. According to Eq.~\eqref{eq:T_B} with pushable defects $I$ and $Z$, the associated tensor $B$ in Eq.~\eqref{eq:B_RZ} has a transfer matrix given by the Hermitian operator:
\beq
        M_B = \lambda_0 I +\lambda_1 Z_1 +\lambda_2 Z_2 +\lambda_3 Z_1 Z_2.
        \label{eq:T_B_RI}
\eeq

When $Z$ defect is right-pushed to $I$, there exists a unitary operator $U_Z$ whose action on the physical bond annihilates the $Z$ defect. Following a similar derivation as around Eq.~\eqref{eq:A_B_RIRI}, since the target state $\ket{A}$ can be obtained from the associated state $\ket{B}$ in Eq.~\eqref{eq:B_RZ} via measuring the right physical bonds with applying feedback $U_Z$ correction when $t=1$, it can also be obtained from $\ket{B}$ via a controlled-unitary circuit between the left and right physical bonds: 
\begin{equation}
\begin{aligned}
&\begin{tikzpicture}[baseline=(current bounding box),scale=1.2]
        \node at (-1.4,0) {$\cdots$};
        \draw (-.4,.3) -- (-.4,1.2);
        \draw (1,.3) -- (1,1.2);
        \draw (2.4,.3) -- (2.4,1.2);
        \filldraw[fill = white] (-.7,-.3) rectangle node {$A$} (-.1,.3);
        \draw (.7,0) --  (-.1,0);
        \draw (-.7,0) --  (-1.1,0);
        \filldraw[fill = white] (.7,-.3) rectangle node {$A$} (1.3,.3);
        \draw (1.3,0) --  (2.1,0);
        \filldraw[fill = white] (2.1,-.3) rectangle node {$A$} (2.7,.3);
        \draw (2.7,0) --  (3.5,0);
        \node at (3.8,0) {$\cdots$};
        \draw (.1,.8) -- (.1,1.2);
        \node at (.1,.6) {$\ket{+}$};
        \draw (.1+1.4,.8) -- (.1+1.4,1.2);
        \node at (.1+1.4,.6) {$\ket{+}$};
        \draw (.1+2.8,.8) -- (.1+2.8,1.2);
        \node at (.1+2.8,.6) {$\ket{+}$};
\end{tikzpicture}\\[2ex]
=\ &\begin{tikzpicture}[baseline=(current bounding box),scale=1.2]
        \node at (-1.4,0) {$\cdots$};
        \draw (-.5,.3) -- (-.5,.5);
        \draw (-.5,1) -- (-.5,1.2);
        \draw (.1,.3) -- (.1,1.2);
        \filldraw[fill = white] (-.7,-.3) rectangle node {$B$} (.3,.3);
        \draw (.7,0) --  (.3,0);
        \draw (-.7,0) --  (-1.1,0);
        \draw (.9,.3) -- (.9,1.2);
        \draw (1.5,.3) -- (1.5,1.2);
        \filldraw[fill = white] (.7,-.3) rectangle node {$B$} (1.7,.3);
        \draw (1.7,0) --  (2.1,0);
        \filldraw[fill = white] (2.1,-.3) rectangle node {$B$} (3.1,.3);
        \draw (2.3,.3) -- (2.3,1.2);
        \draw (2.9,.3) -- (2.9,.5);
        \draw (2.9,1) -- (2.9,1.2);
        \draw (3.1,0) --  (3.5,0);
        \draw (-.3,.5) --  (-.3,1);
        \draw (-.3,.5) --  (-1,.5);
        \draw (-.3,1) --  (-1,1);
        \node at (-.5,.75) {$U_Z$};
        \filldraw[fill = white] (-.1,.5) rectangle node {} (1.1,1);
        \node at (.1,.75) {$C$};
        \node at (.9,.75) {$U_Z$};
        \filldraw[fill = white] (-.1+1.4,.5) rectangle node {} (1.1+1.4,1);
        \node at (.1+1.4,.75) {$C$};
        \node at (.9+1.4,.75) {$U_Z$};
        \draw (2.7,.5) --  (2.7,1);
        \draw (2.7,.5) --  (3.3,.5);
        \draw (2.7,1) --  (3.3,1);
        \node at (2.9,.75) {$C$};
        \node at (3.8,0) {$\cdots$};
\end{tikzpicture}
\label{eq:RI_A_B}
\end{aligned}  
\end{equation}

The transfer matrix of the associated tensor $B$ only differs from $T_A$ on the right virtual bonds according to Eq.~\eqref{eq:T_B}. Given that $A$ admits $Z\mapsto I$, the $Z$ defect of the $B$ tensor must also be right-pushed to $I$. Therefore, the pushing relations of $A$ and $B$ must be one of the classes $RI$--$RIRI$, $RI$--$RILI$, and $RI$--$RIRX$ after site blocking and gauge transformation. In the following, we discuss the structure of these classes of states separately.

\subsubsection{class $RI$--$\text{FDLU}$}
\label{subsec:RI-FDLU}
We use $RI$--$\text{FDLU}$ to denote the class of states that satisfy $Z\mapsto I$ for $A$, and the associated state $\ket{B}$ is FDLU-preparable. This class contains $RI$--$RIRI$ and $RI$--$RILI$, which correspond to $\lambda_1=\lambda_3=0$ and $\lambda_2=\lambda_3=0$, respectively. As long as $\lambda_0 \lambda_3 = \lambda_1 \lambda_2$, $T_B$ is of rank 1, and state $\ket{B}$ is FDLU-preparable according to Lemma \ref{lemma:FDLU}. Consequently, $\ket{A}$ is also FDLU-preparable using the relation presented in Eq.~\eqref{eq:RI_A_B} between $\ket{A}$ and $\ket{B}$.

Using the representative state for a rank-1 transfer matrix derived in Eq.~\eqref{eq:FDLU_rep_state} of Sec.~\ref{sec:FDLU}, we present the general state structure of this class:
\begin{equation}
\begin{aligned}
&\begin{tikzpicture}[baseline=(current bounding box),scale=1.2]
        \node at (-1.4,0) {$\cdots$};
        \draw (-.4,.3) -- (-.4,1.2);
        \draw (1,.3) -- (1,1.2);
        \draw (2.4,.3) -- (2.4,1.2);
        \filldraw[fill = white] (-.7,-.3) rectangle node {$A$} (-.1,.3);
        \draw (.7,0) --  (-.1,0);
        \draw (-.7,0) --  (-1.1,0);
        \filldraw[fill = white] (.7,-.3) rectangle node {$A$} (1.3,.3);
        \draw (1.3,0) --  (2.1,0);
        \filldraw[fill = white] (2.1,-.3) rectangle node {$A$} (2.7,.3);
        \draw (2.7,0) --  (3.5,0);
        \node at (3.8,0) {$\cdots$};
        \draw (.1,.8) -- (.1,1.2);
        \node at (.1,.6) {$\ket{+}$};
        \draw (.1+1.4,.8) -- (.1+1.4,1.2);
        \node at (.1+1.4,.6) {$\ket{+}$};
        \draw (.1+2.8,.8) -- (.1+2.8,1.2);
        \node at (.1+2.8,.6) {$\ket{+}$};
\end{tikzpicture}\\[2ex]
=\ &\begin{tikzpicture}[baseline=(current bounding box),scale=1.2]
        \node at (-1.4,0) {$\cdots$};
        \draw (-.5,.3) -- (-.5,.5);
        \draw (-.5,1) -- (-.5,1.2);
        \draw (.1,.3) -- (.1,1.2);
        \draw[rounded corners=8pt]
        (-.5,-.2) -- (-.5,-.5) -- (.1,-.5) -- (.1,-.2);
        \filldraw[draw, circle, fill=white, minimum size=6pt](-.2,-.5) circle (6pt);
        \node at (-.2,-.5) {$S$};
        \draw[rounded corners=8pt]
        (-.5+1.4,-.2) -- (-.5+1.4,-.5) -- (.1+1.4,-.5) -- (.1+1.4,-.2);
        \filldraw[draw, circle, fill=white, minimum size=6pt](-.2+1.4,-.5) circle (6pt);
        \node at (-.2+1.4,-.5) {$S$};
        \draw[rounded corners=8pt]
        (-.5+2.8,-.2) -- (-.5+2.8,-.5) -- (.1+2.8,-.5) -- (.1+2.8,-.2);
        \filldraw[draw, circle, fill=white, minimum size=6pt](-.2+2.8,-.5) circle (6pt);
        \node at (-.2+2.8,-.5) {$S$};
        \draw (.9,.3) -- (.9,1.2);
        \draw (1.5,.3) -- (1.5,1.2);
        \draw (2.3,.3) -- (2.3,1.2);
        \draw (2.9,.3) -- (2.9,.5);
        \draw (2.9,1) -- (2.9,1.2);
        \draw (-.3,.5) --  (-.3,1);
        \draw (-.3,.5) --  (-1,.5);
        \draw (-.3,1) --  (-1,1);
        \node at (-.5,.75) {$U_Z$};
        \filldraw[fill = white] (-.1,.5) rectangle node {} (1.1,1);
        \node at (.1,.75) {$C$};
        \node at (.9,.75) {$U_Z$};
        \filldraw[fill = white] (-.1+1.4,.5) rectangle node {} (1.1+1.4,1);
        \node at (.1+1.4,.75) {$C$};
        \node at (.9+1.4,.75) {$U_Z$};
        \draw (2.7,.5) --  (2.7,1);
        \draw (2.7,.5) --  (3.3,.5);
        \draw (2.7,1) --  (3.3,1);
        \node at (2.9,.75) {$C$};
        
        \draw (-.3,-.2) --  (-.3,.3);
        \draw (-.3,-.2) --  (-1,-.2);
        \draw (-.3,.3) --  (-1,.3);
        \filldraw[fill = white] (-.1,-.2) rectangle node {} (1.1,.3);
        \node at (.5,.05) {$U$};
        \filldraw[fill = white] (-.1+1.4,-.2) rectangle node {} (1.1+1.4,.3);
        \node at (.5+1.4,.05) {$U$};
        \draw (2.7,-.2) --  (2.7,.3);
        \draw (2.7,-.2) --  (3.3,-.2);
        \draw (2.7,.3) --  (3.3,.3);
        
        \node at (3.8,0) {$\cdots$};
\end{tikzpicture}
\end{aligned}  
\end{equation}  
where $U$ is a potential transversal gate that acts on the representative state.

\subsubsection{class $RI$--$RIRX$}
\label{subsec:RI-RIRX}
We use $RI$--$RIRX$ to denote the pushing relation $Z\mapsto I$ for $A$ and $Z\mapsto I$, $X\mapsto X$ for $B$. According to Eq.~\eqref{eq:T_B_RI}, this requires $\lambda_1= \lambda_2 = 0$, and consequently
this class of target state $\ket{A}$ has a transfer matrix in Eq.~\eqref{eq:T_A_RI} or Eq.~\eqref{eq:T_A_RI_app} with $\lambda_1= \lambda_2 = 0$. Recall that class $RIRX$ is 1-round MF-preparable using fusion measurement, we obtain that the target state $\ket{A}$ is also 1-round MF-preparable using Eq.~\eqref{eq:RI_A_B}.  

Using the representative state for class $RIRX$ derived in Eq.~\eqref{eq:RIRX} of Sec.~\ref{subsec:RIRX}, we present the general state structure of this class:
\begin{equation}
\begin{aligned}
&\begin{tikzpicture}[baseline=(current bounding box),scale=1.2]
        \node at (-2.1,0) {$\cdots$};
        \draw (-.4,.3) -- (-.4,1.2);
        \draw (1,.3) -- (1,1.2);
        \draw (2.4,.3) -- (2.4,1.2);
        \filldraw[fill = white] (-.7,-.3) rectangle node {$A$} (-.1,.3);
        \draw (.7,0) --  (-.1,0);
        \draw (-.7,0) --  (-1.8,0);
        \filldraw[fill = white] (.7,-.3) rectangle node {$A$} (1.3,.3);
        \draw (1.3,0) --  (2.1,0);
        \filldraw[fill = white] (2.1,-.3) rectangle node {$A$} (2.7,.3);
        \draw (2.7,0) --  (3.1,0);
        \node at (3.4,0) {$\cdots$};
        \draw (.1,.8) -- (.1,1.2);
        \node at (.1,.6) {$\ket{+}$};
        \draw (.1+1.4,.8) -- (.1+1.4,1.2);
        \node at (.1+1.4,.6) {$\ket{+}$};
        \draw (.1-1.4,.8) -- (.1-1.4,1.2);
        \node at (.1-1.4,.6) {$\ket{+}$};
\end{tikzpicture}\\[2ex]
=\ &\begin{tikzpicture}[baseline=(current bounding box),scale=1.2]
        \draw[rounded corners=8pt] (-1.8, -1) -- (-1.3, -1) -- (-1.3,-.8);
        \draw (-1.3,-.6) -- (-1.3,-.2);
        \draw (-1.3,.3) -- (-1.3,1.2);
        
        \draw[rounded corners=8pt]
        (-1.8,-.7) -- (-.5,-.7) -- (-.5,-.2);
        \draw[rounded corners=8pt]
        (-1.5,-1.8) -- (-1.5,-1.2) -- (-.5,-1.2) -- (-.5,-.7) -- (.9,-.7) -- (.9,-.2);
        \filldraw[fill = myRed] (-1.2,-1.4) rectangle node {} (-.7,-.5);
        \draw[rounded corners=8pt] (-1.3,-1.8) -- (-1.3,-1.6) -- (-.4, -1.6) -- (-.4, -1) -- (.1, -1) -- (.1,-.8);
        \draw (.1,-.6) -- (.1,-.2);
        \node at (-1.4,-2) {$\ket{\psi_{M}}$};
        \node at (-.95,-1.2) {$X$};
        \node at (-.95,-.7) {$C$};

        \draw[rounded corners=8pt]
        (-1.5+1.4,-1.8) -- (-1.5+1.4,-1.2) -- (-.5+1.4,-1.2) -- (-.5+1.4,-.7) -- (.9+1.4,-.7) -- (.9+1.4,-.2);
        \filldraw[fill = myRed] (-1.2+1.4,-1.4) rectangle node {} (-.7+1.4,-.5);
        \draw[rounded corners=8pt] (-1.3+1.4,-1.8) -- (-1.3+1.4,-1.6) -- (-.4+1.4, -1.6) -- (-.4+1.4, -1) -- (.1+1.4, -1) -- (.1+1.4,-.8);
        \draw (.1+1.4,-.6) -- (.1+1.4,-.2);
        \node at (-1.4+1.4,-2) {$\ket{\psi_{M}}$};
        \node at (-.95+1.4,-1.2) {$X$};
        \node at (-.95+1.4,-.7) {$C$};
        
        \draw[rounded corners=8pt]
        (-1.6+2.8,-.7) -- (-.5+2.8,-.7) -- (-.5+2.8,-.2);
        \draw[rounded corners=8pt]
        (-1.5+2.8,-1.8) -- (-1.5+2.8,-1.2) -- (-.5+2.8,-1.2) -- (-.5+2.8,-.7) -- (.1+2.8,-.7);
        \filldraw[fill = myRed] (-1.2+2.8,-1.4) rectangle node {} (-.7+2.8,-.5);
        \draw[rounded corners=8pt] (-1.3+2.8,-1.8) -- (-1.3+2.8,-1.6) -- (-.4+2.8, -1.6) -- (-.4+2.8, -1) -- (.1+2.8, -1);
        \node at (-1.4+2.8,-2) {$\ket{\psi_{M}}$};
        \node at (-.95+2.8,-1.2) {$X$};
        \node at (-.95+2.8,-.7) {$C$};

        \node at (-2.1,-.4) {$\cdots$};
        \node at (3.4,-.4) {$\cdots$};
        
        \draw (-.5,.3) -- (-.5,1.2);
        \draw (.1,.3) -- (.1,1.2);
        \draw (.9,.3) -- (.9,1.2);
        \draw (1.5,.3) -- (1.5,1.2);
        \draw (2.3,.3) -- (2.3,1.2);
        \filldraw[fill = white] (-.1-1.4,.5) rectangle node {} (1.1-1.4,1);
        \node at (.1-1.4,.75) {$C$};
        \node at (.9-1.4,.75) {$U_Z$};
        \filldraw[fill = white] (-.1,.5) rectangle node {} (1.1,1);
        \node at (.1,.75) {$C$};
        \node at (.9,.75) {$U_Z$};
        \filldraw[fill = white] (-.1+1.4,.5) rectangle node {} (1.1+1.4,1);
        \node at (.1+1.4,.75) {$C$};
        \node at (.9+1.4,.75) {$U_Z$};
        
        \draw (-1.1,-.2) --  (-1.1,.3);
        \draw (-1.1,-.2) --  (-1.6,.-.2);
        \draw (-1.1,.3) --  (-1.6,.3);
        \filldraw[fill = white] (-.7,-.2) rectangle node {$U$} (.3,.3);
        \filldraw[fill = white] (.7,-.2) rectangle node {$U$} (1.7,.3);
        \draw (2.1,-.2) --  (2.1,.3);
        \draw (2.1,-.2) --  (2.6,.-.2);
        \draw (2.1,.3) --  (2.6,.3);
\end{tikzpicture}
\label{eq:RI-RIRX}
\end{aligned}  
\end{equation}  

In summary, the states $\ket{A}$ and  $\ket{B}$ differ by a depth-1 controlled-$U_Z$ circuit entangling adjacent sites as shown in Eq.~\eqref{eq:RI_A_B}, where the associated state $\ket{B}$ is in class $RIRX$ and is fusion measurement preparable. We note that the class $RI$--$RIRX$ is in general beyond the fusion measurement protocol with transversal feedback based on pushable defects. This is because we need the defect $X \cos{\alpha} +Y \sin{\alpha} $ for certain $\alpha$ to be pushable, to extend $\{I,Z\}$ to a complete set of orthonormal defects as required by the fusion measurement preparation. However, it is straightforward to see that this operator is neither left nor right pushable for generic values of $\lambda$'s.   Another heuristic understanding is from the fact that all fusion measurement preparable states from pushing Pauli defects can be prepared using a sequential Clifford circuit~\cite{zhang2024characterizing}. However, neither $CU_Z$ nor $U$ gates in the above circuit are necessarily Clifford.

As a final remark, we rewrite the MPS tensor for $\ket{A}\otimes \ket{+}^{\otimes N}$ using Eq.~\eqref{eq:RI-RIRX} as
\beq
    \begin{tikzpicture}[baseline=(current bounding box),scale=1.5]
        \draw[rounded corners=8pt]
        (-1,.8) -- (1,.8) -- (1,1.3);
        \draw[rounded corners=8pt]
        (0,-.3) -- (0,.4) -- (2.2,.4);
        \filldraw[fill = myRed] (.3,.2) rectangle node {} (.8,1);
        \draw[rounded corners=8pt] (.2,-.3) -- (.2,-.1) -- (1.3, -.1) -- (1.3, .3);
        \draw (1.3, .5) -- (1.3, 1.3);
        \node at (.1,-.5) {$\ket{\psi_{M}}$};
        \node at (.55,.4) {$X$};
        \node at (.55,.8) {$C$};
        \draw[rounded corners=8pt]
        (-1,2) -- (-.4,2) -- (-.4,2.8);
        \draw[rounded corners=8pt]
        (1,1.3) -- (1,2) -- (0,2) -- (0,2.8);
        \draw[rounded corners=8pt]
        (1.3,1.7) -- (1.3,2) -- (2.2,2);
        \filldraw[fill = white] (.8,1.3) rectangle node {$U$} (1.5,1.7);
        \filldraw[fill = white] (-.6,2.2) rectangle node {} (.2,2.6);
        \node at (0,2.4) {$U_Z$};
        \node at (-.4,2.4) {$C$};
        \end{tikzpicture}\ .
        \label{eq:RI-RIRX-+}
\eeq
From the above tensor, we can conclude that class $RI$--$RIRX$ can be prepared using fusion measurement with {\it non-transversal feedback correction}. This is because all unitary defects in the top left virtual bond can be pushed to the right. In addition, due to the $RIRX$ property of $B$, the $Z$ and $X$ defects in the bottom virtual bond can be pushed to the right. Thus, all defects in the top and bottom virtual bonds can be pushed using unitaries supported on two sites; see more details in Appendix \ref{app:RIRI-yyyy}. To summarize, for any state $\ket{A}$ in class $RI$--$RIRX$, after stacking with the product of $\ket{+}$, we can write the MPS tensor as in Eq.~\eqref{eq:RI-RIRX-+}. This tensor lies within class $RIRI$--$RIRX$ as will be discussed in Sec.~\ref{sec:Pauli_subspace}. In other words, we conclude that class $RI$--$RIRX$ is contained in class $RIRI$--$RIRX$. 

\subsection{Class $RZ$}
We use $RZ$ to denote that $Z$ is right-pushed to $Z$: $Z\mapsto Z$. The general transfer matrix $T_A$ is given by
\beq
    T_A=\sum_{a,b,c=0}^{D-1}\lambda_{a,b,c}\quad \begin{tikzpicture}[baseline=(current bounding box),scale=1.5]
        \draw[rounded corners=8pt]
        (-.8,.8) -- (0,.8) -- (0,0) -- (-.8,0);
        \draw[rounded corners=8pt]
        (1,.8) -- (.2,.8) -- (.2,0) -- (1,0);
        \filldraw[draw, circle, fill=myRed, minimum size=6pt](0.6,0) circle (6pt);
        \node at (0.6,0) {$X^{\bar{b}}$};
        \filldraw[draw, circle, fill=myBlue, minimum size=6pt](0.6,.8) circle (6pt);
        \node at (0.6,.8) {$Z^{c}$};
        \filldraw[draw, circle, fill=myRed, minimum size=6pt](-0.4,0) circle (6pt);
        \node at (-0.4,0) {$X^{b}$};
        \filldraw[draw, circle, fill=myBlue, minimum size=6pt](-0.4,0.8) circle (6pt);
        \node at (-0.4,0.8) {$Z^a$};
        \end{tikzpicture}\ .
\eeq

When taking the virtual bond dimension $D=2$, the general transfer matrix with $RZ$ pushing relation is as in Eq.~\eqref{eq:RI_transfer_matrix}, with the two-qubit Hermitian operator $M_A$ given by
\beq
    M_A &= \lambda_0 I +\lambda_1 Z_1 +\lambda_2 Z_2 +\lambda_3 Z_1 Z_2  +\lambda_4 X_1 X_2 \\
        &\quad\quad + i\lambda_5 Z_1 X_1 X_2 + i\lambda_6 X_1 Z_2 X_2 + \lambda_7 Z_1 X_1 Z_2 X_2.
    \label{eq:RZ_transfer_matrix}
\eeq
The associated tensor $B$ in Eq.~\eqref{eq:B_RZ} has a transfer matrix given by the Hermitian operator
\beq
        M_B =& \lambda_0 I +\lambda_1 Z_1 +\lambda_2 Z_2 +\lambda_3 Z_1 Z_2.
        \label{eq:T_B_RZ}
\eeq

Unlike the $RI$ classes, for the $RZ$ class, the pushing relation for tensor $A$ is $Z\mapsto Z$, and therefore $\ket{A}$ cannot be obtained from $\ket{B}$ using FDLUs as is possible for $RI$. In the following, we discuss two classes among the $RZ$ classes of states: 1-round MF-preparable class $RZ$--$\text{FDLU}$, and 2-round MF-preparable class $RZ$--$RIRX$. We note that we do not need to discuss class $RZ$--$RZRX$, because the transfer matrix of the associated state as shown in Eq.~\eqref{eq:T_B_RZ} only allows $Z\mapsto I$ and $I\mapsfrom Z$.

\subsubsection{class $RZ$--$\text{FDLU}$}
\label{subsec:RZ-FDLU}
We use $RZ$--$\text{FDLU}$ to denote the class of MPSs that satisfy $Z\mapsto Z$ for $A$, and $B$ is FDLU-preparable. When $\lambda_0 \lambda_3 = \lambda_1 \lambda_2$, this transfer matrix is of rank 1, then $\ket{B}$ is FDLU-preparable and accordingly $\ket{A}$ is 1-round MF-preparable. Two special classes contained are $RZ$--$RIRI$ and $RZ$--$RILI$, which correspond to $\lambda_1=\lambda_3=0$ and $\lambda_2=\lambda_3=0$ in the transfer matrix in Eq.~\eqref{eq:RZ_transfer_matrix}, respectively. 

In Appendix \ref{app:RZ-RILI}, we derive the state structure for class $RZ$--$RILI$, which is as follows:
\begin{equation}
\begin{aligned}
&\begin{tikzpicture}[baseline=(current bounding box),scale=1.2]
        \node at (-1,0) {$\cdots$};
        \draw (1,.3) -- (1,1.2);
        \draw (2.4,.3) -- (2.4,1.2);
        \draw (.7,0) --  (-.2,0);
        \filldraw[fill = white] (.7,-.3) rectangle node {$A$} (1.3,.3);
        \draw (1.3,0) --  (2.1,0);
        \filldraw[fill = white] (2.1,-.3) rectangle node {$A$} (2.7,.3);
        \draw (2.7,0) --  (3,0);
        \node at (3.3,0) {$\cdots$};
        \draw (.1,.8) -- (.1,1.2);
        \node at (.1,.6) {$\ket{+}$};
        \draw (.1+1.4,.8) -- (.1+1.4,1.2);
        \node at (.1+1.4,.6) {$\ket{+}$};
\end{tikzpicture}\\[2ex]
=&\ \begin{tikzpicture}[baseline=(current bounding box),scale=1.2]
        \draw (-1.3,.3) -- (-1.3,1.2);

        \draw[rounded corners=8pt] (-1.3,-.2) -- (-1.3,-.5) -- (-2,-.5);
        \draw[rounded corners=8pt] (-2, -.8) -- (-1.2, -.8) -- (-1.2,-3.7);
        \draw[rounded corners=8pt]
        (-.5,-.2) -- (-.5,-2) -- (.2,-2) -- (.2,-3.7);
        \filldraw[fill = myRed] (-1.4,-1.5) rectangle node {} (-.3,-1);
        \draw[rounded corners=8pt] (-.4, -1.7) -- (.1, -1.7) -- (.1,-.2);
        \draw[rounded corners=8pt] (-.8,-3.7) -- (-.8,-1.7) -- (-.6, -1.7);
        \draw (.1,-.6) -- (.1,-.2);
        \node at (-1,-3.9) {$\ket{\psi_{M}}$};
        \node at (-1.2,-1.25) {$C$};
        \node at (-.5,-1.25) {$X$};

        \draw[rounded corners=8pt]
        (.9,-.2) -- (.9,-3.3) -- (1.5,-3.3);
        \filldraw[fill = myRed] (0,-2.8) rectangle node {} (1.1,-2.3);
        \draw (1, -3) -- (1.5, -3);
        \draw[rounded corners=8pt] (.6,-3.7) -- (.6,-3) -- (.8, -3);
        \draw (.1,-.6) -- (.1,-.2);
        \node at (.4,-3.9) {$\ket{\psi_{M}}$};
        \node at (.2,-2.55) {$C$};
        \node at (.9,-2.55) {$X$};

        \node at (-2.4,-.65) {$\cdots$};
        \node at (1.9,-3.15) {$\cdots$};
        
        \draw (-.5,.3) -- (-.5,1.2);
        \draw (.1,.3) -- (.1,1.2);
        \draw (.9,.3) -- (.9,1.2);
        \filldraw[fill = white] (-.1-1.4,.5) rectangle node {} (1.1-1.4,1);
        \node at (.1-1.4,.75) {$C$};
        \node at (.9-1.4,.75) {$U_Z$};
        \filldraw[fill = white] (-.1,.5) rectangle node {} (1.1,1);
        \node at (.1,.75) {$C$};
        \node at (.9,.75) {$U_Z$};
        
        \draw (-1.1,-.2) --  (-1.1,.3);
        \draw (-1.1,-.2) --  (-1.8,.-.2);
        \draw (-1.1,.3) --  (-1.8,.3);
        \node at (-1.6,.05) {$U$};
        \filldraw[fill = white] (-.7,-.2) rectangle node {$U$} (.3,.3);
        \draw (.7,-.2) --  (.7,.3);
        \draw (.7,-.2) --  (1.4,.-.2);
        \draw (.7,.3) --  (1.4,.3);
        \node at (1.2,.05) {$U$};
\end{tikzpicture}\ .
\label{eq:RZ-RILI}
\end{aligned}  
\end{equation}  
As a result, the states in class $RZ$--$RILI$ are prepared from a product state using a sequential $CX$ circuit, followed by a brick wall circuit composed of $CU_Z$ and $U\cdot \text{SWAP}$ gates. Similar to the class $RI$--$RIRX$ above, this class of states is in general also beyond fusion measurement protocol, as it is impossible to extend $\{I,Z\}$ to a complete set of orthonormal basis. However, we can write the MPS tensor for $\ket{A}\otimes \ket{+}^{\otimes N}$ using Eq.~\eqref{eq:RZ-RILI} as
\beq
    \begin{tikzpicture}[baseline=(current bounding box),scale=1.5]
        \draw[rounded corners=8pt]
        (-1,1.2) -- (.6,1.2) -- (.6,-.3);
        \draw[rounded corners=8pt]
        (1,1.3) -- (1,.4) -- (2.2,.4);
        \filldraw[fill = myRed] (.4,.6) rectangle node {} (1.2,1);
        \draw[rounded corners=8pt] (.8,-.3) -- (.8,-.1) -- (1.3, -.1) -- (1.3, .3);
        \draw (1.3, .5) -- (1.3, 1.3);
        \node at (.7,-.5) {$\ket{\psi_{M}}$};
        \node at (1,.8) {$X$};
        \node at (.6,.8) {$C$};
        \draw[rounded corners=8pt]
        (-1,2) -- (-.4,2) -- (-.4,2.8);
        \draw[rounded corners=8pt]
        (1,1.3) -- (1,2) -- (0,2) -- (0,2.8);
        \draw[rounded corners=8pt]
        (1.3,1.7) -- (1.3,2) -- (2.2,2);
        \filldraw[fill = white] (.8,1.3) rectangle node {$U$} (1.5,1.7);
        \filldraw[fill = white] (-.6,2.2) rectangle node {} (.2,2.6);
        \node at (0,2.4) {$U_Z$};
        \node at (-.4,2.4) {$C$};
        \end{tikzpicture}\ .
\eeq
From the above tensor, we can conclude that class $RZ$--$RILI$ can be prepared using fusion measurement with {\it non-transversal feedback correction}; see also Appendix \ref{app:RIRI-yyyy}. This is because all unitary defects in the top left virtual bond can be pushed to the right. In addition, the $Z$ and $X$ defects in the bottom right virtual bond can be pushed to the left, so all defects are pushable using unitaries supported on two sites. This implies that the above tensor lies within class $RIRI$--$RZLI$ as will be discussed in Sec.~\ref{sec:Pauli_subspace}, and the class $RZ$--$RILI$ is contained in class $RIRI$--$RZLI$.

\subsubsection{class $RZ$--$RIRX$}
\label{subsec:RZ-RIRX}
We use $RZ$--$RIRX$ to denote the pushing relation $Z\mapsto Z$ for $A$ and $Z\mapsto I$, $X\mapsto X$ for $B$. In particular, when $\lambda_1=\lambda_2=0$ in the transfer matrix in Eq.~\eqref{eq:T_B_RZ}, $T_B$ allows Pauli $X$ to be also pushable and therefore is in class $RIRX$. Consequently, $\ket{B}$ can be prepared using the 1-round MF-circuit that implements generalized KW, and $\ket{A}$ is 2-round MF-preparable. 

We derive a representative tensor $A$ in Appendix \ref{app:RZ} for $T_A$ with generic values of $\lambda$'s. In the same Appendix, we show that when  $\lambda_1=\lambda_2=0$, the representative tensor is of the form
\beq
        A=\ \begin{tikzpicture}[baseline=(current bounding box),scale=1.5]
        \draw (-.3,0.4) -- (-.1,.4);
        \draw (.1,0.4) -- (2.4,.4);
        \draw (2.6,0.4) -- (4.8,.4);
        \draw[rounded corners=8pt]
        (0,-.4) -- (0,.8) -- (2.4,.8) -- (2.4, 1.2) -- (4.4, 1.2) -- (4.4, 1.5);
        \draw[rounded corners=8pt]
        (.3,-.4) -- (0.3, 0) -- (2.5,0) -- (2.5,.8) -- (4.7,.8) -- (4.7, 1.5);
        \filldraw[fill = myRed] (.3,.2) rectangle node {} (.8,1); 
        \node at (.55,.8) {$C$};
        \node at (.55,.4) {$X$};
        \filldraw[draw, circle, fill=myBlue, minimum size=6pt](1.2,0.4) circle (6pt);
        \node at (1.2,0.4) {$R^z_{\alpha_{1}}$};
        \filldraw[fill = myBlue] (1.6,.2) rectangle node {} (2.2,1); 
        \node at (1.9,.8) {$C$};
        \node at (1.9,.4) {$R^z_{\alpha_{2}}$};
        \filldraw[fill = myBlue] (2.8,.2) rectangle node {} (3.4,1); 
        \node at (3.1,.8) {$C$};
        \node at (3.1,.4) {$R^z_{\alpha_{3}}$};
        \filldraw[fill = myBlue] (3.6,.2) rectangle node {} (4.2,1.4); 
        \node at (3.9,1.2) {$C$};
        \node at (3.9,.8) {$C$};
        \node at (3.9,.4) {$R^z_{\alpha_{4}}$};
        \node at (0.15,-.6) {$\ket{\psi_{r}}$};
        \end{tikzpicture}
        \label{eq:A_RZ-RZRX}
\eeq 
where the $\ket{\psi_{r}}= \sum_{m,n=0,1}r_{m,n}\ket{m}\otimes \ket{n}$ is an arbitrary two-qubit state. We have denoted the single-qubit rotation gate, the two-qubit controlled-rotation gate, and the three-qubit controlled-controlled-rotation gate as
\beq
    &R^z_{\alpha} = e^{i\frac{\alpha Z}{2}},\quad CR^z_{\alpha} = e^{i\frac{\alpha (1-Z_1)Z_2}{4}},\\
    &CCR^z_{\alpha} = e^{i\frac{\alpha (1-Z_1)(1-Z_2)Z_3}{8}}.
\eeq
where the top two bonds are the controlled qubits, and the bottom bond is the target qubit. Both the rotation angles $\alpha_i$ and the parameters $r_{m,n}$ in state $\ket{\psi_r}$ can be arbitrary.

According to Eq.~\eqref{eq:T_B} with pushable defects $I$ and $Z$, the transfer matrix $T_B$ of the associated tensor is given by
\beq
    T_B =c\ \begin{tikzpicture}[baseline=(current bounding box),scale=1.5]
        \draw[rounded corners=8pt]
        (-.5,.8) -- (0,.8) -- (0,0) -- (-.5,0);
        \draw[rounded corners=8pt]
        (.7,.8) -- (.2,.8) -- (.2,0) -- (.7,0);
        \end{tikzpicture}\ +  c'\ \begin{tikzpicture}[baseline=(current bounding box),scale=1.5]
        \draw[rounded corners=8pt]
        (-.8,.8) -- (-.3,.8) -- (-.3,0) -- (-.8,0);
        \draw[rounded corners=8pt]
        (.8,.8) -- (.3,.8) -- (.3,0) -- (.8,0);
        \filldraw[draw, circle, fill=myBlue, minimum size=6pt](-.3,0.4) circle (6pt);
        \node at (-0.3,0.4) {$Z$};
        \filldraw[draw, circle, fill=myBlue, minimum size=6pt](.3,0.4) circle (6pt);
        \node at (0.3,0.4) {$Z$};
        \end{tikzpicture}\ ,
\eeq
where $c=\langle\psi_r|\psi_r\rangle$ and $c'=\bra{\psi_r}Z\otimes I\ket{\psi_r}$. Therefore, indeed the associated state $\ket{B}$ is in class $RIRX$, and consequently $\ket{A}$ is 2-round MF-preparable. In particular, when $\bra{\psi_r}Z\otimes I\ket{\psi_r} = 0$, which corresponds to $\lambda_3 = 0$ in Eq.~\eqref{eq:T_B_RZ}, the associated transfer matrix $T_B$ is of rank 1. Therefore, $\ket{B}$ is FDLU-preparable and state $\ket{A}$ is 1-round MF-preparable. In Sec.~\ref{subsec:example_RZ-RIRX}, we will discuss in detail specific examples of this class. 

The MPO in Eq.~\eqref{eq:A_RZ-RZRX} describes a sequential circuit composed of controlled-$X$ gates and certain non-Clifford controlled-rotation gates. In particular, when only $\alpha_1=\alpha$ is non-zero, this MPO is related to the cosine symmetry $\{L_{\alpha}\}$~\cite{aasen2020topological,seifnashri2025gauging} by a conjugation of Hadamard gates. The cosine symmetry is a set of continuous non-invertible operators that have fusion rules like the product of cosine functions
\beq
    L_{\alpha}L_{\alpha'} = L_{\alpha+\alpha'} + L_{\alpha-\alpha'}.
\eeq
The cosine symmetries are respected by the gauged XXZ spin chain, i.e., the XXZ model after the Kramers-Wannier duality~\cite{thorngren2024fusion2,aasen2020topological,seifnashri2025gauging}. The parameter $\alpha$ is $2\pi$-periodic, and for special values of $\alpha$, the operator is particularly familiar: $L_0 = 1+ \prod_i X_i$, $L_\pi = \prod_i X_{2i}+ \prod_i X_{2i+1}$, $L_{\pi/2} = \text{KT}$.

When taking $\alpha_2=\alpha_4=0$, the MPO is unitarily equivalent to a generalization of the cosine symmetry, denoted as $L_{\alpha_1, \alpha_3}$. In Appendix \ref{app:cosine}, we show that
\beq
    L_{\alpha_1, \alpha_3} = \text{KW}^{\dagger} \big(\prod_i R^z_{i, \alpha_1} CR^z_{\tilde{i} i, \alpha_3} \big) \text{KW},
    \label{eq:generalized_cosine}
\eeq
where $R^z_{i, \alpha_1}$ is a single-qubit rotation on the first qubit of site $i$, and $CR^z_{\tilde{i} i, \alpha_3}$ is the controlled-rotation from the second to the first qubit of site $i$. As a result, the fusion rule of this MPO is given by
\beq
    L_{\alpha_1, \alpha_3}L_{\alpha'_1, \alpha'_3} = L_{\alpha_1+\alpha'_1, \alpha_3+\alpha'_3} + L_{\alpha_1-\alpha'_1, \alpha_3-\alpha'_3}.
\eeq

Therefore, we have found that a class of states with $RZ$--$RIRX$ pushing relations can be prepared from product states using the generalized cosine symmetry. According to Eq.~\eqref{eq:generalized_cosine}, the MPS tensor of these states can be rewritten in class $RIRX$--$RIRX$. This class will be discussed in Sec.~\ref{sec:Pauli_subspace}. 

\subsection{Non-example: $Z\leftrightarrow X$}
\label{subsec:non-pushable}

In Sec.~\ref{subsec:left_right_pushable}, we mention a case with the pushing relation denoted as $Z\leftrightarrow X$, where neither $Z$ nor $X$ defect is pushable. This pushing relation is related to $RZ$ by an extra Hadamard gate on the right virtual bond. Therefore, from the class $RZ$ states in Eq.~\eqref{eq:A_RZ}, we can obtain the MPS tensor for a generic state with $Z\leftrightarrow X$ pushing relation by applying a Hadamard.  The structure of a subset of states obtained this way from the $RZ$--$RIRX$ class in Eq.~\eqref{eq:A_RZ-RZRX} is below:
\beq
        A=\ \begin{tikzpicture}[baseline=(current bounding box),scale=1.5]
        \draw (-.2,0.4) -- (-.1,.4);
        \draw (.1,0.4) -- (2.4,.4);
        \draw (2.6,0.4) -- (4.8,.4);
        \draw[rounded corners=8pt]
        (0,-.4) -- (0,.8) -- (2.4,.8) -- (2.4, 1.2) -- (4.4, 1.2) -- (4.4, 1.5);
        \draw[rounded corners=8pt]
        (.3,-.4) -- (0.3, 0) -- (2.5,0) -- (2.5,.8) -- (4.7,.8) -- (4.7, 1.5);
        \filldraw[fill = myRed] (.3,.2) rectangle node {} (.8,1); 
        \node at (.55,.8) {$C$};
        \node at (.55,.4) {$X$};
        \filldraw[draw, circle, fill=myBlue, minimum size=6pt](1.2,0.4) circle (6pt);
        \node at (1.2,0.4) {$R^z_{\alpha_{1}}$};
        \filldraw[fill = myBlue] (1.6,.2) rectangle node {} (2.2,1); 
        \node at (1.9,.8) {$C$};
        \node at (1.9,.4) {$R^z_{\alpha_{2}}$};
        \filldraw[fill = myBlue] (2.8,.2) rectangle node {} (3.4,1); 
        \node at (3.1,.8) {$C$};
        \node at (3.1,.4) {$R^z_{\alpha_{3}}$};
        \filldraw[fill = myBlue] (3.6,.2) rectangle node {} (4.2,1.4); 
        \node at (3.9,1.2) {$C$};
        \node at (3.9,.8) {$C$};
        \filldraw[draw, circle, fill=myGray, minimum size=6pt](4.5,.4) circle (6pt);
        \node at (4.5,.4) {$H$};
        \node at (3.9,.4) {$R^z_{\alpha_{4}}$};
        \node at (0.15,-.6) {$\ket{\psi_{r}}$};
        \end{tikzpicture}
\eeq  

The state $\ket{A}$ generated by the above tensor is preparable from the product state $\ket{\psi_r}^{\otimes N}$ using a sequential circuit composed of $CX$, $H$, SWAP, and (controlled-)controlled-rotation gates. However, we suspect that this state does not admit any pushable defect, which would imply that it cannot be prepared by finite rounds of measurements and feedback.

\section{All Pauli defects in a subspace are pushable}
\label{sec:Pauli_subspace}
Another natural case is where all the Pauli defects in a subspace are pushable. The associate MPS $\ket{B}$ has a transfer matrix $T_B$ obtained from performing a projection on $T_A$ into that subspace. While the Pauli defects of $T_A$ in the complementary subspace are not pushable, they can become pushable in $T_B$ after the projection. Let us be more explicit by taking a transfer matrix with decomposed virtual spaces:
\beq
    T_A=&\sum_{\{a_i,b_i,c_i,d_i\}=0}^{D_i-1}\Lambda(\{a_i,b_i,c_i,d_i\})\quad \begin{tikzpicture}[baseline=(current bounding box),scale=1.5]
        \draw[rounded corners=8pt]
        (-.8,.8) -- (0,.8) -- (0,0) -- (-.8,0);
        \draw[rounded corners=8pt]
        (1,.8) -- (.2,.8) -- (.2,0) -- (1,0);
        \filldraw[draw, circle, fill=myRed, minimum size=6pt](0.6,0) circle (6pt);
        \node at (0.6,0) {$X^{d_1}$};
        \filldraw[draw, circle, fill=myBlue, minimum size=6pt](0.6,.8) circle (6pt);
        \node at (0.6,.8) {$Z^{c_1}$};
        \filldraw[draw, circle, fill=myRed, minimum size=6pt](-0.4,0) circle (6pt);
        \node at (-0.4,0) {$X^{b_1}$};
        \filldraw[draw, circle, fill=myBlue, minimum size=6pt](-0.4,0.8) circle (6pt);
        \node at (-0.4,0.8) {$Z^{a_1}$};
        \end{tikzpicture}\\[2ex]
        &\quad \bigotimes\ 
        \begin{tikzpicture}[baseline=(current bounding box),scale=1.5]
        \draw[rounded corners=8pt]
        (-.8,.8) -- (0,.8) -- (0,0) -- (-.8,0);
        \draw[rounded corners=8pt]
        (1,.8) -- (.2,.8) -- (.2,0) -- (1,0);
        \filldraw[draw, circle, fill=myRed, minimum size=6pt](0.6,0) circle (6pt);
        \node at (0.6,0) {$X^{d_2}$};
        \filldraw[draw, circle, fill=myBlue, minimum size=6pt](0.6,.8) circle (6pt);
        \node at (0.6,.8) {$Z^{c_2}$};
        \filldraw[draw, circle, fill=myRed, minimum size=6pt](-0.4,0) circle (6pt);
        \node at (-0.4,0) {$X^{b_2}$};
        \filldraw[draw, circle, fill=myBlue, minimum size=6pt](-0.4,0.8) circle (6pt);
        \node at (-0.4,0.8) {$Z^{a_2}$};
        \end{tikzpicture}\quad \quad\bigotimes\ \cdots
        \label{eq:T_B_2_subspaces}
\eeq
Suppose all Pauli defects in the first subspace are pushable, then the associated transfer matrix $T_B$ can be defined by performing a Bell projection in this subspace. In other words, $T_B$ has the same structure as in Eq.~\eqref{eq:T_B_2_subspaces}, but with the extra condition that $c_1=d_1=0$. It can happen that the coefficient $\Lambda|_{c_1=d_1=0}$ has more symmetries in the complementary subspace than $\Lambda$ for generic values of $c_1$ and $d_1$, which allows Pauli defects in the second subspace to be pushable. When this is true, $\ket{B}$ is either FDLU- or 1-round MF-preparable depending on the class of pushing relation. Consequently, $\ket{A}$ is either FDLU, or 1-round MF, or 2-round MF-preparable.

We denote by $xxxx$--$yyyy$ the class of states $\ket{A}$ that have pushing relations of class $xxxx$ for Pauli defects in a subspace, while the associated $\ket{B}$ have pushing relations of class $yyyy$ for Pauli defects in the complementary subspace. For instances, there are classes $RIRI$--$RIRI$ and $RIRI$--$RILI$, which should be FDLU-preparable. 

While we expect there to be interesting state structures for general $xxxx$--$yyyy$ classes, a detailed study is beyond the scope of this work. We will alternatively discuss some families of states in these classes as examples in the following.

\subsubsection{class $RIRX$--$\text{\text{FDLU}}$}
\label{subsec:RIRIX-FDLU}

The first family of states is given by $\text{KW}\cdot D_M\ket{\psi}^{\otimes N}$, where $D_M$ is an FDLU generated by tensor $M$. The MPS tensor is as follows:
\beq
    A = \quad \begin{tikzpicture}[baseline=(current bounding box),scale=1.5]
        \draw (0,-.25) -- (0,-.5);
        \filldraw[fill = white] (-.25,-.25) rectangle node {$M$} (.25,.25);
        \draw (.25,0) --  (1.7,0);
        \draw (-.25,0) --  (-.7,0);
        \draw[rounded corners=8pt]
        (-.7,1.2) -- (1.1,1.2) -- (1.1,1.7);
        \draw[rounded corners=8pt]
        (0,0.25) -- (0,.8) -- (1.7,.8);
        \filldraw[fill = myBlue] (.3,.6) rectangle node {} (.8,1.4);
        \node at (.55,.8) {$Z$};
        \node at (.55,1.2) {$C$};
        \filldraw[draw, circle, fill=myGray, minimum size=6pt](1.2,.8) circle (6pt);
        \node at (1.2,.8) {$H$};
        \node at (0,-.7) {$\ket{\psi}$};
    \end{tikzpicture}\ .
\eeq
The Pauli $Z$ defect in the top virtual subspace is pushed to $I$, and the $X$ defect in this subspace is pushed to $X$. The associated state $\ket{B}$ after a Bell projection in this subspace is given by $D_M\ket{\psi}^{\otimes N}$. Therefore, this family of states is in class $RIRX$--$\text{FDLU}$. Similarly, the states $\text{KT}\cdot D_M\ket{\psi}^{\otimes N}$ are in the class $RZRX$--$\text{FDLU}$.

\subsubsection{class $RIRI$--$RIRX$}
\label{subsec:RIRI-RIRX}

The second family of states is given by $\text{FDLU}\cdot \text{KW} \ket{\psi}^{\otimes N}$. These states are 1-round MF-preparable, because KW can be implemented using 1-round MF circuit as shown in Appendix \ref{app:implement}. Here, we demonstrate that these states are within class $RIRI$--$RIRX$. It suffices to discuss the depth-2 brick wall circuit,  since any FDLU can be reduced to this architecture via site blocking~\cite{gross2012index}. The MPS tensor is shown below:
\beq
    A = \quad \begin{tikzpicture}[baseline=(current bounding box),scale=1.5]
        \draw[rounded corners=8pt]
        (-.7,2.2) -- (.6,2.2) -- (.6,3);
        \draw[rounded corners=8pt]
        (-.7,.8) -- (1.3,.8) -- (1.3,1.2) -- (2.4,1.2) -- (2.4,2.2) -- (1,2.2) -- (1,3);
        \draw[rounded corners=8pt]
        (0,-.2) -- (0,.4) -- (1.5,.4) -- (1.5,.8) -- (2.8,.8) -- (2.8,2.2) -- (3.2,2.2);
        \draw[rounded corners=8pt]
        (1.7,-.2) -- (1.7,.4) -- (3.2,.4);
        \filldraw[fill = myBlue] (.3,.2) rectangle node {} (.8,1);
        \node at (.55,.8) {$C$};
        \node at (.55,.4) {$Z$};
        \node at (0,-.4) {$\ket{\psi}$};
        \filldraw[draw, circle, fill=myGray, minimum size=6pt](1.1,.4) circle (6pt);
        \node at (1.1,.4) {$H$};
        \filldraw[fill = myBlue] (2,.2) rectangle node {} (2.5,1);
        \node at (2.25,.8) {$C$};
        \node at (2.25,.4) {$Z$};
        \node at (1.7,-.4) {$\ket{\psi}$};
        \filldraw[draw, circle, fill=myGray, minimum size=6pt](2.8,.4) circle (6pt);
        \node at (2.8,.4) {$H$};
        \filldraw[fill = white] (2.2,1.6) rectangle node {$U$} (3,2);
        \filldraw[fill = white] (.4,2.4) rectangle node {$V$} (1.2,2.8);
    \end{tikzpicture}\ .
\eeq
It is obvious that all Pauli defects on the top virtual bond are right-pushable. The associated state $\ket{B}$ is obtained from projecting out this virtual subspace, as derived in Appendix \ref{app:RIRI-yyyy}. The associated tensor $B$ is then given by $\text{KW} \ket{\psi}^{\otimes N}$, and is in class $RIRX$. We thus conclude that $\ket{A}$ is in class $RIRI$--$RIRX$.

In Sec.~\ref{subsec:RI-RIRX} and Sec.~\ref{subsec:RZ-FDLU}, we showed that we can always rewrite the MPS tensors in class $RI$--$RIRX$ and class $RZ$--$RILI$ in the form of the above equation, with the unitary $V$ replaced by a controlled-$U_Z$ gate. Hence, these two classes are contained in class $RIRI$--$RIRX$. The 1-round MF-preparation protocol of the target state $\ket{A}$ is thus composed of the fusion measurement preparation of $\ket{B}$ followed by a brick-wall circuit. The construction of a brick-wall circuit that relates $\ket{B}$ to $\ket{A}$ is discussed in Eq.~\eqref{eq:A_B_RIRI} of Sec.~\ref{subsec:RIRI}, which uses the unitaries that push through the Pauli defects.

We further note that states in class $RIRI$--$RIRX$ are generally beyond the scope of fusion measurement. As we showed above, state $\text{FDLU}\cdot \text{KW}\ket{\psi}^{\otimes N}$ is in this class after site blocking. For generic FDLU, the state almost always admits a non-zero correlation length and non-flat entanglement spectrum. According to Ref.~\cite{sahay2025classifying}, these states are impossible to prepare from a product state using fusion measurement and defects right-pushing.

\subsubsection{and beyond}
\label{subsec:beyond}

We can continue to look at the family of states given by
\beq
    \text{FDLU}\cdot \text{KW} \cdots  \text{KW}\cdot \text{FDLU}\ket{\psi}^{\otimes N},
\eeq
for arbitrary FDLU circuits. Following our previous argument, after a blocking of finite sites, we can write the MPS tensor of these states as in class $RIRI$--$RIRX$--$\cdots$--$RIRX$--$RIRI$. 

Similarly, the family of states given by
\beq
    \text{FDLU}\cdot \text{KW} \cdot \text{FDLU} \cdots \ket{A},
\eeq
for a state $\ket{A}$ with pushing class $yyyy$. These states have MPS tensor in class $RIRI$--$RIRX$--$RIRI$--$\cdots$--$yyyy$. In Sec.~\ref{subsec:RIRX-RZRX}, we will study an example in class $RIRX$--$RZRX$, which can be prepared using 2-round MF circuit within our scheme. Furthermore, we found that the MF circuit can be made more compact, such that the state is preparable in one round with non-transversal feedback correction.

\section{Examples}\label{sec:examples}
So far we have illustrated the general structures for states associated with various classes of pushing relations. In this section, we will focus on concrete examples of states and their preparation circuits.

\subsection{Class $RZRX$: fusion measurement}
\label{subsec:example_fusion_measurement}
\subsubsection{AKLT state} 
\label{subsec:example_AKLT}
The spin-$1$ AKLT state provides a paradigmatic example of symmetry-protected topological (SPT) order~\cite{aklt,guwen2009tensor,pollmann2010entanglement}, and is also known as a resource state for measurement-based quantum computation (MBQC)~\cite{mbqc_2001}. It is realized in Ref.~\cite{smith2023deterministic} that 1D AKLT state can be prepared using fusion measurement. Here, instead of following the fusion measurement protocol, we demonstrate how the AKLT state can be obtained from a product state by implementing the spin-$\frac{1}{2}$ KT duality, based on the class of pushing relations. 

The conventional MPS tensor for the AKLT state is given by Pauli operators: $A^1 = \frac{1}{\sqrt{3}}X$, $A^2 = \frac{1}{\sqrt{3}}Y$, and $A^3 = \frac{1}{\sqrt{3}}Z$~\cite{smith2023deterministic}. The transfer matrix of this tensor is 
\beq
    T_{\text{AKLT}} = \frac{1}{3} \left(\ 
    \begin{tikzpicture}[baseline=(current bounding box),scale=1.2]
        \draw (0,0.8) -- (1,.8);
        \draw (0,0) -- (1,.0);
        \filldraw[draw, circle, fill=myBlue, minimum size=6pt](.5,.8) circle (6pt);
        \node at (.5,.8) {$Z$};
        \filldraw[draw, circle, fill=myBlue, minimum size=6pt](.5,0) circle (6pt);
        \node at (.5,0) {$Z$};
    \end{tikzpicture}\
    +\ \begin{tikzpicture}[baseline=(current bounding box),scale=1.2]
        \draw (0,0.8) -- (1,.8);
        \draw (0,0) -- (1,.0);
        \filldraw[draw, circle, fill=myRed, minimum size=6pt](.5,.8) circle (6pt);
        \node at (.5,.8) {$X$};
        \filldraw[draw, circle, fill=myRed, minimum size=6pt](.5,0) circle (6pt);
        \node at (.5,0) {$X$};
    \end{tikzpicture}\
    +\ \begin{tikzpicture}[baseline=(current bounding box),scale=1.2]
        \draw (0,0.8) -- (1.4,.8);
        \draw (0,0) -- (1.4,.0);
        \filldraw[draw, circle, fill=myBlue, minimum size=6pt](.4,.8) circle (6pt);
        \node at (.4,.8) {$Z$};
        \filldraw[draw, circle, fill=myBlue, minimum size=6pt](.4,0) circle (6pt);
        \node at (.4,0) {$Z$};
        \filldraw[draw, circle, fill=myRed, minimum size=6pt](1,.8) circle (6pt);
        \node at (1,.8) {$X$};
        \filldraw[draw, circle, fill=myRed, minimum size=6pt](1,0) circle (6pt);
        \node at (1,0) {$X$};
    \end{tikzpicture}\ 
    \right).
\eeq
This transfer matrix commutes with both $Z\otimes Z$ and $X\otimes X$. As a result, the pushing relations are $Z\mapsto Z$, $X\mapsto X$, and the AKLT state is in class $RZRX$. According to Lemma \ref{lemma:KW_KT}, the AKLT state can be prepared from a product state using KT duality followed by a transversal circuit:
\beq
    \ket{\text{AKLT}}=\Big(\prod_i U_i\Big)\text{KT}\ket{\psi}^{\otimes N},
    \label{eq:AKLT}
\eeq
where $\ket{\psi} = \frac{1}{\sqrt{3}}(\ket{-+}+\ket{--}+\ket{+-})$ is the two-qubit state $\ket{\psi_\mu}$ in Eq.~\eqref{eq:A_RZRX} conjugated by a transversal Hadamard gate. The two-qubit unitary gate $U_{i}$ supported on site $2i-1$ and site $2i$ maps our embedding of spin 1 triplet ($\ket{-+}$, $\ket{--}$, and $\ket{+-}$) to the standard embedding ($\ket{00}$, $\frac{1}{\sqrt{2}}(\ket{01}+\ket{10})$, and $\ket{11}$).

In the work of Kennedy and Tasaki that originated the term KT duality, a non-local transformation on a spin-1 chain is proposed~\cite{kennedy1992hidden}. Upon the standard embedding, that transformation becomes $(\prod_i U_i)\text{KT}(\prod_i U_i)^{\dagger}$ on a qubit chain. Such a KT duality maps between the AKLT state and a $\mathbb{Z}_2\times \mathbb{Z}_2$ symmetry-breaking state. From the above equation, the $\mathbb{Z}_2\times \mathbb{Z}_2$ symmetry-breaking state is simply the symmetrization of the product state $(U\ket{\psi})^{\otimes N}$. By applying a duality transformation on $(U\ket{\psi})^{\otimes N}$, we obtain the AKLT state, which is a non-fixed-point SPT state with finite correlation length. Therefore, Eq.~\eqref{eq:AKLT} provides another preparation protocol of the AKLT state besides fusion measurement, since KT can be implemented using the MF circuit as discussed in Appendix \ref{app:implement}.

\subsubsection{$\operatorname{Rep}(D_8)$ categorical symmetry breaking}
\label{subsec:example_D8}
There has been a growing interest in exploring generalized symmetries and their symmetry breaking patterns to discover new phases of many-body states. Below, we focus on a particular generalized symmetry known as the $\operatorname{Rep}(D_8)$ categorical symmetry and discuss the preparation of an exotic state associated with it. 

First, consider a qubit chain with an even number of sites $2N$, $\operatorname{Rep}(D_8)$  symmetry is realized by a non-invertible $\mathsf{D}$ operator that implements KW dualities on two sublattices separately, along with two $\mathbb{Z}_2$ symmetry operators $\eta_e$ and $\eta_o$, which are the product of Pauli $X$ on even and odd sublattices respectively~\cite{doherty2009identifying,seifnashri2024cluster}. We introduce the following state, which corresponds to the symmetry-broken phase of $\operatorname{Rep}(D_8)$ symmetry: 

\begin{equation}
\ket{\Psi_{+G}} = \ket{+\cdots +}_e \ket{\text{GHZ}}_o + \ket{\text{GHZ}}_e \ket{+\cdots +}_o,
\label{eq:example_+G}
\end{equation}
where we use sub-indices $e$ and $o$ to denote states on the even and odd sublattice respectively, and $\ket{\text{GHZ}}$ is the Greenberger–Horne–Zeilinger state $\ket{\text{GHZ}} = \frac{1}{\sqrt{2}}\big(\ket{0\cdots 0}+\ket{1\cdots 1}\big)$.

In the following, by constructing an MPS representation of this state, we will study the pushable defects and the associated pushing relations, and find that it can be prepared by a fusion measurement protocol. The preparation of this state has not been discussed in the literature to the best of our knowledge.

First, Since $\ket{\Psi_{+G}}$ is an equal-weight superposition between $\ket{+\cdots +}_e\otimes\ket{\text{GHZ}}_o$ and $\ket{\text{GHZ}}_e\otimes\ket{+\cdots +}_o$, we can construct its MPS tensor by composing that of $\ket{+\cdots +}$ and $\ket{\text{GHZ}}$ separately on different sublattices, with an extra virtual bond used to impose the superposition. The explicit MPS tensor is given by
\beq
    A_{+G} = \quad \begin{tikzpicture}[baseline=(current bounding box),scale=1.5]
        \draw (.1,0) --  (2.1,0);
        \draw (-.1,0) --  (-.8,0);
        \draw (-.8,1.2) -- (1.3,1.2);
        \draw (1.5,1.2) -- (1.7,1.2);
        \draw (1.9,1.2) -- (2.1,1.2);
        \draw[rounded corners=8pt]
        (-.5,0) -- (-.5,.8) -- (1.4,.8) -- (1.4,1.7);
        \fill (-.5,0) circle (1pt);
        \draw[rounded corners=8pt]
        (0,-0.5) -- (0,.4) -- (1.8,.4) -- (1.8,1.7);
        \filldraw[fill = white] (.3,.2) rectangle node {} (1.1,1.4);
        \node at (0,-.7) {$\ket{+}$};
        \node at (.7,1.2) {$C$};
        \node at (.7,.6) {SWAP};
    \end{tikzpicture}\ ,
    \label{eq:+_GHZ}
\eeq
where $C\text{SWAP}$ denotes a three-qubit controlled-SWAP gate with control on the top virtual bond and target on the bottom two bonds. The Pauli $Z$ defects on both the top and bottom virtual bonds push to $Z$ without further action on the physical bonds. The Pauli $X$ on the top virtual bond pushes to $X$ with a SWAP gate on the physical bonds. The Pauli $X$ on the bottom virtual bond pushes to $X$ with an $X\otimes X$ gate on the physical bonds. Therefore, all the Pauli defects are right-pushable, meaning this state is preparable using a fusion measurement protocol. 

As shown above, the pushing relations are contained in the two-dimensional subspaces, thus according to Lemma $\ref{lemma:Pauli_subspace}$, the target state can be prepared by implementing a certain Tambara-Yamagami duality. In particular, since the pushing relations of both subspaces are in $RZRX$ class, the corresponding duality should be $\text{KT}\otimes \text{KT}$. To see that, we first derive the transfer matrix of the MPS tensor:
\beq
    T_{+G} = \frac{1}{4}\left[ 2\ 
    \begin{tikzpicture}[baseline=(current bounding box),scale=1.2]
        \draw (0,0.8) -- (1,.8);
        \draw (0,0) -- (1,.0);
        \draw (0,1.4) -- (1,1.4);
        \draw (0,-.6) -- (1,-.6);
    \end{tikzpicture} \ +\ \begin{tikzpicture}[baseline=(current bounding box),scale=1.2]
        \draw (0,0.8) -- (1,.8);
        \draw (0,0) -- (1,.0);
        \draw (0,1.4) -- (1,1.4);
        \draw (0,-.6) -- (1,-.6);
        \filldraw[draw, circle, fill=myBlue, minimum size=6pt](.5,1.4) circle (6pt);
        \node at (.5,1.4) {$Z$};
        \filldraw[draw, circle, fill=myBlue, minimum size=6pt](.5,-.6) circle (6pt);
        \node at (.5,-.6) {$Z$};
    \end{tikzpicture}\ +\ 
    \begin{tikzpicture}[baseline=(current bounding box),scale=1.2]
        \draw (0,0.8) -- (1,.8);
        \draw (0,0) -- (1,.0);
        \filldraw[draw, circle, fill=myBlue, minimum size=6pt](.5,.8) circle (6pt);
        \node at (.5,.8) {$Z$};
        \filldraw[draw, circle, fill=myBlue, minimum size=6pt](.5,0) circle (6pt);
        \node at (.5,0) {$Z$};
        \draw (0,1.4) -- (1,1.4);
        \draw (0,-.6) -- (1,-.6);
        \filldraw[draw, circle, fill=myBlue, minimum size=6pt](.5,1.4) circle (6pt);
        \node at (.5,1.4) {$Z$};
        \filldraw[draw, circle, fill=myBlue, minimum size=6pt](.5,-.6) circle (6pt);
        \node at (.5,-.6) {$Z$};
    \end{tikzpicture}\
    \right].
\eeq
From this transfer matrix, we can write the representative MPS tensor as in Eq.~\eqref{eq:A_RZRX}, which leads to the following result:
\beq
    \ket{\Psi_{+G}}\otimes\ket{\tilde{+}}^{\otimes 2N}=\Big(\prod_i U_{i}\Big)\text{KT}\otimes \text{KT}\Big(\ket{\psi}\otimes\ket{\tilde{+}\tilde{+}}\Big)^{\otimes N}
    \label{eq:+G}
\eeq
where the two-qubit state $\ket{\psi}$ is defined on each $2i-1$ and $2i$-th site, 
\beq
    \ket{\psi}&=\frac{1}{2}\Big(\sqrt{2}\ket{++}+\ket{-+} +\ket{--}\Big),
\eeq
and $U_i$ is a two-qubit unitary gate defined on the $2i-1$ and $2i$-th site, that maps $X_e^k X_o^l\ket{\psi}$ to $\text{SWAP}^{k+l}\ket{l+}$ for $k,l=0,1$. One ancillary qubit is introduced for each site in the above, denoted with a tilde. By throwing them away, Eq.~\eqref{eq:+G} can be further simplified
\beq
    \ket{\Psi_{+G}}=\Big(\prod_i U_{i}\Big)\frac{1+\prod_i X_{2i}}{2}\frac{1+\prod_i X_{2i+1}}{2}\ket{\psi}^{\otimes N}.
    \label{eq:+G_simplified}
\eeq

\subsection{Class $RI$--$RIRX$: 1 round}
\label{subsec:example_RI-RIRX}

Our third example is a family of states prepared from product states using imaginary time evolution of a product state under a Hamiltonian $H$ composed of nearest-neighbor and next-nearest-neighbor interactions. 

In Appendix $\ref{app:RI}$, we have derived that all states in class $RI$ can be prepared from product states using a product of certain diagonal three-qubit non-unitary gates $\mathcal{E}'$. When the parameters in the transfer matrix $T_A$ in Eq.~\eqref{eq:T_A_RI} and Eq.~\eqref{eq:T_A_RI_app} satisfy $\lambda_1=\lambda_2=0$, the associated tensor $B$ is in class $RIRX$, and as a result the target state $\ket{A}$ is in class $RI$--$RIRX$ and is 1-round MF-preparable. For simplicity of discussion, we choose $\lambda_1=\lambda_2=\lambda_6=\lambda_7=0$, corresponding to the following non-unitary three-qubit gate centered at site $2i$ 
\beq
    \mathcal{E}'_i=CZ_{2i,2i+1} e^{\kappa Z_{2i-1}+\gamma' Z_{2i}Z_{2i+1}+\gamma Z_{2i-1}Z_{2i}Z_{2i+1}},
\eeq
with $\tanh{2\kappa}=-\tanh{2\gamma}\tanh{2\gamma'}$. See Appendix $\ref{app:RI}$ for details. According to Eq.~\eqref{eq:RI_non-unitary}, this family of states is given by
\beq
    \ket{\Psi_{\kappa,\gamma,\gamma'}} = e^{H_{\kappa,\gamma,\gamma'}}\Big(\prod_{i}CZ_{2i,2i+1}\Big) \ket{\psi}^{\otimes N},
\eeq
where $\ket{\psi}$ is a two-qubit state at site $2i-1$ and site $2i$
\beq
    \ket{\psi}=a\ket{00}+b\ket{01}+c\ket{10}+d\ket{11},
\eeq
and the Hamiltonian 
\beq
    H_{\kappa,\gamma,\gamma'}=\sum_{i}\kappa Z_{2i-1}+\gamma' Z_{2i}Z_{2i+1}+\gamma Z_{2i-1}Z_{2i}Z_{2i+1},
\eeq
for the real parameters with constraints
\beq
    a^2+b^2&=c^2+d^2,\\
    \frac{b^2-a^2}{d^2-c^2}&=\frac{\tanh{(2\gamma-2\gamma')}}{\tanh{(2\gamma+2\gamma')}},\\
    \tanh{2\kappa}&= -\tanh{2\gamma}\tanh{2\gamma'}.
\eeq

In what follows, we consider two special subsets of states for which the Hamiltonian is simplified, corresponding to $\gamma'=0$ and $\gamma=0$, respectively. From the above constraints, for the former $\kappa=0$, $a^2=c^2$, and $b^2=d^2$. As a result, these states are given by
\beq
    \ket{\Psi_\gamma} = e^{\gamma\sum_i Z_{2i-1}Z_{2i}Z_{2i+1}}\Big(\prod_{i}CZ_{2i,2i+1}\Big) \ket{\psi}^{\otimes N},
\eeq
where $\ket{\psi}=\ket{+}\otimes\ket{\phi}$ and $\ket{\phi}$ is an arbitrary real single-qubit state. In order to write the MPS tensors, it is convenient to use the following identity
\beq
    \begin{tikzpicture}[baseline=(current bounding box),scale=1.5]
        \draw (-0.3,-.3) -- (-0.3,.3);
        \draw (0.3,-.3) -- (0.3,.3);
        \filldraw[fill = myBlue] (-.4,-.2) rectangle node {$e^{\gamma Z Z}$} (.4,.2);
    \end{tikzpicture}\quad = \quad 
    \begin{tikzpicture}[baseline=(current bounding box),scale=1.5]
        \draw (.5,0) -- (-.5,0);
        \draw (-0.5,-.3) -- (-0.5,.3);
        \draw (0.5,-.3) -- (0.5,.3);
        \fill (.5,0) circle (1pt);
        \fill (-.5,0) circle (1pt);
        \filldraw[draw, circle, fill=white, minimum size=6pt](0,0) circle (6pt);
        \node at (0,0) {$V$};
    \end{tikzpicture}\,
\eeq
where $V = e^{\epsilon X}$ for $\gamma>0$ and $V = X e^{\epsilon X}$ for $\gamma<0$ with $\tanh{\epsilon}=e^{-2|\gamma|}$. The MPS tensor for this type of states, specializing to $\gamma>0$, can thus be written as 
\beq
    \begin{tikzpicture}[baseline=(current bounding box),scale=1.5]
        \draw (-0.5,-.3) -- (-0.5,1.3);
        \draw (0.5,-.3) -- (0.5,1.3);
        \fill (.5,0) circle (1pt);
        \fill (-.5,0) circle (1pt);
        \fill (.5,1) circle (1pt);
        \fill (-.5,1) circle (1pt);
        \draw (.5,1) -- (1.5,1);
        \draw (-.5,1) -- (-.9,1);
        \filldraw[draw, circle, fill=myGray, minimum size=6pt](1,1) circle (6pt);
        \node at (1,1) {$H$};
        \fill (-.5,0.4) circle (1pt);
        \draw (-.5,.4) -- (-.9,.4);
        \filldraw[draw, circle, fill=white, minimum size=6pt](0,0) circle (.1);
        \draw (.6,0.4) -- (1.5,.4);
        \draw[rounded corners=8pt]
        (0,-.1) -- (0,0.4) -- (0.4,.4);
        \filldraw[draw, circle, fill=myRed, minimum size=6pt](1,0.4) circle (6pt);
        \node at (1,0.4) {$e^{\epsilon X}$};
        \draw (.5,0) -- (-.5,0);
        \node at (-0.5,-.5) {$\ket{+}$};
        \node at (0.5,-.5) {$\ket{\phi}$};
    \end{tikzpicture}\ .
\eeq
After this rewriting, it turns out that all Pauli defects are right-pushable. Therefore, this subset of states is fusion measurement preparable.

For the other simplified subset of states corresponding to $\gamma=0$,  we have $\kappa=0$, $a^2=d^2$, and $b^2=c^2$. These states are given by 
\beq
    \ket{\Psi_{\gamma'}} = e^{\gamma'\sum_i Z_{2i}Z_{2i+1}}\Big(\prod_{i}CZ_{2i,2i+1}\Big) \ket{\psi}^{\otimes N},
\eeq
where $\ket{\psi}=(I+X\otimes X)\ket{0}\otimes\ket{\phi}$ and $\ket{\phi}$ is an arbitrary real single-qubit state. We can write the MPS tensor as
\beq
    \begin{tikzpicture}[baseline=(current bounding box),scale=1.5]
        \draw (-0.2,0.1) -- (-0.2,1.3);
        \draw (0.1,0.1) -- (0.1,1.3);
        \fill (.1,1) circle (1pt);
        \fill (-.2,1) circle (1pt);
        \draw (.1,1) -- (1,1);
        \draw (-.2,1) -- (-.6,1);
        \filldraw[draw, circle, fill=myGray, minimum size=6pt](.55,1) circle (6pt);
        \node at (.55,1) {$H$};
        \fill (-.2,0.4) circle (1pt);
        \fill (.1,0.4) circle (1pt);
        \draw (-.2,.4) -- (-.6,.4);
        \draw (.1,0.4) -- (1,.4);
        \filldraw[draw, circle, fill=myRed, minimum size=6pt](.55,0.4) circle (6pt);
        \node at (.55,0.4) {$e^{\epsilon X}$};
        \node at (-.05,-.1) {$\ket{\psi}$};
    \end{tikzpicture}\ ,
\eeq
where $\ket{\psi}$ satisfies $\ket{\psi}=X\otimes X\ket{\psi}$. The set of right-pushable defects $\{Z\otimes I, I\otimes Z, X\otimes X\}$ does not form a complete basis. However, the defect $X\otimes I$ is left-pushable, which makes up a complete basis. Therefore, states of this type are also fusion measurement preparable. 

Both types of states are injective and respect the time-reversal symmetry along with a $\mathbb{Z}_2$ internal symmetry: $\prod_{i}X_{2i-1}Z_{2i}$ for $\ket{\Psi_\gamma}$ and $\prod_i Y_i$ for $\ket{\Psi_{\gamma'}}$. By studying symmetry fractionalization, it can be shown that states $\ket{\Psi_{\gamma}}$ and $\ket{\Psi_{\gamma'}}$ are in a trivial and non-trivial $\mathbb{Z}_2\times \mathbb{Z}_2^{\text T}$ SPT phase, respectively. Moreover, if we apply a transversal circuit $\prod_i CZ_{2i-1,2i}$ to $\ket{\Psi_{\gamma'}}$, the resultant state respects time-reversal symmetry and a $\mathbb{Z}_2$ symmetry generated by $\prod_i X_i$, and is in the same SPT phase as the ground state of the cluster Hamiltonian $H=\sum_i Z_{i-1}X_i Z_{i+1}$.

The fact that many states in this class are fusion measurement preparable is not a coincidence. As discussed in Sec.~\ref{subsec:RI-RIRX}, any state in class $RI$--$RIRX$ can be prepared using fusion measurement with non-transversal feedback corrections.

\subsection{Class $RZ$--$\text{FDLU}$: 1 round}
\label{subsec:example_RZ-RIRX}

As discussed in Sec.~\ref{subsec:RZ-RIRX}, the MPS tensor in class $RZ$--$RIRX$ are generally of the form 
\beq
        A=\ \begin{tikzpicture}[baseline=(current bounding box),scale=1.5]
        \draw (-.3,0.4) -- (-.1,.4);
        \draw (.1,0.4) -- (2.4,.4);
        \draw (2.6,0.4) -- (4.8,.4);
        \draw[rounded corners=8pt]
        (0,-.4) -- (0,.8) -- (2.4,.8) -- (2.4, 1.2) -- (4.4, 1.2) -- (4.4, 1.5);
        \draw[rounded corners=8pt]
        (.3,-.4) -- (0.3, 0) -- (2.5,0) -- (2.5,.8) -- (4.7,.8) -- (4.7, 1.5);
        \filldraw[fill = myRed] (.3,.2) rectangle node {} (.8,1); 
        \node at (.55,.8) {$C$};
        \node at (.55,.4) {$X$};
        \filldraw[draw, circle, fill=myBlue, minimum size=6pt](1.2,0.4) circle (6pt);
        \node at (1.2,0.4) {$R^z_{\alpha_{1}}$};
        \filldraw[fill = myBlue] (1.6,.2) rectangle node {} (2.2,1); 
        \node at (1.9,.8) {$C$};
        \node at (1.9,.4) {$R^z_{\alpha_{2}}$};
        \filldraw[fill = myBlue] (2.8,.2) rectangle node {} (3.4,1); 
        \node at (3.1,.8) {$C$};
        \node at (3.1,.4) {$R^z_{\alpha_{3}}$};
        \filldraw[fill = myBlue] (3.6,.2) rectangle node {} (4.2,1.4); 
        \node at (3.9,1.2) {$C$};
        \node at (3.9,.8) {$C$};
        \node at (3.9,.4) {$R^z_{\alpha_{4}}$};
        \node at (0.15,-.6) {$\ket{\psi}$};
        \end{tikzpicture}
\eeq 
for arbitrary angles $\{\alpha_i\}$ and an arbitrary two-qubit state $\ket{\psi}$. Here, we discuss a subset of the states with $\ket{\psi}=\ket{+}\otimes\ket{+}$. In this case, the states are in the 1-round MF-preparable class $RZ$--$\text{FDLU}$, as the associated state $\ket{B}$ has a rank-$1$ transfer matrix.

When we further take $\alpha_1=\alpha$ to be the only non-zero angle, the qubits on the even sublattice are disentangled. Discarding these qubits, we can write the state as
\beq
    \ket{\Psi} =\sum_{\{\sigma_j\} } \text{Tr}\left[\prod_j X^{\sigma_j}e^{i\frac{\alpha}{2} Z} \right] \ket{\{\sigma_j\}},
\eeq
with $\sigma_i=0,1$. When $\alpha=0$, this state is simply the $GHZ$ state. For generic values of $\alpha$, the sequential circuit in this case is the cosine symmetry conjugated by Hadamard, see Sec.~\ref{subsec:RZ-RIRX} and Appendix \ref{app:cosine}. As a result, this state is given by
\beq
    \ket{\Psi}&=\Big(\prod_j H_j\Big)\text{KW}^{\dagger}  e^{\sum_j\frac{i\alpha Z_j}{2}} \text{KW}\Big(\prod_j H_j\Big)\ket{+}^{\otimes N}\\
    &=\Big(\prod_j H_j\Big)\text{KW}^{\dagger}  e^{\sum_j\frac{i\alpha Z_j}{2}}\ket{+}^{\otimes N}.
    \label{eq:121}
\eeq
When a state is given by KW acting on a product state, its MPS tensor has all Pauli defects pushable, and therefore the state in Eq.~\eqref{eq:121} is preparable using fusion measurement. When $\alpha=0$ or $\pi$, this state is essentially some GHZ state. For other values of $\alpha$, this state is injective. 

Alternatively, we take $\alpha_4=\alpha$ to be the only non-zero angle, and $\ket{\psi}=\ket{+}\otimes\ket{+}$. The target state becomes 
\beq
    \ket{\Psi'} =\sum_{\{\sigma_j\} } \text{Tr}\left[\prod_j X^{\sigma_{2j-1}}(e^{i\frac{\alpha}{2} Z})^{\sigma_{2j-1}\sigma_{2j}} \right] \ket{\{\sigma_{j}\}}.
\eeq
This state seems to be beyond fusion measurement, as we cannot find a complete orthonormal set of pushable defects.

\subsection{Class $RIRX$--$RZRX$: 2 rounds}
\label{subsec:RIRX-RZRX}
\subsubsection{$\mathbb{Z}_2$ non-onsite symmetry breaking}

On a 1d qubit chain with periodic boundary conditions, we consider the following state
\begin{equation}
\ket{\Psi_{+c}} = \ket{+}^{\otimes N}+\ket{\text{cluster}},
\end{equation}
where $\ket{\text{cluster}}$ is the 1d cluster state defined as $\prod_i CZ_{i,i+1}  \ket{+}^{\otimes N}$. $\ket{\Psi_{+c}}$ possesses a $\mathbb{Z}_2$ intrinsically non-onsite symmetry $\prod_i CZ_{i,i+1}$, which is spontaneously broken and leads to long-range correlations of non-invertible charged operators, as discussed in Ref.~\cite{zhang2024long}.\footnote{This symmetry is intrinsically non-onsite because it cannot be brought onsite via any finite-depth local unitary circuit, in the absence of ancilla degree of freedoms.} Such a symmetry-breaking pattern can arise in the two-dimensional ground-state subspace of a gapped local Hamiltonian that exhibits the coexistence between two distinct SPT orders protected by the $\mathbb{Z}_2\times \mathbb{Z}_2$ symmetries. These symmetries are generated by $\prod_{i\in \text{odd}} X_i$ and $\prod_{i\in \text{even}} X_i$, with the representative fixed-point states being $\ket{+}^{\otimes N}$ and $\ket{\text{cluster}}$. Specifically, Ref.~\cite{zhang2024long} constructs such a parent Hamiltonian: 

\begin{equation}
\begin{split}
      H&=  \sum_i  (1-X_i)  (1- Z_{i+1}X_{i+2}Z_{i+3})\\
      &+\sum_i(1- Z_{i}X_{i+1}Z_{i+2}) (1-X_{i+3}).
      \end{split}
\end{equation}
There are two exact ground states $\ket{+}^{\otimes N}$ and $\ket{\text{cluster}}$ even for finite system sizes, with a finite-energy gap above the ground-state subspace. These results have been analytically proven and numerically confirmed~\cite{zhang2024long}. 

A 2-round MF preparation of this state is proposed in Ref.~\cite{zhang2024long} using an ancillary GHZ state in the intermediate step. Ref.~\cite{stephen2024preparing} proposes a 1-round split-index MPS (SIMPS) fusion measurement protocol, which uses a non-transversal feedback correction. Here we provide another MF preparation protocol, which enables us to put this state in class $RIRX$--$RZRX$. We take advantage of the fact that this state is related to the example in Eq.~\eqref{eq:example_+G} as follows:
\beq
    \ket{\Psi_{+c}} = \text{KW}_o \Big(\prod_i CX_{2i-1,2i}\Big) \ket{\Psi_{+G}},
    \label{eq:KW_o_CX}
\eeq
where $\text{KW}_o$ is the Kramers-Wannier duality on the odd sublattice. To see this, we first notice that $\ket{\Psi_{+G}}$ is a equal weight superposition of two states, respectively stabilized by $X_{2i}$, $Z_{2i-1}Z_{2i+1}$, and $X_{2i-1}$, $Z_{2i}Z_{2i+2}$. After the transversal $CX$ and $\text{KW}_o$, the stabilizers are mapped to
\beq
    &X_{2i}\rightarrow X_{2i}\rightarrow X_{2i},\\
    &Z_{2i-1}Z_{2i+1}\rightarrow Z_{2i-1}Z_{2i+1}\rightarrow X_{2i+1},\\
    &X_{2i-1}\rightarrow X_{2i-1}X_{2i}\rightarrow Z_{2i-1}X_{2i}Z_{2i+1},\\
    &Z_{2i}Z_{2i+2}\rightarrow Z_{2i-1}Z_{2i}Z_{2i+1}Z_{2i+2}\rightarrow Z_{2i}X_{2i+1}Z_{2i+2},
\eeq
which suggests that $\text{KW}_o \big(\prod_i CX_{2i-1,2i}\big) \ket{\Psi_{+G}}$ is a equal weight superposition  of the $\ket{+}^{\otimes N}$ state, and the cluster state. The MPS tensor for $\ket{\Psi_{+c}}$ is given by:
\beq
    A_{+c} = \quad \begin{tikzpicture}[baseline=(current bounding box),scale=1.5]
        \draw (.1,0) --  (3.8,0);
        \draw (-.1,0) --  (-.8,0);
        \draw (-.8,1.2) -- (1.3,1.2);
        \draw (1.5,1.2) -- (1.7,1.2);
        \draw (1.9,1.2) -- (3.8,1.2);
        \draw[rounded corners=8pt]
        (-.5,0) -- (-.5,.8) -- (1.4,.8) -- (1.4,2.6) -- (3.3,2.6);
        \draw (3.5,2.6) -- (3.8,2.6);
        \fill (-.5,0) circle (1pt);
        \draw[rounded corners=8pt]
        (-.8,3) -- (2.9,3) -- (2.9,3.8);
        \draw[rounded corners=8pt]
        (0,-0.3) -- (0,.4) -- (1.8,.4) -- (1.8,2.2) -- (3.4,2.2) -- (3.4, 3.8);
        \filldraw[fill = white] (.3,.2) rectangle node {} (1.1,1.4);
        \node at (.7,1.2) {$C$};
        \node at (.7,.6) {SWAP};
        \node at (0,-.5) {$\ket{+}$};
        \filldraw[fill = myRed] (1.2,1.5) rectangle node {$CX$} (2,1.9);
        \filldraw[fill = myBlue] (2.1,2.4) rectangle node {} (2.5,3.2);
        \node at (2.3,3) {$C$};
        \node at (2.3,2.6) {$Z$};
        \filldraw[draw, circle, fill=myGray, minimum size=6pt](2.9,2.6) circle (6pt);
        \node at (2.9,2.6) {$H$};
    \end{tikzpicture}\ .
    \label{eq:A_+c}
\eeq
Both Pauli $Z$ and $X$ on the top virtual bond are pushable, corresponding to the pushing relation $RIRX$. After the projection, the associated MPS $\ket{B}$ allows the Pauli defects on the bottom two virtual bonds to be pushable. Similar to the situation in Eq.~\eqref{eq:+_GHZ}, the pushing relation for $\ket{B}$ is $RZRX$. As a result, we see that $\ket{\Psi_{+c}}$ is in class $RIRX$--$RZRX$, which is 2-round MF-preparable. 

We notice that the feedback correction for the fusion measurement preparation of $\ket{\Psi_{+c}}$ uses only on-site gates $X_{2i-1}X_{2i}$ and $\text{SWAP}_{2i-1,2i}$. Under the action of $\text{KW}_o (\prod_i CX_{2i-1,2i})$, these two gates are mapped to $Z_{2i-1}Z_{2i+1}$ and $CZ_{2i,2i-1}CZ_{2i,2i+1}$, which are not supported on a single site anymore. We can thus postpone this feedback correction after implementing the $\text{KW}_o (\prod_i CX_{2i-1,2i})$ in Eq.~\eqref{eq:KW_o_CX}, substituting these two on-site gates in the correction circuit by $Z_{2i-1}Z_{2i+1}$ and $CZ_{2i,2i-1}CZ_{2i,2i+1}$ gates. 

Therefore, the two rounds of measurements can be combined together, with a feedback correction using non-onsite gates. This example is a demonstration that the measurement rounds in MF circuits can be reduced when allowing non-transversal feedback correction.

\section{Discussion and outlook}\label{sec:outlook}

In this work, we have presented a framework for state preparation with local measurements and feedback correction. We have focused on transversal feedback corrections that push error defects associated with unwanted measurement outcomes. Within this framework, we can classify preparable states into distinct classes based on the sets of pushable defects and their pushing relations. It further gives a natural preparation circuit for each class and enables the study of their state structures. Moreover, this approach reveals a connection between certain non-invertible symmetries and the structure of certain classes of preparable states.  

For the classes where all Pauli defects are pushed to other Pauli defects, our preparation scheme gives rise to an MF circuit equivalent to the fusion measurement protocol. We show that these classes of states are related to a class of non-invertible dualities described by the Tambara-Yamagami categories. Furthermore, we have discussed other classes of states that are preparable using 1- and 2-round MF circuits beyond the scope of fusion measurement. For example, the states in the $RI$ class are described by a non-unitary evolution of some initial product states. The states in the $RZ$--$RIRX$ class are related to an MPO that generalizes the non-invertible cosine symmetries. Finally, we have demonstrated that states prepared using MF circuits composed of FDLU and non-invertible symmetries, e.g., KW and KT dualities, also fit naturally into our classification framework.

Our framework naturally motivates several interesting questions and future directions listed below.

\begin{enumerate}
    \item Focusing on the state preparation with one round of MF, in addition to the fusion measurement protocol, we have studied the class $RI$--$RIRX$, $RZ$--$\text{FDLU}$, $RIRI$--$RIRX$, etc. The states we studied either differ from the fusion measurement preparable state via an FDLU, or are preparable via implementing Kramers-Wannier or Kennedy-Tasaki dualities along with FDLU circuits. To classify all 1-round MF-preparable states and see whether they are always fusion measurement preparable with non-transversal feedback corrections, we need to study the general state structure in classes such as $RIRI$--$RIRX$--$RIRI$.

    \item In this work, we have only explored the classes of states with Pauli pushable defects. In general, as long as the pushable defects form a complete orthonormal basis, the target state can still be prepared using fusion measurement. For pushable defects that cannot be brought to Pauli operators by a gauge transformation, the number of required post-selections does not always remain finite as $N\rightarrow\infty$ for states with PBC. Little is known about the preparation of states with non-Pauli pushable defects, especially those forming an incomplete basis of defects. It would be interesting to explore examples of this kind.

    \item For target MPSs defined on OBC, we have established in Theorem \ref{theorem:preparation} that the preparation circuit for an OLMF-preparable target state can always be constructed under our scheme. This result leads to a no-go theorem: an MPS with OBC cannot be prepared with left-conditioned feedback if it does not admit a set of pushable defects. Despite this seemingly powerful result, we caution that the practical utility may be limited due to the following reason. A target state may admit many distinct MPS representations with OBC. Only when we show that any MPS representation with OBC does not allow for pushable defects, can we then prove that such a state cannot be prepared by left-conditioned feedback. This difficulty can be circumvented if there exists an analogue of the fundamental theorem for MPSs with OBC, so that the lack of a set of pushable defects for a given MPS representation immediately implies the same for all possible MPS representations. 

    \item A closely related notion is {\it short-range nonstabilizerness (magic)}, which characterizes whether a state can be related to $\ket{0}^{\otimes N}$ by a Clifford circuit followed by a constant-depth or shallow-depth local circuit~\cite{wei2025long,korbany2025long,li2026explicit}. Using the term in Ref.~\cite{parham2025quantum}, the short-range magic states can be prepared using $\mathsf{A}_1\mathsf{CQ}$, where $\mathsf Q$ denotes a finite-depth circuit with $k$-local gates and $\mathsf C$ denotes any Clifford gate which could be non-local. The number $1$ in the subscript denotes that there is only one round of Clifford. It has been shown that the state preparation power of $r$-round MF-circuits with $k$-local unitary gates is equivalent to the circuit with $r$ rounds of fan-out gate, or equivalent to $\mathsf{A}_r\mathsf{QC\cdots Q}$ with $r$ non-local Clifford gates~\cite{Buhrman2024statepreparation,parham2025quantum}. Therefore, our preparation scheme naturally offers a tool to explore the fan-out complexity or magic hierarchy of a state. Taking $r=1$, it is still an open question whether 1-round MF circuits with only geometrically local gates have equivalent state preparation power as $\mathsf{A}_1\mathsf{QCQ}$ circuits. It would be interesting to find a counter-example, i.e., a state prepared from $\mathsf{A}_1\mathsf{QCQ}$ but is not 1-round MF-preparable. 

    \item The MF-preparation of states in the spatial dimension greater than one is relatively less known. Some examples that can be prepared using fusion measurement protocols are studied~\cite{lu2022measurement,sahay2024finite,guo2026measurement}. It would be interesting to generalize our defect approaches to higher dimensions and construct MF-preparation circuits for tensor network states, especially for the ones that only have an incomplete set of pushable defects.

\end{enumerate}

\begin{acknowledgments} 
    We would like to thank Mikhail Litvinov, Zhi-Yuan Wei, David Stephen, Shuyu Zhang, Zijian Song, Rahul Sahay, Ruben Verresen, Henry Yuen, Natalie Parham, and Tyler Ellison for helpful discussions. T.-C.L acknowledges the support of the RQS postdoctoral fellowship through the National Science Foundation (QLCI grant OMA-2120757). This work was supported by the U.S. National Science Foundation under Grant No. NSF DMR-2316598 (YL and AM). 
\end{acknowledgments}

\bibliography{ref}

\appendix

\onecolumngrid
\newpage

\section{$RIRI$--yyyy class and associated state}
\label{app:RIRI-yyyy}
In this Appendix, we discuss state $\ket{A}$ when it is obtained from a state $\ket{D}$ by a depth-2 brick wall circuit. First, we will discuss the relation between state $\ket{D}$ and the associated state $\ket{B}$. Second, we will discuss how the MPS tensor $A$ has all unitary defects on the top virtual bond right-pushable.

Given the following MPS tensor $A$ :
\beq
    A = \quad \begin{tikzpicture}[baseline=(current bounding box),scale=1.5]
        \draw[rounded corners=8pt]
        (-.7,2.2) -- (.6,2.2) -- (.6,3);
        \draw[rounded corners=8pt]
        (-.7,1.2) -- (2.4,1.2) -- (2.4,2.2) -- (1,2.2) -- (1,3);
        \draw[rounded corners=8pt]
        (3.2,1.2) -- (2.8,1.2) -- (2.8,2.2) -- (3.2,2.2);
        \filldraw[fill = white] (2.2,1.6) rectangle node {$U$} (3,2);
        \filldraw[fill = white] (.4,2.4) rectangle node {$V$} (1.2,2.8);
        \filldraw[fill = white] (2.2,1) rectangle node {$D$} (3,1.4);
    \end{tikzpicture}\quad ,
\eeq
all the unitary defects in the top left virtual bond can be right-pushed to $I$. Therefore, the state $\ket{A}$ is in class $RIRI$--$yyyy$, where $yyyy$ denotes the class of pushing relation for the associated state $\ket{B}$ defined from Eq.~\eqref{eq:B_definition}. The transfer matrix of $\ket{B}$ is given as in Eq.~\eqref{eq:T_B}:
\beq
    T_B = \quad \begin{tikzpicture}[baseline=(current bounding box),scale=1.5]
        \draw[rounded corners=8pt]
        (-.7,2.2) -- (.6,2.2) -- (.6,3);
        \draw[rounded corners=8pt]
        (-.7,1.2) -- (2.4,1.2) -- (2.4,2.2) -- (1,2.2) -- (1,3);
        \draw[rounded corners=8pt]
        (4,1.2) -- (2.8,1.2) -- (2.8,2.2) -- (3.4,2.2) -- (3.4,3.6) -- (2.8,3.6) -- (2.8,4.6) -- (4,4.6);
        \filldraw[fill = white] (2.2,1.6) rectangle node {$U$} (3,2);
        \filldraw[fill = white] (.4,2.4) rectangle node {$V$} (1.2,2.8);
        \filldraw[fill = white] (2.2,1) rectangle node {$D$} (3,1.4);

        \draw[rounded corners=8pt]
        (-.7,3.6) -- (.6,3.6) -- (.6,2.8);
        \draw[rounded corners=8pt]
        (-.7,4.6) -- (2.4,4.6) -- (2.4,3.6) -- (1,3.6) -- (1,2.8);
        \filldraw[fill = white] (2.2,3.8) rectangle node {$\bar{U}$} (3,4.2);
        \filldraw[fill = white] (.4,3) rectangle node {$\bar{V}$} (1.2,3.4);
        \filldraw[fill = white] (2.2,4.4) rectangle node {$\bar{D}$} (3,4.8);

        \draw[rounded corners=8pt]
        (4,3.6) -- (3.6,3.6) -- (3.6,2.2) -- (4,2.2);
    \end{tikzpicture}\quad = \quad \begin{tikzpicture}[baseline=(current bounding box),scale=1.5]
        \draw[rounded corners=8pt]
        (1.2,2.2) -- (2,2.2) -- (2,3.6) -- (1.2,3.6);
        \draw[rounded corners=8pt]
        (1.2,1.2) -- (2.4,1.2) -- (2.4,4.6) -- (1.2,4.6);
        \draw[rounded corners=8pt]
        (4,1.2) -- (2.8,1.2) -- (2.8,3.6) -- (2.8,4.6) -- (4,4.6);
        \filldraw[fill = white] (2.2,1) rectangle node {$D$} (3,1.4);
        \filldraw[fill = white] (2.2,4.4) rectangle node {$\bar{D}$} (3,4.8);

        \draw[rounded corners=8pt]
        (4,3.6) -- (3.2,3.6) -- (3.2,2.2) -- (4,2.2);
    \end{tikzpicture} .
\eeq
In the above, we have used the following identity 
\beq
    \begin{tikzpicture}[baseline=(current bounding box),scale=1.5]
        \draw[rounded corners=8pt]
        (-.4,.8) -- (0,.8) -- (0,0) -- (-.4,0);
        \draw[rounded corners=8pt]
        (.6,.8) -- (.2,.8) -- (.2,0) -- (.6,0);
    \end{tikzpicture}\quad = \frac{1}{D}\sum_{p,q=0}^{D-1} \quad
    \begin{tikzpicture}[baseline=(current bounding box),scale=1.5]
        \draw[rounded corners=8pt]
        (-.7,0) -- (.7,0);
        \draw[rounded corners=8pt]
        (-.7,.8) -- (.7,.8);
        \filldraw[draw, circle, fill=myBlue, minimum size=6pt](-.3,.8) circle (6pt);
        \node at (-0.3,.8) {$\bar{Z}^{p}$};
        \filldraw[draw, circle, fill=myRed, minimum size=6pt](.3,.8) circle (6pt);
        \node at (.3,.8) {$\bar{X}^{q}$};
        \filldraw[draw, circle, fill=myBlue, minimum size=6pt](-.3,0) circle (6pt);
        \node at (-0.3,0) {$Z^{p}$};
        \filldraw[draw, circle, fill=myRed, minimum size=6pt](.3,0) circle (6pt);
        \node at (.3,0) {$X^{q}$};
    \end{tikzpicture}\ .
\eeq

Therefore, the representative tensor $B$ is simply given by
\beq
\begin{tikzpicture}[baseline=(current bounding box),scale=1.5]
        \draw[rounded corners=8pt]
        (1.2,1.2) -- (2.4,1.2) -- (2.4,2.2);
        \draw[rounded corners=8pt]
        (4,1.2) -- (2.8,1.2) -- (2.8,2.2);
        \draw[rounded corners=8pt]
        (1.2,1.4) -- (2.2,1.4) -- (2.2,2.2);
        \draw[rounded corners=8pt]
        (4,1.4) -- (3,1.4) -- (3,2.2);
        \filldraw[fill = white] (2,1) rectangle node {$B$} (3.2,1.6);
    \end{tikzpicture}\quad  = \quad \begin{tikzpicture}[baseline=(current bounding box),scale=1.5]
        \draw[rounded corners=8pt]
        (1.2,1.6) -- (1.8,1.6) -- (1.8,2.2);
        \draw[rounded corners=8pt]
        (4,1.6) -- (3.4,1.6) -- (3.4,2.2);
        \draw[rounded corners=8pt]
        (1.2,1.2) -- (2.4,1.2) -- (2.4,2.2);
        \draw[rounded corners=8pt]
        (4,1.2) -- (2.8,1.2) -- (2.8,2.2);
        \filldraw[fill = white] (2,1) rectangle node {$D$} (3.2,1.6);
    \end{tikzpicture}\quad .
\eeq
In other words, the associated state $\ket{B}$ for state $\ket{A}$ in class $RIRI$--$yyyy$ is equivalent to the state generated by the tensor $D$, obtained from a projection onto the complementary virtual space, up to the stacking of another product state.

If we assume that the state $\ket{D}$ is preparable using a fusion measurement protocol, i.e., tensor $D$ has a complete set of pushable defects $\{V_t\}$, then we claim that the state $\ket{A}$ can be prepared using a fusion measurement protocol with non-transversal feedback corrections. To see this, we first note that any unitary defect in the top virtual bond is right-pushable to identity due to the following equation,
\beq
    \begin{tikzpicture}[baseline=(current bounding box),scale=1.5]
        \draw[rounded corners=8pt]
        (-.7,2.2) -- (.6,2.2) -- (.6,3);
        \draw[rounded corners=8pt]
        (-.7,1.2) -- (2.4,1.2) -- (2.4,2.2) -- (1,2.2) -- (1,3);
        \draw[rounded corners=8pt]
        (3.2,1.2) -- (2.8,1.2) -- (2.8,2.2) -- (3.2,2.2);
        \filldraw[fill = white] (2.2,1.6) rectangle node {$U$} (3,2);
        \filldraw[fill = white] (.4,2.4) rectangle node {$V$} (1.2,2.8);
        \filldraw[fill = white] (2.2,1) rectangle node {$D$} (3,1.4);
        \filldraw[draw, circle, fill=white, minimum size=6pt](0,2.2) circle (6pt);
        \node at (0,2.2) {$W$};
        \node at (3.6,1.6) {$=$};
        \draw[rounded corners=8pt]
        (-.7+4.7,2.2) -- (.6+4.7,2.2) -- (.6+4.7,3.6);
        \draw[rounded corners=8pt]
        (-.7+4.7,1.2) -- (2.4+4.7,1.2) -- (2.4+4.7,2.2) -- (1+4.7,2.2) -- (1+4.7,3.6);
        \draw[rounded corners=8pt]
        (3.2+4.7,1.2) -- (2.8+4.7,1.2) -- (2.8+4.7,2.2) -- (3.2+4.7,2.2);
        \filldraw[fill = white] (2.2+4.7,1.6) rectangle node {$U$} (3+4.7,2);
        \filldraw[fill = white] (.4+4.7,2.4) rectangle node {$V$} (1.2+4.7,2.8);
        \filldraw[fill = white] (.3+4.7,3) rectangle node {$VW^T V^{\dagger}$} (1.3+4.7,3.4);
        \filldraw[fill = white] (2.2+4.7,1) rectangle node {$D$} (3+4.7,1.4);
    \end{tikzpicture}\quad .
\eeq
Furthermore, a unitary $U_t$ on the physical bond of tensor $D$ that pushes $V_t$ on the virtual bond of $D$ is conjugated by the brick wall circuit into a two-site unitary $U'_t$:
\beq
    \begin{tikzpicture}[baseline=(current bounding box),scale=1.5]
        \draw[rounded corners=8pt]
        (.1,2.2) -- (.6,2.2) -- (.6,3);
        \draw[rounded corners=8pt]
        (2.4,.8) -- (2.4,2.2) -- (1,2.2) -- (1,3);
        \draw[rounded corners=8pt]
        (2.8,.8) -- (2.8,2.2) -- (3.6,2.2) -- (3.6,3);
        \draw[rounded corners=8pt]
        (4.4,2.2) -- (4,2.2) -- (4,3);
        \filldraw[fill = white] (2.2,1.6) rectangle node {$U$} (3,2);
        \filldraw[fill = white] (.4,2.4) rectangle node {$V$} (1.2,2.8);
        \filldraw[fill = white] (3.4,2.4) rectangle node {$V$} (4.2,2.8);
        \filldraw[fill = white] (2.2,1) rectangle node {$U_t$} (3,1.4);
    \end{tikzpicture}\quad =\quad
    \begin{tikzpicture}[baseline=(current bounding box),scale=1.5]
        \draw[rounded corners=8pt]
        (.1+5.2,2.2) -- (.6+5.2,2.2) -- (.6+5.2,3.6);
        \draw[rounded corners=8pt]
        (2.4+5.2,1.4) -- (2.4+5.2,2.2) -- (1+5.2,2.2) -- (1+5.2,3.6);
        \draw[rounded corners=8pt]
        (2.8+5.2,1.4) -- (2.8+5.2,2.2) -- (3.6+5.2,2.2) -- (3.6+5.2,3.6);
        \draw[rounded corners=8pt]
        (4.4+5.2,2.2) -- (4+5.2,2.2) -- (4+5.2,3.6);
        \filldraw[fill = white] (2.2+5.2,1.6) rectangle node {$U$} (3+5.2,2);
        \filldraw[fill = white] (.4+5.2,2.4) rectangle node {$V$} (1.2+5.2,2.8);
        \filldraw[fill = white] (3.4+5.2,2.4) rectangle node {$V$} (4.2+5.2,2.8);
        \filldraw[fill = white] (.4+5.2,3) rectangle node {$U'_t$} (4.2+5.2,3.4);
    \end{tikzpicture}\quad .
\eeq
Therefore, the state $\ket{D}$ is fusion measurement preparable, as long as we use unitary operators $\{VW^TV^{\dagger}\}$ and $\{U'_t\}$ for the feedback corrections,  which are generally non-transversal.

\section{Success probability}
\label{app:success}
In this Appendix, we show that the success probability of preparing a target state $\ket{A}$ with PBC (periodic boundary condition)  from the associated state $\ket{B}$ is constant, as long as the pushable defects are Pauli and are also pushed to Pauli defects.

Recall that the local measurements are on the ancillary physical bonds of $B$ tensors in the computational basis. Without loss of generality, we assume that at each site there are four measurement outcomes $t=0,1,2,3$ respectively corresponding to the $V_t=I,X,Y,Z$ as follows:
\begin{equation}
\begin{aligned}
\begin{tikzpicture}[baseline=(current bounding box),scale=1.5]
        \draw (2,.3) -- (2,.7);
        \filldraw[fill = white] (-.3,-.3) rectangle node {$B$} (.3,.3);
        \draw (0.7,0) --  (0.3,0);
        \draw (-0.3,0) --  (-0.7,0);
        \node at (1,0) {$=$};
        \draw (-.2,.3) -- (-.2,.7);
        \draw (+.2,.3) -- (+.2,.7);
        \node at (+.2,.9) {$t$};
        \filldraw[fill = white] (1.7,-.3) rectangle node {$A$} (2.3,.3);
        \draw (3.5,0) --  (2.3,0);
        \draw (1.7,0) --  (1.3,0);
        \filldraw[draw, circle, fill=white, minimum size=6pt](2.9,0) circle (6pt);
        \node at (2.9,0) {$V_t$};
        \node at (3.6,0) {.};
\end{tikzpicture}
\end{aligned}  
\end{equation}  

First, we claim that the probability of outcome $t_i$ at site $i$ is independent of the outcome $t_j$ at site $j$. Let us take $j=i+1$ for demonstration. After measuring site $i$ and obtaining outcome $t_i$ and before measuring site $i+1$, the state locally at these two sites is the following:
\begin{equation}
\begin{aligned}
\begin{tikzpicture}[baseline=(current bounding box),scale=1.5]
        \draw (2,.3) -- (2,.7);
        \filldraw[fill = white] (1.7,-.3) rectangle node {$A$} (2.3,.3);
        \draw (4.5,0) --  (2.3,0);
        \draw (1.7,0) --  (1.3,0);
        \filldraw[draw, circle, fill=white, minimum size=6pt](2.9,0) circle (6pt);
        \node at (2.9,0) {$V_{t_i}$};
        \filldraw[fill = white] (3.5,-.3) rectangle node {$B$} (4.1,.3);
        \draw (3.6,.3) -- (3.6,.7);
        \draw (4,.3) -- (4,.7);
        \node at (4.6,0) {.};
\end{tikzpicture}
\end{aligned}  
\end{equation}
Assuming defect $V_{t_i}$ to be right-pushable to $V'_{t_i}$ by a physical unitary $U_{t_i}$, we have
\begin{equation}
\begin{aligned}
\begin{tikzpicture}[baseline=(current bounding box),scale=1.5]
        \draw (2,.3) -- (2,.7);
        \filldraw[fill = white] (1.7,-.3) rectangle node {$A$} (2.3,.3);
        \draw (4.5,0) --  (2.3,0);
        \draw (1.7,0) --  (1.3,0);
        \filldraw[draw, circle, fill=white, minimum size=6pt](2.9,0) circle (6pt);
        \node at (2.9,0) {$V_{t_i}$};
        \filldraw[fill = white] (3.5,-.3) rectangle node {$B$} (4.1,.3);
        \draw (3.6,.3) -- (3.6,.7);
        \draw (4,.3) -- (4,.7);
        \node at (4.8,0) {$=$};
        \draw (5.1,0) -- (8.3,0);
        \filldraw[fill = white] (5.5,-.3) rectangle node {$A$} (6.1,.3);
        \draw (5.8,.3) -- (5.8,1);
        \filldraw[fill = white] (6.4,-.3) rectangle node {$B$} (7.2,.3);
        \draw (6.5,.3) -- (6.5,1);
        \draw (7.1,.3) -- (7.1,1);
        \filldraw[draw, circle, fill=white, minimum size=6pt](7.7,0) circle (6pt);
        \node at (7.7,0) {$V'_{t_i}$};
        \filldraw[draw, circle, fill=white, minimum size=6pt](6.5,.65) circle (6pt);
        \node at (6.5,.65) {$U_{t_i}$};
        \filldraw[draw, circle, fill=white, minimum size=6pt](7.1,.65) circle (6pt);
        \node at (7.1,.65) {$W_{t_i}$};
        \node at (8.4,0) {,};
\end{tikzpicture}
\end{aligned}  
\end{equation}
where $W_{t_i} = \sum_t c_t\ket{t}\bra{t}$ and the factors $c_t$ are given by the commutation between $V_{t_i}'$ and $V_t$. Now we measure the right-most physical bond at site $i+1$ on the computational basis. Since $W_{t_i}$ gate is a phase gate, it commutes with the measurement in the computational basis. Consequently, the outcome $t_i$ does not affect the measurement at site $i+1$, i.e., $p_{t_{i}}$ and $p_{t_{j}}$ are two independent probabilistic distributions. Furthermore, as we consider translation-invariant states, the distributions at different sites are identical, which we denote as $p(t)$.

Then the success probability to prepare $\ket{A}$ corresponds to the probability for the $N$ samples $\{t_i\}$ to yield the defect-less outcome after pushing to a single virtual bond. This probability satisfies $P_N \to \text{const}$ as $N \to \infty$. For example, $P_N \to \frac{1}{4}$ for $D = 2$ with pushing relations $X \mapsto X$, $Z \mapsto Z$ when all four outcomes occur.

\section{Coarse-graining pushing relations}
\label{app:coarse_grain}
In this Appendix, we discuss the pushing relations when all the generalized Pauli defects in the $D$-dimensional virtual space are pushable to other Pauli defects. As mentioned in Sec.~\ref{subsec:right_pushable}, the generalized Pauli-$X$ and $Z$ matrices generate the Weyl-Heisenberg group modulo $D$. The generalized Pauli defects form an Abelian $G=\mathbb{Z}_D\oplus \mathbb{Z}_D$ group since they are defined up to a phase. We use a pair $(a,b)=``Z^a X^b"$ for $a,b=0,\cdots, D-1$ to denote $G$ group elements. 

\subsubsection{Right-pushing}
Let us first consider the cases when all the generalized Pauli defects are right-pushable.
A right-pushing relation for these defects defines an endomorphism $\Phi$ generated by $X\mapsto P_X$ and $Z\mapsto P_Z$:
\beq
    \Phi: (1,0) \mapsto (a_x, b_x),\quad (0,1)\mapsto (a_z, b_z).
\eeq

When iterating this endomorphism, the kernels form a sequence satisfying $\operatorname{ker}(\Phi^n)\subseteq \operatorname{ker}(\Phi^{n+1})$. After a finite number of iterations, this sequence eventually stabilizes to a subgroup of $G$ denoted by $K_{\Phi}$. By definition, any element in $K_{\Phi}$ will be pushed to $(0,0)$, any element not in $K_{\Phi}$ will never be pushed to $(0,0)$, i.e., the identity defect. When $K_{\Phi} = G$, there is a finite number $p$ such that $\Phi^p$ maps all elements to the identity. Correspondingly, after a blocking of $p$ sites, the transfer matrix has pushing relations $Z\mapsto I$ and $X\mapsto I$. When the $K_{\Phi} = \{e\}$, the map $\Phi$ is an isomorphism of $G$. Therefore, there is another number $p'$ such that $\Phi^{p'}$ is the identity map on $G$. Correspondingly, after a blocking of $p'$ sites, the transfer matrix has pushing relations $Z\mapsto Z$ and $X\mapsto X$. When $K_{\Phi} = \mathbb{Z}_D = \langle (1,0) \rangle$, $\Phi$ is an isomorphism of $\mathbb{Z}_D = \langle (0,1) \rangle$. There is a number $p''$ such that $\Phi^{p''}$ maps $(a,b)$ to $(0,b)$. Correspondingly, after a blocking of $p''$ sites, the transfer matrix has pushing relations $Z\mapsto I$ and $X\mapsto X$.

In general, we claim that $G = K_{\Phi}\oplus I_{\Phi}$ for some subgroup $I_{\Phi}$. To show this, we pick an integer $n$ such that $K_{\Phi}=\operatorname{ker}(\Phi^n) = \operatorname{ker}(\Phi^{n+1})$. Then we have $|\operatorname{Im}(\Phi^n)|=|\operatorname{Im}(\Phi^{n+1})|$, and therefore $\operatorname{Im}(\Phi^n)=\operatorname{Im}(\Phi^{n+1})$ because the latter is by definition contained in the former. We denote it by $I_{\Phi}$. The map $\Phi^n$ defines a bijection on $I_{\Phi}$. On the one hand, $I_{\Phi}\cap K_{\Phi} = \{(0,0)\}$ because if there is any other element in this overlap, then the kernel of $\Phi^n$ is not stabilized. On the other hand, any element $(a,b)\in G$ is given by the sum of an element in $I_{\Phi}$ and another element in $K_{\Phi}$. We can construct this sum as follows. Take $\Phi^{2n}((a,b)) = (a',b')\in I_{\Phi}$, then there exists another element $(a'',b'')\in I_{\Phi}$ such that $\Phi^n(a'',b'') = (a',b')$ since $\Phi^n$ is a bijection in $I_{\Phi}$. Then the decomposition is given by $(a,b) = (a'',b'')+ (a-a'',b-b'')$, where $(a-a'',b-b'')$ is in the kernel of $\Phi^n$. From the above results, we conclude that $G = K_{\Phi}\oplus I_{\Phi}$.

Now we know that $\Phi^n$ maps the elements in $K_{\Phi}$ to identity, and is an isomorphism on $I_{\Phi}$. Since $I_{\Phi}$ is a finite group, there exists a finite integer $m$, such that $\Phi^{mn}$ maps the elements in $K_{\Phi}$ to identity, and is an identity map on $I_{\Phi}$. In the remainder of our argument, we will use the standard decomposition of an Abelian group $\mathbb{Z}_D$ into a direct sum of cyclic groups
\beq
    \mathbb{Z}_D = \bigoplus_i \mathbb{Z}_{p_i} =\bigoplus_i \langle n_i \rangle,
    \label{eq:D_decompose}
\eeq
where different $p_i$ are some powers of different prime numbers. For example, when $D=12$ we can decompose $\mathbb{Z}_{12}$ as a direct sum of $\mathbb{Z}_4$ and $\mathbb{Z}_3$, with generators $n_1=3$ and $n_2=4$. Accordingly, we have the following decomposition:
\beq
    G = \mathbb{Z}_D\oplus \mathbb{Z}_D = \bigoplus_{i} \Big(\mathbb{Z}_{p_i}\oplus \mathbb{Z}_{p_i}\Big) = \bigoplus_{i} \Big(\langle (n_i, 0)\oplus\langle(0,n_i)\rangle\Big) .
    \label{eq:G_decompose}
\eeq
In the language of MPS, we are decomposing the $D$-dimensional virtual subspace as in Eq.~\eqref{eq:D_decompose}, such that each subspace is of dimension $p_i$. The subgroup $\mathbb{Z}_{p_i}\oplus\mathbb{Z}_{p_i}$ in Eq.~\eqref{eq:G_decompose} is formed by the Pauli $Z$ and $X$ defects in the $i$-th subspace.

Both $K_{\Phi}$ and $I_{\Phi}$ can also be decomposed into direct sums of cyclic subgroups of prime power orders $\{p_i\}$. When the two cyclic subgroups $\mathbb{Z}_{p_i}$ are both in $K_{\Phi}$ or $I_{\Phi}$, all the Pauli defects in the $i$-th subspace are right-pushed to $I$ or themselves, respectively. Since $G = K_{\Phi}\oplus I_{\Phi}$, 
we also allow a generic decomposition of $\mathbb{Z}_{p_i}\oplus\mathbb{Z}_{p_i}$ for an index $i$, such that one generator is in $K_{\Phi}$ and the other generator is in $I_{\Phi}$. In this case, we can always perform a gauge transformation on the $i$-th virtual subspace, to map the Pauli defect corresponding to the first generator to Pauli $Z$. Then $Z$ is right-pushed to $I$, and consequently $X$ must be right-pushed to itself in this case.

Therefore, we conclude that given a general right-pushing relation for all Pauli defects in $D$-dimensional virtual space, we can decompose the virtual space into three subspaces such that after blocking a finite number of sites and appropriate gauge transformation: in the first subspace we have $Z\mapsto I$ and $X\mapsto I$; in the second subspace we have $Z\mapsto I$ and $X\mapsto X$; in the third subspace we have $Z\mapsto Z$ and $X\mapsto X$. 

\subsubsection{Left and right-pushing}
Now we consider the pushing relations when all Pauli defects are either left or right-pushable to other Pauli defects. We denote by $L \subset G$ the set of Pauli defects that are left-pushable, and denote by $R\subset G$ the set of Pauli defects that are right-pushable. We require that $L\cup R = G$. 

According to Definition \ref{def:pushable}, the left-pushing relation defines an endomorphism $\Phi^L$ on $L$, and the right-pushing relation defines an endomorphism $\Phi^R$ on $R$. There is a stabilized kernel $K_L\subseteq L$ such that all defects in $K_L$ are eventually left-pushed to identity, and there is a stabilized kernel $K_R\subseteq R$ such that all defects in $K_R$ are eventually right-pushed to identity. Following a similar argument as before, there are also stabilized images $I_L$ and $I_R$ such that
\beq
    L = K_L \oplus I_L,\quad R= K_R \oplus I_R.
\eeq
There is a finite number $n$ such that after a blocking of $n$ sites, the defects in $I_L$ are left-pushed to themselves, and the defects in $I_R$ are right-pushed to themselves. By their definitions, the images are identical, i.e. $I_L=I_R$.

Once again, we decompose the virtual space into subspaces of dimension $p_i$, where $p_i$ is a power of some prime. Accordingly, the group of generalized Pauli defects has the decomposition as in Eq.~\eqref{eq:G_decompose}. For each subspace labeled by index $i$, the subgroup $\mathbb{Z}_{p_i}\oplus\mathbb{Z}_{p_i}$ has two generators. When both generators are in $R$, all generalized Pauli defects in this subspace are right-pushable, which is discussed in the previous subsection. The cases when both generators are in $L$ are simply the spatial inversion of the right-pushing cases. The only case when there are intrinsically left- and right-pushing relations is therefore when one generator is in $K_L$ and the other generator is in $K_R$. In this case, we can perform a gauge transformation on this $i$-th subspace such that $Z\mapsto I$ and $I\mapsfrom X$ after coarse-graining.

\section{Implementing KW and KT}
\label{app:implement}

As mentioned in Sec.~\ref{subsec:left_right_pushable}, both the generalized Kramers-Wannier duality (KW) and Kennedy-Tasaki duality (KT) can be implemented using a single round of MF as they are Clifford sequential circuits~\cite{chen2023sequential,zhang2024characterizing}. In the following, we will present their MF implementations. 

We first note the following identity up to overall rescaling factors,
\beq
    \begin{tikzpicture}[baseline=(current bounding box),scale=1.5]
        \draw (-.8,0.4) -- (.8,.4);
        \draw (0,0.8) -- (0,0.4);
        \fill (0,0.4) circle (1pt);
        \node at (0,1) {$\bra{+}$};
        \node at (1,0.4) {$=$};
        \draw (2-.8,0.4) -- (2.8,.4);
        \node at (3,0.4) {$=$};
        \draw (4-.8,0.4) -- (4.8,.4);
        \draw (4,0) -- (4,0.4);
        \fill (4,0.4) circle (1pt);
        \node at (4,-.2) {$\ket{+}$};
    \end{tikzpicture}\ .
\eeq
When we view the vertical upward direction as the time direction, a $\bra{+}$ on top of a bond means a projection onto $\ket{+}$, a $\ket{+}$ below a bond means an initial state at $\ket{+}$. Using this identity, we can rewrite the KW MPO tensor in Eq.~\eqref{eq:KW} as follows:
\beq
        \begin{tikzpicture}[baseline=(current bounding box),scale=1.5]
        \draw[rounded corners=8pt]
        (-.4,.8) -- (1.1,.8) -- (1.1,1.3);
        \draw[rounded corners=8pt]
        (0,-.1) -- (0,.4) -- (1.7,.4);
        \filldraw[fill = myBlue] (.3,.2) rectangle node {} (.8,1);
        \node at (.55,.8) {$C$};
        \node at (.55,.4) {$Z$};
        \filldraw[draw, circle, fill=myGray, minimum size=6pt](1.2,.4) circle (6pt);
        \node at (1.2,.4) {$H^{\dagger}$};
        \end{tikzpicture}\quad =\quad
        \begin{tikzpicture}[baseline=(current bounding box),scale=1.5]
        \draw (-.4,.8) -- (-.1,.8);
        \draw (0,-.1) -- (0,1.3);
        \draw (.1,.8) -- (1.1,.8);
        \fill (1.1,.8) circle (1pt);
        \fill (0,.4) circle (1pt);
        \draw (1.1,-.1) -- (1.1,.3);
        \draw (1.1,.5) -- (1.1,1.3);
        \draw (0,.4) -- (2,.4);
        \filldraw[fill = myBlue] (.3,.2) rectangle node {} (.8,1);
        \node at (.55,.8) {$C$};
        \node at (.55,.4) {$Z$};
        \filldraw[draw, circle, fill=myGray, minimum size=6pt](1.6,.4) circle (6pt);
        \node at (1.6,.4) {$H^{\dagger}$};
        \node at (0,1.5) {$\bra{+}$};
        \node at (1.1,-0.3) {$\ket{+}$};
        \end{tikzpicture}\quad=\quad
        \begin{tikzpicture}[baseline=(current bounding box),scale=1.5]
        \draw (-.4,.6) -- (-.1,.6);
        \draw (0,-.1) -- (0,1.3);
        \draw (.1,.6) -- (1.1,.6);
        \fill (1.1,.6) circle (1pt);
        \fill (0,.4) circle (1pt);
        \draw (1.1,-.1) -- (1.1,.3);
        \draw (1.1,.5) -- (1.1,1.3);
        \draw (0,.4) -- (2,.4);
        \filldraw[fill = myBlue] (-.2,.8) rectangle node {} (1.3,1.2);
        \node at (0,1) {$C$};
        \node at (1.1,1) {$Z$};
        \filldraw[draw, circle, fill=myGray, minimum size=6pt](1.6,.4) circle (6pt);
        \node at (1.6,.4) {$H^{\dagger}$};
        \node at (0,1.5) {$\bra{+}$};
        \node at (1.1,-0.3) {$\ket{+}$};
        \end{tikzpicture}\quad=\quad
        \begin{tikzpicture}[baseline=(current bounding box),scale=1.5]
        \draw (-.4,.4) -- (0,.4);
        \draw (0,-.1) -- (0,1.3);
        \fill (1.1,.4) circle (1pt);
        \fill (0,.4) circle (1pt);
        \draw (1.1,-0.1) -- (1.1,1.3);
        \draw (1.1,.4) -- (2,.4);
        \filldraw[fill = myBlue] (-.2,.7) rectangle node {} (1.3,1.1);
        \node at (0,.9) {$C$};
        \node at (1.1,.9) {$Z$};
        \filldraw[draw, circle, fill=myGray, minimum size=6pt](1.6,.4) circle (6pt);
        \node at (1.6,.4) {$H^{\dagger}$};
        \node at (1.1,1.5) {$\bra{+}$};
        \node at (0,-0.3) {$\ket{+}$};
        \end{tikzpicture}\ .
\eeq

We then combine the above identity with the following identity:
\beq
    \begin{tikzpicture}[baseline=(current bounding box),scale=1.5]
        \draw (.5,0) -- (-.5,0);
        \draw (-0.5,-.3) -- (-0.5,.3);
        \draw (0.5,-.3) -- (0.5,.3);
        \fill (.5,0) circle (1pt);
        \fill (-.5,0) circle (1pt);
        \filldraw[draw, circle, fill=myGray, minimum size=6pt](0,0) circle (6pt);
        \node at (0,0) {$H^{\dagger}$};
    \end{tikzpicture}\quad = \quad 
    \begin{tikzpicture}[baseline=(current bounding box),scale=1.5]
        \draw (-0.3,-.3) -- (-0.3,.3);
        \draw (0.3,-.3) -- (0.3,.3);
        \filldraw[fill = myBlue] (-.4,-.2) rectangle node {$CZ$} (.4,.2);
    \end{tikzpicture}\,
\eeq
It is straightforward to see that the generalized KW can be implemented by introducing an ancilla in the $\ket{+}$ state for each site $i$, applying $\prod_i CZ_{i, i'} CZ_{i',i+1}$, and then projecting the original qudit in the $X=+1$ basis. We use the primed indices to denote the ancillary qudits. When the measurement outcome is imperfect, i.e. $X\neq+1$, the resulting tensor differs from the KW MPO by a Pauli $X$ defect, which is right-pushable via applying $X$ on the physical bond. 

Similarly, we have the following identity for the KT MPO tensor:
\beq
        \begin{tikzpicture}[baseline=(current bounding box),scale=1.5]
        \draw (-.4,0.8) -- (1,.8);
        \draw (1.2,0.8) -- (2.7,.8);
        \draw (2.9,0.8) -- (3.8,.8);
        \draw[rounded corners=8pt]
        (0,-.1) -- (0,.4) -- (1.1,.4) -- (1.1, 1.3);
        \draw[rounded corners=8pt]
        (1.7,-.1) -- (1.7,.4) -- (2.8,.4) -- (2.8, 1.3);
        \filldraw[fill = myRed] (.3,.2) rectangle node {} (.8,1);
        \node at (.55,.8) {$C$};
        \node at (.55,.4) {$X$};
        \filldraw[fill = myRed] (2,.2) rectangle node {} (2.5,1);
        \node at (2.25,.8) {$C$};
        \node at (2.25,.4) {$X$};
        \filldraw[draw, circle, fill=myGray, minimum size=6pt](1.6,.8) circle (6pt);
        \node at (1.6,.8) {$H$};
        \filldraw[draw, circle, fill=myGray, minimum size=6pt](3.3,.8) circle (6pt);
        \node at (3.3,.8) {$H^{\dagger}$};
        \end{tikzpicture}\quad=\quad
        \begin{tikzpicture}[baseline=(current bounding box),scale=1.5]
        \draw (-0.4,0.8) -- (-.1,.8);
        \draw (.1,0.8) -- (1,.8);
        \draw (1.2,0.8) -- (1.6,.8);
        \draw (1.8,0.8) -- (2,.8);
        \draw (2.2,0.8) -- (3.1,.8);
        \draw (3.3,0.8) -- (4.2,.8);
        \draw (0,-.1) -- (0,1.3);
        \draw (1.1,-.1) -- (1.1, 1.3);
        \draw (2.1,-.1) -- (2.1, 1.3);
        \draw (3.2,-.1) -- (3.2, 1.3);
        \fill (0,0.4) circle (1pt);
        \fill (1.1,0.4) circle (1pt);
        \draw (0,.4) -- (1.1,.4);
        \fill (2.1,0.4) circle (1pt);
        \fill (3.2,0.4) circle (1pt);
        \draw (2.1,.4) -- (3.2,.4);
        \filldraw[fill = myRed] (.3,.2) rectangle node {} (.8,1);
        \node at (.55,.8) {$C$};
        \node at (.55,.4) {$X$};
        \filldraw[fill = myRed] (2.4,.2) rectangle node {} (2.9,1);
        \node at (2.65,.8) {$C$};
        \node at (2.65,.4) {$X$};
        \filldraw[draw, circle, fill=myGray, minimum size=6pt](1.6,.8) circle (6pt);
        \node at (1.6,.8) {$H$};
        \filldraw[draw, circle, fill=myGray, minimum size=6pt](3.7,.8) circle (6pt);
        \node at (3.7,.8) {$H^{\dagger}$};
        \node at (1.1,-.3) {$\ket{+}$};
        \node at (3.2,-.3) {$\ket{+}$};
        \node at (0,1.5) {$\bra{+}$};
        \node at (2.1,1.5) {$\bra{+}$};
        \end{tikzpicture}\quad .
\eeq
Therefore, the generalized KT can be implemented by introducing two ancillas for each site in the $\ket{+}$ state, applying $\prod_i C(Z_i Z_{i'}Z^{\dagger}_{i+1} Z^{\dagger}_{i'+1},Z_{\tilde{i}}Z_{\tilde{i}'})$, and then projecting the original qudits in the $X=+1$ basis. We use the tilde index to denote the second qudit at each site, the primed index to denote the ancillary qudit, and $C(P_1, P_2)$ is the controlled gate in terms of Pauli operators $P_1$ and $P_2$. Imperfect measurement outcomes correspond to the pushable Pauli defects.

\section{$RILI$ class state structure}
\label{app:RILI}
We use $RILI$ to denote that $Z$ is right-pushed to $I$ and $X$ is left-pushed to $I$: $Z\mapsto I, I\mapsfrom X$. The transfer matrix that supports this pushing relation satisfies:
\beq
    \big(\bar{Z}\otimes Z\big) T_A = T_A = T_A \big(\bar{X}\otimes X\big).
\eeq
This constrains the parameters in Eq.~\eqref{eq:transfer_matrix_bell} to $b=c=0$. The general form of the transfer matrix is thus:
\beq
    T_A&=\sum_{a,d=0}^{d-1}\lambda_{a,d}\quad \begin{tikzpicture}[baseline=(current bounding box),scale=1.5]
        \draw[rounded corners=8pt]
        (-.8,.8) -- (0,.8) -- (0,0) -- (-.8,0);
        \draw[rounded corners=8pt]
        (1,.8) -- (.2,.8) -- (.2,0) -- (1,0);
        \filldraw[draw, circle, fill=myRed, minimum size=6pt](0.6,0) circle (6pt);
        \node at (0.6,0) {$X^{d}$};
        \filldraw[draw, circle, fill=myBlue, minimum size=6pt](-0.4,0.8) circle (6pt);
        \node at (-0.4,0.8) {$Z^a$};
        \end{tikzpicture}\quad =\sum_{a,d=0}^{d-1}\lambda_{a,d}\quad \begin{tikzpicture}[baseline=(current bounding box),scale=1.5]
        \draw[rounded corners=8pt]
        (-.8,.8) -- (-.4,.8) -- (-.4,0) -- (-.8,0);
        \draw[rounded corners=8pt]
        (1,.8) -- (.2,.8) -- (.2,0) -- (1,0);
        \filldraw[draw, circle, fill=myGray, minimum size=6pt](0.6,0) circle (6pt);
        \node at (0.6,0) {$H$};
        \filldraw[draw, circle, fill=myGray, minimum size=6pt](0.6,0.8) circle (6pt);
        \node at (0.6,0.8) {$\bar{H}$};
        \filldraw[draw, circle, fill=myBlue, minimum size=6pt](-0.4,0.4) circle (6pt);
        \node at (-0.4,0.4) {$Z^a$};
        \filldraw[draw, circle, fill=myBlue, minimum size=6pt](0.2,0.4) circle (6pt);
        \node at (0.2,0.4) {$Z^d$};
        \end{tikzpicture} \quad = \quad \begin{tikzpicture}[baseline=(current bounding box),scale=1.5]
        \draw (-.8,.8) -- (-.2,.8);
        \draw (.2,.8) -- (1,.8);
        \draw (-.8,0) -- (-.2,0);
        \draw (.2,0) -- (1,0);
        \draw (-0.2,-.3) -- (-0.2,1.1);
        \draw (0.2,-.3) -- (0.2,1.1);
        \fill (.2,0) circle (1pt);
        \fill (-.2,0) circle (1pt);
        \fill (-.2,.8) circle (1pt);
        \fill (.2,.8) circle (1pt);
        \filldraw[draw, circle, fill=myGray, minimum size=6pt](0.6,.8) circle (6pt);
        \node at (0.6,.8) {$\bar{H}$};
        \filldraw[draw, circle, fill=myGray, minimum size=6pt](0.6,0) circle (6pt);
        \node at (0.6,0) {$H$};
        \node at (0,-.55) {$\ket{\psi_s}$};
        \node at (0,1.35) {$\bra{\psi_s}$};
        \end{tikzpicture}\ ,
\eeq
where $\ket{\psi_s}=\sum_{p,q} \sqrt{s_{p,q}}\ket{p}\otimes \ket{q}$ with $s_{p,q}\equiv \sum_{a,d=0}^{d-1}\lambda_{a,d} e^{\frac{2\pi i}{d}(pa+qd)}$. Using the identity
\beq
    \begin{tikzpicture}[baseline=(current bounding box),scale=1.5]
        \draw (.5,0) -- (-.5,0);
        \draw (-0.5,-.3) -- (-0.5,.3);
        \draw (0.5,-.3) -- (0.5,.3);
        \fill (.5,0) circle (1pt);
        \fill (-.5,0) circle (1pt);
        \filldraw[draw, circle, fill=myGray, minimum size=6pt](0,0) circle (6pt);
        \node at (0,0) {$H$};
    \end{tikzpicture}\quad = \quad 
    \begin{tikzpicture}[baseline=(current bounding box),scale=1.5]
        \draw (-0.3,-.3) -- (-0.3,.3);
        \draw (0.3,-.3) -- (0.3,.3);
        \filldraw[fill = myBlue] (-.4,-.2) rectangle node {$CZ^{\dagger}$} (.4,.2);
    \end{tikzpicture}\ ,
\eeq
resulting from the definition of Hadamard gate in Eq.~\eqref{eq:Hadamard} and the definition of controlled-$Z$ gate $CZ=\sum_{p,q}\frac{1}{\sqrt{d}}e^{\frac{2\pi i pq}{d}}\ket{p,q}\bra{p,q}$, we can write a representative MPS as
\beq
    \ket{A} = \bigg(\prod_{i} CZ^{\dagger}_{\tilde{i},i+1}\bigg) \ket{\psi_s}^{\otimes N},
\eeq
which is a product state with a transversal (non-overlapping) unitary circuit.

\section{$RZ$--$RILI$ class state structure}
\label{app:RZ-RILI}

We use $RZ$--$RILI$ to denote the class of states that satisfy $Z\mapsto Z$ for $A$ and $Z\mapsto I$, $I\mapsfrom X$ for $B$. The general form of the transfer matrix is 
\beq
        T_A = \quad 
        &\lambda_0\ \begin{tikzpicture}[baseline=(current bounding box),scale=1.5]
        \draw[rounded corners=8pt]
        (-.8,.8) -- (0,.8) -- (0,0) -- (-.8,0);
        \draw[rounded corners=8pt]
        (1,.8) -- (.2,.8) -- (.2,0) -- (1,0);
        \end{tikzpicture}\ + \lambda_1 \
        \begin{tikzpicture}[baseline=(current bounding box),scale=1.5]
        \filldraw[opacity=0, minimum size=6pt](-.4,0) circle (6pt);
        \draw[rounded corners=8pt]
        (-.8,.8) -- (0,.8) -- (0,0) -- (-.8,0);
        \draw[rounded corners=8pt]
        (1,.8) -- (.2,.8) -- (.2,0) -- (1,0);
        \filldraw[draw, circle, fill=myBlue, minimum size=6pt](-0.4,0.8) circle (6pt);
        \node at (-0.4,0.8) {$Z$};
        \end{tikzpicture}\ +\lambda_4\ \begin{tikzpicture}[baseline=(current bounding box),scale=1.5]
        \filldraw[opacity=0, minimum size=6pt](-.4,0.8) circle (6pt);
        \draw[rounded corners=8pt]
        (-.8,.8) -- (0,.8) -- (0,0) -- (-.8,0);
        \draw[rounded corners=8pt]
        (1,.8) -- (.2,.8) -- (.2,0) -- (1,0);
        \filldraw[draw, circle, fill=myRed, minimum size=6pt](0.6,0) circle (6pt);
        \node at (0.6,0) {$X$};
        \filldraw[draw, circle, fill=myRed, minimum size=6pt](-.4,0) circle (6pt);
        \node at (-0.4,0) {$X$};
        \end{tikzpicture}\\[2ex] 
         +\ & i\lambda_5 \
        \begin{tikzpicture}[baseline=(current bounding box),scale=1.5]
        \draw[rounded corners=8pt]
        (-.8,.8) -- (0,.8) -- (0,0) -- (-.8,0);
        \draw[rounded corners=8pt]
        (1,.8) -- (.2,.8) -- (.2,0) -- (1,0);
        \filldraw[draw, circle, fill=myBlue, minimum size=6pt](-0.4,0.8) circle (6pt);
        \node at (-0.4,0.8) {$Z$};
        \filldraw[draw, circle, fill=myRed, minimum size=6pt](0.6,0) circle (6pt);
        \node at (0.6,0) {$X$};
        \filldraw[draw, circle, fill=myRed, minimum size=6pt](-.4,0) circle (6pt);
        \node at (-0.4,0) {$X$};
        \end{tikzpicture}\ + i\lambda_6 \
        \begin{tikzpicture}[baseline=(current bounding box),scale=1.5]
        \draw[rounded corners=8pt]
        (-.8,.8) -- (0,.8) -- (0,0) -- (-.8,0);
        \draw[rounded corners=8pt]
        (1,.8) -- (.2,.8) -- (.2,0) -- (1,0);
        \filldraw[draw, circle, fill=myBlue, minimum size=6pt](0.6,.8) circle (6pt);
        \node at (0.6,.8) {$Z$};
        \filldraw[draw, circle, fill=myRed, minimum size=6pt](0.6,0) circle (6pt);
        \node at (0.6,0) {$X$};
        \filldraw[draw, circle, fill=myRed, minimum size=6pt](-.4,0) circle (6pt);
        \node at (-0.4,0) {$X$};
        \end{tikzpicture}\ + \lambda_7 \
        \begin{tikzpicture}[baseline=(current bounding box),scale=1.5]
        \draw[rounded corners=8pt]
        (-.8,.8) -- (0,.8) -- (0,0) -- (-.8,0);
        \draw[rounded corners=8pt]
        (1,.8) -- (.2,.8) -- (.2,0) -- (1,0);
        \filldraw[draw, circle, fill=myBlue, minimum size=6pt](0.6,.8) circle (6pt);
        \node at (0.6,.8) {$Z$};
        \filldraw[draw, circle, fill=myBlue, minimum size=6pt](-0.4,0.8) circle (6pt);
        \node at (-0.4,0.8) {$Z$};
        \filldraw[draw, circle, fill=myRed, minimum size=6pt](0.6,0) circle (6pt);
        \node at (0.6,0) {$X$};
        \filldraw[draw, circle, fill=myRed, minimum size=6pt](-0.4,0) circle (6pt);
        \node at (-0.4,0) {$X$};
        \end{tikzpicture}\ .
\eeq
Note that this transfer matrix is related to another transfer matrix $T_{A'}$ in class $RI$--$RIRX$ as follows:
\beq
    \begin{tikzpicture}[baseline=(current bounding box),scale=1.5]
        \filldraw[fill = white] (-.3,-.2) rectangle node {$T_{A}$} (.3,1);
        \draw (0.7,0) --  (0.3,0);
        \draw (-0.3,0) --  (-0.7,0);
        \draw (0.7,.8) --  (0.3,.8);
        \draw (-0.3,.8) --  (-0.7,.8);
    \end{tikzpicture}\ =\ \begin{tikzpicture}[baseline=(current bounding box),scale=1.5]
        \filldraw[draw, circle, fill=white, minimum size=6pt](-0.6,0) circle (.1);
        \draw[rounded corners=8pt]
        (-.6,.1) -- (-.6,-.4) -- (0.6,-.4) -- (.6,0);
        \fill (.6,0) circle (1pt);

        \filldraw[draw, circle, fill=white, minimum size=6pt](-0.6,.8) circle (.1);
        \draw[rounded corners=8pt]
        (-.6,.7) -- (-.6,1.2) -- (0.6,1.2) -- (.6,0.8);
        \fill (.6,0.8) circle (1pt);
        
        \filldraw[fill = white] (-.3,-.2) rectangle node {$T_{A'}$} (.3,1);
        \draw (0.9,0) --  (0.3,0);
        \draw (-0.3,0) --  (-0.9,0);
        \draw (0.9,.8) --  (0.3,.8);
        \draw (-0.3,.8) --  (-0.9,.8);
    \end{tikzpicture}\  .
\eeq
The above mapping will be discussed in more detail later around Eq.~\eqref{eq:T_A_RI_app}. The general state structure of $\ket{A'}$ in class $RI$--$RIRX$ is derived in Eq.~\eqref{eq:RI-RIRX}. Therefore, $\ket{A}$ in class $RZ$--$RILI$ should have the same structure with an extra controlled-$X$ between the right and left virtual bonds,
\beq
    \begin{tikzpicture}[baseline=(current bounding box),scale=1.2]
        \draw (0,.3) --  (0,1.3);
        \filldraw[fill = white] (-.3,-.3) 
        rectangle node {$A$} (.3,.3);
        \draw (.5,.8) -- (.5,1.3);
        \node at (.5,.6) {$\ket{+}$};
        \draw (0.9,0) --  (0.3,0);
        \draw (-0.3,0) --  (-0.9,0);
        \node at (1.2,0) {$\cdots$};
        \node at (-1.2,0) {$\cdots$};
    \end{tikzpicture}\ =\ \begin{tikzpicture}[baseline=(current bounding box),scale=1.2]
        \filldraw[draw, circle, fill=white, minimum size=6pt](-0.6,0) circle (.1);
        \draw[rounded corners=8pt]
        (-.6,.1) -- (-.6,-.4) -- (0.8,-.4) -- (.8,0);
        \fill (.8,0) circle (1pt);

        \draw (0,.3) --  (0,1.3);
        \filldraw[fill = white] (-.3,-.3) rectangle node {$A'$} (.3,.3);
        \draw (.5,.8) -- (.5,1.3);
        \node at (.5,.6) {$\ket{+}$};
        \draw (1,0) --  (0.3,0);
        \draw (-0.3,0) --  (-0.9,0);
        \node at (-1.2,0) {$\cdots$};
        \node at (1.3,0) {$\cdots$};
    \end{tikzpicture}\ =\ \begin{tikzpicture}[baseline=(current bounding box),scale=1.2]
        \filldraw[draw, circle, fill=white, minimum size=6pt](-0.6,0) circle (.1);
        \draw[rounded corners=8pt]
        (-.6,.1) -- (-.6,-.4) -- (1,-.4) -- (1,0);
        \fill (1,0) circle (1pt);

        \filldraw[draw, circle, fill=white, minimum size=6pt](1.3,0) circle (.1);
        \draw[rounded corners=8pt]
        (1.3,.1) -- (1.3,-.4) -- (2.9,-.4) -- (2.9,0);
        \fill (2.9,0) circle (1pt);

        \draw (-0.9,0) --  (3.1,0);
        \draw (0,.3) --  (0,.5);
        \draw (0,1.1) --  (0,1.3);
        \draw (0.5,.3) --  (0.5,1.3);
        \filldraw[fill = white] (-.3,-.3) rectangle node {$B'$} (.8,.3);
        \draw (1.9,.3) --  (1.9,1.3);
        \draw (2.4,.3) --  (2.4,.5);
        \draw (2.4,1.1) --  (2.4,1.3);
        \filldraw[fill = white] (1.6,-.3) rectangle node {$B'$} (2.7,.3);

        \draw (0.2,.5) --  (0.2,1.1);
        \draw (-.3,1.1) --  (0.2,1.1);
        \draw (-.3,.5) --  (0.2,.5);
        \node at (0,.8) {$U_Z$};
        \filldraw[fill = white] (.3,.5) rectangle node {} (2.1,1.1);
        \node at (.5,.8) {$C$};
        \node at (1.9,.8) {$U_Z$};
        \draw (2.2,.5) --  (2.2,1.1);
        \draw (2.2,1.1) --  (2.7,1.1);
        \draw (2.2,.5) --  (2.7,.5);
        \node at (2.4,.8) {$C$};
        \node at (-1.2,0) {$\cdots$};
        \node at (3.4,0) {$\cdots$};
    \end{tikzpicture} \ .
\eeq
Combining the following identity 
\beq
    \begin{tikzpicture}[baseline=(current bounding box),scale=1.5]
        \filldraw[draw, circle, fill=white, minimum size=6pt](-0.4,.8) circle (.1);
        \draw[rounded corners=8pt]
        (-.4,.9) -- (-.4,0) -- (1.2,0) -- (1.2,.4);
        \fill (1.2,.4) circle (1pt);
        \draw[rounded corners=8pt]
        (-.8,.8) -- (1,.8) -- (1,1.3);
        \draw[rounded corners=8pt]
        (0,.1) -- (0,.4) -- (1.6,.4);
        \draw (0,-.3) -- (0,-.1);
        \filldraw[fill = myRed] (.3,.2) rectangle node {} (.8,1);
        \node at (.55,.4) {$X$};
        \node at (.55,.8) {$C$};
    \end{tikzpicture}\ = \ \begin{tikzpicture}[baseline=(current bounding box),scale=1.5]
        \draw[rounded corners=8pt]
        (-.8,.8) -- (0,.8) -- (0,-.3);
        \draw[rounded corners=8pt] (1,-.1) --(0.4,-.1) -- (.4,1.3);
        \filldraw[fill = myRed] (-.2,0.1) rectangle node {} (.6,.6);
        \node at (0,.35) {$C$};
        \node at (.4,.35) {$X$};
    \end{tikzpicture}
\eeq
with Eq.~\eqref{eq:RI-RIRX}, we obtain
\begin{equation}
\begin{aligned}
\begin{tikzpicture}[baseline=(current bounding box),scale=1.2]
        \node at (-.4,0) {$\cdots$};
        \draw (1,.3) -- (1,1.2);
        \draw (2.4,.3) -- (2.4,1.2);
        \draw (.7,0) --  (-.1,0);
        \filldraw[fill = white] (.7,-.3) rectangle node {$A$} (1.3,.3);
        \draw (1.3,0) --  (2.1,0);
        \filldraw[fill = white] (2.1,-.3) rectangle node {$A$} (2.7,.3);
        \draw (2.7,0) --  (3.1,0);
        \node at (3.4,0) {$\cdots$};
        \draw (.1,.8) -- (.1,1.2);
        \node at (.1,.6) {$\ket{+}$};
        \draw (.1+1.4,.8) -- (.1+1.4,1.2);
        \node at (.1+1.4,.6) {$\ket{+}$};
\end{tikzpicture}\
=\ \begin{tikzpicture}[baseline=(current bounding box),scale=1.2]
        \draw (-1.3,.3) -- (-1.3,1.2);

        \draw[rounded corners=8pt] (-1.3,-.2) -- (-1.3,-.5) -- (-2,-.5);
        \draw[rounded corners=8pt] (-2, -.8) -- (-1.2, -.8) -- (-1.2,-3.7);
        \draw[rounded corners=8pt]
        (-.5,-.2) -- (-.5,-2) -- (.2,-2) -- (.2,-3.7);
        \filldraw[fill = myRed] (-1.4,-1.5) rectangle node {} (-.3,-1);
        \draw[rounded corners=8pt] (-.4, -1.7) -- (.1, -1.7) -- (.1,-.2);
        \draw[rounded corners=8pt] (-.8,-3.7) -- (-.8,-1.7) -- (-.6, -1.7);
        \draw (.1,-.6) -- (.1,-.2);
        \node at (-1,-3.9) {$\ket{\psi_{M}}$};
        \node at (-1.2,-1.25) {$C$};
        \node at (-.5,-1.25) {$X$};

        \draw[rounded corners=8pt]
        (.9,-.2) -- (.9,-3.3) -- (1.5,-3.3);
        \filldraw[fill = myRed] (0,-2.8) rectangle node {} (1.1,-2.3);
        \draw (1, -3) -- (1.5, -3);
        \draw[rounded corners=8pt] (.6,-3.7) -- (.6,-3) -- (.8, -3);
        \draw (.1,-.6) -- (.1,-.2);
        \node at (.4,-3.9) {$\ket{\psi_{M}}$};
        \node at (.2,-2.55) {$C$};
        \node at (.9,-2.55) {$X$};

        \node at (-2.4,-.65) {$\cdots$};
        \node at (1.9,-3.15) {$\cdots$};
        
        \draw (-.5,.3) -- (-.5,1.2);
        \draw (.1,.3) -- (.1,1.2);
        \draw (.9,.3) -- (.9,1.2);
        \filldraw[fill = white] (-.1-1.4,.5) rectangle node {} (1.1-1.4,1);
        \node at (.1-1.4,.75) {$C$};
        \node at (.9-1.4,.75) {$U_Z$};
        \filldraw[fill = white] (-.1,.5) rectangle node {} (1.1,1);
        \node at (.1,.75) {$C$};
        \node at (.9,.75) {$U_Z$};
        
        \draw (-1.1,-.2) --  (-1.1,.3);
        \draw (-1.1,-.2) --  (-1.8,.-.2);
        \draw (-1.1,.3) --  (-1.8,.3);
        \node at (-1.6,.05) {$U$};
        \filldraw[fill = white] (-.7,-.2) rectangle node {$U$} (.3,.3);
        \draw (.7,-.2) --  (.7,.3);
        \draw (.7,-.2) --  (1.4,.-.2);
        \draw (.7,.3) --  (1.4,.3);
        \node at (1.2,.05) {$U$};
\end{tikzpicture}\ .
\end{aligned}  
\end{equation}  

\section{$RZ$ class MPS tensor}
\label{app:RZ}
The general form of a transfer matrix for $D=2$ with the $RZ$ pushing relation is
\beq
        T_A = \quad 
        &\lambda_0\ \begin{tikzpicture}[baseline=(current bounding box),scale=1.5]
        \draw[rounded corners=8pt]
        (-.8,.8) -- (0,.8) -- (0,0) -- (-.8,0);
        \draw[rounded corners=8pt]
        (1,.8) -- (.2,.8) -- (.2,0) -- (1,0);
        \end{tikzpicture}\ + \lambda_1 \
        \begin{tikzpicture}[baseline=(current bounding box),scale=1.5]
        \filldraw[opacity=0, minimum size=6pt](-.4,0) circle (6pt);
        \draw[rounded corners=8pt]
        (-.8,.8) -- (0,.8) -- (0,0) -- (-.8,0);
        \draw[rounded corners=8pt]
        (1,.8) -- (.2,.8) -- (.2,0) -- (1,0);
        \filldraw[draw, circle, fill=myBlue, minimum size=6pt](-0.4,0.8) circle (6pt);
        \node at (-0.4,0.8) {$Z$};
        \end{tikzpicture}\ + \lambda_2 \
        \begin{tikzpicture}[baseline=(current bounding box),scale=1.5]
        \filldraw[opacity=0, minimum size=6pt](-.4,0) circle (6pt);
        \draw[rounded corners=8pt]
        (-.8,.8) -- (0,.8) -- (0,0) -- (-.8,0);
        \draw[rounded corners=8pt]
        (1,.8) -- (.2,.8) -- (.2,0) -- (1,0);
        \filldraw[draw, circle, fill=myBlue, minimum size=6pt](0.6,.8) circle (6pt);
        \node at (0.6,.8) {$Z$};
        \end{tikzpicture}\ + \lambda_3 \
        \begin{tikzpicture}[baseline=(current bounding box),scale=1.5]
        \filldraw[opacity=0, minimum size=6pt](-.4,0) circle (6pt);
        \draw[rounded corners=8pt]
        (-.8,.8) -- (0,.8) -- (0,0) -- (-.8,0);
        \draw[rounded corners=8pt]
        (1,.8) -- (.2,.8) -- (.2,0) -- (1,0);
        \filldraw[draw, circle, fill=myBlue, minimum size=6pt](0.6,.8) circle (6pt);
        \node at (0.6,.8) {$Z$};
        \filldraw[draw, circle, fill=myBlue, minimum size=6pt](-0.4,0.8) circle (6pt);
        \node at (-0.4,0.8) {$Z$};
        \end{tikzpicture}\\[2ex]
        +\ &\lambda_4\ \begin{tikzpicture}[baseline=(current bounding box),scale=1.5]
        \filldraw[opacity=0, minimum size=6pt](-.4,0.8) circle (6pt);
        \draw[rounded corners=8pt]
        (-.8,.8) -- (0,.8) -- (0,0) -- (-.8,0);
        \draw[rounded corners=8pt]
        (1,.8) -- (.2,.8) -- (.2,0) -- (1,0);
        \filldraw[draw, circle, fill=myRed, minimum size=6pt](0.6,0) circle (6pt);
        \node at (0.6,0) {$X$};
        \filldraw[draw, circle, fill=myRed, minimum size=6pt](-.4,0) circle (6pt);
        \node at (-0.4,0) {$X$};
        \end{tikzpicture}\ + i\lambda_5 \
        \begin{tikzpicture}[baseline=(current bounding box),scale=1.5]
        \draw[rounded corners=8pt]
        (-.8,.8) -- (0,.8) -- (0,0) -- (-.8,0);
        \draw[rounded corners=8pt]
        (1,.8) -- (.2,.8) -- (.2,0) -- (1,0);
        \filldraw[draw, circle, fill=myBlue, minimum size=6pt](-0.4,0.8) circle (6pt);
        \node at (-0.4,0.8) {$Z$};
        \filldraw[draw, circle, fill=myRed, minimum size=6pt](0.6,0) circle (6pt);
        \node at (0.6,0) {$X$};
        \filldraw[draw, circle, fill=myRed, minimum size=6pt](-.4,0) circle (6pt);
        \node at (-0.4,0) {$X$};
        \end{tikzpicture}\ + i\lambda_6 \
        \begin{tikzpicture}[baseline=(current bounding box),scale=1.5]
        \draw[rounded corners=8pt]
        (-.8,.8) -- (0,.8) -- (0,0) -- (-.8,0);
        \draw[rounded corners=8pt]
        (1,.8) -- (.2,.8) -- (.2,0) -- (1,0);
        \filldraw[draw, circle, fill=myBlue, minimum size=6pt](0.6,.8) circle (6pt);
        \node at (0.6,.8) {$Z$};
        \filldraw[draw, circle, fill=myRed, minimum size=6pt](0.6,0) circle (6pt);
        \node at (0.6,0) {$X$};
        \filldraw[draw, circle, fill=myRed, minimum size=6pt](-.4,0) circle (6pt);
        \node at (-0.4,0) {$X$};
        \end{tikzpicture}\ + \lambda_7 \
        \begin{tikzpicture}[baseline=(current bounding box),scale=1.5]
        \draw[rounded corners=8pt]
        (-.8,.8) -- (0,.8) -- (0,0) -- (-.8,0);
        \draw[rounded corners=8pt]
        (1,.8) -- (.2,.8) -- (.2,0) -- (1,0);
        \filldraw[draw, circle, fill=myBlue, minimum size=6pt](0.6,.8) circle (6pt);
        \node at (0.6,.8) {$Z$};
        \filldraw[draw, circle, fill=myBlue, minimum size=6pt](-0.4,0.8) circle (6pt);
        \node at (-0.4,0.8) {$Z$};
        \filldraw[draw, circle, fill=myRed, minimum size=6pt](0.6,0) circle (6pt);
        \node at (0.6,0) {$X$};
        \filldraw[draw, circle, fill=myRed, minimum size=6pt](-0.4,0) circle (6pt);
        \node at (-0.4,0) {$X$};
        \end{tikzpicture}\ ,
        \label{eq:T_A_RZ_app}
\eeq
with eight real parameters: $\lambda_i$. Using the following identity
\beq
    \begin{tikzpicture}[baseline=(current bounding box),scale=1.5]
        \draw[rounded corners=8pt]
        (-.4,.8) -- (0,.8) -- (0,0) -- (-.4,0);
        \draw[rounded corners=8pt]
        (.6,.8) -- (.2,.8) -- (.2,0) -- (.6,0);
    \end{tikzpicture}\quad =\sum_{p,q=0,1} \frac{1}{2} \quad
    \begin{tikzpicture}[baseline=(current bounding box),scale=1.5]
        \draw[rounded corners=8pt]
        (-.7,0) -- (.7,0);
        \draw[rounded corners=8pt]
        (-.7,.8) -- (.7,.8);
        \filldraw[draw, circle, fill=myBlue, minimum size=6pt](-.3,.8) circle (6pt);
        \node at (-0.3,.8) {$Z^{p}$};
        \filldraw[draw, circle, fill=myRed, minimum size=6pt](.3,.8) circle (6pt);
        \node at (.3,.8) {$X^{q}$};
        \filldraw[draw, circle, fill=myBlue, minimum size=6pt](-.3,0) circle (6pt);
        \node at (-0.3,0) {$Z^{p}$};
        \filldraw[draw, circle, fill=myRed, minimum size=6pt](.3,0) circle (6pt);
        \node at (.3,0) {$X^{q}$};
    \end{tikzpicture}\ ,
\eeq
the general transfer matrix can be rewritten as
\beq
    & \quad\quad\quad\quad\quad\quad\quad\quad T_A = \sum_{p,q,p',q'=0,1}S_{p',q';p,q}  \quad
    \begin{tikzpicture}[baseline=(current bounding box),scale=1.5]
        \draw[rounded corners=8pt]
        (-.7,0) -- (.7,0);
        \draw[rounded corners=8pt]
        (-.7,.8) -- (.7,.8);
        \filldraw[draw, circle, fill=myBlue, minimum size=6pt](-.3,.8) circle (6pt);
        \node at (-0.3,.8) {$Z^{p'}$};
        \filldraw[draw, circle, fill=myRed, minimum size=6pt](.3,.8) circle (6pt);
        \node at (.3,.8) {$X^{q'}$};
        \filldraw[draw, circle, fill=myBlue, minimum size=6pt](-.3,0) circle (6pt);
        \node at (-0.3,0) {$Z^{p}$};
        \filldraw[draw, circle, fill=myRed, minimum size=6pt](.3,0) circle (6pt);
        \node at (.3,0) {$X^{q}$};
    \end{tikzpicture}\ , 
\eeq
where $S_{p',q';p,q}$ are the elements of the $4\times 4$ matrix $S$ in the basis $\{(p,q) = (0,0), (1,0), (0,1), (1,1)\}  $:

\begin{equation}
        S = \frac{1}{2}\begin{pmatrix}
        \lambda_0+\lambda_3+\lambda_4+\lambda_7 & \lambda_1+\lambda_2-i\lambda_5-i\lambda_6 & 0 & 0\\
        \lambda_1+\lambda_2+i\lambda_5+i\lambda_6 & \lambda_0+\lambda_3-\lambda_4-\lambda_7 & 0 & 0\\
        0 & 0 & \lambda_0-\lambda_3+\lambda_4-\lambda_7 & \lambda_1-\lambda_2-i\lambda_5+i\lambda_6\\
        0 & 0 & \lambda_1-\lambda_2+i\lambda_5-i\lambda_6 & \lambda_0-\lambda_3-\lambda_4+\lambda_7
    \end{pmatrix}.
\end{equation}

Diagonalizing this $S$ matrix, we label the four eigenvalues and corresponding eigenstates by a pair $(m,n)$:
\beq
    s_{(0,0)} &= \frac{1}{2}(\lambda_0+\lambda_3-\sqrt{(\lambda_1+\lambda_2)^2+(\lambda_5+\lambda_6)^2+(\lambda_4+\lambda_7)^2 }),\\
    s_{(0,1)} &= \frac{1}{2}(\lambda_0+\lambda_3+\sqrt{(\lambda_1+\lambda_2)^2+(\lambda_5+\lambda_6)^2+(\lambda_4+\lambda_7)^2 }),\\
    s_{(1,0)} &= \frac{1}{2}(\lambda_0-\lambda_3-\sqrt{(\lambda_1-\lambda_2)^2+(\lambda_5-\lambda_6)^2+(\lambda_4-\lambda_7)^2 }),\\
    s_{(1,1)} &= \frac{1}{2}(\lambda_0-\lambda_3+\sqrt{(\lambda_1-\lambda_2)^2+(\lambda_5-\lambda_6)^2+(\lambda_4-\lambda_7)^2 }),\\
    v_{(0,0)} &= (\frac{\lambda_4+\lambda_7-\sqrt{(\lambda_1+\lambda_2)^2+(\lambda_5+\lambda_6)^2+(\lambda_4+\lambda_7)^2 }}{(\lambda_1+\lambda_2)+i(\lambda_5+\lambda_6)},1,0,0)^T,\\
    v_{(0,1)} &= (\frac{\lambda_4+\lambda_7+\sqrt{(\lambda_1+\lambda_2)^2+(\lambda_5+\lambda_6)^2+(\lambda_4+\lambda_7)^2 }}{(\lambda_1+\lambda_2)+i(\lambda_5+\lambda_6)},1,0,0)^T,\\
    v_{(1,0)} &= (0,0, \frac{\lambda_4-\lambda_7-\sqrt{(\lambda_1-\lambda_2)^2+(\lambda_5-\lambda_6)^2+(\lambda_4-\lambda_7)^2 }}{(\lambda_1-\lambda_2)+i(\lambda_5-\lambda_6)},1)^T,\\
    v_{(1,1)} &= (0,0, \frac{\lambda_4-\lambda_7+\sqrt{(\lambda_1-\lambda_2)^2+(\lambda_5-\lambda_6)^2+(\lambda_4-\lambda_7)^2 }}{(\lambda_1-\lambda_2)+i(\lambda_5-\lambda_6)},1)^T.
\eeq

The transfer matrix can therefore be rewritten as
\beq
    T_A = \sum_{m,n=0,1}s_{m,n}  \quad
    \begin{tikzpicture}[baseline=(current bounding box),scale=1.5]
        \draw[rounded corners=8pt]
        (-.7,0) -- (.7,0);
        \draw[rounded corners=8pt]
        (-.7,.8) -- (.7,.8);
        \filldraw[draw, circle, fill=white, minimum size=6pt](0,.8) circle (10pt);
        \node at (0,.8) {$\bar{V}_{(m,n)}$};
        \filldraw[draw, circle, fill=white, minimum size=6pt](0,0) circle (10pt);
        \node at (0,0) {$V_{(m,n)}$};
    \end{tikzpicture}\ ,
\eeq
where $V_{(m,n)}\equiv v^{\text{norm} *}_{(m,n)}\cdot (I, Z, X, ZX)$ with normalized eigenvectors:
\beq
    V_{(0,0)} &= \cos \theta_{0}\cdot  I + i e^{i \beta_{0}}\sin \theta_{0}\cdot  Z,\quad
    V_{(0,1)} = Z(\cos \theta_{0}\cdot  I + i e^{-i \beta_{0}}\sin \theta_{0}\cdot Z),\\
    V_{(1,0)} &= X(\cos\theta_{1}\cdot  I - i e^{i \beta_{1}}\sin \theta_{1}\cdot  Z),\quad
    V_{(1,1)} = XZ(\cos\theta_{1}\cdot  I - i e^{-i \beta_{1}}\sin \theta_{1}\cdot  Z).
\eeq
The angels $\theta$'s and $\beta$'s are given from $\lambda$'s:
\beq
    \theta_{0} &= - \arctan \frac{\sqrt{(\lambda_1+\lambda_2)^2+(\lambda_5+\lambda_6)^2}}{\lambda_4+\lambda_7-\sqrt{(\lambda_1+\lambda_2)^2+(\lambda_5+\lambda_6)^2+(\lambda_4+\lambda_7)^2 }},\quad \beta_{0} = \arctan \frac{\lambda_1+\lambda_2}{\lambda_5+\lambda_6},\\
    \theta_{1} &= -\arctan\frac{\sqrt{(\lambda_1-\lambda_2)^2+(\lambda_5-\lambda_6)^2}}{\lambda_4-\lambda_7-\sqrt{(\lambda_1-\lambda_2)^2+(\lambda_5-\lambda_6)^2+(\lambda_4-\lambda_7)^2 }},\quad \beta_{1} = \arctan \frac{\lambda_1-\lambda_2}{\lambda_5-\lambda_6}.
\eeq

Therefore, we can write a representative MPS tensor as 
\beq
        A=\sum_{m,n=0,1}\sqrt{s_{m,n}}\ \begin{tikzpicture}[baseline=(current bounding box),scale=1.5]
        \draw (-.8,0.4) -- (.7,.4);
        \draw (.9,0.4) -- (1.1,.4);     
        \draw (1.3,0.4) -- (1.9,.4);
        \draw (.8,-0.4) -- (.8,1.5);
        \draw (1.2,-0.4) -- (1.2,1.5);
        \node at (.8,-.6) {$\ket{m}$};
        \node at (1.2,-.6) {$\ket{n}$};
        \filldraw[draw, circle, fill=white, minimum size=6pt](0,0.4) circle (10pt);
        \node at (0,0.4) {$V_{(m,n)}$};
        \end{tikzpicture}\ = \
        \begin{tikzpicture}[baseline=(current bounding box),scale=1.5]
        \draw (-.8,0.4) -- (-.1,.4);
        \draw (.1,0.4) -- (1.1,.4);
        \draw (1.3,0.4) -- (3,.4);
        \draw[rounded corners=8pt]
        (0,-.4) -- (0,.8) -- (1,.8) -- (1, 1.2) -- (2.3, 1.2) -- (2.3, 1.5);
        \draw[rounded corners=8pt]
        (.3,-.4) -- (0.3, 0) -- (1.2,0) -- (1.2,.8) -- (2.6,.8) -- (2.6, 1.5);
        \filldraw[fill = myRed] (.3,.2) rectangle node {} (.8,1);
        \node at (.55,.8) {$C$};
        \node at (.55,.4) {$X$};
        \filldraw[fill = myBlue] (1.5,.2) rectangle node {$\mathcal{E}$} (2,1.4);
        \node at (0.15,-.6) {$\ket{\psi_{s}}$};
        \end{tikzpicture}
        \label{eq:A_RZ}
\eeq  
with $\ket{\psi_{s}}= \sum_{m,n=0,1}\sqrt{s_{m,n}}\ket{m}\otimes \ket{n}$, and the three-qubit non-unitary gate $\mathcal{E}$ is defined as
\beq
    \mathcal{E} &= \sum_{m,n=0,1} \ket{m}\bra{m}\otimes\ket{n}\bra{n}\otimes Z^n \big(\cos \theta_{m} I + i (-1)^m e^{i(-1)^n\beta_{m}}\sin\theta_{m} Z\big).
    \label{eq:E}
\eeq

In particular, for class $RZ$--$RIRX$ discussed in Sec.~\ref{subsec:RZ-RIRX}, $\lambda_1=\lambda_2=0$, hence $\beta$'s are all zero and $\mathcal{E}$ becomes unitary. A general family of MPS tensors $A'$ describes this class:
\beq
        A'=\ \begin{tikzpicture}[baseline=(current bounding box),scale=1.5]
        \draw (-.3,0.4) -- (-.1,.4);
        \draw (.1,0.4) -- (2.4,.4);
        \draw (2.6,0.4) -- (4.8,.4);
        \draw[rounded corners=8pt]
        (0,-.4) -- (0,.8) -- (2.4,.8) -- (2.4, 1.2) -- (4.4, 1.2) -- (4.4, 1.5);
        \draw[rounded corners=8pt]
        (.3,-.4) -- (0.3, 0) -- (2.5,0) -- (2.5,.8) -- (4.7,.8) -- (4.7, 1.5);
        \filldraw[fill = myRed] (.3,.2) rectangle node {} (.8,1); 
        \node at (.55,.8) {$C$};
        \node at (.55,.4) {$X$};
        \filldraw[draw, circle, fill=myBlue, minimum size=6pt](1.2,0.4) circle (6pt);
        \node at (1.2,0.4) {$R^z_{\alpha_{1}}$};
        \filldraw[fill = myBlue] (1.6,.2) rectangle node {} (2.2,1); 
        \node at (1.9,.8) {$C$};
        \node at (1.9,.4) {$R^z_{\alpha_{2}}$};
        \filldraw[fill = myBlue] (2.8,.2) rectangle node {} (3.4,1); 
        \node at (3.1,.8) {$C$};
        \node at (3.1,.4) {$R^z_{\alpha_{3}}$};
        \filldraw[fill = myBlue] (3.6,.2) rectangle node {} (4.2,1.4); 
        \node at (3.9,1.2) {$C$};
        \node at (3.9,.8) {$C$};
        \node at (3.9,.4) {$R^z_{\alpha_{4}}$};
        \node at (0.15,-.6) {$\ket{\psi_{r}}$};
        \end{tikzpicture}
\eeq  
where $\ket{\psi_r}=\sum_{m,n}r_{m,n}\ket{m}\otimes\ket{n}$, and the (controlled-)rotation gates are defined as
\beq
    &R^z_{\alpha} = e^{i\frac{\alpha Z}{2}},\quad CR^z_{\alpha} = e^{i\frac{\alpha (1-Z_1)Z_2}{4}},\quad CCR^z_{\alpha} = e^{i\frac{\alpha (1-Z_1)(1-Z_2)Z_3}{8}}.
\eeq
Although the above MPS tensor seems to have more parameters than Eq.~\eqref{eq:A_RZ} whose gate $\mathcal{E}$ is Eq.~\eqref{eq:E} with $\beta_m\equiv 0$, it has the same transfer matrix as that of class $RZ$--$RIRX$, see Sec.~\ref{subsec:RZ-RIRX}.

\section{Generalized cosine symmetries}
\label{app:cosine}
In this Appendix, we discuss the generalized cosine symmetries. They are related to the MPO that prepares states of class $RZ$--$RIRX$ as in Eq.~\eqref{eq:A_RZ-RZRX} with $\alpha_2 = \alpha_4 = 0$:
\beq
        \begin{tikzpicture}[baseline=(current bounding box),scale=1.5]
        \draw (-.3,0.4) -- (-.1,.4);
        \draw (.1,0.4) -- (1.5,.4);
        \draw (1.7,0.4) -- (3.5,.4);
        \draw[rounded corners=8pt]
        (0,-.4) -- (0,.8) -- (1.4,.8) -- (1.4, 1.2) -- (3, 1.2) -- (3, 1.5);
        \draw[rounded corners=8pt]
        (.3,-.4) -- (0.3, 0) -- (1.6,0) -- (1.6,.8) -- (3.3,.8) -- (3.3, 1.5);
        \filldraw[fill = myRed] (.3,.2) rectangle node {} (.8,1); 
        \node at (.55,.8) {$C$};
        \node at (.55,.4) {$X$};
        \filldraw[draw, circle, fill=myBlue, minimum size=6pt](1.2,0.4) circle (6pt);
        \node at (1.2,0.4) {$R^z_{\alpha}$};
        \filldraw[fill = myBlue] (1.8,.2) rectangle node {} (2.4,1); 
        \node at (2.1,.8) {$C$};
        \node at (2.1,.4) {$R^z_{\alpha'}$};
        \end{tikzpicture}\quad \cong\quad 
        \begin{tikzpicture}[baseline=(current bounding box),scale=1.5]
        \draw (-.3,0.4) -- (-.1,.4);
        \draw (.1,0.4) -- (1.5,.4);
        \draw (1.7,0.4) -- (3.5,.4);
        \draw[rounded corners=8pt]
        (0,-.4) -- (0,.8) -- (1.4,.8) -- (1.4, 1.2) -- (3, 1.2) -- (3, 1.5);
        \draw[rounded corners=8pt]
        (.3,-.4) -- (0.3, 0) -- (1.6,0) -- (1.6,.8) -- (3.3,.8) -- (3.3, 1.5);
        \filldraw[fill = myBlue] (.3,.2) rectangle node {} (.8,1); 
        \node at (.55,.8) {$C$};
        \node at (.55,.4) {$Z$};
        \filldraw[draw, circle, fill=myRed, minimum size=6pt](1.2,0.4) circle (6pt);
        \node at (1.2,0.4) {$R^x_{\alpha}$};
        \filldraw[fill = myRed] (1.8,.2) rectangle node {} (2.4,1); 
        \node at (2.1,.8) {$C$};
        \node at (2.1,.4) {$R^x_{\alpha'}$};
        \end{tikzpicture}
        \label{eq:MPO_alpha13}
\eeq 
where the two sides are related by a gauge transformation of the Hadamard gate. We define the MPO tensor in Eq.~\eqref{eq:MPO_alpha13} conjugated by a Hadamard gate on the left physical bond as $L_{\alpha,\alpha'}$. Using the above result, we have 
\beq
    L_{\alpha,\alpha'}\cong \quad 
    \begin{tikzpicture}[baseline=(current bounding box),scale=1.5]
        \draw (-.6,0.4) -- (-.1,.4);
        \draw (.1,0.4) -- (1.5,.4);
        \draw (1.7,0.4) -- (3.5,.4);
        \draw[rounded corners=8pt]
        (0,-.6) -- (0,.8) -- (1.4,.8) -- (1.4, 1.2) -- (3, 1.2) -- (3, 1.5);
        \draw[rounded corners=8pt]
        (.3,-.6) -- (0.3, 0) -- (1.6,0) -- (1.6,.8) -- (3.3,.8) -- (3.3, 1.5);
        \filldraw[fill = myRed] (.3,.2) rectangle node {} (.8,1); 
        \node at (.55,.8) {$X$};
        \node at (.55,.4) {$C$};
        \filldraw[draw, circle, fill=myRed, minimum size=6pt](1.2,0.4) circle (6pt);
        \node at (1.2,0.4) {$R^x_{\alpha}$};
        \filldraw[fill = myRed] (1.8,.2) rectangle node {} (2.4,1); 
        \node at (2.1,.8) {$C$};
        \node at (2.1,.4) {$R^x_{\alpha'}$};
        \filldraw[draw, circle, fill=white, minimum size=6pt](1.1,-.4) circle (.1);
        \draw (1,-.4) -- (3.5,-.4);
        \draw (-.6,-.4) -- (0,-.4);
        \draw (1.1,-.5) -- (1.1,-.3);
        \fill (0,-.4) circle (1pt);
        \node at (3.6,.4) {.};
        \end{tikzpicture}
\eeq
Moreover, using the following tensor identity:
\beq
     \begin{tikzpicture}[baseline=(current bounding box),scale=1.5]
     \filldraw[opacity=0, minimum size=6pt](0,-.7) circle (1pt);
        \draw (-.6,.8) -- (1,.8);
        \draw (1.3,.8)-- (1.8,.8);
        \draw[rounded corners=8pt]
        (0,-.4) -- (0,.4) -- (1.1,.4) -- (1.1,1.3);
        \filldraw[fill = myRed] (.3,.2) rectangle node {} (.8,1);
        \node at (.55,.8) {$C$};
        \node at (.55,.4) {$X$};
        \filldraw[draw, circle, fill=white, minimum size=6pt](1.1,-.2) circle (.1);
        \draw (1,-.2) -- (1.8,-.2);
        \draw (-.6,-.2) -- (0,-.2);
        \draw (1.1,-.3) -- (1.1,-.1);
        \fill (0,-.2) circle (1pt);
        \end{tikzpicture}\quad = \quad\begin{tikzpicture}[baseline=(current bounding box),scale=1.5]
     \filldraw[opacity=0, minimum size=6pt](0,-1.3) circle (1pt);
        \draw (-.6,.8) -- (1,.8);
        \draw (1.3,.8)-- (1.8,.8);
        \draw[rounded corners=8pt]
        (0,-.8) -- (0,.4) -- (1.1,.4) -- (1.1,1.6);
        \filldraw[fill = myBlue] (.3,.2) rectangle node {} (.8,1);
        \node at (.55,.8) {$C$};
        \node at (.55,.4) {$Z$};
        \filldraw[draw, circle, fill=myGray, minimum size=6pt](1.1,1.2) circle (6pt);
        \node at (1.1,1.2) {$H$};
        \filldraw[draw, circle, fill=myGray, minimum size=6pt](0,-.1) circle (6pt);
        \node at (0,-.1) {$H$};
        \filldraw[draw, circle, fill=white, minimum size=6pt](1.1,-.5) circle (.1);
        \draw (1,-.5) -- (1.8,-.5);
        \draw (-.6,-.5) -- (0,-.5);
        \draw (1.1,-.6) -- (1.1,-.4);
        \fill (0,-.5) circle (1pt);
        \end{tikzpicture}\quad = \quad
        \begin{tikzpicture}[baseline=(current bounding box),scale=1.5]
        \draw[rounded corners=8pt]
        (-1,.8) -- (1.1,.8) -- (1.1,1.3);
        \draw[rounded corners=8pt]
        (-1,-.8) -- (1.1,-.8) -- (1.1,-1.3);
        \draw[rounded corners=8pt]
        (2.4,-.4) -- (0,-.4) -- (0,.4) -- (2.4,.4);
        \filldraw[fill = myBlue] (.3,.2) rectangle node {} (.8,1);
        \node at (.55,.8) {$C$};
        \node at (.55,.4) {$Z$};
        \filldraw[draw, circle, fill=myGray, minimum size=6pt](1.2,.4) circle (6pt);
        \node at (1.2,.4) {$H$};
        \filldraw[fill = myBlue] (.3,-1) rectangle node {} (.8,-.2);
        \node at (.55,-.8) {$C$};
        \node at (.55,-.4) {$Z$};
        \filldraw[draw, circle, fill=myGray, minimum size=6pt](1.2,-.4) circle (6pt);
        \node at (1.2,-.4) {$H$};
        \filldraw[fill = myRed] (1.6,-.6) rectangle node {} (2.1,.6);
        \node at (1.85,.4) {$X$};
        \node at (1.85,-.4) {$C$};
        \filldraw[fill = myRed] (-.7,-1) rectangle node {} (-.2,1);
        \node at (-.45,.8) {$X$};
        \node at (-.45,-.8) {$C$};
        \end{tikzpicture}
\eeq
we can write the MPO tensor $L_{\alpha,\alpha'}$ as 
\beq
    L_{\alpha,\alpha'}\cong \quad 
        \begin{tikzpicture}[baseline=(current bounding box),scale=1.5]
        \draw[rounded corners=8pt]
        (-1.2,1.2) -- (1.1,1.2) -- (1.1,1.6);
        \draw[rounded corners=8pt]
        (-1.2,-1.2) -- (1.1,-1.2) -- (1.1,-1.6);
        \draw[rounded corners=8pt]
        (1.7,-.8) -- (0,-.8) -- (0,.8) -- (1.7,.8);
        \draw (1.9,-.8) -- (3,-.8);
        \draw (1.9,.8) -- (3,.8);
        \draw (1.8,-1.6) -- (1.8,1.6);
        \filldraw[fill = myBlue] (.3,.6) rectangle node {} (.8,1.4);
        \node at (.55,1.2) {$C$};
        \node at (.55,.8) {$Z$};
        \filldraw[draw, circle, fill=myGray, minimum size=6pt](1.2,.8) circle (6pt);
        \node at (1.2,.8) {$H$};
        \filldraw[fill = myBlue] (.3,-1.4) rectangle node {} (.8,-.6);
        \node at (.55,-1.2) {$C$};
        \node at (.55,-.8) {$Z$};
        \filldraw[draw, circle, fill=myGray, minimum size=6pt](1.2,-.8) circle (6pt);
        \node at (1.2,-.8) {$H$};
        \filldraw[draw, circle, fill=myBlue, minimum size=6pt](0,-.4) circle (6pt);
        \node at (0,-.4) {$R^z_{\alpha}$};
        \filldraw[fill = myBlue] (-.2,0) rectangle node {} (2,.5); 
        \node at (1.8,.25) {$C$};
        \node at (0,.25) {$R^z_{\alpha'}$};
        \filldraw[fill = myRed] (2.2,-1) rectangle node {} (2.7,1);
        \node at (2.45,.8) {$X$};
        \node at (2.45,-.8) {$C$};
        \filldraw[fill = myRed] (-.9,-1.4) rectangle node {} (-.4,1.4);
        \node at (-.65,1.2) {$X$};
        \node at (-.65,-1.2) {$C$};
        \end{tikzpicture}
        \quad \cong \quad
    \begin{tikzpicture}[baseline=(current bounding box),scale=1.5]
        \draw[rounded corners=8pt]
        (-.8,1.2) -- (1.1,1.2) -- (1.1,1.6);
        \draw[rounded corners=8pt]
        (-.8,-1.2) -- (1.1,-1.2) -- (1.1,-1.6);
        \draw[rounded corners=8pt]
        (1.7,-.8) -- (0,-.8) -- (0,.8) -- (1.7,.8);
        \draw (1.9,-.8) -- (2.4,-.8);
        \draw (1.9,.8) -- (2.4,.8);
        \draw (1.8,-1.6) -- (1.8,1.6);
        \filldraw[fill = myBlue] (.3,.6) rectangle node {} (.8,1.4);
        \node at (.55,1.2) {$C$};
        \node at (.55,.8) {$Z$};
        \filldraw[draw, circle, fill=myGray, minimum size=6pt](1.2,.8) circle (6pt);
        \node at (1.2,.8) {$H$};
        \filldraw[fill = myBlue] (.3,-1.4) rectangle node {} (.8,-.6);
        \node at (.55,-1.2) {$C$};
        \node at (.55,-.8) {$Z$};
        \filldraw[draw, circle, fill=myGray, minimum size=6pt](1.2,-.8) circle (6pt);
        \node at (1.2,-.8) {$H$};
        \filldraw[draw, circle, fill=myBlue, minimum size=6pt](0,-.4) circle (6pt);
        \node at (0,-.4) {$R^z_{\alpha}$};
        \filldraw[fill = myBlue] (-.2,0) rectangle node {} (2,.5); 
        \node at (1.8,.25) {$C$};
        \node at (0,.25) {$R^z_{\alpha'}$};
        \end{tikzpicture}
\eeq

Therefore, the MPO is given by a composition of KW, single-qubit rotation (and controlled-rotation), and $\text{KW}^{\dagger}$,
\beq
    L_{\alpha, \alpha'} = \text{KW}^{\dagger} \big(\prod_i R^z_{i, \alpha} CR^z_{\tilde{i} i, \alpha'} \big) \text{KW},
\eeq
where $R^z_{i, \alpha_1}$ is a single-qubit rotation on the first qubit of site $i$, and $CR^z_{\tilde{i} i, \alpha_3}$ is the controlled-rotation from the second to the first qubit of site $i$. As a result, the fusion rule of this MPO is given by
\beq
    L_{\alpha, \alpha'}L_{\alpha'', \alpha'''} &= \text{KW}^{\dagger} \big(\prod_i R^z_{i, \alpha} CR^z_{\tilde{i} i, \alpha'} \big) \text{KW} \cdot \text{KW}^{\dagger} \big(\prod_i R^z_{i, \alpha''} CR^z_{\tilde{i} i, \alpha'''} \big) \text{KW} \\
    &= \text{KW}^{\dagger} \big(\prod_i R^z_{i, \alpha} CR^z_{\tilde{i} i, \alpha'} \big) (1+\eta) \cdot\big(\prod_i R^z_{i, \alpha''} CR^z_{\tilde{i} i, \alpha'''} \big) \text{KW}\\
    &= L_{\alpha+\alpha'', \alpha'+\alpha'''} + L_{\alpha-\alpha'', \alpha'-\alpha'''}.
\eeq

\section{$RI$ class MPS tensor}
\label{app:RI}
The general form of a transfer matrix that has $RI$ pushing relation is
\beq
    T_A=\sum_{a,b,c=0}^{D-1}\lambda_{a,b,c}\quad \begin{tikzpicture}[baseline=(current bounding box),scale=1.5]
        \draw[rounded corners=8pt]
        (-.8,.8) -- (0,.8) -- (0,0) -- (-.8,0);
        \draw[rounded corners=8pt]
        (1,.8) -- (.2,.8) -- (.2,0) -- (1,0);
        \filldraw[draw, circle, fill=myRed, minimum size=6pt](0.6,0) circle (6pt);
        \node at (0.6,0) {$X^{\bar{b}}$};
        \filldraw[draw, circle, fill=myBlue, minimum size=6pt](0.6,.8) circle (6pt);
        \node at (0.6,.8) {$Z^{c}$};
        \filldraw[draw, circle, fill=myBlue, minimum size=6pt](-0.4,0.8) circle (6pt);
        \node at (-0.4,0.8) {$Z^a$};
        \end{tikzpicture}\quad = \sum_{a,b,c=0}^{D-1}\lambda'_{a,b,c}\quad \begin{tikzpicture}[baseline=(current bounding box),scale=1.5]
        \draw[rounded corners=8pt]
        (-.8,1.6) -- (-0.2,1.6) -- (-0.2,-0.8) -- (-.8,-.8);
        \draw[rounded corners=8pt]
        (1,1.6) -- (.4,1.6) -- (.4,-0.8) -- (1,-.8);
        \filldraw[draw, circle, fill=myRed, minimum size=6pt](0.4,0.1) circle (6pt);
        \node at (0.4,0.1) {$X^{\bar{b}}$};
        \filldraw[draw, circle, fill=myRed, minimum size=6pt](-0.2,0.1) circle (6pt);
        \node at (-0.2,0.1) {$X^{\bar{b}}$};
        \filldraw[draw, circle, fill=myBlue, minimum size=6pt](0.4,.7) circle (6pt);
        \node at (0.4,.7) {$Z^{c}$};
        \filldraw[draw, circle, fill=myBlue, minimum size=6pt](-0.2,0.7) circle (6pt);
        \node at (-0.2,0.7) {$Z^a$};
        \filldraw[fill = myRed] (-.4,-.6) rectangle node {} (.6,-.2);
        \node at (-.2,-.4) {$X^{\dagger}$};
        \node at (.4,-.4) {$C$};
        \filldraw[fill = myRed] (-.4,1) rectangle node {} (.6,1.4);
        \node at (-.2,1.2) {$X$};
        \node at (.4,1.2) {$C$};
        \end{tikzpicture}\ .
\eeq
To be more explicit, for the general bond dimension $D=2$ transfer matrix with $RI$ pushing relation,
\beq
        T_A = \quad 
        &\lambda_0\ \begin{tikzpicture}[baseline=(current bounding box),scale=1.5]
        \draw[rounded corners=8pt]
        (-.8,.8) -- (0,.8) -- (0,0) -- (-.8,0);
        \draw[rounded corners=8pt]
        (1,.8) -- (.2,.8) -- (.2,0) -- (1,0);
        \end{tikzpicture}\ + \lambda_1 \
        \begin{tikzpicture}[baseline=(current bounding box),scale=1.5]
        \filldraw[opacity=0, minimum size=6pt](-.4,0) circle (6pt);
        \draw[rounded corners=8pt]
        (-.8,.8) -- (0,.8) -- (0,0) -- (-.8,0);
        \draw[rounded corners=8pt]
        (1,.8) -- (.2,.8) -- (.2,0) -- (1,0);
        \filldraw[draw, circle, fill=myBlue, minimum size=6pt](-0.4,0.8) circle (6pt);
        \node at (-0.4,0.8) {$Z$};
        \end{tikzpicture}\ + \lambda_2 \
        \begin{tikzpicture}[baseline=(current bounding box),scale=1.5]
        \filldraw[opacity=0, minimum size=6pt](-.4,0) circle (6pt);
        \draw[rounded corners=8pt]
        (-.8,.8) -- (0,.8) -- (0,0) -- (-.8,0);
        \draw[rounded corners=8pt]
        (1,.8) -- (.2,.8) -- (.2,0) -- (1,0);
        \filldraw[draw, circle, fill=myBlue, minimum size=6pt](0.6,.8) circle (6pt);
        \node at (0.6,.8) {$Z$};
        \end{tikzpicture}\ + \lambda_3 \
        \begin{tikzpicture}[baseline=(current bounding box),scale=1.5]
        \filldraw[opacity=0, minimum size=6pt](-.4,0) circle (6pt);
        \draw[rounded corners=8pt]
        (-.8,.8) -- (0,.8) -- (0,0) -- (-.8,0);
        \draw[rounded corners=8pt]
        (1,.8) -- (.2,.8) -- (.2,0) -- (1,0);
        \filldraw[draw, circle, fill=myBlue, minimum size=6pt](0.6,.8) circle (6pt);
        \node at (0.6,.8) {$Z$};
        \filldraw[draw, circle, fill=myBlue, minimum size=6pt](-0.4,0.8) circle (6pt);
        \node at (-0.4,0.8) {$Z$};
        \end{tikzpicture}\\[2ex]
        +\ &\lambda_4\ \begin{tikzpicture}[baseline=(current bounding box),scale=1.5]
        \filldraw[opacity=0, minimum size=6pt](-.4,0.8) circle (6pt);
        \draw[rounded corners=8pt]
        (-.8,.8) -- (0,.8) -- (0,0) -- (-.8,0);
        \draw[rounded corners=8pt]
        (1,.8) -- (.2,.8) -- (.2,0) -- (1,0);
        \filldraw[draw, circle, fill=myRed, minimum size=6pt](0.6,0) circle (6pt);
        \node at (0.6,0) {$X$};
        \end{tikzpicture}\ + \lambda_5 \
        \begin{tikzpicture}[baseline=(current bounding box),scale=1.5]
        \draw[rounded corners=8pt]
        (-.8,.8) -- (0,.8) -- (0,0) -- (-.8,0);
        \draw[rounded corners=8pt]
        (1,.8) -- (.2,.8) -- (.2,0) -- (1,0);
        \filldraw[draw, circle, fill=myBlue, minimum size=6pt](-0.4,0.8) circle (6pt);
        \node at (-0.4,0.8) {$Z$};
        \filldraw[draw, circle, fill=myRed, minimum size=6pt](0.6,0) circle (6pt);
        \node at (0.6,0) {$X$};
        \end{tikzpicture}\ + i\lambda_6 \
        \begin{tikzpicture}[baseline=(current bounding box),scale=1.5]
        \draw[rounded corners=8pt]
        (-.8,.8) -- (0,.8) -- (0,0) -- (-.8,0);
        \draw[rounded corners=8pt]
        (1,.8) -- (.2,.8) -- (.2,0) -- (1,0);
        \filldraw[draw, circle, fill=myBlue, minimum size=6pt](0.6,.8) circle (6pt);
        \node at (0.6,.8) {$Z$};
        \filldraw[draw, circle, fill=myRed, minimum size=6pt](0.6,0) circle (6pt);
        \node at (0.6,0) {$X$};
        \end{tikzpicture}\ + i\lambda_7 \
        \begin{tikzpicture}[baseline=(current bounding box),scale=1.5]
        \draw[rounded corners=8pt]
        (-.8,.8) -- (0,.8) -- (0,0) -- (-.8,0);
        \draw[rounded corners=8pt]
        (1,.8) -- (.2,.8) -- (.2,0) -- (1,0);
        \filldraw[draw, circle, fill=myBlue, minimum size=6pt](0.6,.8) circle (6pt);
        \node at (0.6,.8) {$Z$};
        \filldraw[draw, circle, fill=myBlue, minimum size=6pt](-0.4,0.8) circle (6pt);
        \node at (-0.4,0.8) {$Z$};
        \filldraw[draw, circle, fill=myRed, minimum size=6pt](0.6,0) circle (6pt);
        \node at (0.6,0) {$X$};
        \end{tikzpicture}\ ,
        \label{eq:T_A_RI_app}
\eeq
Using a conjugation of $XC$ gate with the control qubit on the right and the target qubit on the left, the form of the Pauli operators in the middle is the same as the $RZ$ cases in Eq.~\eqref{eq:T_A_RZ_app}, with $\lambda_1\leftrightarrow\lambda_3$ and $\lambda_5\leftrightarrow\lambda_7$. Therefore, we can use the result we have for $RZ$ cases (see Appendix \ref{app:RZ}) to write a representative $A$ tensor for general parameters: 
\beq
        A=\quad \begin{tikzpicture}[baseline=(current bounding box),scale=1.5]
        \filldraw[draw, circle, fill=white, minimum size=6pt](-0.4,0.4) circle (.1);
        \draw (-.8,0.4) -- (-.1,.4);
        \draw (.1,0.4) -- (1.1,.4);
        \draw (1.3,0.4) -- (3,.4);
        \draw[rounded corners=8pt]
        (0,-.4) -- (0,.8) -- (1,.8) -- (1, 1.2) -- (2.3, 1.2) -- (2.3, 1.5);
        \draw[rounded corners=8pt]
        (.3,-.4) -- (0.3, 0) -- (1.2,0) -- (1.2,.8) -- (2.6,.8) -- (2.6, 1.5);
        \draw[rounded corners=8pt]
        (-.4,.5) -- (-.4, -.2) -- (-0.1,-.2);
        \draw[rounded corners=8pt]
        (.4,-0.2) -- (2.4, -0.2) -- (2.4,0.4);
        \fill (2.4,0.4) circle (1pt);
        \draw (0.1,-.2) -- (0.2,-.2);
        \filldraw[fill = myRed] (.3,.2) rectangle node {} (.8,1);
        \node at (.55,.4) {$X$};
        \node at (.55,.8) {$C$};
        \filldraw[fill = myBlue] (1.5,.2) rectangle node {$\mathcal{E}'$} (2,1.4);
        \node at (0.15,-.6) {$\ket{\psi'_{s}}$};
        \end{tikzpicture}
\eeq  
where we use the composition of a delta tensor and a plus tensor to denote the controlled-$X$ gate. The two-qubit state and the three-qubit non-unitary gate are defined as
\beq
    \ket{\psi'_s} = \sum_{m,n} \sqrt{s'_{m,n}} \ket{m}\otimes\ket{n},\quad \mathcal{E}' = \sum_{m,n=0,1} \ket{m}\bra{m}\otimes\ket{n}\bra{n}\otimes Z^n \big(\cos \theta'_{m} I + i(-1)^m e^{i(-1)^n\beta'_{m}}\sin\theta'_{m} Z\big).
\eeq
The parameters are depend on the  $\lambda$'s as follows:
\beq
    \theta'_{0} &= \arctan \frac{\sqrt{(\lambda_3+\lambda_2)^2+(\lambda_7+\lambda_6)^2}}{\lambda_4+\lambda_5-\sqrt{(\lambda_3+\lambda_2)^2+(\lambda_7+\lambda_6)^2+(\lambda_4+\lambda_5)^2 }},\quad \beta'_{0} = \arctan \frac{\lambda_3+\lambda_2}{\lambda_7+\lambda_6},\\
    \theta'_{1} &= \arctan\frac{\sqrt{(\lambda_3-\lambda_2)^2+(\lambda_7-\lambda_6)^2}}{\lambda_4-\lambda_5-\sqrt{(\lambda_3-\lambda_2)^2+(\lambda_7-\lambda_6)^2+(\lambda_4-\lambda_5)^2 }},\quad \beta'_{1} = \arctan \frac{\lambda_3-\lambda_2}{\lambda_7-\lambda_6}.
\eeq
and 
\beq
    s'_{(0,0)} &= \frac{1}{2}(\lambda_0+\lambda_1-\sqrt{(\lambda_3+\lambda_2)^2+(\lambda_7+\lambda_6)^2+(\lambda_4+\lambda_5)^2 }),\\
    s'_{(0,1)} &= \frac{1}{2}(\lambda_0+\lambda_1+\sqrt{(\lambda_3+\lambda_2)^2+(\lambda_7+\lambda_6)^2+(\lambda_4+\lambda_5)^2 }),\\
    s'_{(1,0)} &= \frac{1}{2}(\lambda_0-\lambda_1-\sqrt{(\lambda_3-\lambda_2)^2+(\lambda_7-\lambda_6)^2+(\lambda_4-\lambda_5)^2 }),\\
    s'_{(1,1)} &= \frac{1}{2}(\lambda_0-\lambda_1+\sqrt{(\lambda_3-\lambda_2)^2+(\lambda_7-\lambda_6)^2+(\lambda_4-\lambda_5)^2 }).
\eeq
In Sec.~\ref{subsec:example_RI-RIRX}, we discuss a subclass of states by restricting the parameters $\lambda_1=\lambda_2=\lambda_6=\lambda_7=0$. In this case, the non-unitary three-qubit gate $\mathcal{E}'$ can be simplified to 
\beq
    \mathcal{E}' &= \sum_{m,n=0,1} \ket{m}\bra{m}\otimes\ket{n}\bra{n}\otimes Z^n \big(\cos \theta'_{m} I + (-1)^{m+n}\sin\theta'_{m} Z\big)\\
    &=\sum_{m,n=0,1} \ket{m}\bra{m}\otimes\ket{n}\bra{n}\otimes\frac{1}{\sqrt{\cosh{2\gamma_m}}} Z^n \big(\cosh \gamma_{m} I + (-1)^{m+n}\sinh\gamma_{m} Z\big) \\
    &\propto \sum_{m,n=0,1} \ket{m}\bra{m}\otimes\ket{n}\bra{n}\otimes\Big(\frac{\cosh{2\gamma_0}}{\cosh{2\gamma_1}}\Big)^{\frac{m}{2}} Z^n e^{(-1)^{m+n}\gamma_{m}Z}.
\eeq
where $\tanh{\gamma_m}\equiv\tan{\theta'_m}$. In the second equality above, we have used the fact that
\beq
    \frac{\cosh{\gamma_m}}{\sqrt{\cosh{2\gamma_m}}}=\eta_\theta\cos{\theta_m},\ \frac{\sinh{\gamma_m}}{\sqrt{\cosh{2\gamma_m}}}=\eta_\theta\sin{\theta_m},
\eeq
where $\eta_\theta = 1$ when $\theta_m\in (-\frac{\pi}{2},\frac{\pi}{2})$ and $\eta_\theta = -1$ otherwise.
Labeling the first, second, and third qubits respectively as $m,n,t$, up to a rescaling, this gate becomes
\beq
    \mathcal{E}' = CZ_{n,t} e^{\kappa Z_m+\gamma' Z_n Z_t+\gamma Z_m Z_n Z_t},
\eeq
where 
\beq
e^{ -4\kappa}=\frac{\cosh{2\gamma_0}}{\cosh{2\gamma_1}},\quad \gamma=\frac{\gamma_0+\gamma_1}{2},\quad \gamma'=\frac{\gamma_0-\gamma_1}{2}.
\eeq

Using the identity (up to an overall rescaling factor)
\beq
    \begin{tikzpicture}[baseline=(current bounding box),scale=1.5]
        \draw[opacity=0, rounded corners=8pt]
        (-.4,.3) -- (-.4,0.8) -- (0.4,0.8) -- (.4,0.4);
        \filldraw[draw, circle, fill=white, minimum size=6pt](-0.4,0.4) circle (.1);
        \draw (-.8,0.4) -- (.8,.4);
        \draw[rounded corners=8pt]
        (-.4,.5) -- (-.4,0) -- (0.4,0) -- (.4,0.4);
        \fill (.4,0.4) circle (1pt);
    \end{tikzpicture} \quad = \quad  
    \begin{tikzpicture}[baseline=(current bounding box),scale=1.5]
        \draw (-.8,0.4) -- (-.4,.4);
        \draw (.4,0.4) -- (.8,.4);
        \node at (-.2,.4) {$\ket{0}$};
        \node at (.2,.4) {$\bra{+}$};
    \end{tikzpicture}\ ,
\eeq
we can write the above $A$ tensor as
\beq
        A&=\quad \begin{tikzpicture}[baseline=(current bounding box),scale=1.5]
        \draw (-.6,0.4) -- (-.1,.4);
        \draw[rounded corners=8pt] (.1,0.4) -- (1,.4) -- (1,-0.4);
        \draw[rounded corners=8pt] (1.7,-0.4) -- (1.7,.4) -- (3.4,.4);
        \draw[rounded corners=8pt]
        (0,-.4) -- (0,.8) -- (1,.8) -- (1, 1.2) -- (2.7, 1.2) -- (2.7, 1.5);
        \draw[rounded corners=8pt]
        (.3,-.4) -- (.3,0) -- (0.9,0);
        \draw[rounded corners=8pt]
        (1.1, 0) -- (1.5,0) -- (1.5,.8) -- (3,.8) -- (3, 1.5);
        \filldraw[fill = myRed] (.3,.2) rectangle node {} (.8,1);
        \node at (.55,.4) {$X$};
        \node at (.55,.8) {$C$};
        \filldraw[fill = myBlue] (1.9,.2) rectangle node {$\mathcal{E}'$} (2.4,1.4);
        \node at (0.2,-.6) {$\ket{\psi'_{s}}$};
        \node at (1,-.6) {$\ket{0}$};
        \node at (1.7,-.6) {$\ket{+}$};
        \end{tikzpicture}\quad = \quad \begin{tikzpicture}[baseline=(current bounding box),scale=1.5]
        \draw[rounded corners=8pt] (-0.6,0.8) -- (0,.8) -- (0,-0.4);
        \draw[rounded corners=8pt] (1.5,-0.4) -- (1.5,.4) -- (3.2,.4);
        \draw[rounded corners=8pt]
        (0.3,-.4) -- (.3,1.2) -- (1, 1.2) -- (2.5, 1.2) -- (2.5, 1.5);
        \draw[rounded corners=8pt]
        (.6,-.4) -- (.6, 0) -- (1.1,0) -- (1.1,.8) -- (2.8,.8) -- (2.8, 1.5);
        \filldraw[fill = myRed] (-.2,.2) rectangle node {$XC$} (.5,.6);
        \filldraw[fill = myBlue] (1.7,.2) rectangle node {$\mathcal{E}'$} (2.2,1.4);
        \node at (0.45,-.6) {$\ket{\psi'_{s}}$};
        \node at (0,-.6) {$\ket{0}$};
        \node at (1.5,-.6) {$\ket{+}$};
        \end{tikzpicture}\\[2ex]
        &=\quad \begin{tikzpicture}[baseline=(current bounding box),scale=1.5]
        \draw[rounded corners=8pt] (1.5,-0.4) -- (1.5,.4) -- (3.2,.4);
        \draw[rounded corners=8pt]
        (-.3,1.2) -- (1, 1.2) -- (2.5, 1.2) -- (2.5, 1.5);
        \draw (.3,1.2) -- (.3,-.4);
        \fill (.3,1.2) circle (1pt);
        \draw[rounded corners=8pt]
        (.6,-.4) -- (.6, 0) -- (1.1,0) -- (1.1,.8) -- (2.8,.8) -- (2.8, 1.5);
        \filldraw[fill = myBlue] (1.7,.2) rectangle node {$\mathcal{E}'$} (2.2,1.4);
        \node at (0.45,-.6) {$\ket{\psi'_{s}}$};
        \node at (1.5,-.6) {$\ket{+}$};
        \end{tikzpicture}
\eeq  
We can do further simplifications by noticing that $\mathcal{E}'$ being a phase gate can be pushed through the delta tensor, to another leg. We sketch the MPS of the above tensor $A$ as follows
\beq
    \ket{A} &= \cdots \quad \begin{tikzpicture}[baseline=(current bounding box),scale=1.5]
        \draw[rounded corners=8pt]
        (-.3,1.2) -- (2.5, 1.2) -- (2.5, 1.5);
        \draw (.3,1.2) -- (.3,-.4);
        \fill (.3,1.2) circle (1pt);
        \draw[rounded corners=8pt]
        (.6,-.4) -- (.6, 0) -- (1.1,0) -- (1.1,.8) -- (2.8,.8) -- (2.8, 1.5);
        \node at (0.45,-.6) {$\ket{\psi'_{s}}$};
        \node at (1.5,-.6) {$\ket{+}$};
        \draw[rounded corners=8pt] (1.5,-0.4) -- (1.5,.4) -- (3.1,.4) -- (3.1,1.2) -- (5.8,1.2) -- (5.8,1.5);
        \draw (3.6,1.2) -- (3.6,-.4);
        \fill (3.6,1.2) circle (1pt);
        \draw[rounded corners=8pt]
        (3.9,-.4) -- (3.9, 0) -- (4.4,0) -- (4.4,.8) -- (6.1,.8) -- (6.1, 1.5);
        \node at (3.75,-.6) {$\ket{\psi'_{s}}$};
        \node at (4.8,-.6) {$\ket{+}$};
        \draw[rounded corners=8pt] (4.8,-0.4) -- (4.8,.4) -- (6.5,.4);
        \filldraw[fill = myBlue] (1.7,.2) rectangle node {$\mathcal{E}'$} (2.2,1.4);
        \filldraw[fill = myBlue] (5,.2) rectangle node {$\mathcal{E}'$} (5.5,1.4);
        \end{tikzpicture}
        \quad \cdots \\[2ex]
        &= \cdots \quad \begin{tikzpicture}[baseline=(current bounding box),scale=1.5]
        \draw[rounded corners=8pt]
        (-.3,1.2) -- (2.5, 1.2) -- (2.5, 1.5);
        \draw[rounded corners=8pt] (-.3,0) -- (.3,0) -- (.3,-.4);
        \draw[rounded corners=8pt]
        (.6,-.4) -- (.6, 0) -- (1.1,0) -- (1.1,.8) -- (2.8,.8) -- (2.8, 1.5);
        \node at (0.45,-.6) {$\ket{\psi'_{s}}$};
        \node at (1.5,-.6) {$\ket{+}$};
        \draw[rounded corners=8pt] (1.5,-0.4) -- (1.5,.4) -- (3.1,.4) -- (3.1,1.2) -- (5.8,1.2) -- (5.8,1.5);
        \draw[rounded corners=8pt] (1.5,0) -- (3.6,0) -- (3.6,-.4);
        \fill (1.5,0) circle (1pt);
        \draw[rounded corners=8pt]
        (3.9,-.4) -- (3.9, 0) -- (4.4,0) -- (4.4,.8) -- (6.1,.8) -- (6.1, 1.5);
        \node at (3.75,-.6) {$\ket{\psi'_{s}}$};
        \node at (4.8,-.6) {$\ket{+}$};
        \draw[rounded corners=8pt] (4.8,-0.4) -- (4.8,.4) -- (6.5,.4);
        \filldraw[fill = myBlue] (1.7,.2) rectangle node {$\mathcal{E}'$} (2.2,1.4);
        \filldraw[fill = myBlue] (5,.2) rectangle node {$\mathcal{E}'$} (5.5,1.4);
        \fill (4.8,0) circle (1pt);
        \draw (4.8,0) -- (6.5,0);
        \end{tikzpicture}
        \quad \cdots \\[2ex]
        &= \cdots \quad \begin{tikzpicture}[baseline=(current bounding box),scale=1.5]
        \draw[rounded corners=8pt]
        (-.3,1.2) -- (2.5, 1.2) -- (2.5, 1.5);
        \draw[rounded corners=8pt]
        (-.3, 0) -- (1.1,0) -- (1.1,.8) -- (2.8,.8) -- (2.8, 1.5);
        \draw[rounded corners=8pt] (1.5,-0.4) -- (1.5,.4) -- (3.1,.4) -- (3.1,1.2) -- (5.8,1.2) -- (5.8,1.5);
        \draw[rounded corners=8pt]
        (1.8,-.4) -- (1.8, 0) -- (4.4,0) -- (4.4,.8) -- (6.1,.8) -- (6.1, 1.5);
        \node at (1.65,-.6) {$\ket{\psi'_{s}}$};
        \node at (4.95,-.6) {$\ket{\psi'_{s}}$};
        \draw[rounded corners=8pt] (4.8,-0.4) -- (4.8,.4) -- (6.5,.4);
        \filldraw[fill = myBlue] (1.7,.2) rectangle node {$\mathcal{E}'$} (2.2,1.4);
        \filldraw[fill = myBlue] (5,.2) rectangle node {$\mathcal{E}'$} (5.5,1.4);
        \draw[rounded corners=8pt] (5.1,-0.4) -- (5.1,0) -- (6.5,0);
        \end{tikzpicture}
        \quad \cdots 
        \label{eq:RI_non-unitary}
\eeq
Therefore, the state $\ket{A}$ with the $RI$ pushing relation is always prepared from some product state using a circuit of commuting three-qubit non-unitary $\mathcal{E}'$ gates.

\end{document}